\documentclass[a4paper]{article}

\usepackage[english]{babel}
\usepackage{hyperref}

\usepackage{amsmath, amsfonts,xfrac, amsthm,amssymb,fancyhdr}

\usepackage{mathtools}

\usepackage{setspace}

\usepackage{geometry}
\usepackage{xypic,colortbl}

\usepackage{thmtools}
\usepackage{thm-restate}

\usepackage{ifthen, sidecap, wrapfig, xspace}

\usepackage{mathrsfs, stmaryrd}

\usepackage{sectsty}

\usepackage{
 calc, enumitem, makeidx,bbold
}

\usepackage[font=small,format=plain,labelfont=sc,textfont=it]{caption}
\usepackage[T1]{fontenc}
\usepackage[utf8]{inputenc}
\usepackage[all]{xy}
\usepackage[geometry]{ifsym}
\usepackage[cmyk]{xcolor}

\usepackage{palatino}
\usepackage{mathpazo}
\usepackage{comment}

\usepackage{todonotes}

\usepackage{textpos}

\newcommand{\Oof}{\mathcal{O}}
\renewcommand{\preceq}{\preccurlyeq}

\newcommand{\ind}[2][]{%
  \mathrel{
    \mathop{
      \vcenter{
        \hbox{\oalign{\noalign{\kern-.3ex}\hfil$\vert$\hfil\cr
              \noalign{\kern-.7ex}
              $\smile$\cr\noalign{\kern-.3ex}}}
      }
    }^{#2}\displaylimits_{#1}
  }
}

\linespread{1.05}
\newcommand{\from}{\colon}

\renewcommand{\cal}[1]{\mathcal {#1}}
\newcommand{\CC}{\mathscr C}
\renewcommand{\le}{\leqslant}
\renewcommand{\ge}{\geqslant}
\renewcommand{\phi}{\varphi}

	


\newcommand{\mathsym}[1]{{}}




{\begin{figure}[#1]
\centering 
\includegraphics[scale=#3]{#2}
}
{\end{figure}}

\newlist{enumeratep}{enumerate}{10}
\setlist[enumeratep]{label=\quad\textit{\arabic*'.},ref=\arabic*',leftmargin=*}




\newenvironment{romanlist'}[0]
{\begin{list}{\makebox[0.5cm][l]{\textit{\roman{enumi}')}}}{\usecounter{enumi}}}
{\end{list}}


\newcommand{\savelabel}[2]{\expandafter\newtoks\csname#1\endcsname
  \global\csname#1\endcsname={#2} \label{#1} #2}
\newcommand{\loadlabel}[1]{\noindent {\bf Lemma~\ref{#1}. } \textit{\the\csname#1\endcsname}
\medskip

}
\renewcommand{\setminus}{-}

\newcommand{\loadlabelthm}[1]{\medskip\noindent {\bf Theorem~\ref{#1}. }
  \noindent  \textit{\the\csname#1\endcsname}
\medskip
}

\newcommand{\loadlabelprop}[1]{\noindent {\bf Proposition~\ref{#1}. }
  \noindent  \textit{\the\csname#1\endcsname}
\medskip

}

\newcommand{\N}{\mathbb{N}}




\renewcommand{\subset}{\subseteq}





\newcommand{\atleast}[1]{{\ge n}}
\newcommand{\less}[1]{{<n}}


	\newcommand{\notacol}[2]{}




\newsavebox{\quoteitbox}

{\hspace*{\fill}{\upshape(\usebox{\quoteitbox})}\end{quote}%
\end{sloppypar}\noindent\ignorespacesafterend}

\newenvironment{quoteit*} 
{\begin{sloppypar}\noindent\slshape\begin{quote}\itshape} 
	{\end{quote}\ignorespaces\end{sloppypar}\noindent\ignorespacesafterend}

\newenvironment{quotetag*}
{~\par
	\begingroup                  
	\begin{equation*}
		 \begin{minipage}[c]{115mm}
			\it\noindent{\par}
}
{
		\end{minipage}
	\end{equation*}
	\endgroup                        
\par
\textnormal
\medskip
}


\newcommand{\Ff}{{\mathcal F}}

\newcommand{\Ll}{{\mathcal L}}

\newcommand{\Pp}{{\mathcal P}}
\newcommand{\Qq}{{\mathcal Q}}
\newcommand{\Rr}{{\mathcal R}}

\newcommand\set[1]{\ensuremath{\{#1\}}}
\newcommand{\setof}[2]{\set{#1\mid#2}}

\DeclareMathOperator{\tp}{tp}
\newcommand{\tup}[1]{{\mathbf{#1}}}


\newtheoremstyle{theoremstyle}
  {3pt}
  {3pt}
  {\itshape}
  {0pt}
  {\bfseries}
  {.}
  {4pt}
  {}

\theoremstyle{theoremstyle}
\newtheorem{theorem}{Theorem}[section]

\newtheorem*{theorem*}{Theorem}

\newtheorem{lemma}[theorem]{Lemma}
\newtheorem{corollary}[theorem]{Corollary}
\newtheorem{proposition}[theorem]{Proposition}
\newtheorem{claim}{Claim}

\newtheoremstyle{remarkstyle}
  {3pt}
  {10pt}
  {}
  {0pt}
  {\itshape}
  {}
  {4pt}
  {\thmname{#1}\thmnumber{ #2}\thmnote{ (#3)}.}

\theoremstyle{remarkstyle}

\newtheoremstyle{definitionstyle}
  {3pt}
  {3pt}
  {}
  {0pt}
  {\bfseries}
  {}
  {4pt}
  {\thmname{#1}\thmnumber{ #2}\thmnote{ (#3)}.}

\theoremstyle{definitionstyle}
\newtheorem{definition}{Definition}

\numberwithin{equation}{section}






\newlength{\wideaslength}



\renewcommand{\subset}{\subseteq}

\newcommand{\dist}{{\mathrm{dist}}}
\newcommand{\Oh}{\mathcal{O}}

\newcommand{\seta}[1]{}

\def\lsim{\mathrel{\rlap{\lower4pt\hbox{\hskip1pt$\sim$}}
    \raise1pt\hbox{$<$}}}                

\definecolor{gray1}{rgb}{0.99,0.99,0.99}
\definecolor{gray2}{rgb}{0.97,0.97,0.97}
\definecolor{gray3}{rgb}{0.95,0.95,0.95}
\definecolor{gray4}{rgb}{0.93,0.93,0.93}
\definecolor{gray5}{rgb}{0.91,0.91,0.91}
\definecolor{gray6}{rgb}{0.89,0.89,0.89}
\definecolor{gray7}{rgb}{0.87,0.87,0.87}
\definecolor{gray8}{rgb}{0.85,0.85,0.85}
\definecolor{gray9}{rgb}{0.83,0.83,0.83}
\definecolor{gray10}{rgb}{0.81,0.81,0.81}
\definecolor{gray20}{rgb}{0.71,0.71,0.71}
\definecolor{gray40}{rgb}{0.51,0.51,0.51}

\renewcommand{\le}{\leqslant}
\renewcommand{\leq}{\leqslant}
\renewcommand{\ge}{\geqslant}
\renewcommand{\geq}{\geqslant}

\def\cqedsymbol{\ifmmode$\lrcorner$\else{\unskip\nobreak\hfil
\penalty50\hskip1em\null\nobreak\hfil$\lrcorner$
\parfillskip=0pt\finalhyphendemerits=0\endgraf}\fi} 
\newcommand{\cqed}{\renewcommand{\qed}{\cqedsymbol}}

\DeclareMathOperator{\ltp}{ltp}


\newcommand{\seqmain}[4]{#1_{#2} #4 \ldots #4 #1_{#3}}
\newcommand{\seqsym}[3]{\seqmain{#1}{1}{#2}{#3}}

\newcommand{\seq}[2]{\seqsym{#1}{#2}{,}}

\let\part\relax\DeclareMathOperator{\part}{part}

\title{Twin-width and types\thanks{This work is a part of project {\sc{BOBR}} that has received funding from the European Research Council (ERC) under the European Union's Horizon 2020 research and innovation programme (grant agreement No 948057).}}
\author{Jakub Gajarsk{\'y}\thanks{University of Warsaw, Poland, \texttt{jakub.gajarsky@mimuw.edu.pl}} \and
Micha{\l} Pilipczuk\thanks{University of Warsaw, Poland, \texttt{michal.pilipczuk@mimuw.edu.pl}}\and Wojciech Przybyszewski\thanks{University of Warsaw, Poland, \texttt{przybyszewski@mimuw.edu.pl}} \and Szymon Toru{\'n}czyk\thanks{University of Warsaw, Poland, \texttt{szymtor@mimuw.edu.pl}}}
\date{}
 
\begin{document}
\maketitle
\thispagestyle{empty}

\begin{abstract}
 We study problems connected to first-order logic in graphs of bounded twin-width.
 Inspired by the approach of Bonnet et al.~[FOCS~2020], we introduce a robust methodology of {\em{local types}} and describe their behavior in contraction sequences --- the decomposition notion underlying twin-width. We showcase the applicability of the methodology by proving the following two algorithmic results. In both statements, we fix a first-order formula $\varphi(x_1,\ldots,x_k)$ and a constant~$d$, and we assume that on input we are given a graph $G$ together with a contraction sequence of width at most $d$. 
 \begin{itemize} 
 \item One can in time $\Oh(n)$ construct a data structure that can answer the following queries in time $\Oh(\log \log n)$: given $w_1,\ldots,w_k$, decide whether $\phi(w_1,\ldots,w_k)$ holds in $G$.
  \item After $\Oh(n)$-time preprocessing, one can enumerate all tuples $w_1,\ldots,w_k$ that satisfy $\phi(x_1,\ldots,x_k)$ in $G$ with $\Oh(1)$ delay.
 \end{itemize}
 In the first case, the query time can be reduced to $\Oh(1/\varepsilon)$ at the expense of increasing the construction time to $\Oh(n^{1+\varepsilon})$, for any fixed $\varepsilon>0$. Finally, we also apply our tools to prove the following statement, which shows optimal bounds on the VC density of set systems that are first-order definable in graphs of bounded twin-width.
 \begin{itemize} 
  \item Let $G$ be a graph of twin-width $d$, $A$ be a subset of vertices of $G$, and $\varphi(x_1,\ldots,x_k,y_1,\ldots,y_l)$ be a first-order formula. Then the number of different subsets of $A^k$ definable by $\phi$ using $l$-tuples of vertices from $G$ as parameters, is bounded by $O(|A|^l)$.
 \end{itemize}
\end{abstract}

\begin{textblock}{20}(-1.9, 2.3)
	\includegraphics[width=40px]{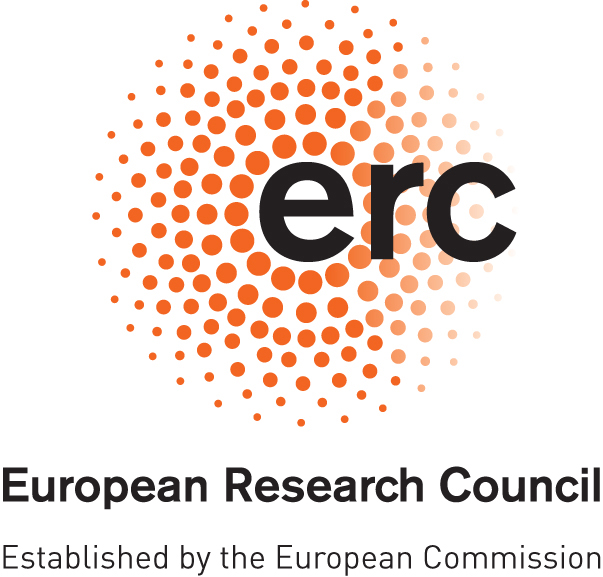}%
\end{textblock}
\begin{textblock}{20}(-1.9,3.2)
	\includegraphics[width=40px]{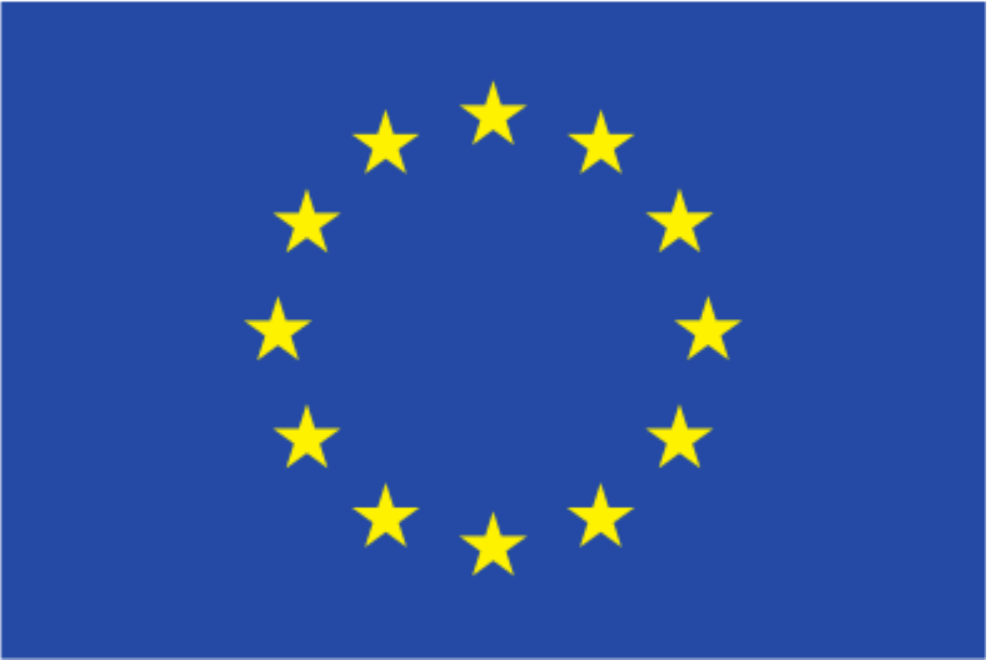}%
\end{textblock}

\newpage

\section{Introduction}\label{sec:intro}

Twin-width is a graph parameter recently introduced by Bonnet et al.~\cite{tww1}. Its definition is based on the concept of a {\em{contraction sequence}}: a sequence of partitions of the vertex set of the graph that starts with the partition into singletons, 
where every subsequent partition is obtained from the previous one by merging two parts ending with the partition with one part.  The main idea lies in measuring the {\em{width}} of a contraction sequence: it is the smallest integer $d$ such that at every step, every part of the current partition is {\em{impure}} --- neither completely adjacent nor completely non-adjacent --- towards at most $d$ other parts of that partition. The {\em{twin-width}} of a graph $G$ is the smallest possible width of a contraction sequence of $G$. Thus, one may think
 that a graph of bounded twin-width can be gradually ``folded'' into a single part so that at every point, every part has a non-trivial interaction with only a bounded number of other parts. 

We remark that while twin-width was originally defined for graphs, the idea can be, and has been, generalized to any classes of binary relational structures, for instance ordered graphs~\cite{tww4} or permutations~\cite{BonnetNOdMST21}. In this work we focus on the graph setting for simplicity. However, all our results lift to arbitrary structures over a fixed relational signature in which all relation symbols have arities at most two.

Since its recent introduction, multiple works have investigated combinatorial, algorithmic, and model-theoretic aspects of twin-width. In this work we are mostly interested in the two last ones. As proved by Bonnet et al.~\cite{tww1}, provided a graph $G$ is given together with a contraction sequence of width bounded by a constant, every property expressible in first-order logic can be verified in linear time on $G$; in other words, the model-checking problem for first-order logic can be solved in linear fixed-parameter tractable time. Further, bounded twin-width is preserved under transductions: if a class of graphs $\CC$ has bounded twin-width, then any class that can be obtained from $\CC$ by a fixed (first-order) transduction also has bounded twin-width~\cite{tww1}. Finally, as proved by Bonnet et al.~\cite{tww4}, classes of {\em{ordered}} graphs that have bounded twin-width exactly coincide with those that are {\em{monadically NIP}}, that is, do not transduce all graphs. All these results witness that twin-width is a model-theoretically important notion and a vital element of the emerging structural theory for graphs based around the notion of a (first-order) transduction. See~\cite{BonnetNOdMST21,GajarskyPT21} for further discussion.

In this work we take a closer look at the model-checking algorithm for graphs of bounded twin-width, proposed in~\cite{tww1}. The basic technical notion used there is that of a {\em{morphism tree}}. While this is not explicit in~\cite{tww1}, it is clear
 that morphism trees are combinatorial objects representing strategies in a form of an Ehrenfeucht-Fra\"isse  game, and basic operations on morphism trees correspond to manipulations on strategies. Mirroring the standard approach taken in finite model theory, one should be able to define a notion of a {\em{type}} suited for the setting of contraction sequences, as well as a corresponding model of an Ehrenfeucht-Fra\"isse game that can be used to argue about properties of types such as compositionality. Providing robust foundations for such a type-based methodology for contraction sequences is the main goal of this work.

We remark that the type/game based perspective of the approach of~\cite{tww1}, which we explained above, was recently briefly outlined in~\cite[Section~5]{tww6}. 

\paragraph*{Our contribution.} We introduce the notion of a {\em{local type}} that is suited for describing first-order properties of tuples of vertices in vertex-partitioned graphs. Intuitively speaking, the rank-$k$ local type of a tuple $\tup w$ in a graph $G$ with vertex partition $\Pp$ is the set of all quantifier rank $k$ formulas satisfied by $\tup w$, where we restrict quantification as follows. Whenever a new vertex, say $z$, is quantified, one has to specify the part $P\in \Pp$ which contains $z$, but at depth $i$ of quantification one can quantify only over parts that are at distance at most $2^{k-i}$ from parts containing already quantified vertices (including vertices of $\tup w$). Here, we mean the distance in the {\em{impurity graph}}: the graph on parts of $\Pp$ where two parts of $\Pp$ are adjacent if and only if they are neither completely adjacent nor completely non-adjacent. This definition mirrors, in logical terms, the morphism trees of Bonnet et al.~\cite{tww1}. In particular, it applies the same idea that the radius of quantification decreases exponentially with the depth. 

We prove a set of fundamental lemmas for manipulation of local types upon consecutive steps in a contraction sequence. These reflect the mechanics of morphism trees of~\cite{tww1}, but by basing the argumentation essentially on Ehrenfeucht-Fra\"iss\'e games, the obtained explanation is arguably simpler and more insightful. Also, contrary to~\cite{tww6,tww1}, the introduced toolbox applies to tuples of vertices, and not only to single parts in the contraction sequence. This is important in our applications, which we discuss next.

We use the toolbox of local types to give the following algorithmic results on first-order expressible problems in graphs of bounded twin-width. The first one concerns the problem of {\em{query answering}}, and the second concerns the problem of {\em{query enumeration}}. In both theorems we assume that the graph is {\em{specified through}} a contraction sequence; this is explained in~Section~\ref{sec:prelims}. For a tuple of parameters $\tup p$, the notation $\Oh_{\tup p}(\cdot)$ hides factors depending on $\tup p$.

\begin{restatable}{theorem}{Answering}\label{thm:intro-main-qa}
 Suppose we are given an $n$-vertex graph $G$ specified through a contraction sequence $\Pp_1,\ldots,\Pp_n$ of width $d$, and a first-order formula $\varphi(\tup x)$, where $\tup x$ is a set of variables. Then one can construct in time $\Oh_{d,\varphi}(n)$ a data structure that can answer the following queries in time $\Oh_{d,\varphi}(\log \log n)$: given $\tup w\in V(G)^{\tup x}$, decide whether $G\models \varphi(\tup w)$. 
\end{restatable}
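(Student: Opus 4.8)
\emph{Overview and reduction.}
The plan is to reduce answering ``$G \models \varphi(\tup w)$?'' to recovering a single local type of $\tup w$, and to make that recovery fast by precomputing the local types of all parts occurring in the contraction sequence and indexing them so that the relevant ones can be located by a handful of predecessor-type queries. Set $q$ to be the quantifier rank of $\varphi$. The starting point is that the last partition $\Pp_n$ is trivial --- a single part --- so all distances in its impurity graph vanish, and therefore the rank-$q$ local type of a tuple in $(G,\Pp_n)$ coincides with its ordinary rank-$q$ first-order type. Hence whether $G \models \varphi(\tup w)$ holds is a fixed function of the rank-$q$ local type of $\tup w$ in $(G,\Pp_n)$, and this function can be tabulated once, in $\Oh_\varphi(1)$ time, since $\varphi$ is fixed. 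It remains to recover that local type in time $\Oh_{d,\varphi}(\log\log n)$ after $\Oh_{d,\varphi}(n)$ preprocessing; the rest of the plan is devoted to this.

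\emph{Preprocessing.}
I would first build the \emph{contraction tree} $T$: its $n$ leaves are the vertices of $G$, its $n-1$ internal nodes are the contraction steps labelled by the times at which they occur, and the ancestors of a leaf $v$ are exactly the parts that contain $v$ over the course of the sequence, their labels strictly increasing towards the root. Equipping $T$ with a standard weighted-ancestor data structure gives, in $\Oh(n)$ time and space, a way to answer in $\Oh(\log\log n)$ time the query ``given $v$ and time $t$, return the part of $\Pp_t$ containing $v$''. Sweeping through $\Pp_1,\dots,\Pp_n$, I would also record for every part $Q$ its impurity-neighbour history --- the list, indexed by subintervals of $Q$'s lifetime, of the at most $d$ parts impure towards $Q$ --- and index each such list by its time keys using a van-Emde-Boas-type structure; since a single contraction step alters the neighbour lists of only $\Oh(d)$ parts, this has total size $\Oh_d(n)$ and supports neighbour lookups in $\Oh(\log\log n)$ time. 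Finally, using the fundamental lemmas on local types --- a composition lemma and a contraction-step lemma --- I would run, along the same sweep, a dynamic program that maintains, for every currently-alive part $Q$, the rank-$q$ local type of the marked configuration formed by $Q$ together with its bounded-radius impurity ball, recording this value and the interval on which it is valid. The contraction-step lemma makes each update local and $\Oh_{d,\varphi}(1)$-time, and a careful (amortized) analysis should show that the dynamic program runs in $\Oh_{d,\varphi}(n)$ time overall, essentially because a rank-$q$ local type takes only finitely many values and neighbour-type counts matter only up to a threshold depending on $q$, so summaries stabilize after boundedly many updates. All of this fits in $\Oh_{d,\varphi}(n)$ time and space.

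\emph{Answering a query.}
Given $\tup w$, I would recover the rank-$q$ local type of $\tup w$ in $(G,\Pp_n)$ from the precomputed data. By the composition lemma and the contraction-step lemma, this type is built by ``contracting up'' the at most $k$ root-paths of the leaves of $\tup w$ in $T$: along each path only $\Oh_{d,\varphi}(1)$ distinct summary values are ever encountered (by stabilization), and the combined type of $\tup w$ needs to be updated only at the $\Oh_{d,\varphi}(1)$ places where some summary changes or where two paths meet. Locating these places and reading off the relevant summaries amounts to $\Oh_{d,\varphi}(1)$ queries to the weighted-ancestor and predecessor structures built in preprocessing, each costing $\Oh(\log\log n)$; the assembly itself --- a recursion of depth $q$ with $\Oh_{d,\varphi}(1)$ branching --- costs $\Oh_{d,\varphi}(1)$. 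Hence the query runs in $\Oh_{d,\varphi}(\log\log n)$ time. Replacing the predecessor and weighted-ancestor structures by $n^{\varepsilon}$-branching, depth-$1/\varepsilon$ variants answers each underlying query in $\Oh(1/\varepsilon)$ time at the cost of $\Oh(n^{1+\varepsilon})$ preprocessing and space, which gives the $\Oh(1/\varepsilon)$-query trade-off mentioned just after the statement.

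\emph{The hard part.}
The combinatorial heart of the argument will be the fundamental lemmas on local types and their quantitative refinements. The most delicate is the contraction-step lemma --- that after a single merge the rank-$q$ local type of a tuple is determined by rank-$q$ local information available just before the merge --- and here the point of the exponentially shrinking quantification radius ($2^{q-i}$ at quantifier depth $i$) is precisely that it is generous enough to absorb the distance-shortening a merge can cause, while still strictly decreasing with depth. The second nontrivial ingredient is the ``stabilization'' analysis, which must bound by a constant depending only on $d$ and $\varphi$ both the total number of summary changes during the sweep (this is what keeps the preprocessing linear) and the number of relevant steps along each root-path in the query (this is what keeps the query $\Oh(\log\log n)$). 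The remaining pieces --- that relevant parts and steps are exactly the ones locatable by weighted-ancestor and predecessor queries, and the standard linear-space $\Oh(\log\log n)$ (resp.\ $\Oh(n^{1+\varepsilon})$-space, $\Oh(1/\varepsilon)$-query) data structures supporting them --- are routine.
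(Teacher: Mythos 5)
There is a genuine gap, in two places, and both concern exactly the parts you flagged as ``the hard part'' or ``routine''. First, the query-time events you must locate are not the points ``where two root-paths meet'' in the contraction tree, but the earlier times at which the parts containing two query vertices first come within impurity-distance $2^k$ of each other (only while the sub-tuples are still far apart can their local types be combined by compositionality; once they are close, Lemma~\ref{lem:tp-merge} no longer applies, so you must merge types at the moment of first proximity, not at the moment of contraction). Locating these first-proximity times $\prx_r(u,v)$ in $\Oh(\log\log n)$ is not a weighted-ancestor or predecessor query: your vEB-indexed neighbour histories let you do a radius-$r$ BFS in $\Imp_t$ at a \emph{given} time $t$, but finding the first such $t$ by search costs $\Omega(\log n)$ iterations. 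The paper's solution is geometric: reindex so the contraction sequence is convex (Lemma~\ref{lem:reindexing}), observe that the pairs $(u,v)$ sharing a given value of $\prx_r$ decompose into $\Oh_{d,r}(n)$ axis-parallel rectangles (Lemma~\ref{lem:break-rectangles}), and answer point-location queries with Chan's orthogonal range structure (Theorem~\ref{thm:chan}); this is the sole source of the $\log\log n$ and has no counterpart in your plan.

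Second, your ``stabilization'' claim --- that along each root-path only $\Oh_{d,\varphi}(1)$ summary changes occur, so the tuple's type needs only constantly many updates --- is unjustified and false in general: the local type of even a single fixed vertex can be perturbed by every contraction that happens within impurity-distance $2^k$ of its current part, and there can be $\Theta(n)$ such contractions along one part-history. Moreover, per-part summaries (sets of realized types in a part and its vicinity, as in the model-checking argument) do not determine the local type of the \emph{specific} query vertices, so they cannot replace transporting the tuple's own type forward in time. The paper resolves this not by bounding the number of events but by making their composed effect retrievable in $\Oh(1)$: it builds the tree $T_{r,\tup x}$ of $r$-close tuples, labels its edges with the part-dependent (hence precomputable) type-transition functions of Lemma~\ref{lem:tp-warp}, and applies a semigroup path-product data structure (Theorem~\ref{thm:semigroup}) so that the net transition over an arbitrary ancestor path is a single query (Lemma~\ref{lem:warp-ds}). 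Without this ingredient (or an equivalent), your query either runs in time proportional to the number of nearby contractions or rests on the false constant bound; with it, and with the proximity oracle above, your high-level skeleton (compute $\ltp^k_n(\tup w)$, combine sub-tuples by compositionality at first-proximity times, evaluate $\varphi$ from the type at time $n$) does match the paper's.
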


\begin{restatable}{theorem}{Enumeration}\label{thm:intro-main-qe}
 Suppose we are given an $n$-vertex graph $G$ specified through a contraction sequence $\Pp_1,\ldots,\Pp_n$ of width $d$, and a first-order formula $\varphi(\tup x)$, where $\tup x$ is a set of variables. Then after preprocessing in time $\Oh_{d,\varphi}(n)$, one can enumerate all tuples  $\tup w\in V(G)^{\tup x}$ such that $G\models \varphi(\tup w)$ with $\Oh_{d,\phi}(1)$ delay.
\end{restatable}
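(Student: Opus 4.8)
The plan is to marry the local-types toolbox with the standard template for constant-delay enumeration (in the style of Bagan and of Kazana--Segoufin), treating the contraction tree underlying $\Pp_1,\dots,\Pp_n$ as a decomposition on which first-order queries behave like tree-automaton conditions \emph{once one passes to local types}. The first move is to rewrite $\varphi(\tup x)$, up to logical equivalence over graphs admitting a width-$d$ contraction sequence, as a finite Boolean combination of \emph{shaped local} formulas. Each such formula fixes (i) a \emph{shape}: a pattern describing how the free variables are distributed among, and interact across, parts of the sequence --- essentially the isomorphism type of the minimal subtree of the contraction tree spanned by $w_1,\dots,w_k$ together with the rank-$k$ local types attached along it --- and (ii) for each variable a \emph{lifted type profile} which, by the fundamental manipulation lemmas, is preserved under contraction steps occurring far (in the impurity graph) from that variable. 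The crucial point to establish here is that $G\models\varphi(\tup w)$ is a function of the shape of $\tup w$ and of the lifted type profiles of the $w_i$ alone; this is exactly the content of the toolbox lemmas applied inductively along the sequence. Since distinct shapes and profiles yield disjoint solution sets, it suffices to enumerate the solutions of each of the finitely many shaped local formulas with constant delay and then concatenate.

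For preprocessing I would sweep the contraction tree bottom-up and, for every part $P$ and every relevant rank $k'$, compute the rank-$k'$ local type of $P$: when $P$ is the merge of $P'$ and $P''$, its type is assembled from the types of $P'$, $P''$ and of the $\Oh_{d,\varphi}(1)$ parts within radius $2^{k'}$ of them in the current impurity graph, all of which are already available --- so this is $\Oh_{d,\varphi}(1)$ work per merge, hence $\Oh_{d,\varphi}(n)$ overall. Alongside I would record, for each vertex $v$, the contraction steps at which the local type of the part currently containing $v$ actually changes (so that $v$'s lifted type profile is retrievable quickly), together with predecessor structures over these ``change-points'' and, for each part and each partial shape, a precomputed \emph{completion flag} telling whether that partial shape can be extended to a full solution inside the relevant neighbourhood. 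The completion flags, computed by the same recurrences, are what will make extendability tests cheap enough for the enumeration loop rather than merely polylogarithmic as in the query-answering data structure.

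The enumeration then proceeds, for a fixed shaped local formula, by a depth-first search instantiating $w_1,\dots,w_k$ in the order dictated by the shape. The admissible choices for $w_1$ --- those whose lifted type profile matches the shape --- form, by construction, a union of bounded-depth strata of the contraction tree, which the precomputed per-part data lets us enumerate with constant delay; once a prefix $w_1,\dots,w_{i-1}$ is fixed, the next variable $w_i$ is constrained only inside a bounded-radius ball of the already-chosen parts in the appropriate impurity graph (here is where ``local'' is used), and inside that ball the candidates are again enumerable with constant delay from the precomputed types. The completion flags ensure that every prefix the DFS descends into is extendable, so no branch is wasted; when all $k$ variables have been chosen, a constant-time lookup of the combined lifted type confirms $G\models\varphi(\tup w)$ and the tuple is output. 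Between two consecutive outputs the DFS performs only $\Oh_{d,\varphi}(1)$ steps, and interleaving the finitely many per-shape DFSs preserves constant delay.

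The main obstacle, and the place where essentially all the technical work sits, is reconciling the \emph{length} of the contraction sequence with the \emph{locality} of first-order logic: the path from a leaf to the root in the contraction tree may have $\Theta(n)$ nodes, so one cannot re-traverse it per tuple or per output. One must therefore prove that the lifted type of a vertex --- and then of a tuple --- \emph{stabilizes}: along such a path there are only boundedly many steps at which its rank-$k$ local type genuinely changes, with the contraction being inert (from that tuple's viewpoint) between consecutive change-points, and the change-points for different free variables interleave in a bounded, shape-determined way. Exploiting this in linear preprocessing plus constant delay --- as opposed to the $\Oh(\log\log n)$ predecessor search that suffices for Theorem~\ref{thm:intro-main-qa} --- forces the DFS to discover relevant levels lazily and amortize the navigation cost over the produced tuples. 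Proving the stabilization statement via the fundamental manipulation lemmas, and organizing the traversal so that this amortization goes through, is the heart of the argument; the remaining ingredients (the reduction of $\varphi$ to shaped local formulas, the predecessor and completion-flag structures, the DFS bookkeeping, the merging of per-shape enumerations) are routine once that core is in place.
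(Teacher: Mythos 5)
There is a genuine gap, and it sits exactly where you placed ``the heart of the argument'': the stabilization claim. You assert that along a leaf-to-root path of the contraction tree the rank-$k$ local type of a vertex (and then of a tuple) changes only boundedly many times, and you build the whole amortization scheme on that. Nothing in the toolbox supports this. Lemma~\ref{lem:relevant} only says that the type of a tuple is unchanged at steps where its parts avoid the relevant region $\Rel_s^{2^k}$; it does not bound how often a fixed part re-enters relevant regions over the course of the sequence, and in general this can happen $\Theta(n)$ times, so the number of genuine type changes along a single path is not $\Oh_{d,\varphi}(1)$. This is precisely why the paper does \emph{not} argue via change-points: in Section~\ref{sec:queries} the composition of the per-step update functions of Lemma~\ref{lem:tp-warp} along a (possibly linearly long) path is handled by the semigroup path-product data structure of~\cite{PilipczukSSTV21}, and in Section~\ref{sec:enumeration} the same difficulty is resolved by the tree $T_{r,\tup x}$ of $r$-close tuples with function-labelled edges together with the enumerator version of that result (Corollary~\ref{cor:composer-enum}, i.e.\ Lemma~7.15 of~\cite{PilipczukSSTV21}). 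If your bounded-change-point claim were available, this machinery would be superfluous; since it is neither proved nor true in general, your DFS cannot navigate from a node to the (unboundedly far) descendants where solutions live within constant delay.

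A second, related weakness is the ``completion flag'' step. Whether a partial assignment extends to a solution is not decidable from a bounded-radius ball at a single time: it requires knowing, for a node $u=(\tup u,s)$ and target type $\tau$, whether some \emph{descendant} $v=(\tup v,t)$ carries a realized type $\sigma$ with $f_{vu}(\sigma)=\tau$ and a nonempty set of tuples registering there --- exactly the content of Lemmas~\ref{l0}--\ref{l2}. Enumerating such descendants with constant delay is again the path-composition problem over the function-labelled tree, not a locally precomputable flag. The paper's route --- decompose each $S_{\tup u,s}^\tau$ by the node at which a tuple \emph{registers} (the first time its components become $r$-close), obtain enumerators for the registered sets by induction on $|\tup x|$ via Cartesian products of enumerators for strictly smaller variable sets (Lemma~\ref{l1}), and splice them along the tree using Corollary~\ref{cor:composer-enum} (Lemma~\ref{l2}) --- is the missing mechanism; your shaped-formula decomposition and per-variable DFS are a reasonable outer shell, but without this (or a correct substitute for the stabilization claim) the delay bound does not follow.
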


Note that in Theorem~\ref{thm:intro-main-qa} there is a factor of the form $\Oh_{d,\varphi}(\log \log n)$ appearing in the query time. This is a consequence of using a data structure for orthogonal range queries of Chan~\cite{Chan11} that supports queries in time $\Oh(\log \log n)$. As explained in~\cite{PilipczukSZ21}, there is also a simple data structure for orthogonal range queries that, for any fixed $\varepsilon>0$, offers query time $\Oh(1/\varepsilon)$ at the expense of increasing the construction time and the space complexity to $\Oh(n^{1+\varepsilon})$. By replacing the usage of the data structure of Chan with this simple data structure, we may obtain the same tradeoff in Theorem~\ref{thm:intro-main-qa}: The query time is reduced to $\Oh(1/\varepsilon)$, while the construction time and the space complexity is increased to $\Oh(n^{1+\varepsilon})$; this holds for any fixed~$\varepsilon>0$.

Theorems~\ref{thm:intro-main-qa} and~\ref{thm:intro-main-qe} mirror classic results on evaluation and enumeration of monadic second-order queries on trees~\cite{Bagan06,colcombet2007factorization,KazanaS13} (which imply analogous results for graphs of bounded treewidth and of bounded cliquewidth), and of first-order queries on classes of bounded expansion~\cite{dvorak2013fo_be,KazanaS19} and nowhere dense classes~\cite{SchweikardtSV18}. We remark that besides Theorems~\ref{thm:intro-main-qa} and~\ref{thm:intro-main-qe}, the toolbox of local types can be also used to reprove in a streamlined way the two fundamental results proved in~\cite{tww1}: the linear-time fixed-parameter tractability of model-checking first-order logic on classes of bounded twin-width, and the stability of twin-width under first-order interpretations. We believe that these results witness the robustness of the developed methodology.

As another application, we prove optimal bounds on {\em{VC density}} of set systems definable in graphs of bounded twin-width. Suppose $\varphi(\tup x,\tup y)$ is a first-order formula with free variables partitioned into $\tup x$ and $\tup y$. For a graph $G$ and a subset of vertices $A$, we define the {\em{Stone space}}
$$S^\varphi(A)\coloneqq \left\{\,\{\,\tup a\in A^{\tup x}~|~G\models \varphi(\tup a,\tup b)\,\}~\colon~\tup b\in V(G)^{\tup y}\,\right\}.$$
In other words, every tuple $\tup b\in V(G)^{\tup y}$ gives rise to the subset $\phi(A,\tup b)\subseteq A^{\tup x}$ consisting of those tuples $\tup a$ that together with $\tup b$ satisfy $\varphi$. Then the Stone space $S^{\varphi}(A)$ consists of all sets $\phi(A,\tup b)$ that can be defined in this way. See for example~\cite{PilipczukST18a} 
 for a discussion of this notion and its applications.

In general graphs, $S^{\varphi}(A)$ can be as large as the whole powerset of $A^{x}$. However, under various structural assumptions, it will be typically much smaller. For instance, suppose that the twin-width of $G$ is bounded by a constant $d$. Then by combining the results of Bonnet et al.~\cite{tww1} with that of Baldwin and Shelah~\cite{baldwinshelah}, one can argue that the VC dimension of $S^{\varphi}(A)$, regarded as a set system over universe $A^{\tup x}$, is bounded by a constant depending only on $d$ and $\varphi$. Consequently, by the Sauer-Shelah Lemma~\cite{sauer1972density, shelah1972combinatorial}, the cardinality of $S^{\varphi}(A)$ is bounded polynomially in $|A|$. 
However, the degree of this polynomial bound, which is known as the {\em{VC density}} (studied for example in~\cite{VC-density}), still depends on $d$ and $\varphi$, and in a quite non-explicit way. 
We prove that in fact, there is a much sharper upper bound: the VC density is bounded by just the number of variables in $\tup y$.

\begin{restatable}{theorem}{VCdensity}\label{thm:intro-main-nei}
 Let $G$ be a graph of twin-width at most $d$, $A$ be a subset of vertices of $G$, and $\varphi(\tup x,\tup y)$ be a first-order formula. Then
 $$|S^{\varphi}(A)|\leq \Oh_{d,\varphi}\left(|A|^{|\tup y|}\right).$$
\end{restatable}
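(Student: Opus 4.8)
\medskip
\noindent\emph{Proof proposal.} The plan is to fix once and for all a contraction sequence $\Pp_1,\dots,\Pp_n$ of $G$ of width $d$ (so $\Pp_1$ is the partition into singletons and $\Pp_n$ the single‑part partition), and to attach to every parameter tuple $\tup b\in V(G)^{\tup y}$ a \emph{certificate} consisting of an $|\tup y|$‑tuple of vertices of $A$ together with a bounded amount of auxiliary data, in such a way that the certificate already determines the set $\varphi(A,\tup b)$. Since there are at most $(|A|+1)^{|\tup y|}$ choices for the first component and $\Oh_{d,\varphi}(1)$ for the second, this yields $|S^{\varphi}(A)|\le\Oh_{d,\varphi}(|A|^{|\tup y|})$. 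The first step is to pass from $\varphi$ to local types: let $q$ be the quantifier rank of $\varphi$; whether $G\models\varphi(\tup a,\tup b)$ depends only on the rank‑$q$ first‑order type of $\tup a\tup b$ in $G$, and since the impurity graph of $\Pp_n$ is a single vertex, this type coincides with the rank‑$q$ local type $\ltp^q_{\Pp_n}(\tup a\tup b)$. As there are only $\Oh_{d,\varphi}(1)$ rank‑$q$ local types, $\varphi(A,\tup b)$ is a union, over the ``accepting'' local types $\tau$, of the sets $S_\tau(\tup b)\coloneqq\{\tup a\in A^{\tup x}\mid \ltp^q_{\Pp_n}(\tup a\tup b)=\tau\}$; hence it suffices to produce a certificate that determines each $S_\tau(\tup b)$, the $2^{\Oh_{d,\varphi}(1)}$ factor from the choice of accepting set being absorbed.

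\medskip
\noindent Next I would restrict the contraction sequence to $A$: put $\Qq_i\coloneqq\{P\cap A\mid P\in\Pp_i,\ P\cap A\neq\emptyset\}$ and discard repeated partitions to obtain a genuine contraction sequence of $G[A]$; a short computation (impurity in $G[A]$ implies impurity in $G$) shows its width is again at most $d$, so its underlying contraction tree $T_A$ has $\Oh(|A|)$ nodes. For $\tup b=(b_1,\dots,b_l)$ and each $j$, let $i_j$ be the least index with $[b_j]_{i_j}\cap A\neq\emptyset$ (it exists, since $\Pp_n$ has one part and $A\neq\emptyset$); then $[b_j]_{i_j}\cap A$ is a single part of the $A$‑trace, hence a node $\nu_j$ of $T_A$, and we designate a canonical vertex $a_j$ of that part (say the $\preceq$‑minimal one for a fixed order) as the \emph{anchor} of $b_j$. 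Let $\delta$ record the following bounded data: the atomic type of $\tup b$, and, for each $j$, the local type of the ball of some constant radius $r=r(q,d,l)$ around the part of $b_j$ in the impurity graph of $\Pp_{i_j}$, including how $b_j$ and $a_j$ sit inside their common part. The certificate of $\tup b$ is $\big((\nu_1,\dots,\nu_l),\delta\big)$, and it ranges over a set of size $\Oh_{d,\varphi}(|A|^{|\tup y|})$.

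\medskip
\noindent The heart of the argument, and the step I expect to be the main obstacle, is to prove that the certificate determines $S_\tau(\tup b)$. The guiding intuition is that from time $i_j$ onward the vertices $b_j$ and $a_j$ lie in the same part of every $\Pp_i$, hence undergo exactly the same sequence of coarse merges and are interchangeable for any property expressed at the granularity of local types; the only discrepancies concern what happens \emph{before} time $i_j$ and the fine structure inside $[b_j]_{i_j}$, and both are bounded‑radius phenomena — and since the width is the constant $d$, a ball of constant radius in any impurity graph has at most $d^{\Oh(r)}=\Oh_{d,\varphi}(1)$ parts, so this data is genuinely bounded and is exactly what $\delta$ records. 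Making this precise amounts to iterating the fundamental lemmas on the behaviour of local types in contraction sequences (restriction, merging, forgetting) to show that $\ltp^q_{\Pp_n}(\tup a\tup b)$ is a fixed function of $\ltp^{q'}_{\Pp_n}(\tup a\,a_1\cdots a_l)$ and of $\delta$, for a constant $q'=q'(q,d,l)$ and a type on the right that now lives entirely over $A$: concretely, one substitutes each $b_j$ by its anchor $a_j$ at step $i_j$ and pays the correction $\delta$ for the substitution and for the pre‑$i_j$ history. Granting this, $S_\tau(\tup b)=\{\tup a\in A^{\tup x}\mid \ltp^{q'}_{\Pp_n}(\tup a\,a_1\cdots a_l)\in T\}$ for a set of types $T$ depending only on $\tau$ and $\delta$, so $S_\tau(\tup b)$ depends only on the certificate. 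Two points need care: the ranks must not blow up with $n$ — this is exactly where locality, i.e.\ the exponentially shrinking quantification radius, is used — and marking the anchors, and in the substitution argument the coordinates of $\tup b$, as constants perturbs the impurity graph; here one uses that adding a bounded number of marked vertices raises the width by at most a bounded amount, and that the few extra impurity edges they create are themselves recorded in $\delta$.
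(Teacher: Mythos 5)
Your reduction to local types at time $n$ and your restriction of the contraction sequence to $A$ are fine, but the heart of your argument --- the claim that the certificate $\bigl((\nu_1,\dots,\nu_l),\delta\bigr)$ determines $\varphi(A,\tup b)$ --- is false, and the failure is exactly at the step you flagged as the main obstacle. Being in the same part from time $i_j$ onward does \emph{not} make $b_j$ and its anchor $a_j$ interchangeable: at every later time $t>i_j$ the common part may be impure towards further parts containing vertices of $A$, and $b_j$ and $a_j$ may disagree on their adjacencies into those parts. Over the remaining steps of the sequence these impurities sweep over an unbounded portion of $A$, so the discrepancy between $b_j$ and $a_j$ is not a bounded-radius phenomenon at time $i_j$, and no constant-size $\delta$ can absorb it. A concrete counterexample: let $G$ be the half-graph on $A=\{a_1,\dots,a_n\}$ and $B=\{b_1,\dots,b_n\}$ with $a_i\sim b_j$ iff $i\le j$, and take the standard width-$2$ ``staircase'' contraction sequence that alternately grows one part inside $A$ and one part inside $B$ and merges the two sides only at the very last step. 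Then for every $b_j$ the first time its part meets $A$ is the final time, so all $b_j$ receive the same node $\nu$ (the root of the $A$-tree), the same anchor, and one of $\Oh_{d,\varphi}(1)$ values of $\delta$ (a constant-rank local type at the final, trivial partition). Hence only $\Oh_{d,\varphi}(1)$ certificates occur, while the traces $N(b_j)\cap A=\{a_1,\dots,a_j\}$ are pairwise distinct, so the certificate cannot determine $\varphi(A,\tup b)$ even for $\varphi=E(x,y)$.

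The paper avoids this trap by not attempting any certificate that is computable from bounded local data. It first passes to an $A$-\emph{bipartite} contraction sequence of width at most $2d$ (every part lies inside $A$ or inside $V(G)\setminus A$), and then defines an $r$-distance coloring of $V(G)\setminus A$ in which fresh colors are issued each time two parts \emph{inside $A$} are contracted, given to the not-yet-colored vertices of nearby outside parts; since there are $|A|-1$ such contractions, only $\Oh_{d,r}(|A|)$ colors are used. The counting then rests on a genuinely nontrivial statement (Lemma~\ref{lem:bounded_number_of_ltp}): for each fixed tuple of color classes, the equivalence ``$\tup b\sim\tup b'$ iff $\ltp^k_t(\tup a\tup b)=\ltp^k_t(\tup a\tup b')$ for all $\tup a\in A^{\tup x}$'' has only $\Oh_{d,k,\tup x,\tup y}(1)$ classes; this is proved by a double induction on $|\tup y|$ and $|\tup x|$ using meeting times, the compositionality lemma, and a simplification step (Lemma~\ref{lemma:technical_lemma}) --- i.e.\ the boundedness is established indirectly, not by exhibiting bounded local data that determines the trace. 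To repair your proof you would need either to switch to an argument of this kind, or to enrich the certificate by information tied to the contractions happening on the $A$-side throughout the sequence (as the $r$-distance coloring does), rather than to a single ``first meeting'' time.
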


It is easy to see (see e.g.~\cite{PilipczukST18a}) that even in edgeless graphs one cannot hope for a bound better than $|A|^{|\tup y|}$, and therefore the bound of  Theorem~\ref{thm:intro-main-nei} is asymptotically optimum.

Theorem~\ref{thm:intro-main-nei} mirrors analogous results for monadic second-order formulas on classes of bounded treewidth or  cliquewidth~\cite{PaszkeP20}, and for first-order formulas on classes of bounded expansion and nowhere dense classes~\cite{PilipczukST18a}. We remark that the case $|\tup x|=|\tup y|=1$ follows from the fact that classes of bounded twin-width are closed under first-order transductions, combined with known linear upper bounds on the {\em{neighborhood complexity}}\footnote{In our notation, bounds on neighborhood complexity exactly correspond to the case when $\tup x=\{x\}$, $\tup y=\{y\}$, and $\varphi(x,y)$ just checks that $x$ and $y$ are adjacent.} in classes of bounded twin-width~\cite{BonnetKRTW21, przybyszewski2022vcdensity}. Tackling multiple free variables requires a better understanding of types for tuples of vertices, which is exactly where our methodology of local types comes into~play.

\paragraph*{Organization.} After preliminaries in Section~\ref{sec:prelims}, we present the framework of local types in Section~\ref{sec:types}, and immediately derive from it the classic results proved in~\cite{tww1}. Then we prove Theorem~\ref{thm:intro-main-qa} in Section~\ref{sec:queries}, Theorem~\ref{thm:intro-main-qe} in Section~\ref{sec:enumeration}, and
Theorem~\ref{thm:intro-main-nei} in Section~\ref{sec:numtypes}. Appendix~\ref{app:effective} is devoted to the derivation of effective variants of some lemmas from Section~\ref{sec:types}

\paragraph*{Acknowledgements.} The authors thank Rose McCarty and Felix Reidl for many initial discussions on the type approach to first-order logic on graphs of bounded twin-width.

\newcommand{\ImpGraph}{\mathsf{Imp}}
\newcommand{\Vic}{\mathsf{Vicinity}}
\newcommand{\prx}{\mathsf{firstClose}}
\newcommand{\Types}{\mathsf{Types}}
\newcommand{\RTypes}{\mathsf{RTypes}}
\newcommand{\Aff}{\mathsf{Affected}}
\newcommand{\Rel}{\mathsf{Relevant}}

\newcommand{\iwarp}[2]{#1\langle #2\rangle}
\newcommand{\warp}[3]{#1\langle #2\to #3\rangle}

\newcommand{\Tri}{G}
\newcommand{\Imp}{G^{\textsf{imp}}}

\section{Preliminaries}\label{sec:prelims}

\paragraph*{Graphs.} In this paper we work with finite, undirected graphs and we use standard graph notation. By $|G|$ we denote the number of vertices of a graph $G$.

A pair of disjoint vertex subsets $A,B\subset V(G)$ is {\em{complete}} if every vertex of $A$ is adjacent to every vertex of $B$, and {\em{anti-complete}} if there is no edge with one endpoint in $A$ and the other one in $B$. The pair $A,B$ is {\em{pure}} if it is complete or anti-complete, and {\em{impure}} otherwise.

A {\em{trigraph}} is a structure in which there is a vertex set and every pair of distinct vertices is bound by exactly one of the following three symmetric relations: adjacency, non-adjacency, and impurity. Thus, graphs are trigraphs without impurities.
Given a partition of $\Pp$ of the vertex set of a graph $G$, we define the {\em{quotient trigraph}} $G/\Pp$ as the trigraph on vertex set $\Pp$ where distinct $A,B\in \Pp$ are adjacent if the pair $A,B$ is complete in $G$, non-adjacent if the pair is anti-complete, and impure towards each other if $A,B$ is impure. For a trigraph $H$, its {\em{impurity graph}} $\ImpGraph(H)$ is the graph on vertex set $H$ where two vertices $u,v\in V(H)$ are considered adjacent if they are impure towards each other in $H$.

\paragraph*{Contraction sequences.}
Let $G$ be a graph on $n$ vertices.
A \emph{contraction sequence} for $G$
is a sequence $\Pp_1,\ldots,\Pp_n$
of partitions of the vertex set of $G$ such that:
\begin{itemize} 
    \item $\cal P_1$ is the partition into singletons;
     \item $\cal P_n$ is the partition with one part;
    \item for each $t\in [n]$, $t>1$, $\cal P_{t}$ is obtained from $\cal P_{t-1}$ by taking some two parts $A,B\in \cal P_{t-1}$ and {\em{contracting}} them: replacing them with a single part $A\cup B\in \Pp_{t}$.
\end{itemize}
Indices $t\in [n]$ will be called {\em{times}}. The {\em{width}} of the contraction sequence $\Pp_1,\ldots,\Pp_n$ is the maximum degree in graphs $\ImpGraph(G/\Pp_t)$, at all times $t\in [n]$. The {\em{twin-width}} of $G$ is the minimum possible width of a contraction sequence of $G$.

If $G$ is supplied with a total order $\leq$ on $V(G)$, then a subset of vertices $A$ is {\em{convex}} if it forms an interval in $\leq$, that is, if $a,b\in A$ then also $c\in A$ whenever $a\leq c\leq b$. A partition is convex if all its parts are convex, and a contraction sequence is convex if all its partitions are~convex.

\paragraph*{Additional notation for partitions and contraction sequences.}
Fix a graph $G$ with a partition $\Pp$ of its vertices.
We will use the following notation.

Denote $\Tri_\Pp\coloneqq G/\Pp$ and $\Imp_\Pp\coloneqq \ImpGraph(\Tri_\Pp)$. By $\dist_\Pp(\cdot,\cdot)$ we denote the distance function in $\Imp_\Pp$: for $A,B\in \Pp$, $\dist_\Pp(A,B)$ is the minimum length of a path in $\Imp_\Pp$ connecting $A$ and $B$, and $+\infty$ if there is no such path. We extend this notation to subsets, or tuples of elements of $\Pp$: $\dist_\Pp(X,Y)$ denotes the minimum, over all $A$ occurring in $X$ and $B$ occurring in $B$, of $\dist_\Pp(A,B)$.

For a set of parts $\Ff\subseteq \Pp$ and a radius parameter $r\in \N$, the {\em{$r$-vicinity}} of $\Ff$, denoted $\Vic_\Pp^r(\Ff)$, is the trigraph induced in $\Tri_\Pp$ by all parts at distance at most $r$ from any part belonging to $\Ff$, that is
$$\Vic_\Pp^r(\Ff)\coloneqq \Tri_\Pp [\{A\in \Pp~|~\dist_\Pp(A,\Ff)\leq r\}].$$
We may use notation $\Vic_\Pp^r(\cdot)$ for single parts or tuples of parts with the obvious meaning.

\medskip
For brevity, whenever a graph $G$ and its contraction sequence $\Pp_1,\ldots,\Pp_n$ are clear from the context, we fix the following notation. First, in all the notation defined above, concerning partitions, we write $s$ in the subscript instead of $\Pp_s$. So for instance we write $G_s$ to denote $G_{\Pp_s}$, and $\Imp_s$ to denote $\Imp_{\Pp_s}$, and $\dist_s(\cdot,\cdot)$ to denote $\dist_{\Pp_s}(\cdot,\cdot)$, etc.

Fix a finite set of variables $\tup x$. For a pair of times $s,t\in [n]$, $s\leq t$, and a tuple of parts $\tup u\in \Pp_s^{\tup x}$, we define the tuple $\warp{\tup u}{s}{t}\in \Pp_t^{\tup x}$ as follows: for each $y\in \tup x$, $\warp{\tup u}{s}{t}(y)$ is the unique part of $\Pp_t$ that contains $\tup u(y)$. For a tuple $\tup u\in V(G)^{\tup x}$ of vertices and $s\in [n]$, by $\iwarp{\tup u}{s}\in \Pp_s^{\tup x}$ we denote the unique tuple  whose $y$-component, for $y\in\tup x$, is the part of $\Pp_s$ containing $\tup u(y)$.

For $s\in[n-1]$, by $B_{s+1}$ we denote the unique part of $\Pp_{s+1}$ that is the union of two parts in~$\Pp_{s}$.
For a parameter $r\in \N$, we define the {\em{$r$-relevant region}} in $G_s$ as follows:
$$\Rel^r_s\coloneqq G_s[\{C\in \Pp_s~|~C\subseteq B_{s+1},\textrm{ or }C\in \Pp_{s+1}\textrm{ and }\dist_{s+1}(C,B_{s+1})\leq r\}].$$
In other words, $\Rel^r_s$ is the trigraph induced in $G_s$ by the two parts of $\Pp_s$ that get contracted into $B_{s+1}$ and all parts of $\Pp_s$ that stay intact in $\Pp_{s+1}$ and are at distance at most $r$ from $B_{s+1}$ in~$\Imp_{s+1}$.

Note that we have $|\Rel_s^p|\leq \Oh_{d,p}(1)$ for all $s\in [n-1]$. The next lemma shows that the $p$-relevant regions can be computed efficiently. The proof boils down to tracing the trigraph carefully along the contraction sequence.

\begin{lemma}\label{lem:compute-affected}
 Suppose a graph $G$ on vertex set $[n]$ is provided through a convex contraction sequence $\Pp$ of width $d$. Then for a given $p\in \N$, one can in time $\Oh_{d,p}(n)$ compute the trigraphs $\Rel_s^p$ for all~$s\in [n-1]$.
\end{lemma}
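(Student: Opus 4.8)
The plan is to process the contraction sequence step by step, maintaining at each time $s$ a small local snapshot of the trigraph $G_s$ around the relevant action, together with enough adjacency information to compute $\Rel_s^p$ and to carry the snapshot forward to time $s+1$. The key observation is that, since the width is $d$, at each time $s$ the part $B_{s+1}$ has at most $d$ impure neighbours in $\Imp_{s+1}$, and more generally the $p$-vicinity of $B_{s+1}$ in $\Imp_{s+1}$ has size $\Oh_{d,p}(1)$; hence $|\Rel_s^p| \leq \Oh_{d,p}(1)$, as already noted. The difficulty is purely bookkeeping: we must be able to name, for each part currently involved, which original vertices it contains, and for any two currently-relevant parts decide in $\Oh_{d,p}(1)$ time whether the pair is complete, anti-complete, or impure — all of this updated in amortized $\Oh_{d,p}(1)$ time per contraction so that the total is $\Oh_{d,p}(n)$.

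First I would set up the data structures. We maintain a union-find-like structure on $[n]$ recording, for each vertex $v$, the current part $\iwarp{v}{s}$ containing it; more usefully, for each part that is currently impure towards some part we keep a bounded-size list of ``sample vertices'' sufficient to witness impurities. Concretely, for each current part $A$ and each part $C$ with $\dist_s(A,C) \leq$ (some fixed radius larger than $p$, to anticipate the next few steps), we store one pair $(a,c) \in A\times C$ with $a \sim c$ and one pair with $a\not\sim c$ whenever the pair $A,C$ is impure, and a single bit otherwise; since each part has $\leq d$ impure neighbours, the total size of this structure is $\Oh_{d}(n)$ at any time, and $\Oh_{d,p}(n)$ if we keep the $p$-vicinity. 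When we contract $A,B$ into $B_{s+1}$, the only parts whose status can change are those in the common vicinity of $A$ and $B$: for each such part $C$, the pair $(B_{s+1}, C)$ is complete iff both $(A,C)$ and $(B,C)$ were complete, anti-complete iff both were anti-complete, and impure otherwise — except that if one of $A,C$ (resp.\ $B,C$) was \emph{impure} we must look up a witness, and if the two ``pure'' verdicts disagree (one complete, one anti-complete) the new pair is impure with a witness assembled from the stored samples. This update touches $\Oh_{d,p}(1)$ parts and costs $\Oh_{d,p}(1)$ time; crucially the width bound guarantees that no part ever acquires more than $d$ impure neighbours, so the structure never overflows.

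Second, once the structure is maintained, computing $\Rel_s^p$ at each time is immediate: by definition $\Rel_s^p$ is the trigraph induced in $G_s$ on the two parts contracted into $B_{s+1}$ together with the parts of $\Pp_s$ that survive into $\Pp_{s+1}$ and lie within distance $p$ of $B_{s+1}$ in $\Imp_{s+1}$. Doing a BFS of radius $p$ from $B_{s+1}$ in $\Imp_{s+1}$ — which we can simulate using the stored impure-neighbour lists — identifies this vertex set of size $\Oh_{d,p}(1)$, and reading off the $\Oh_{d,p}(1)$ pairwise verdicts from the maintained structure outputs the trigraph; this costs $\Oh_{d,p}(1)$ per time, hence $\Oh_{d,p}(n)$ in total. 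The convexity hypothesis is not essential for correctness of this scheme but makes the initial part-membership bookkeeping cleaner (each part is an interval of $[n]$, so ``which part contains $v$'' is a predecessor query), which is why it appears in the statement.

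The main obstacle I anticipate is the amortized-linear update of the adjacency snapshot: one has to argue carefully that, over the whole sequence, the total number of ``witness recomputations'' is $\Oh_{d,p}(n)$ and not, say, $\Oh_{d,p}(n\log n)$. The clean way is to charge each recomputation to the single contraction that triggered it and to invoke the width bound to cap the fan-out; since each of the $n$ contractions affects only the $\Oh_{d,p}(1)$ parts in a bounded vicinity and touches each stored witness $\Oh_{d,p}(1)$ times, the charging goes through. A secondary subtlety is handling impurity witnesses compositionally when merging two parts whose verdicts toward a third part are ``impure'' and ``complete'' (or similar) — here one must check that the stored samples genuinely certify the merged verdict, which is a short case analysis on the three possible statuses on each side. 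None of this is deep; it is the routine ``trace the trigraph along the sequence'' argument, which is exactly why the lemma is stated without a detailed proof in the main text and deferred to a careful but mechanical verification.
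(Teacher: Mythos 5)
There is a genuine gap, and it sits exactly where the paper's proof does its real work. Your scheme assumes that, whenever a pair of parts $X,Y$ becomes relevant (enters the $p$-vicinity of the contracted part) and is pure, you can read off \emph{which} pure verdict holds --- complete or anti-complete --- either because it was already stored or by inspecting a ``sample'' pair of vertices. But the graph is not given by an adjacency structure; it is specified only through the contraction sequence, whose auxiliary data records relations solely at contraction events (the impure neighbours of the new part, and the relations of those neighbours to the two merged parts). So at time $1$ the verdict for two singleton parts is exactly the adjacency of two vertices, which is not queryable in $\Oh(1)$ time from the input, and at later times a pure pair $X,Y$ can jump from impurity-distance $\infty$ in $\Imp_s$ to distance $\le p$ in $\Imp_{s+1}$ in a single step (the new part $B_{s+1}$ can be impure towards both even when the merged halves were pure towards them), so no radius-restricted snapshot ``slightly larger than $p$'' ever contains the needed bit, and your merge rule (``complete iff both old verdicts were complete'') has nothing to start from. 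Note that answering vertex-adjacency queries from this representation is essentially Theorem~\ref{thm:intro-main-qa} for $E(x,y)$, with $\Oh(\log\log n)$ query time --- it is not a free primitive.

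The paper resolves precisely this point by \emph{deferring} the determination of pure-pair verdicts: for each pure-but-close pair $(X,Y)$ at time $s$ it defines the earliest future time $t(X,Y,s)$ at which the parts containing $X$ and $Y$ either merge or become impure towards each other, because that is the first moment the input data explicitly reveals the relation; a second scan maintains, for every currently ``local'' pure pair $C,D$, a list $\Ll_{C,D}$ of pending queries, which are all answered (and charged, in an amortized-linear fashion) when that event occurs. Your amortization discussion addresses the fan-out of vicinity updates, which is the easy part; without the forward-deferral mechanism (or some equivalent way to extract pure-pair relations from the contraction-sequence encoding), the algorithm you describe cannot fill in the complete/anti-complete entries of $\Rel_s^p$ at all, so the proposal does not establish the lemma in the stated input model.
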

\begin{proof}
 For a time $s\in [n]$, $s>1$, let the {\em{$p$-affected region}} $\Aff_s^p$ be the $p$-vicinity of $B_s$ (the part of $\Pp_s$ obtained from the contraction of two parts of $\Pp_{s-1}$). That is, 
 $$\Aff_s^p\coloneqq \Vic^p_s(B_s).$$
 Observe that given $\Aff_{s+1}^p$ and the information associated in the contraction sequence with the contraction at the time $s$, one can compute the $p$-relevant region $\Rel_s^p$ in time $\Oh_{d,p}(1)$. Therefore, from now on we may focus on computing the $p$-affected regions $\Aff_s^p$ for all $s\in [n]$, $s>1$.

 First observe that in time $\Oh_d(n)$ we can scan the contraction sequence while maintaining, at every time $s\in [n]$, the current impurity graph $\Imp_s$ (say, represented through adjacency lists). Indeed, for $s=1$ the graph $\Imp_1$ is edgeless, and updating $\Imp_{s}$ to $\Imp_{s+1}$ using the data provided with the contraction sequence requires time $\Oh_d(1)$. 
 We can therefore execute this scan and for every $s\in [n]$, record the subgraph induced in $\Imp_t$ by parts at distance at most~$p$; call this subgraph $H_s$. Since $|H_s|\leq \Oh_{d,p}(1)$ for each $s\in [n]$, this computation takes total time~$\Oh_{d,p}(n)$. 
 
 To construct trigraphs $\Aff_s$ from graphs $H_s$ it remains to determine, for each $s\in [n]$ and every pair of parts $X,Y\in V(H_s)$ that are non-adjacent in $H_s$, whether $X,Y$ are complete or anti-complete. We do it by a second scan of the contraction sequence as follows.
 
 Let $\Lambda$ be the set of all triples $(X,Y,s)$ as above; note that $\Lambda$ can be computed in time~$\Oh_{d,p}(n)$ by considering every graph $H_s$ separately.
 For every triple $(X,Y,s)\in \Lambda$, let $t(X,Y,s)$ be the earliest time such that the following holds: if $X',Y'\in \Pp_t$ are the unique parts of $\Pp_{t(X,Y,s)}$ containing $X$ and $Y$, respectively, then either $X'=Y'$ or the pair $X',Y'$ is impure. Note that $t(X,Y,s)$ is well-defined, because all vertices are eventually contracted into a single part in $\Pp_n$, and it always holds that $s<t(X,Y,s)$.
 
 The idea is that we scan the contraction sequence again and at every time $t\in [n]$, we determine completeness or anti-completeness of all pairs $A,B$ with $(A,B,s)\in \Lambda$ such that $t=t(A,B,s)$.
 
 Call a pair of distinct parts $C,D\in \Pp_t$ {\em{local}} at the time $t$ if $C,D$ are non-adjacent in $\Imp_t$, but $\dist_t(C,D)\leq p$. Note that there are only $\Oh_{d,p}(|\Pp_t|)$ local pairs at every time $t$. During the scan, with every local pair $C,D\in \Pp_t$ we maintain a list $\Ll_{C,D}$ consisting of all pairs $(X,Y,s)\in \Lambda$ such that $X\subseteq C$, $Y\subseteq D$, and $s\leq t$. With every part $C\in \Pp_t$ we maintain the set of all local pairs $C,D$ involving $C$, each with a pointer to the list $\Ll_{C,D}$ described above. Note that at the time $t=1$ there are no local pairs, so there is no need for initialization. Therefore, we now describe how to update the lists upon moving from the time $t-1$ to the time $t$, and how to use them to resolve the completeness or anti-completeness of all pairs $(A,B,s)$ with $t=t(A,B,s)$.
 
 Suppose at time $t$ one contracts parts $A,A'\in \Pp_{t-1}$ into the part $B\coloneqq B_t=A\cup A'\in \Pp_t$. It can be easily seen that the necessary updates can be done by performing the following operations.
 \begin{itemize}
  \item If the pair $A,A'$ was local at the time  $t-1$ (this can be determined by checking if it has an associated list), then in the data associated with the contraction sequence it is described whether the pair $A,A'$ is complete or anti-complete. For every triple $(X,Y,s)\in \Ll_{A,A'}$, set the relation between $A$ and $A'$ in $\Aff_s$ accordingly. Then destroy the list $\Ll_{A,A'}$.
  \item For every part $C$ that is adjacent to $B$ in $\Imp_t$, check whether the pair $A,C$ was local at the time $t-1$. If so, perform the same operation on pair $A,C$ as was done on pair $A,A'$ in the previous point. Also, do the same for the pair $A',C$.
  \item For every part $D\in \Pp_t$ such that the pair $B,D$ is local at the time $t$, set the list $\Ll_{B,D}$ as follows. Check whether pairs $A,D$ and $A',D$ were local at the time $t-1$. If none of them was, then set $\Ll_{B,D}$ to be an empty list. If one of them was, say $A,D$, then set $\Ll_{B,D}\coloneqq \Ll_{A,D}$. If both of them were, then concatenate $\Ll_{A,D}$ and $\Ll_{A',D}$ and set $\Ll_{B,D}$ to be the obtained list. Finally, in all three cases above, append the triple $(B,D,t)$ to the list $\Ll_{B,D}$.
 \end{itemize}
 The time complexity of operations presented above is $\Oh_{d,p}(1)$ plus the total length of lists destroyed in the first two points. Observe that at every time we construct only $\Oh_{d,p}(1)$ new list elements, hence the total length of all lists destroyed  throughout the whole scan is $\Oh_{d,p}(n)$. It follows that the total running time is $\Oh_{d,p}(n)$, as promised.
\end{proof}

\paragraph*{Specifying a graph through its contraction sequence.} In all algorithmic statements we will assume that a graph is given by specifying its contraction sequence together with some auxiliary information encoding the edge relation. We now make this precise.

Let $\Pp_1,\ldots,\Pp_n$ be a contraction sequence of a graph $G$. We assume that every part participating in the partitions $\Pp_1,\ldots,\Pp_n$ 
(that is, every element of the union $\Pp_1\cup\cdots\cup\Pp_n$,
where each $\Pp_i$ is viewed as a set of sets of vertices)
is specified through a unique identifier taking a single machine word. For $\Pp_1$, the identifiers of (singleton) parts coincide with identifiers of the corresponding vertices. Then sequence $\Pp_1,\ldots,\Pp_n$ is represented by providing the following information for every time $s\in [n]$, $s>1$:
\begin{itemize} 
 \item The identifiers of the two parts $A,A'\in \Pp_{s-1}$ that get contracted at time $s$, and the identifier of the obtained part $B=A\cup A'\in \Pp_s$.
 \item A list of identifiers of parts $C\in \Pp_s$ such that the pair $B,C$ is impure in $G$ (that is, the impurities incident to $B$ in $\ImpGraph(G/\Pp_s)$).
 \item For each part $C$ on the list above, the relation (completeness, anti-completeness, or impurity) between $C$ and $A$ and between $C$ and $B$ in $G/\Pp_{s-1}$.
\end{itemize}
It is easy to see that this representation uniquely defines the graph $G$. Since the representation takes $\Oh_d(1)$ machine words at  any time $s$, we can thus represent an $n$-vertex graph of twin-width $d$ using $\Oh_d(n)$ machine words.

We now show that, for a graph given through a contraction sequence, one can reindex the vertex set using integers from $[n]$ so that the contraction sequence becomes convex.

\begin{lemma}\label{lem:reindexing}
 Suppose a graph $G$ is given by specifying a contraction sequence $\Pp_1,\ldots,\Pp_n$ of width $d$. Then one can in time $\Oh_d(n)$ compute a bijection $\eta\colon V(G)\to [n]$ such that mapping $G$ and $\Pp_1,\ldots,\Pp_n$ through $\eta$ yields an isomorphic graph $G'$ on vertex set $[n]$ and its contraction sequence $\Pp'_1,\ldots,\Pp'_n$ such that $\Pp'_1,\ldots,\Pp'_n$ is convex in the natural order on integers in $[n]$.
\end{lemma}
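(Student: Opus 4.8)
The plan is to exploit the tree-like structure of a contraction sequence. Think of the contraction sequence $\Pp_1,\ldots,\Pp_n$ as a rooted binary tree $T$: the leaves are the singletons of $\Pp_1$ (equivalently, the vertices of $G$), and every internal node corresponds to a part $B_s$ created at some time $s>1$ by contracting its two children. A convex ordering of $V(G)$ is exactly an ordering in which every part ever appearing in the sequence is an interval; and the parts appearing in the sequence are precisely the node-sets of subtrees of $T$ (the set of leaves below a node). Hence it suffices to compute any ordering of the leaves of $T$ that is consistent with the tree structure, i.e.\ a depth-first-search left-to-right leaf ordering, and then let $\eta$ send each vertex to its rank in that ordering.

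First I would build the tree $T$ explicitly from the representation of the contraction sequence: scan $s=2,\ldots,n$, and for each $s$ create a node for $B_s$ with pointers to its two children, which are the parts $A,A'\in\Pp_{s-1}$ that got contracted; these are identified by their machine-word identifiers, so using an array (or hash table) indexed by identifiers this takes $\Oh_d(n)$ time overall, producing $\Oh(n)$ nodes and a designated root (the node created at time $n$, corresponding to the unique part of $\Pp_n$). Next I would compute, for each node, the number of leaves in its subtree, by a single bottom-up pass over $s=2,\ldots,n$ (the size of $B_s$ is the sum of the sizes of its two children, with leaves having size $1$); again $\Oh_d(n)$ time. Then I would assign intervals top-down: the root gets the interval $[1,n]$; a node with interval $[a,b]$ and children of subtree-sizes $\ell$ (left) and $r$ (right) gives its left child $[a,a+\ell-1]$ and its right child $[a+\ell,b]$. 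A leaf with interval $[i,i]$ has $\eta$-value $i$. This top-down pass is again linear, and since $T$ has $\Oh(n)$ nodes the whole algorithm runs in $\Oh_d(n)$ time.

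It remains to argue correctness, which is essentially bookkeeping. By construction $\eta$ is a bijection $V(G)\to[n]$ because the leaf intervals are singletons partitioning $[1,n]$. Transporting $G$ and the $\Pp_i$ through $\eta$ yields an isomorphic graph $G'$ on $[n]$ and a contraction sequence $\Pp'_1,\ldots,\Pp'_n$ (isomorphisms preserve the contraction-sequence axioms verbatim). Finally, every part $P$ appearing in $\Pp_1\cup\cdots\cup\Pp_n$ equals the leaf-set of some node of $T$: the singletons are leaves, and each $B_s$ is by definition a node; parts that persist across several partitions correspond to the same node. The interval-assignment rule guarantees that the $\eta$-image of the leaf-set of any node is exactly that node's interval, hence an interval of $[n]$; so every part of every $\Pp'_t$ is convex in the natural order, i.e.\ the contraction sequence $\Pp'_1,\ldots,\Pp'_n$ is convex, as required.

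I do not expect a genuine obstacle here; the only points needing care are (a) confirming that the identifiers let us look up children in $\Oh(1)$ amortized time so the tree construction is truly linear (an array indexed by identifiers suffices, since identifiers fit in a machine word and there are $\Oh_d(n)$ of them), and (b) verifying that ``part appearing in the sequence'' coincides with ``leaf-set of a node of $T$'', so that convexity of every node-interval is exactly convexity of the contraction sequence. Both are routine, so the lemma follows.
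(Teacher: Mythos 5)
Your proposal is correct and follows essentially the same route as the paper: build the tree underlying the contraction sequence (the paper uses nodes $(P,s)$ for parts at times, you use the condensed binary merge tree, which is the same object up to contracting unary paths) and number the vertices by a left-to-right/pre-order traversal of the leaves, which makes every part an interval. Your interval-assignment pass is just an explicit implementation of the paper's ``take any pre-order of $T$,'' so nothing further is needed.
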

\begin{proof}
 Let $T$ be the rooted tree 
whose nodes are pairs $(P,s)$, where $s\in[n]$ and $P$ is a part in $\Pp_s$,
where $(P,s)$ is an ancestor of $(Q,t)$ if $t\le s$ and $Q\subset P$.
  It is easy compute $T$ in time $\Oh_d(n)$. Now, it remains to observe that as $\eta$ we can take the indexing of the leaves of $T$ (which naturally correspond to vertices of $G$) according to any pre-order of $T$.
\end{proof}

Note that if a graph is reindexed using Lemma~\ref{lem:reindexing}, then every part participating in the resulting contraction sequence is an interval in $[n]$. Hence, as the identifier of a part we can simply use a pair of vertices --- the left endpoint and the right endpoint --- and such identifiers can be computed in time $\Oh_d(n)$ by scanning the contraction sequence. We will therefore assume that contraction sequences 
are convex with respect to a fixed ordering of the vertices, and the (convex) parts are identified by their endpoints.

\paragraph*{First-order logic.}
We fix a countable set of variables, together with its enumeration. 
 If $\Omega$ is a set and $\tup x$ is a finite set of variables, then an {\em{$\tup x$-tuple}} with entries in $\Omega$ is a function from $\tup x$ to $\Omega$. Tuples are by convention denoted with boldface small letters, e.g. $\tup u$ or $\tup v$. The set of all $\tup x$-tuples with entries in $\Omega$ is denoted by $\Omega^{\tup x}$.
When $\tup a\in \Omega^{\tup x}$ is an $\tup x$-tuple 
and $b\in \Omega$, then by $\tup ab$ we denote the $(\tup x\cup\set{y})$-tuple
that extends $\tup a$ and maps the 
first  variable (according to the fixed enumeration of all variables)
$y$ not in $\tup x$, to $b$.

We consider standard first-order logic on graphs by modeling them as relational structures where the universe is the vertex set and there is a single binary predicate signifying adjacency. For a graph $G$, a formula $\varphi(\tup x)$, where $\tup x$ is the set of free variables of $\varphi$, and a tuple of vertices $\tup w\in V(G)^{\tup x}$, we write $G\models \varphi(\tup w)$, or $G,\tup w\models\varphi(\tup x)$, to denote that $\tup w$ satisfies $\varphi(\tup x)$ in $G$. We sometimes consider formulas with an explicitly partitioned set of free variables, e.g., $\varphi(\tup x,\tup y)$. 
Sentences are formulas with no free variables.

%

\paragraph*{Logical types.}
%
%
%
While the usual definition of a logical type of quantifier rank $k$ 
 of a tuple $\tup a$ of vertices of $G$ is the set of all formulas $\phi(\tup x)$ such that $G \models \phi(\tup a)$, we will rely on a definition which is more suitable for our purposes and is well known to be equivalent, by the result of Ehrenfeucht and Fra\"iss\'e (see for example~\cite{fmt_ebbinghaus_flum}).

Let $\tup x$ be a finite set of variables. An \emph{atomic type with variables $\tup x$} is a maximal consistent set $S$ of formulas of the form $x=y$, $x \not= y$, $E(x,y)$, $\lnot E(x,y)$, where $x, y \in \tup x$. Here by \emph{consistent} we mean that there is some graph $G$ and a tuple $\tup w\in V(G)^{\tup x}$ that satisfies all formulas occurring in the atomic type (this is decidable, as it is sufficient to consider graphs $G$ with $|G|\le |\tup x|$).

For $\tup a \in V(G)^{\tup x}$, the  \emph{atomic type of $\tup a$ in $G$}  is the atomic type with variables $\tup x$ which consists of all formulas of the form $E(x,y)$ or $x=y$, where $x,y\in \tup x$, such that $G,\tup a\models x=y$ or $G,\tup a\models E(x,y)$.

\begin{definition}
Let $G$ be a graph $\tup x$ a finite set of variables and $k \in \N$. For every $\tup a \in V(G)^{\tup x}$ we define its \emph{type of quantifier rank $k$}, denoted $\tp^k(\tup a)$, as follows.
\begin{itemize} 
\item If $k=0$, then $\tp^0(\tup a)$ is the atomic type of $\tup a$ in $G$.

\item If $k>0$, then $\tp^k(\tup a) = \setof{\tp^{k-1}(\tup ab)}{b\in V(G)}$.
\end{itemize}
For $k\ge 1$ we also set $\tp^k(G) = \setof{\tp^{k-1}(a)}{a \in V(G)}$.
\end{definition}
This definition is usually intuitively explained in terms of Ehrenfeucht-Fra\"iss\'e games. Namely, two $\tup x$-tuples $\tup a$ and $\tup b$ of vertices
of two graphs $G$ and $H$, respectively,
have equal types of quantifier rank $k$ if and only if Duplicator 
wins the $k$-round game on the graphs $G$ and $H$, 
where the initial pebbles in $G$ and $H$ are placed on the vertices occurring in $\tup a$ and in $\tup b$, respectively.
Indeed, suppose $\tp^k(\tup a)=\tp^k(\tup b)$, where $k>0$, and that Spoiler places a pebble on a vertex $c$ of $G$.
Then, since $\tp^{k-1}(\tup ac)\in \tp^{k}(\tup a)$ 
by definition and $\tp^k(\tup a)=\tp^k(\tup b)$, 
we have that there is some $d\in\tp^k(\tup b)$ 
such that $\tp^{k-1}(\tup bd)\in \tp^k(\tup b)$.
Then Duplicator responds by placing the pebble on the vertex $d$,
and we have that $\tp^{k-1}(\tup ac)=\tp^{k-1}(\tup bd)$ so,
by inductive assumption, Duplicator wins in the $k-1$ round game 
from the current configuration, which shows that Duplicator has a winning strategy in the $k$ round game starting from $\tup a$ and $\tup b$. The implication in the other direction proceeds similarly.

\medskip
As is well known, the set of types of $\tup x$-tuples of quantifier rank $k$ that are realized by some tuple $\tup a$, in some graph, is non-computable, even though this set has size bounded in terms of $\tup x$ and $k$. To overcome this problem,
the usual solution is to define the set of abstract types 
(that may not be realized as actual types), which is computable from $\tup x$ and $k$, has bounded size, and contains all types that may arise. This is done as follows.

Define $\Types_{\tup x}^0$ as the set of all atomic types over $\tup x$ and $\Types_{\tup x}^k:=\setof{M}{M \subset \Types_{\tup xy}^{k-1}}$.
Note that for any $G$ and any $\tup a \in V(G)^{\tup x}$ it holds that   $\tp^k(\tup a) \in \Types_{\tup x}^k$, but $ \Types_{\tup x}^k$ can also contain objects which are not realized by any tuple of vertices $\tup a$ of any graph.

For a graph $G$ we set $\Types_{\tup x}^k(G) := \setof{\tp^k(\tup a)}{\tup a \in V(G)^{\tup x}}$. Note that we have $\tp^k(G) = \Types_{x}^{k-1}(G)$.

The following is well known and follows from the fact that our definition of types is equivalent to the usual definition of types using formulas.
\begin{proposition}\label{prop:types_mc}
Let $G$ be a graph, $\tup x$ a set of variables and $k \in \N$.
\begin{itemize} 
\item $|\Types_{\tup x}^k(G)| = \Oh_{k,\tup x}(1)$,
\item For any $\tup a \in V(G)^{\tup x}$ and any first-order formula $\phi(\tup x)$ of quantifier rank at most $k$ one can determine whether $G \models \phi(\tup a)$ from $\tp^k(\tup a)$ in time $\Oh_{k,\tup x}(1)$.
\item For any first-order sentence $\phi$ of quantifier rank at most $k$ one can determine whether $G \models \phi$ from $\tp^k(G)$ in time $\Oh_{k}(1)$.
\end{itemize} 
\end{proposition}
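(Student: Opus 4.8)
# Proof Proposal for Proposition \ref{prop:types_mc}

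The plan is to prove the three items in order, leveraging the equivalence between the inductive (Ehrenfeucht–Fra\"iss\'e) definition of types given above and the classical syntactic definition of quantifier-rank-$k$ types as sets of formulas. The guiding principle throughout is that $\tp^k(\tup a)$ is a finite combinatorial object drawn from a set $\Types^k_{\tup x}$ whose size depends only on $k$ and $\tup x$, so all the claimed bounds are ultimately statements about this bounded universe.

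\textbf{First item (bound on the number of types).} I would argue by induction on $k$ that $|\Types^k_{\tup x}| = \Oh_{k,\tup x}(1)$, which immediately gives the bound on $|\Types^k_{\tup x}(G)| \le |\Types^k_{\tup x}|$. For $k=0$, the set $\Types^0_{\tup x}$ of atomic types over $\tup x$ is a set of maximal consistent sets of atomic and negated-atomic formulas over the finitely many pairs of variables in $\tup x$; there are at most $2^{\Oh(|\tup x|^2)}$ such sets, which is a constant depending only on $\tup x$. For the inductive step, $\Types^k_{\tup x} = \{ M : M \subseteq \Types^{k-1}_{\tup x y}\}$, so $|\Types^k_{\tup x}| = 2^{|\Types^{k-1}_{\tup x y}|}$, which is again a constant by the inductive hypothesis (note that $\tup x y$ has only one more variable than $\tup x$, and the variable chosen is determined by the fixed enumeration, so there is a genuine recursion here). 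This is a clean finite induction with no real obstacle.

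\textbf{Second and third items (extracting satisfaction from the type).} For the second item, I would show by induction on the quantifier rank $k$ that for every formula $\phi(\tup x)$ of quantifier rank at most $k$, whether $G \models \phi(\tup a)$ is a function of $\tp^k(\tup a)$ alone, computable in constant time. The base case $k=0$ is a direct structural induction on the quantifier-free formula $\phi$: atomic formulas $E(x,y)$ and $x=y$ are read off directly from the atomic type $\tp^0(\tup a)$, and Boolean connectives are handled recursively. For the inductive step, write $\phi$ in a normal form as a Boolean combination of formulas, each of which is either of quantifier rank at most $k-1$ (handled by combining the inductive hypothesis with the fact that $\tp^{k-1}(\tup a)$ is recoverable from $\tp^k(\tup a)$—indeed $\tp^{k-1}(\tup a)$ is determined by $\tp^k(\tup a)$ since the latter is a superset of the data, or more directly one re-derives it; I should be slightly careful here and instead argue directly about $\exists$-formulas) or of the form $\exists y\, \psi(\tup x, y)$ with $\psi$ of quantifier rank at most $k-1$. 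For the existential case, $G \models \exists y\, \psi(\tup a, y)$ iff there exists $b \in V(G)$ with $G \models \psi(\tup a b)$, which by the inductive hypothesis depends only on $\tp^{k-1}(\tup a b)$; and $\{\tp^{k-1}(\tup a b) : b \in V(G)\}$ is exactly $\tp^k(\tup a)$. Hence $G \models \exists y\, \psi(\tup a, y)$ iff some $\tau \in \tp^k(\tup a)$ is an "accepting" type for $\psi$, a condition checkable in time $\Oh_{k,\tup x}(1)$ since $|\tp^k(\tup a)| \le |\Types^{k-1}_{\tup x y}| = \Oh_{k,\tup x}(1)$ by the first item. The third item is the special case $\tup x = \emptyset$ of the second, together with the identity $\tp^k(G) = \Types^{k-1}_x(G)$ and the observation that $\tp^{k}(\emptyset\text{-tuple})$ carries the same information as $\tp^{k-1}(G)$ (a sentence of quantifier rank $\le k$ is a Boolean combination of sentences $\exists y\,\psi(y)$ with $\psi$ of rank $\le k-1$, and $G \models \exists y\, \psi(y)$ iff some element of $\tp^k(G)$ is accepting for $\psi$).

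\textbf{Main obstacle.} There is no deep mathematical obstacle; the proposition is folklore. The only subtlety worth care is bookkeeping: making sure the normal-form decomposition of a formula of quantifier rank $k$ into Boolean combinations of rank-$(k-1)$ formulas and $\exists$-prefixed rank-$(k-1)$ formulas is stated correctly (one may appeal to the standard fact that every first-order formula of quantifier rank $k$ is equivalent to one in which no atomic subformula lies under more than $k$ quantifiers, obtained by pushing quantifiers inward), and that the recursion on the variable set $\tup x \mapsto \tup x y$ is well-founded because the "fresh" variable is pinned down by the fixed global enumeration of variables. I would simply cite \cite{fmt_ebbinghaus_flum} for the equivalence of the inductive and syntactic type definitions and keep the write-up short, since the statement is explicitly flagged as well known.
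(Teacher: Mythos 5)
Your proposal is correct and is exactly the standard folklore argument that the paper itself invokes without proof (it merely remarks that the proposition "is well known and follows from" the equivalence of the inductive and syntactic definitions of types, citing \cite{fmt_ebbinghaus_flum}). Your inductive bound on $|\Types^k_{\tup x}|$ and the induction on $k$ reducing $\exists y\,\psi$ to membership of an accepting $(k-1)$-type in $\tp^k(\tup a)$ are precisely the content of that well-known equivalence, so there is nothing to fix.
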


\section{Local types}\label{sec:types}

In this section we define local types of quantifier rank $k$ for partitioned graphs, or \emph{local $k$-types} for short. They provide a framework for the results proved in the rest of the paper. The key lemmas are Lemma~\ref{lem:ltp_compositionality} and Lemma~\ref{lem:ltp_consistency} and their corollaries Lemma~\ref{lem:tp-merge} and Lemma~\ref{lem:tp-warp}.

\subsection{Local types for partitioned graphs}
Let $G$ be a graph and $\Pp$ be a partition of its vertex set,
and let $\tup x$ be a set of variables. For an $\tup x$-tuple $\tup a\in V(G)$ write 
$\iwarp {\tup a}{\Pp}$ for the $\tup x$-tuple 
$\tup u\in \Pp^{\tup x}$ such that $\tup u(x)$ is the part containing $\tup a(x)$, for all $x\in \tup x$.

\begin{definition}\label{def:ltp}
    Let $G$ be a graph, $\Pp$ be a partition of its vertex set,  $\tup x$ a nonempty set of variables, and $k \in \N$.
For any $\tup a \in V(G)^{\tup x}$ we define the \emph{local $k$-type of $\tup a$}, denoted $\ltp_{\Pp}^k(\tup a)$, as follows:
\begin{itemize} 
\item  $\ltp_{\Pp}^0(\tup a)$ is the atomic type of $\tup a$ together with the $\tup x$-tuple $\iwarp{\tup a}{\Pp}\in \cal P^{\tup x}$ of parts of $\cal P$ corresponding to $\tup a$,
\item for $k>0$, let $\ltp_{\Pp}^k(\tup a) = \setof{\ltp_{\Pp}^{k-1}(\tup ab)}{b \in w \text{ for some $w\in \Pp$ with $\dist_{\Pp}(\iwarp{\tup a}{\Pp}, w)\le 2^{k-1}$}}$.
\end{itemize}
\end{definition}

As with usual types of quantifier-rank $k$ defined in the previous section, it is often convenient to think about equality of local types in terms of games.
We now briefly describe the corresponding variant of Ehrenfeucht-Fra\"iss\'e games game.
This game will be played on a single graph $G$ with a fixed partition $\Pp$ of its vertex set (one can also imagine it being played on two copies of the same graph with the same partition).
The starting position of the game is determined by two $\tup x$-tuples $\tup a$ and $\tup b$  of vertices of $G$ (where $\tup x$ is nonempty) such that for every $y \in \tup x$ we have that $\tup a(y)$ is in the same part of $\Pp$ as $\tup b(y)$. The game is played for $k$ rounds as the usual Ehrenfeucht-Fra\"iss\'e game  with the following extra restrictions on the moves of the players: (1) In the $i$th round, Spoiler picks one of the tuples $\tup a$ and $\tup b$, and he will then proceed to extending it.
Suppose that he picks $\tup a$, the other case being symmetric.
Spoiler then picks a vertex $a$ in any part $P \in \Pp$ such that $\dist_{\Pp}(P,Q)\le 2^{k-1}$, where $Q$ is some part containing a vertex of $\tup a$. He then appends $a$ to $\tup a$ to form the tuple $\tup aa$. (2) Duplicator replies by picking a vertex $b$ in the same part $P$ and extending the other tuple $\tup b$ to $\tup bb$. 
 The game then continues to the next round, with $\tup aa$ and $\tup bb$ forming the new position.
 Duplicator wins after $k$ rounds if the two tuples have equal atomic types. 

It is not difficult to see that Duplicator wins the $k$-round game described above, starting from the configuration $\tup a$ and $\tup b$, if and only if $\ltp^k_{\Pp}(\tup a)=\ltp^k_{\Pp}(\tup b)$.
This is made formal in the following proposition, whose proof is an immediate consequence of Definition~\ref{def:ltp}.

\begin{proposition}Let $G$ be a graph and $\Pp$ be a partition of the vertices of $G$, and let $\tup x$ a tuple of variables and $k \in \N$. Then the following holds for any $\tup a, \tup b \in V(G)^{\tup x}$:
\begin{itemize} 
\item  $\ltp_\Pp^0(\tup a) = \ltp_\Pp^0(\tup b)$ if and only if the atomic types of $\tup a$ and $\tup b$ are the same, and 
 $\iwarp{\tup a}{\Pp}=\iwarp{\tup b}{\Pp}$;
\item If $k > 0$ then $\ltp_\Pp^k(\tup a) = \ltp_\Pp^k(\tup b)$ if and only if for any  $P \in \Pp$ with $\dist_\Pp(\iwarp{\tup a}{\Pp}, P)\le 2^{k-1}$ the following holds:
for any $c \in P$ there exists $c' \in P$ such that $\ltp_\Pp^{k-1}(\tup ac)  = \ltp_\Pp^{k-1}(\tup {b}c')$, and conversely,
for any $c' \in P$ there exists $c \in P$ such that $\ltp_\Pp^{k-1}(\tup ac)  = \ltp_\Pp^{k-1}(\tup {b}c')$.
\end{itemize}
%
\end{proposition}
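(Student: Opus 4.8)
The plan is to prove both bullet points directly by induction on $k$, unfolding Definition~\ref{def:ltp}, since the statement is essentially an unpacking of what it means for two local types to be equal. This is exactly what is announced right before the statement (``whose proof is an immediate consequence of Definition~\ref{def:ltp}''), so I would keep it short.

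For the base case $k=0$: by definition, $\ltp_\Pp^0(\tup a)$ is the pair consisting of the atomic type of $\tup a$ and the tuple $\iwarp{\tup a}{\Pp}\in\Pp^{\tup x}$. Two pairs are equal iff their components are equal, which immediately gives the first bullet: $\ltp_\Pp^0(\tup a)=\ltp_\Pp^0(\tup b)$ iff $\tup a$ and $\tup b$ have the same atomic type and $\iwarp{\tup a}{\Pp}=\iwarp{\tup b}{\Pp}$. No induction is needed here.

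For the inductive step $k>0$: first observe that equality of the sets $\ltp_\Pp^k(\tup a)$ and $\ltp_\Pp^k(\tup b)$ forces $\iwarp{\tup a}{\Pp}=\iwarp{\tup b}{\Pp}$ — indeed, this information is already determined by any element of either set (it is contained in the $0$-types appearing recursively, or one can note it is an invariant of the whole $k$-type for $k\ge 1$), so the quantification radius ``$\dist_\Pp(\iwarp{\tup a}{\Pp},P)\le 2^{k-1}$'' refers to the same set of parts $P$ on both sides. Now by definition $\ltp_\Pp^k(\tup a)=\setof{\ltp_\Pp^{k-1}(\tup ac)}{c\in P,\ P\in\Pp,\ \dist_\Pp(\iwarp{\tup a}{\Pp},P)\le 2^{k-1}}$ and similarly for $\tup b$. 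Equality of these two sets means exactly: for every admissible $P$ and every $c\in P$ there is an admissible $P'$ and $c'\in P'$ with $\ltp_\Pp^{k-1}(\tup ac)=\ltp_\Pp^{k-1}(\tup bc')$, and symmetrically. To get the sharper statement in the proposition, namely that $c'$ may be taken in the \emph{same} part $P$ as $c$, I would invoke the induction hypothesis (the first bullet at level $k-1$): since $\ltp_\Pp^{k-1}(\tup ac)=\ltp_\Pp^{k-1}(\tup bc')$ implies $\iwarp{\tup ac}{\Pp}=\iwarp{\tup bc'}{\Pp}$, and the last coordinate of $\iwarp{\tup ac}{\Pp}$ is the part containing $c$ (which is $P$), we conclude $c'\in P$ as well. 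Conversely, if the ``same-part'' condition of the proposition holds, then collecting witnesses shows the two sets of $(k-1)$-types coincide, hence $\ltp_\Pp^k(\tup a)=\ltp_\Pp^k(\tup b)$. This closes the induction.

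The only point requiring a little care — and the closest thing to an ``obstacle'' — is the reduction from ``$c'$ lies in some admissible part'' to ``$c'$ lies in the same part $P$ as $c$''; everything hinges on the observation that the local type remembers which part each quantified vertex was placed in, which is built into the $0$-type component and propagates up the recursion. Once that is spelled out, both directions are purely formal set manipulations.
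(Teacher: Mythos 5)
Your proof is correct and is essentially the argument the paper has in mind: the paper omits the proof, declaring it an immediate consequence of Definition~\ref{def:ltp}, and your unfolding by induction on $k$ — using that local types record the part of each entry (so the witness $c'$ must lie in the same part $P$, and $\iwarp{\tup a}{\Pp}=\iwarp{\tup b}{\Pp}$ follows, making the two quantification radii coincide) — is exactly that intended unfolding.
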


We will also need to have an abstract set containing all types which could potentially occur for any $k\in \N$ and $\tup u \in \Pp^{\tup x}$. Note that this includes also types which are not realized in $G$ (or even in any graph).
\begin{definition}\label{def:Types}
Let $\tup x$ be  a nonempty set of variables and $k \in \N$.		
Fix a graph $G$ together with a vertex-partition $\Pp$.
For $\tup u \in \Pp^{\tup x}$ we define $\Types_{\tup u, \Pp}^0:=\setof{(\alpha,\tup u)}{\alpha \text{ is an atomic type with variables $\tup x$}}$. For $k > 0$ let $M$ be the set of all parts $w$ of $\Pp$ with $\dist_{\Pp}(\tup u,w)\le 2^{k-1}$ and let $M':=\bigcup_{w \in M} \Types_{\tup uw,\Pp}^{k-1}$. We then define
$$\Types_{\tup u, \Pp}^k:=\setof{S}{S\subset M'}.$$
Define also $\Types_{\tup u, \Pp}^k(G):=\setof{\ltp^{k}(\tup a)}{\tup a \in V(G)^{\tup x}, \tup u = \iwarp{\tup a}{\Pp}}$.
\end{definition}
Then $\Types_{\tup u, \Pp}(G)$ is the set of all local $k$-types realized in $\tup u$, and is a subset of $\Types^k_{\tup u,\Pp}$.

\subsection{Properties of local types}
In this section we establish the properties of local $k$-types used in the rest of the paper.

In the rest of this paper, we assume that we have fixed a graph $G$ and a contraction sequence $\cal P_1,\ldots, \cal P_n$ of $G$.
We write $\ltp_s^k(\cdot )$ to denote $\ltp_{\Pp_s}^k(\cdot )$,
$\Types_{\tup u,s}(\cdot )$ to denote $\Types_{\tup u,\Pp}(\cdot )$,
and $\dist_s(\cdot,\cdot)$ to denote $\dist_{\cal P_s}(\cdot ,\cdot)$.

The following two lemmas establish some basic properties of local $k$-types.

\begin{lemma}\label{lem:ltp_basic}
The following holds at any time $s\in [n]$ and $k \ge 1$.
\begin{itemize} 
\item If $\ltp_s^k({\tup a}) = \ltp_s^k({\tup b})$, then $\ltp_s^{k-1}({\tup a}) = \ltp_s^{k-1}({\tup b})$.
\item If $\ltp_s^k({\tup a}) = \ltp_s^k({\tup b})$, then $\iwarp{\tup a}{s} = \iwarp{\tup b}{s}$.
\end{itemize}
\end{lemma}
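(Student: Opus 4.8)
The plan is to prove both statements directly from Definition~\ref{def:ltp} by a routine induction on $k$, relying on the game characterization recorded in the preceding proposition for intuition but carrying out the argument symbolically.

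\textbf{First bullet.} To show that $\ltp_s^k(\tup a) = \ltp_s^k(\tup b)$ implies $\ltp_s^{k-1}(\tup a) = \ltp_s^{k-1}(\tup b)$, I would unfold the definition of $\ltp_s^k$ at $\tup a$ and at $\tup b$. By definition, $\ltp_s^k(\tup a)$ is the set of all $\ltp_s^{k-1}(\tup a c)$ over $c$ lying in a part $w$ with $\dist_s(\iwarp{\tup a}{s}, w) \le 2^{k-1}$; in particular, taking $w$ to be any part already occurring in $\iwarp{\tup a}{s}$ (distance $0 \le 2^{k-1}$) and taking $c$ to be the corresponding entry of $\tup a$, we see that $\ltp_s^{k-1}(\tup a c)$ — which determines $\ltp_s^{k-1}(\tup a)$ after forgetting the last coordinate — is an element of $\ltp_s^k(\tup a)$. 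The subtlety is that $\ltp_s^k$ does not literally contain $\ltp_s^{k-1}(\tup a)$; rather, I need to recover $\ltp_s^{k-1}(\tup a)$ from the members of $\ltp_s^k(\tup a)$. This is done by observing that from $\ltp_s^{k-1}(\tup a c)$ one can read off $\ltp_s^{k-2}(\tup a c')$ for every admissible $c'$ at radius $2^{k-2}$ around $\iwarp{\tup a c}{s} \supseteq \iwarp{\tup a}{s}$, and in particular all those needed to reconstruct $\ltp_s^{k-1}(\tup a)$, since a part at distance $\le 2^{k-2}$ from $\iwarp{\tup a}{s}$ is at distance $\le 2^{k-2} \le 2^{k-1}$, so the required witnesses $c'$ are available. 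Formally, the cleanest route is to prove by induction the auxiliary claim that for $j \le k$, $\ltp_s^k(\tup a)$ determines $\ltp_s^j(\tup a)$ (via an explicit ``projection'' map on abstract types depending only on $k$, $j$, $\tup x$), and then the first bullet is the case $j = k-1$. The base case $j=0$ is immediate since every member of $\ltp_s^k(\tup a)$ carries the tuple $\iwarp{\cdot}{s}$ and the atomic type of an extension determines the atomic type of the original tuple.

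\textbf{Second bullet.} For $\iwarp{\tup a}{s} = \iwarp{\tup b}{s}$: by the first bullet it suffices to handle $k = 1$, or even better, to note that $\ltp_s^k(\tup a) = \ltp_s^k(\tup b)$ implies $\ltp_s^0(\tup a) = \ltp_s^0(\tup b)$ (iterate the first bullet down to rank $0$), and $\ltp_s^0(\tup a)$ is by definition the atomic type of $\tup a$ \emph{together with} $\iwarp{\tup a}{s}$, so equality of rank-$0$ local types gives $\iwarp{\tup a}{s} = \iwarp{\tup b}{s}$ outright. One must be slightly careful that $\ltp_s^1(\tup a) = \ltp_s^1(\tup b)$ really does force $\ltp_s^0(\tup a) = \ltp_s^0(\tup b)$ and not merely the existence of some $\tup b'$ with the right rank-$0$ type; but this is exactly the case $j = 0$ of the auxiliary claim above (the projection map applied to the common value $\ltp_s^1(\tup a)$ yields a single well-defined rank-$0$ type, which therefore equals both $\ltp_s^0(\tup a)$ and $\ltp_s^0(\tup b)$).

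\textbf{Main obstacle.} The only real subtlety — and the reason a naive one-line argument does not quite work — is that $\ltp_s^k$ is defined as a \emph{set} of rank-$(k-1)$ types of one-longer tuples, so ``lower-rank information about $\tup a$ itself'' has to be extracted rather than read off directly, and one must check that the radius bookkeeping ($2^{k-1} \ge 2^{k-2} \ge \cdots$) always makes the needed witnesses admissible. Once the auxiliary ``projection'' claim is set up, both bullets follow formally with no further work; I expect the write-up to be short.
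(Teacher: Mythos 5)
Your proposal is correct and takes essentially the same route as the paper, which treats the lemma as immediate from Definition~\ref{def:ltp}: the ``projection'' map you describe (keep only the members of $\ltp_s^k(\tup a)$ whose extension part lies within radius $2^{k-2}$ and recursively project them, with the bookkeeping $2^{k-2}\le 2^{k-1}$ guaranteeing the needed witnesses are present) is exactly the $\textsf{trim}$ operation the paper constructs in Appendix~\ref{app:effective}, and reading the part tuple off the rank-$0$ level gives the second bullet just as you say. The only point to tidy in a write-up is the ``forget the extra coordinate'' step, which needs the (easy, same-rank) fact that the local type of a subtuple is determined by the local type of the full tuple, proved by the same induction.
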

\begin{proof}
Follows immediately from the definition.
\end{proof}

\begin{lemma}\label{lem:ltp_number_of_types}
Let $s \in [n]$ be a time and $\tup x$ a tuple of variables, and $k\ge 0$. 
Then $|\Types^k_{\tup u,s}|\leq \Oh_{d,k,\tup x}(1)$, for all $\tup u \in \Pp_s^{\tup x}$. Moreover, given vicinity $\Vic_s^{2^k}(\tup u)$, one can compute $\Types^k_{\tup u,s}$ in time $\Oh_{d,k,\tup x}(1)$.
\end{lemma}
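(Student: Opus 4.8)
The plan is to prove both statements simultaneously by induction on $k$, following the recursive structure of Definition~\ref{def:Types}. The key observation is that the entire object $\Types^k_{\tup u,s}$ depends only on the trigraph $\Vic_s^{2^k}(\tup u)$ together with the marked tuple of parts $\tup u$: indeed, the only data used in the definition of local $k$-types of tuples with $\iwarp{\tup a}{s}=\tup u$ is (i) atomic types of extensions, which are determined by adjacencies inside parts at distance $\le 2^{k-1}+\ldots+1 \le 2^k$ from $\tup u$, and (ii) which parts lie at distance $\le 2^{k-1}$ from the current tuple of parts, which is again information contained in $\Vic_s^{2^k}(\tup u)$ (in fact in its impurity graph). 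So it suffices to bound the size and show computability in terms of the size of this vicinity, and for the size bound we invoke $|\Vic_s^{r}(\tup u)| \le \Oh_{d,r,\tup x}(1)$, which follows because $\Imp_s$ has maximum degree $d$, so a ball of radius $r$ around the $|\tup x|$ parts of $\tup u$ contains at most $|\tup x|\cdot(1+d+d^2+\ldots+d^{r}) = \Oh_{d,r,\tup x}(1)$ parts.

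For the base case $k=0$: $\Types^0_{\tup u,s}$ is the set of pairs $(\alpha,\tup u)$ over atomic types $\alpha$ with variables $\tup x$; there are at most $\Oh_{\tup x}(1)$ atomic types (as noted in the preliminaries, only graphs on $\le |\tup x|$ vertices matter), so $|\Types^0_{\tup u,s}| \le \Oh_{\tup x}(1)$, and this set is trivially computable from $\tup u$ alone. For the inductive step $k>0$: by the definition, let $M$ be the set of parts $w\in\Pp_s$ with $\dist_s(\tup u,w)\le 2^{k-1}$ and $M' = \bigcup_{w\in M}\Types^{k-1}_{\tup uw,s}$. First, $|M|\le \Oh_{d,k,\tup x}(1)$ by the degree argument above. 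Next, for each $w\in M$, the tuple $\tup uw$ satisfies $\Vic_s^{2^{k-1}}(\tup uw) \subseteq \Vic_s^{2^{k-1}+2^{k-1}}(\tup u) = \Vic_s^{2^k}(\tup u)$ (since $w$ is within distance $2^{k-1}$ of $\tup u$), so the inductive hypothesis applies and gives $|\Types^{k-1}_{\tup uw,s}| \le \Oh_{d,k-1,\tup x\cup\{y\}}(1) = \Oh_{d,k,\tup x}(1)$. Hence $|M'|\le |M|\cdot\max_w|\Types^{k-1}_{\tup uw,s}| \le \Oh_{d,k,\tup x}(1)$, and therefore $|\Types^k_{\tup u,s}| = |2^{M'}| = 2^{|M'|} \le \Oh_{d,k,\tup x}(1)$.

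For the computability statement, given $\Vic_s^{2^k}(\tup u)$: we can read off $\Imp_s$ restricted to this vicinity, compute distances, and thereby determine $M$; for each $w\in M$ we extract $\Vic_s^{2^{k-1}}(\tup uw)$ from the given trigraph and recursively compute $\Types^{k-1}_{\tup uw,s}$ by the inductive hypothesis in time $\Oh_{d,k,\tup x}(1)$; we take the union to form $M'$ and then output its powerset. Since all intermediate sets have size $\Oh_{d,k,\tup x}(1)$ and there are $\Oh_{d,k,\tup x}(1)$ recursive calls (one per depth, branching by $|M|$), the total time is $\Oh_{d,k,\tup x}(1)$. The main thing to be careful about — rather than a genuine obstacle — is the bookkeeping that each nested vicinity $\Vic_s^{2^{k-1}}(\tup uw)$ really is a substructure of the given $\Vic_s^{2^k}(\tup u)$, so that no information beyond the input trigraph is ever needed; this is exactly the triangle-inequality containment noted above, and it is what makes the recursion well-founded on the single supplied trigraph.
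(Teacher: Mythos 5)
Your proposal is correct and follows essentially the same route as the paper: induction on $k$, bounding the number of parts within distance $2^{k-1}$ of $\tup u$ by the degree bound $d$ of the impurity graph, applying the inductive hypothesis to each $\Types^{k-1}_{\tup uw,s}$, and taking the powerset bound, with computability following because all parts involved lie in $\Vic_s^{2^k}(\tup u)$. Your extra remark that each nested vicinity $\Vic_s^{2^{k-1}}(\tup uw)$ is contained in the supplied trigraph is a slightly more explicit justification of the computability claim than the paper gives, but it is the same argument.
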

\begin{proof}
We bound the size of $\Types^k_{\tup u,s}$ by induction on $k$. For $k=0$, there is only a bounded number of atomic types of $\tup x$-tuples. For $k>0$, the number of parts $w \in \Pp_s$ with $\dist_s(\tup u, w) \le 2^{k-1}$ is at most $|\tup x|(d^k+1)$, let us denote them by $w_1, \ldots, w_m$. For each such part $w_i$ we know from the induction hypothesis that $|\Types^{k-1}_{\tup uw_i,s}|$ is bounded in terms of $d, k-1$ and $|\tup x| + 1$. Since each member of $\Types^k_{\tup u, s}$ is a subset of $ \bigcup_{1 \le i \le m}\Types^{k-1}_{\tup uw_i,s}$, we have that $|\Types^{k}_{\tup u, s}| \le 2^T$, where $T= \Sigma_{1 \le i \le m}|\Types^{k-1}_{\tup uw_i,s}|$, and the result follows.

The part about computation of $\Types^k_{\tup u,s}$  follows from the bound on the size of $\Types^k_{\tup u,s}$ and the fact that every part $w \in \Pp_s$ involved in the definition of $\Types^k_{\tup u,s}$ is in $\Vic_s^{2^k}(\tup u)$.
\end{proof}

The following lemma relates local types for partitioned graphs to usual first-order types, as defined in the preliminaries.
\begin{lemma}\label{lem:ltp_type_as_local_type}
Let $\tup x$ be a tuple of variables and $\tup a \in V(G)^{\tup x}$.
One can compute $\tp^k(\tup a)$ from $\ltp_n^k(\tup a)$ in 
time~$\Oh_{k,d,\tup x}(1)$.
\end{lemma}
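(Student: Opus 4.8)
The key observation is that in the last partition $\Pp_n$ there is exactly one part, namely $V(G)$ itself, so the impurity graph $\Imp_n$ is a single vertex with no edges. Consequently $\dist_n(\tup u, w) = 0$ for the unique $\tup u$ and $w$ that can ever occur, which means $\dist_n(\iwarp{\tup a}{n}, w) \le 2^{k-i}$ is satisfied trivially for every part $w$ at every depth $i$. In other words, the locality restriction in Definition~\ref{def:ltp} becomes vacuous at time $n$: quantification over ``parts at bounded distance'' degenerates to quantification over the single part $V(G)$, i.e.\ over all vertices of $G$. So $\ltp_n^k(\tup a)$ carries essentially the same information as $\tp^k(\tup a)$, except that it additionally records, at the base case, the (trivial) part $\iwarp{\tup a}{n}$, which is always the constant tuple $(V(G),\ldots,V(G))$.

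The plan is to prove, by induction on $k$, that the map sending $\ltp_n^k(\tup a)$ to $\tp^k(\tup a)$ is well defined and computable in time $\Oh_{k,d,\tup x}(1)$. For $k=0$: $\ltp_n^0(\tup a)$ is the atomic type of $\tup a$ together with the tuple of parts $\iwarp{\tup a}{n}$; simply discard the part information (it is the constant tuple $V(G)$) and return the atomic type, which is exactly $\tp^0(\tup a)$. For the inductive step, recall $\ltp_n^k(\tup a) = \setof{\ltp_n^{k-1}(\tup ab)}{b \in w,\ \dist_n(\iwarp{\tup a}{n},w)\le 2^{k-1}}$; since the only part is $w = V(G)$ and it is at distance $0$, this set equals $\setof{\ltp_n^{k-1}(\tup ab)}{b \in V(G)}$. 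Applying the inductive translation to each element turns this into $\setof{\tp^{k-1}(\tup ab)}{b\in V(G)} = \tp^k(\tup a)$. The recursion has depth $k$ and each level involves only the boundedly many ($\Oh_{k,d,\tup x}(1)$ by Lemma~\ref{lem:ltp_number_of_types}) types in $\Types^{k-1}_{\cdot,n}$, so the whole computation runs in time $\Oh_{k,d,\tup x}(1)$.

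There is no real obstacle here; the only point requiring a line of care is formally justifying that at time $n$ the distance condition is always met, which is immediate from $|\Pp_n| = 1$ and hence $\Imp_n$ being a one-vertex graph in which $\dist_n$ is identically $0$. One should also note that the induction is applied to the extended tuple $\tup ab$, so it is cleanest to phrase the inductive statement uniformly over all tuples $\tup a \in V(G)^{\tup x}$ and all finite variable sets $\tup x$ simultaneously, exactly as in the statement of the lemma.
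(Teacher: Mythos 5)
Your proposal is correct and follows essentially the same route as the paper's proof: observe that at time $n$ the partition has a single part, so the locality restriction and part annotations in $\ltp_n^k(\tup a)$ are vacuous, strip them out by recursion on the members of the local type, and bound the running time via Lemma~\ref{lem:ltp_number_of_types}. Your write-up just makes the induction on $k$ more explicit than the paper does, which is fine.
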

\begin{proof}
At the time $n$ there is only one part $w$ in $\Pp_n$, and so all the information related to this part and distances which is contained in $\ltp_n^k(\tup a)$ is trivial, and after removing it all that is left is the definition of $\tp^k(\tup a)$.
One can do this removal of information by recursing on members of $\ltp_n^k(\tup a)$ (which are themselves local $(k-1)$-types), and to bound the runtime it suffices to show that the size of $\ltp_n^k(\tup a)$ is bounded by $\Oh_{d,k,\tup x}(1)$.
Let $\tup w$ be the only member of  $\Pp_n^{\tup x}$.
Since $\ltp_n^k(\tup a)$ consists of members of $\Types_{\tup ww, s}^{k-1}$ and by Lemma~\ref{lem:ltp_number_of_types} we have $|\Types_{\tup ww, s}^{k-1}| \leq \Oh_{d,k,\tup x}(1)$, it follows by an easy inductive argument that $\ltp_n^k(\tup a)$ is of size bounded by $\Oh_{d,k,\tup x}(1)$, and the result follows.
\end{proof}

%



The next lemma is a version of compositionality of local types and plays a key role in computing local types.
\begin{lemma}\label{lem:ltp_compositionality}
Fix two disjoint sets of variables $\tup x$ and $\tup y$. Let $\tup a, \tup a' \in V^{\tup x}$ and $\tup b, \tup b' \in V^{\tup y}$ be such that $\ltp_s^k(\tup a)=\ltp_s^k(\tup a')$ and $\ltp_s^k(\tup b)=\ltp_s^k(\tup b')$. Let $\tup u=\iwarp{\tup a}{s}$ and $\tup v=\iwarp{\tup b}{s}$ and assume that $\dist_s(\tup u, \tup v)> 2^k$. Then $\ltp_s^k(\tup{ab})=\ltp_s^k(\tup {a'b'})$.
\end{lemma}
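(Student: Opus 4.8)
The natural approach is to prove the statement via the Ehrenfeucht–Fra\"iss\'e game characterization of equality of local types described right after Definition~\ref{def:ltp}. Concretely, the plan is to design a winning strategy for Duplicator in the $k$-round local game on $G$ (with partition $\Pp_s$) starting from the configuration $\tup{ab}$ versus $\tup{a'b'}$, given Duplicator's winning strategies $\sigma_{\tup x}$ in the game from $\tup a$ vs.\ $\tup a'$ and $\sigma_{\tup y}$ in the game from $\tup b$ vs.\ $\tup b'$, both of which exist by hypothesis. The key structural fact to exploit is that the distance assumption $\dist_s(\tup u,\tup v)>2^k$ together with the exponentially shrinking quantification radius means that the two ``sides'' of the configuration never interact during the game.

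First I would set up the induction on $k$, with the base case $k=0$ being immediate: if the atomic types of $\tup a,\tup a'$ agree and those of $\tup b,\tup b'$ agree, and moreover $\dist_s(\tup u,\tup v)>1$ forces the pair of parts carrying $\tup u$ and $\tup v$ to be pure in $G_s$ (not impure), hence the adjacency between any vertex in a part of $\tup u$ and any vertex in a part of $\tup v$ is determined uniformly — so the atomic type of the concatenation $\tup{ab}$ equals that of $\tup{a'b'}$; combined with $\iwarp{\tup a}{s}=\iwarp{\tup a'}{s}$ and $\iwarp{\tup b}{s}=\iwarp{\tup b'}{s}$ from Lemma~\ref{lem:ltp_basic}, this gives $\ltp_s^0(\tup{ab})=\ltp_s^0(\tup{a'b'})$. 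For the inductive step, suppose Spoiler plays a vertex $c$ in a part $P$ with $\dist_s(P,\iwarp{(\tup{ab})}{s})\le 2^{k-1}$; then $P$ is within distance $2^{k-1}$ either of $\tup u$ or of $\tup v$, but — and this is the crucial point — not both, since if it were within $2^{k-1}$ of each we would get $\dist_s(\tup u,\tup v)\le 2^{k}$, contradicting the hypothesis. Say $P$ is close to $\tup u$ (the other case is symmetric). Then I let Duplicator respond using $\sigma_{\tup x}$ as if Spoiler had played $c$ in the game on the $\tup x$-side, obtaining a reply $c'\in P$; the new configuration is $\tup{ab}c$ vs.\ $\tup{a'b'}c'$.

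To close the induction I need to verify two things about the new configuration so as to apply the induction hypothesis at rank $k-1$ to the split $(\tup x\cup\{z\})\sqcup \tup y$ (where $z$ names $c$): first, that $\ltp_s^{k-1}(\tup ac)=\ltp_s^{k-1}(\tup a'c')$, which holds precisely because Duplicator followed the winning strategy $\sigma_{\tup x}$; second, that the distance between $\iwarp{(\tup a c)}{s}$ and $\iwarp{\tup b}{s}=\tup v$ is still more than $2^{k-1}$. The latter follows from the triangle inequality: $\iwarp c s = P$ is within $2^{k-1}$ of $\tup u$, and $\tup u$ is at distance $>2^k$ from $\tup v$, so $P$ is at distance $>2^k-2^{k-1}=2^{k-1}$ from $\tup v$; hence $\dist_s(\iwarp{(\tup ac)}{s},\tup v)>2^{k-1}$. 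Also $\ltp_s^{k-1}(\tup b)=\ltp_s^{k-1}(\tup b')$ by Lemma~\ref{lem:ltp_basic}. The induction hypothesis then yields $\ltp_s^{k-1}(\tup{ab}c)=\ltp_s^{k-1}(\tup{a'b'}c')$, i.e.\ Duplicator survives the round, and since Spoiler's move was arbitrary and the construction is symmetric in the two boards, this shows Duplicator wins the $k$-round game, which by the game characterization is equivalent to $\ltp_s^k(\tup{ab})=\ltp_s^k(\tup{a'b'})$.

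I expect the main obstacle to be bookkeeping rather than a genuine mathematical difficulty: one must be careful that the variable sets remain disjoint and finite as new variables are appended (using the fixed enumeration of variables from the preliminaries), that the ``close to $\tup u$ xor close to $\tup v$'' dichotomy is applied with the right radius at each depth, and that the base-case argument about purity of the $\tup u$–$\tup v$ part pair is spelled out correctly (the distance being $>2^k\ge 1$ in the impurity graph $\Imp_s$ exactly says these parts are non-adjacent there, hence the pair is pure, hence adjacency across them is constant). A cleaner alternative to the explicit game argument would be a direct structural induction on $k$ unfolding Definition~\ref{def:ltp}: show $\ltp_s^k(\tup{ab})$ and $\ltp_s^k(\tup{a'b'})$ are equal as sets by matching, for each admissible part $P$ and vertex $c\in P$, the element $\ltp_s^{k-1}(\tup{ab}c)$ with $\ltp_s^{k-1}(\tup{a'b'}c')$ via the same close-to-$\tup u$/close-to-$\tup v$ case split and the same triangle-inequality distance bound; this avoids invoking the game machinery formally but carries out essentially the same computation. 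Either way, the two load-bearing ingredients are the strict inequality $\dist_s(\tup u,\tup v)>2^k$ (giving the dichotomy and, with the triangle inequality, its persistence with halved radius) and Lemma~\ref{lem:ltp_basic} (to pass from rank $k$ to rank $k-1$ on the untouched side).
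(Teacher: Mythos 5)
Your proposal is correct and follows essentially the same route as the paper's proof: induction on $k$, with the base case handled via purity of the $\tup u$--$\tup v$ part pairs (distance $>1$ in $\Imp_s$), and the inductive step via the dichotomy that the newly quantified part is within $2^{k-1}$ of exactly one of $\tup u,\tup v$, matching $c$ with $c'$ using $\ltp_s^k(\tup a)=\ltp_s^k(\tup a')$ and applying the induction hypothesis after the triangle-inequality bound $\dist_s(\tup u w,\tup v)>2^{k-1}$. The Ehrenfeucht--Fra\"iss\'e game phrasing is just a repackaging of the paper's direct unfolding of Definition~\ref{def:ltp} (and your explicit appeal to Lemma~\ref{lem:ltp_basic} for the untouched side is a point the paper leaves implicit), so there is no substantive difference.
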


\begin{proof}
We prove the statement by induction on $k$. For $k=0$, to prove that $\ltp_s^0(\tup{ab})=\ltp_s^0(\tup {a'b'})$, we have to show that the atomic types of $\tup{ab}$ and $\tup{a'b'}$ are the same. Fix an atomic formula $\phi(x,y)$, with $x,y\in\tup x\cup\tup y$. We show that $G,\tup a\tup b\models \phi(x,y)$ if and only if $G,\tup a'\tup b'\models \phi(x,y)$.
If $x$ and $y$ both belong to $\tup x$ then the conclusion follows by assumption that $\ltp^0_s(\tup a)=\ltp^0_s(\tup a')$.
 The same holds if $x$ and $y$ both belong to $\tup y$.
 
 So, by symmetry, it is enough to consider the case when $x\in \tup x$ and $y\in\tup y$. Since by our assumption  $\dist_s(\tup u, \tup v)> 2^0=1$, any part of $\tup u$ is pure to any part $\tup v$, and so in particular the part $\iwarp{\tup a(x)}{s}$ is pure towards  $\iwarp{\tup b(y)}{s}$. Because $\tup a(x),\tup a'(x) \in \iwarp{\tup a(x)}{s}$ and $\tup b(y),\tup b'(y) \in \iwarp{\tup b(y)}{s}$, this implies that  $G,\tup a\tup b\models\phi(x,y)$ if and only if $G,\tup a'\tup b'\models\phi(x,y)$, as required.

For $k>0$, let $c$ be a vertex in a part $w=\iwarp{c}{s}$ such that $\dist_s(\tup{uv}, w) \le 2^{k-1}$. Our task is to show that there exists $c' \in w$ such that $\ltp_s^{k-1}(\tup{ab}c)=\ltp_s^{k-1}(\tup {a'b'}c')$. Since $\dist_s(\tup u, \tup v)> 2^k$, exactly one of $\dist_s(\tup{u}, w) \le 2^{k-1}$ and $\dist_s(\tup{v}, w) \le 2^{k-1}$ has to hold; without loss of generality assume that $\dist_s(\tup{u}, w) \le 2^{k-1}$ holds. Since $\ltp_s^k(\tup a)=\ltp_s^k(\tup a')$, there exists $c' \in w$ such that $\ltp_s^{k-1}(\tup{a}c)=\ltp_s^{k-1}(\tup {a'}c')$. Because $\dist_s(\tup u, \tup v)> 2^k$ and $\dist_s(\tup{u}, w) \le 2^{k-1}$, we have $\dist_s(\tup uw, \tup v)> 2^{k-1}$, so we can apply the induction hypothesis to $\tup ac$, $\tup a'c'$ and $\tup b$, $\tup b'$, which yields that  $\ltp_s^{k-1}(\tup{ab}c)=\ltp_s^{k-1}(\tup {a'b'}c')$, as desired.
\end{proof}

The following lemma follows directly from Lemma~\ref{lem:ltp_compositionality}, except for the part about efficient computation.
\begin{lemma}\label{lem:tp-merge}
 Let $s\in [n]$ be a time and 
 $\tup x$ and $\tup y$ are disjoint sets of variables. Suppose $\tup u\in \Pp_s^{\tup x}$ and $\tup v\in \Pp_s^{\tup y}$ are tuples of parts such that
 $\dist_s(\tup u,\tup v)>2^k.$
 Then there is a function $f\colon \Types^k_{\tup u,s}\times \Types^k_{\tup v,s}\to \Types^k_{\tup u\tup v,s}$ such that for every pair of tuples $\tup a\in V^{\tup x}$ and $\tup b\in V^{\tup y}$ satisfying $\tup u=\iwarp{\tup a}{s}$ and $\tup v=\iwarp{\tup b}{s}$, we have
 $$\ltp_s^k(\tup a\tup b) = f(\ltp_s^k(\tup a),\ltp_s^k(\tup b)).$$
 Moreover, given $k$, $\tup u$, $\tup v$, and the vicinity $\Vic_s^{2^k}(\tup u\tup v)$, one can compute $f$ in time $\Oh_{d,k,\tup x,\tup y}(1)$. 
\end{lemma}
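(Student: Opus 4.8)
The existence of $f$ is immediate from Lemma~\ref{lem:ltp_compositionality}: that lemma states precisely that, under the hypotheses $\iwarp{\tup a}{s}=\tup u$, $\iwarp{\tup b}{s}=\tup v$ and $\dist_s(\tup u,\tup v)>2^k$, the type $\ltp_s^k(\tup a\tup b)$ depends on $(\tup a,\tup b)$ only through $\ltp_s^k(\tup a)$ and $\ltp_s^k(\tup b)$. Hence the assignment $(\ltp_s^k(\tup a),\ltp_s^k(\tup b))\mapsto\ltp_s^k(\tup a\tup b)$ is a well-defined partial function on $\Types^k_{\tup u,s}\times\Types^k_{\tup v,s}$, and we may take $f$ to be any extension of it to the full product (its values on non-realized pairs being irrelevant). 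The whole content is therefore to produce one such $f$ that is computable in the claimed time, and the plan is to obtain it by making the inductive proof of Lemma~\ref{lem:ltp_compositionality} algorithmic. Since by Lemma~\ref{lem:ltp_number_of_types} the sets $\Types^k_{\tup u,s}$, $\Types^k_{\tup v,s}$ and $\Types^k_{\tup u\tup v,s}$ have size $\Oh_{d,k,\tup x,\tup y}(1)$ and are computable from (sub-trigraphs of) $\Vic_s^{2^k}(\tup u\tup v)$, ``computing $f$'' just means filling in a lookup table of this bounded size.

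I would define $f$ by recursion on $k$, following the case distinction of that proof. For $k=0$ a type in $\Types^0_{\tup u,s}$ is a pair $(\alpha,\tup u)$ with $\alpha$ an atomic type on $\tup x$, and I would set $f((\alpha,\tup u),(\beta,\tup v)):=(\gamma,\tup u\tup v)$, where $\gamma$ restricts to $\alpha$ on $\tup x$ and to $\beta$ on $\tup y$, and for $x\in\tup x$, $y\in\tup y$ declares $x\neq y$ and puts $E(x,y)$ exactly when the pair $(\tup u(x),\tup v(y))$ is complete in $G_s$; since $\dist_s(\tup u,\tup v)>1$ this pair is pure, so its status is recorded in the trigraph $\Vic_s^{1}(\tup u\tup v)$, and the $k=0$ case of the proof of Lemma~\ref{lem:ltp_compositionality} is exactly the verification that this $\gamma$ is the atomic type of $\tup a\tup b$. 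For $k>0$, let $W_u$ and $W_v$ be the sets of parts at distance at most $2^{k-1}$ from $\tup u$, respectively from $\tup v$; since $\dist_s(\tup u,\tup v)>2^k$ these are disjoint and their union is the index set $\{w:\dist_s(\tup u\tup v,w)\le 2^{k-1}\}$ of the members of $\ltp_s^k(\tup a\tup b)$ in Definition~\ref{def:ltp}. For a vertex $c$ in a part $w\in W_u$ one has $\ltp_s^{k-1}(\tup a\tup b c)=\ltp_s^{k-1}((\tup a c)\tup b)$, and since $w\notin W_v$ forces $\dist_s(\tup u w,\tup v)>2^{k-1}$, the rank-$(k-1)$ case of \emph{this} lemma supplies a computable function $g_w$ with $\ltp_s^{k-1}((\tup a c)\tup b)=g_w(\ltp_s^{k-1}(\tup a c),\ltp_s^{k-1}(\tup b))$. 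I would therefore set $f(\sigma,\tau)$ to be the union, over $w\in W_u$, of $\{g_w(\rho,r(\tau)):\rho\in\sigma|_w\}$ together with the symmetric contribution $\{g_w(r(\sigma),\rho):\rho\in\tau|_w\}$ over $w\in W_v$, where $\sigma|_w$ collects the members of $\sigma$ whose recorded part-tuple is $\tup u w$ (each member of a local type carries this information), and $r$ is the reduct sending a local $k$-type to the local $(k-1)$-type of the same tuple --- a routine computable truncation whose existence as a function is the first item of Lemma~\ref{lem:ltp_basic}.

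Correctness is then a direct induction on $k$ that mirrors the proof of Lemma~\ref{lem:ltp_compositionality} line by line, and the running-time bound is automatic: $|W_u\cup W_v|=\Oh_{d,k,\tup x,\tup y}(1)$, so the recursion branches boundedly at each of its $k$ levels and every object it handles has bounded size by Lemma~\ref{lem:ltp_number_of_types}. The hard part --- really the only part needing care --- is checking that every piece of local data the recursion consults lies inside $\Vic_s^{2^k}(\tup u\tup v)$: this reduces to the observations that a shortest path of length at most $2^{k-1}$ leaving $\tup u$ or $\tup v$ stays within distance $2^k$ of $\tup u\tup v$ (so $W_u$, $W_v$ and the needed reducts can be computed inside the vicinity), and that $\Vic_s^{2^{k-1}}(\tup u w\tup v)\subseteq\Vic_s^{2^k}(\tup u\tup v)$ whenever $w\in W_u\cup W_v$ (so each recursive $g_w$ can be obtained from the given vicinity). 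The remaining nuisance is the bookkeeping of the names of quantified variables when feeding $\tup a c$ and $\tup b$ into $g_w$ --- one must rename the new variable so that the two variable sets are disjoint and relabel the output to the variable used by the members of $\ltp_s^k(\tup a\tup b)$ --- which is an $\Oh(1)$-time operation on types that I would otherwise suppress.
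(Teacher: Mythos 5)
Your proposal is correct and follows essentially the same route as the paper: existence of $f$ is deduced directly from Lemma~\ref{lem:ltp_compositionality}, and the effective computation is obtained by making that induction algorithmic, noting that all type sets have size $\Oh_{d,k,\tup x,\tup y}(1)$ and that every recursive call stays inside $\Vic_s^{2^k}(\tup u\tup v)$ because the radii decay geometrically. Your recursive construction coincides with the paper's Appendix~\ref{app:effective}, where your function $f$ is the operation $\join$ and your reduct $r$ is the operation $\trim$.
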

Regarding the computation of function $f$ in the above lemma, by ``computing $f$ in time $\Oh_{d,k,\tup x,\tup y}(1)$'' we do not mean just evaluating $f$ on any given input in desired time, but constructing the whole input-output table for $f$. The reason why this can be computed from 
$k$, $\tup u$, $\tup v$ and $\Vic_s^{2^k}(\tup u\tup v)$ 
in time $\Oh_{d,k,\tup x,\tup y}(1)$ is that the input and output sets have size bounded by $\Oh_{d,k,\tup x,\tup y}(1)$
and the proof in Lemma~\ref{lem:ltp_compositionality} uses only information from $\Vic_s^{2^k}(\tup u\tup v)$. A concrete approach to implementing this computation, similar to that presented in~\cite{tww1}, can be found in Appendix~\ref{app:effective}.


The next lemma will allow us to determine how the $k$-type of a tuple $\tup a$ develops over time. 
\begin{lemma}\label{lem:ltp_consistency}
Let $s \in [n]$ be a time and let $\tup a \in V^{\tup x}$,  $\tup a' \in V^{\tup x}$ be two tuples of vertices such that $\ltp^k_s(\tup a) = \ltp_s^k(\tup a')$. Then $\ltp^k_{s+1}(\tup a) = \ltp_{s+1}^k(\tup a')$.
\end{lemma}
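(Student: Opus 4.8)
\textbf{Proof plan for Lemma~\ref{lem:ltp_consistency}.}
The plan is to prove the statement by induction on $k$, using the game characterization of local types. The key point is that passing from $\Pp_s$ to $\Pp_{s+1}$ only merges two parts into $B_{s+1}$, so the impurity graph $\Imp_{s+1}$ is obtained from $\Imp_s$ by a local modification; consequently distances can only decrease, i.e. $\dist_{s+1}(P',Q')\le \dist_s(P,Q)$ whenever $P\subseteq P'$ and $Q\subseteq Q'$. This monotonicity is what makes the induction go through: a legal move in the round-$k$ game over $\Pp_{s+1}$ (reaching a part within distance $2^{k-1}$) is ``harder'' to realize than one over $\Pp_s$, but we must be careful because the converse fails — a move that is illegal at time $s$ may become legal at time $s+1$, and we need Duplicator to respond to those as well.

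For the base case $k=0$, equality of $\ltp^0_s$ means equality of atomic types together with $\iwarp{\tup a}{s}=\iwarp{\tup a'}{s}$. The atomic type does not depend on the partition at all, and $\iwarp{\tup a}{s}=\iwarp{\tup a'}{s}$ implies $\iwarp{\tup a}{s+1}=\iwarp{\tup a'}{s+1}$ since the map $\Pp_s\to\Pp_{s+1}$ is well-defined; hence $\ltp^0_{s+1}(\tup a)=\ltp^0_{s+1}(\tup a')$.

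For the inductive step, assume the claim for $k-1$ and suppose $\ltp^k_s(\tup a)=\ltp^k_s(\tup a')$. I will verify the game characterization at time $s+1$: let $\tup u'=\iwarp{\tup a}{s+1}=\iwarp{\tup a'}{s+1}$ (equal by Lemma~\ref{lem:ltp_basic}), and let $P'\in\Pp_{s+1}$ be a part with $\dist_{s+1}(\tup u',P')\le 2^{k-1}$; fix a vertex $c\in P'$, and let $w\in\Pp_s$ be the part of $\Pp_s$ containing $c$. The goal is to find $c'\in P'$ with $\ltp^{k-1}_{s+1}(\tup ac)=\ltp^{k-1}_{s+1}(\tup a'c')$. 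By the induction hypothesis it suffices to find $c'\in P'$ with $\ltp^{k-1}_s(\tup ac)=\ltp^{k-1}_s(\tup a'c')$; but then I must make sure $c'$ lands in the same part $P'$ of $\Pp_{s+1}$ as $c$ — which is automatic, since $\ltp^{k-1}_s(\tup ac)=\ltp^{k-1}_s(\tup a'c')$ forces $\iwarp{c}{s}=\iwarp{c'}{s}=w$ by Lemma~\ref{lem:ltp_basic}, and $w\subseteq P'$. So the real task is to show such a $c'$ exists, i.e. that $w$ is a legal move at time $s$, so that $\ltp^k_s(\tup a)=\ltp^k_s(\tup a')$ can be invoked.

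The main obstacle is exactly this last point: $\dist_{s+1}(\tup u',P')\le 2^{k-1}$ does not immediately give $\dist_s(\iwarp{\tup a}{s},w)\le 2^{k-1}$, because $w$ might only have become close after the contraction. I would handle it by a short case analysis on a shortest path from $\tup u'$ to $P'$ in $\Imp_{s+1}$. If this path avoids $B_{s+1}$, then every edge on it already existed in $\Imp_s$ between the corresponding parts of $\Pp_s$ (a pair of parts of $\Pp_s$ not contracted at step $s$ is impure at time $s+1$ iff it was impure at time $s$), so the same bound holds at time $s$ and we are done. If the path passes through $B_{s+1}$, then since $B_{s+1}$ is the union of the two parts $A_0,A_1\in\Pp_s$ being contracted, and every part of $\tup u$ is within $2^{k-1}$ of $B_{s+1}$ in $\Imp_{s+1}$... here the subtlety is that a part impure to $A_0\cup A_1$ need not be impure to both $A_0$ and $A_1$. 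The clean way around this: one does \emph{not} need $c'$ to be reachable by a legal move in the $\Pp_s$-game over all of $\tup u$; it is enough that $w$ is within distance $2^{k-1}$ in $\Imp_s$ of \emph{some} part of $\tup u$ — but that is precisely what the definition of $\ltp^k_s$ requires of legal moves, so if $w$ fails this, then $w$ simply is not a part quantified over in $\ltp^k_s(\tup a)$, and we cannot use the hypothesis directly. To resolve this I would strengthen the inductive statement, proving simultaneously that for \emph{every} part $w$ of $\Pp_s$ and every vertex $c\in w$ there is $c'\in w$ with $\ltp^{k-1}_s(\tup ac)=\ltp^{k-1}_s(\tup a'c')$ — note this follows from $\ltp^k_s(\tup a)=\ltp^k_s(\tup a')$ when $w$ is within $2^{k-1}$ of $\tup u$, and trivially (take $c'=c$, invoking compositionality via Lemma~\ref{lem:ltp_compositionality}/Lemma~\ref{lem:tp-merge} to split off the far-away coordinate) when it is not, since then $\ltp^{k-1}_s(\tup ac)$ is determined by $\ltp^{k-1}_s(\tup a)$ and the isolated behavior of $c$, which is the same for $\tup a'$. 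With this uniform statement in hand, whatever part $w$ the new vertex $c$ lies in, a matching $c'\in w\subseteq P'$ exists, the induction hypothesis for $k-1$ promotes the equality to time $s+1$, and the symmetric direction is identical. This completes the induction.
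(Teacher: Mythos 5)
Your proposal is correct and follows essentially the same route as the paper: induction on $k$, with the inductive step split on whether the part $w\in\Pp_s$ containing the newly quantified vertex satisfies $\dist_s(\iwarp{\tup a}{s},w)\le 2^{k-1}$ (use the rank-$k$ hypothesis) or not (take $c'=c$ and apply Lemma~\ref{lem:ltp_compositionality}), followed by the rank-$(k-1)$ induction hypothesis to promote the equality to time $s+1$. Your ``strengthened inductive statement'' is not really a strengthening but exactly the paper's case dichotomy packaged as an auxiliary claim, so the two arguments coincide.
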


\begin{proof}
By induction on $k$. For $k=0$ note that $\ltp_s^0(\tup a) = \ltp_s^k(\tup a')$ implies that atomic types of $\tup a$ and $\tup a'$ are the same and $\iwarp{\tup a}{s} = \iwarp{\tup a'}{s}$. It is easily seen that then also $\iwarp{\tup a}{s+1} = \iwarp{\tup a'}{s+1}$, as desired.

For $k > 0$, let $\tup u = \iwarp{\tup a}{s+1} = \iwarp{\tup a'}{s+1}$. We need to show that for any $w \in \cal \Pp_{s+1}$ with $\dist_{s+1}(\tup u, w) \le 2^{k-1}$ and any $b \in w$ there is $b' \in w$ such that $\ltp^{k-1}_{s+1}(\tup ab) = \ltp^{k-1}_{s+1}(\tup a'b')$, and symmetrically, that for any $b' \in w$ there is $b \in w$ such that $\ltp^{k-1}_{s+1}(\tup ab) = \ltp^{k-1}_{s+1}(\tup a'b')$. We focus on the first option; the proof of the second one is analogous. Let $v = \iwarp{b}{s}$. We distinguish two possibilities:
\begin{itemize} 
\item $\dist_s(\tup u, v) \le 2^{k-1}$: In this case, since $\ltp_s^k(\tup a) = \ltp_s^k(\tup a')$, there exists $b' \in v$ such that $\ltp_s^{k-1}(\tup ab) = \ltp_s^{k-1}(\tup a'b')$. Then by induction hypothesis it follows that $\ltp^{k-1}_{s+1}(\tup ab) = \ltp^{k-1}_{s+1}(\tup a'b')$, as desired.
\item $\dist_s(\tup u, v) > 2^{k-1}$: In this case we note that $\ltp^k_s(\tup a) = \ltp_s^k(\tup a')$ implies that $\ltp^{k-1}_s(\tup a) = \ltp_s^{k-1}(\tup a')$, and we set $b':=b$. We can now apply Lemma~\ref{lem:ltp_compositionality} to $\tup a$, $\tup a'$ and $b$, $b'$ to see that $\ltp^{k-1}_{s}(\tup ab) = \ltp^{k-1}_{s}(\tup a'b')$, and by induction hypothesis it follows that $\ltp^{k-1}_{s}(\tup ab) = \ltp^{k-1}_{s}(\tup a'b')$, as desired.
\end{itemize}
\end{proof}


Lemma~\ref{lem:ltp_consistency} implies that there exists a function which maps $\ltp_s^k(\tup a)$ to $\ltp_{s+1}^k(\tup a)$, and by induction we get the following lemma.
\begin{lemma}\label{lem:tp-warp}
 Let $s,t\in [n]$ be times with $s\leq t$. Suppose $\tup u\in \Pp_s^{\tup x}$ and let $\tup v=\warp{\tup u}{s}{t}$. Then there exists a function $f\colon \Types^k_{\tup u,s}\to \Types^k_{\tup v,t}$ such that for every tuple $\tup a\in V^{\tup x}$ satisfying $\tup u=\iwarp{\tup a}{s}$, we have
 $$\ltp_t^k(\tup a) = f(\ltp_s^k(\tup a)).$$
Moreover, if $t=s+1$, then given $k$, $\tup u$, $\tup v$ and the relevant region $\Rel_s^{2^k(|\tup x| + 1)}$, one can compute $f$ in time $\Oh_{d,k,\tup x}(1)$, provided that for every $y \in \tup x$ we have that $\tup v(y) \in \Rel_s^{2^k|\tup x|}$.
\end{lemma}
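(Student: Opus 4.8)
The plan is to combine Lemma~\ref{lem:ltp_consistency} with induction on $t-s$ to get the function $f$, and then to extract the efficient-computation part from a careful examination of the one-step case.

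First I would establish existence of $f$ for a single step $t=s+1$. By Lemma~\ref{lem:ltp_consistency}, if two tuples $\tup a,\tup a'$ with $\iwarp{\tup a}{s}=\iwarp{\tup a'}{s}=\tup u$ satisfy $\ltp_s^k(\tup a)=\ltp_s^k(\tup a')$, then $\ltp_{s+1}^k(\tup a)=\ltp_{s+1}^k(\tup a')$. Hence the value $\ltp_{s+1}^k(\tup a)$ depends only on $\ltp_s^k(\tup a)$, which is exactly the statement that a well-defined function $f\colon \Types^k_{\tup u,s}\to\Types^k_{\tup v,s+1}$ exists with $\ltp_{s+1}^k(\tup a)=f(\ltp_s^k(\tup a))$ for all realized types; for non-realized inputs we may define $f$ arbitrarily. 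Iterating this for $t=s+1,s+2,\ldots$ and composing the one-step functions (noting that $\warp{\tup u}{s}{t}$ is itself obtained by repeatedly applying the $\langle\cdot\to\cdot\rangle$ operation) yields the general statement by induction on $t-s$.

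For the effective part with $t=s+1$, the key observation is that the proof of Lemma~\ref{lem:ltp_consistency} (together with the proof of Lemma~\ref{lem:ltp_compositionality} that it invokes) only ever inspects parts that are within impurity-distance at most $2^{k}$ from $\iwarp{\tup a b}{s}$, i.e.\ from parts containing the $|\tup x|$ already-quantified components and one freshly quantified vertex; chasing the recursion in $k$ and the extra quantified vertex at each level, all parts consulted lie within distance $2^k(|\tup x|+1)$ of $\tup u$ in $\Imp_s$ or $\Imp_{s+1}$, and these are precisely the parts captured by the relevant region $\Rel_s^{2^k(|\tup x|+1)}$ (the two parts that merge into $B_{s+1}$ together with all parts that survive and are close to $B_{s+1}$). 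Since $|\Types^k_{\tup u,s}|$ and $|\Types^k_{\tup v,s+1}|$ are $\Oh_{d,k,\tup x}(1)$ by Lemma~\ref{lem:ltp_number_of_types}, and since knowing $\Rel_s^{2^k(|\tup x|+1)}$ lets one simulate the Ehrenfeucht--Fra\"iss\'e argument, one can build the whole input--output table of $f$ in time $\Oh_{d,k,\tup x}(1)$; the hypothesis that $\tup v(y)\in\Rel_s^{2^k|\tup x|}$ for each $y\in\tup x$ guarantees that the tuple $\tup v$ itself and its $2^{k-1}$-vicinity after the contraction sit inside the region we are given, so all distance computations in $\Imp_{s+1}$ needed to unfold $\ltp_{s+1}^k$ are available. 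A concrete bookkeeping of this simulation is deferred to Appendix~\ref{app:effective}, mirroring the treatment in~\cite{tww1}.

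The main obstacle I expect is purely bookkeeping rather than conceptual: verifying that the radius $2^k(|\tup x|+1)$ is genuinely sufficient, i.e.\ that at no point in the nested recursion (which adds one quantified vertex per level of $k$, each time allowed to jump a further $2^{k-1},2^{k-2},\ldots$) does the argument reach outside $\Rel_s^{2^k(|\tup x|+1)}$, and that distances measured in $\Imp_{s+1}$ versus $\Imp_s$ are correctly related across the single contraction (a contraction can only decrease distances, and only in the vicinity of $B_{s+1}$, so a $p$-vicinity in $\Imp_{s+1}$ is contained in a suitably larger vicinity in $\Imp_s$ restricted to surviving parts plus the merged part). Once the radius accounting is pinned down, existence follows immediately from Lemma~\ref{lem:ltp_consistency} and the computability follows from Lemma~\ref{lem:ltp_number_of_types} together with the locality of the two earlier proofs.
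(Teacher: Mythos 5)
Your proposal matches the paper's argument: the paper likewise obtains the one-step function directly from Lemma~\ref{lem:ltp_consistency} (well-definedness of the type map), composes by induction over consecutive times, and justifies the $\Oh_{d,k,\tup x}(1)$ computability by the boundedness of the type sets (Lemma~\ref{lem:ltp_number_of_types}) together with the observation that the proof of Lemma~\ref{lem:ltp_consistency} only consults information inside $\Rel_s^{2^k(|\tup x|+1)}$, with the concrete bookkeeping deferred to Appendix~\ref{app:effective}. Your radius-accounting caveat is exactly the content handled there (via the $\promote$ and $\join$ operations), so the proposal is correct and essentially identical in route.
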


As in the case of Lemma~\ref{lem:tp-merge}, the whole input-output table of function $f$ can be computed in time $\Oh_{d,k,\tup x,\tup y}(1)$ from 
$k$, $\tup u$, $\tup v$ and $\Rel_s^{2^k}$, since the proof of Lemma~\ref{lem:ltp_consistency} uses only information from $\Rel_s^{2^k(|\tup x| + 1)}$. Again, a concrete approach to implementing this computation, similar to that presented in~\cite{tww1}, can be found in Appendix~\ref{app:effective}.

We will also use the fact that when going from time $s$ to $s+1$ the local $k$-types of tuples in parts which are not in the trigraph $\Rel_s^{2^k}$ are not affected.
\begin{lemma}\label{lem:relevant}
Let $\tup x$ be a finite set of variables, $s \in [n]$ a time, $k\in N$ and let $\tup u \in \Pp_s^\tup{x}$ be such for every $y \in \tup x$ it holds that $\tup u(y) \not\in V(\Rel_s^{2^k})$. Then $\warp{\tup u}{s}{s+1} = \tup u$, and
for every  $\tup a$ with $\tup u = \iwarp{\tup a}{s}$ we have that $\ltp_s^{k}(\tup a) = \ltp_{s+1}^{k}(\tup a)$.
In particular, $\Types_{\tup u,s+1}^k(G) = \Types_{\tup u,s}^k(G)$.
\end{lemma}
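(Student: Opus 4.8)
The plan is to prove all three assertions by unwinding the definitions, using that the relevant region $\Rel_s^{2^k}$ captures exactly the part of the trigraph $G_s$ that can change when passing from $\Pp_s$ to $\Pp_{s+1}$ within radius $2^k$ of the contracted part $B_{s+1}$. First I would establish the easy claim that $\warp{\tup u}{s}{s+1}=\tup u$: since for every $y\in\tup x$ the part $\tup u(y)$ is not contained in $B_{s+1}$ (as otherwise $\tup u(y)$, being one of the two parts contracted at time $s$, would lie in $V(\Rel_s^{2^k})$ by the first clause in the definition of $\Rel$), every such part survives unchanged in $\Pp_{s+1}$, so $\warp{\tup u}{s}{s+1}(y)=\tup u(y)$.

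The core is the claim $\ltp_s^k(\tup a)=\ltp_{s+1}^k(\tup a)$ for $\tup a$ with $\iwarp{\tup a}{s}=\tup u$, which I would prove by induction on $k$. For $k=0$: the atomic type of $\tup a$ does not depend on the partition, and $\iwarp{\tup a}{s}=\iwarp{\tup a}{s+1}=\tup u$ by the previous paragraph, so the two local $0$-types coincide. For $k>0$: I want to show that the set of parts within distance $2^{k-1}$ of $\tup u$, together with the $(k-1)$-types of extensions into those parts, is the same at times $s$ and $s+1$. The key sub-claim is a distance-preservation statement: for any part $w$ with $w\subseteq B_{s+1}$ excluded, $\dist_{s+1}(\tup u,w)$ and $\dist_s(\tup u,w)$ agree whenever either is at most $2^{k-1}$; and more, any shortest such path stays outside $\Rel_s^{2^k}$. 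Indeed, the impurity graph $\Imp_{s+1}$ differs from $\Imp_s$ only in the vicinity of $B_{s+1}$ — precisely the vertices of $\Rel_s^{2^k}$ and their immediate surroundings — so a path of length $\le 2^{k-1}$ from $\tup u$ (whose parts are at distance $>2^k$ from $B_{s+1}$, hence from the changed region) cannot touch $\Rel_s^{2^k}$, and therefore exists in $\Imp_s$ iff it exists in $\Imp_{s+1}$ with the same length. Consequently the quantification range in Definition~\ref{def:ltp} is the same set of parts $w$ at both times, and for each such $w$ and each $b\in w$ we have $\iwarp{(\tup a b)}{s}=\iwarp{(\tup a b)}{s+1}$ with all these parts still outside $\Rel_s^{2^{k-1}}$ (using $2^{k-1}<2^k$ and the distance bound), so the induction hypothesis applied at rank $k-1$ gives $\ltp_s^{k-1}(\tup a b)=\ltp_{s+1}^{k-1}(\tup a b)$. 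Taking the set over all admissible $w$ and $b$ yields $\ltp_s^k(\tup a)=\ltp_{s+1}^k(\tup a)$.

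Finally, $\Types_{\tup u,s+1}^k(G)=\Types_{\tup u,s}^k(G)$ is immediate: by definition this set is $\{\ltp_\cdot^k(\tup a): \iwarp{\tup a}{\cdot}=\tup u\}$, and since $\warp{\tup u}{s}{s+1}=\tup u$ the index set of tuples $\tup a$ is the same at both times, while $\ltp_s^k(\tup a)=\ltp_{s+1}^k(\tup a)$ for each of them by the core claim. The main obstacle I anticipate is making the distance-preservation sub-claim fully rigorous: one must carefully argue that no shortest path of length at most $2^{k-1}$ starting from $\tup u$ can enter the region where $\Imp_s$ and $\Imp_{s+1}$ disagree, which requires the slack between $2^{k-1}$ (the quantification radius) and $2^k$ (the radius defining $\Rel_s^{2^k}$), together with the hypothesis that every $\tup u(y)$ lies outside $V(\Rel_s^{2^k})$, i.e.\ at $\Imp_{s+1}$-distance more than $2^k$ from $B_{s+1}$. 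Everything else is a routine induction on the recursive structure of local types.
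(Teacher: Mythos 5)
Your proposal is correct and follows essentially the same route as the paper's proof: induction on $k$, where the halving slack between the quantification radius $2^{k-1}$ and the radius $2^k$ defining $\Rel_s^{2^k}$ (a triangle-inequality argument) guarantees that every admissible extension part $w$ lies outside $\Rel_s^{2^{k-1}}$, so the rank-$(k-1)$ induction hypothesis applies to $\tup a b$, while your explicit distance-preservation observation is only handled implicitly in the paper. One small caveat: your intermediate claim that a path of length at most $2^{k-1}$ from $\tup u$ stays outside $\Rel_s^{2^k}$ is stronger than what is true (such a path may enter parts at distance between $2^{k-1}$ and $2^k$ from $B_{s+1}$); what the argument actually needs, and what your distance bounds do give, is only that the path avoids the contracted parts, i.e.\ the places where $\Imp_s$ and $\Imp_{s+1}$ differ.
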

\begin{proof}
To prove the first part, note that the definition of $\Rel_s^{2^k}$ says that both parts of $\Pp_s$ which are merged into a part of $\Pp_{s+1}$ (these are the only parts of $\Pp_s$ which change when going to time $s+1$) are included in $V(\Rel_s^{2^k})$. Since for every $y\in \tup x$  we have by our assumption that $\tup u(y) \not\in V(\Rel_s^{2^k})$, $\tup u(y)$ is the same in time $s+1$ as it was in $s$.

We prove  that for every  $\tup a$ with $\tup u = \iwarp{\tup a}{s}$ we have that $\ltp_s^{k}(\tup a) = \ltp_{s+1}^{k}(\tup a)$  by induction on $k$.  For $k=0$, let $\tup a$ be an arbitrary tuple with $\tup u = \iwarp{\tup a}{s}$. 
Then $\ltp_s^{k}(\tup a) = (S,\tup u)$, where $S$ is the atomic type of $\tup a$ in $G$.  In time $s+1$ the atomic type of $\tup a$ clearly stays the same, and since  $\warp{\tup u}{s}{s+1} = \tup u$, we have that $\ltp_{s+1}^{k}(\tup a) = (S,\tup u)$.

Let $k>0$. In this case every member of $\ltp_s^{k}(\tup a)$ is of the form $\ltp_s^{k-1}(\tup ab)$ for some $b \in w$, where $\dist_s(\tup u,w) \le 2^{k-1}$. Then we have that $w \not \in V(\Rel_s^{2^{k-1}})$, and so $\tup uw \not \in  V(\Rel_s^{2^{k-1}})$. Then by induction hypothesis we get that $\ltp_s^{k-1}(\tup ab) = \ltp_{s+1}^{k-1}(\tup ab)$, and since $\dist_{s+1}(\tup u,w) \le 2^{k-1}$, we have that $\ltp_{s+1}^{k-1}(\tup ab) \in  \ltp_{s+1}^{k}(\tup a)$ and so $\ltp_{s}^{k-1}(\tup ab) \in  \ltp_{s+1}^{k}(\tup a)$, as desired.

For the other direction, every member of  $\ltp_{s+1}^{k}(\tup a)$ is of the form $\ltp_{s+1}^{k-1}(\tup ab)$ for some $b \in w$, where $\dist_{s+1}(\tup u,w) \le 2^{k-1}$. From the definition of $\Rel_s^{2^k}$ and the assumption that no part from $\tup u$ is in $\Rel_s^{2^k}$  it then follows that $\dist_{s+1}(v,w)> 2^{k-1}$, where $v$ is the  part of $\Pp_{s+1}$ which was obtained by contracting two parts of $\Pp_s$. This means that $w \not \in V(\Rel_s^{k})$, and so we can apply induction hypothesis to claim that  $\ltp_{s}^{k-1}(\tup ab)  = \ltp_{s+1}^{k-1}(\tup ab) $, and so $\ltp_{s+1}^{k-1}(\tup ab) \in \ltp_{s}^{k}(\tup a)$, as desired.
\end{proof}

\subsection{Reproving results of Bonnet et al.~\cite{tww1}}

With the machinery from the previous subsection we can now reprove the two fundamental results about graphs of bounded twin-width, proved by Bonnet et al. in~\cite{tww1}.

\paragraph{Fixed-parameter tractable model checking.}

The first results concerns linear-time fixed-parameter tractability of model-checking first-order logic on graphs of bounded twin-width.

\begin{theorem}\label{thm:mc}
Let $G$ be a graph on $n$ vertices represented through its contraction sequence $\Pp_1,\ldots, \Pp_n$ of width $d$.
Then for any sentence $\phi$ one can decide whether $G \models \phi$ in time $\Oh_{d,\varphi}(n)$.
\end{theorem}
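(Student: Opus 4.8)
The plan is to run the contraction sequence forward, maintaining at every time $s$ a compact description of all local $k$-types realized by singletons in each part, together with enough "local" structure (the vicinities) to apply the compositionality and consistency lemmas. Concretely, set $k$ to be the quantifier rank of $\phi$. The goal is to compute $\ltp_n^k(\cdot)$ for the (unique) tuple in $\Pp_n$ — actually $\tp^k(G)$ — from which, by Lemma~\ref{lem:ltp_type_as_local_type} together with Proposition~\ref{prop:types_mc}, one reads off whether $G\models\phi$ in time $\Oh_{d,\phi}(1)$. So the whole content is the dynamic computation along the contraction sequence.

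First I would invoke Lemma~\ref{lem:reindexing} to assume the contraction sequence is convex, with parts identified by their endpoints in $[n]$; this costs $\Oh_d(n)$. Next, using Lemma~\ref{lem:compute-affected} with $p = 2^k(|\tup x|+1)$ for the relevant ranges of $\tup x$ (here we only need $|\tup x|\le k$, so a single constant $p=\Oh_{d,k}(1)$ suffices), I would precompute the relevant regions $\Rel_s^{p}$ for all $s\in[n-1]$ in total time $\Oh_{d,k}(n)$. Now comes the main loop. For every time $s$, I maintain for each part $P\in\Pp_s$ the set $\Types^k_{\{x\mapsto P\},s}(G)$ of realized local $k$-types of single vertices living in $P$ — call it $\tau_s(P)$; by Lemma~\ref{lem:ltp_number_of_types} each such set has size $\Oh_{d,k}(1)$. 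At time $s=1$ every part is a singleton and $\tau_1(P)$ is immediate. To pass from $s$ to $s+1$: by Lemma~\ref{lem:relevant}, for every part $P$ disjoint from $V(\Rel_s^{2^k})$ we have $\tau_{s+1}(P)=\tau_s(P)$, so nothing changes; only the $\Oh_{d,k}(1)$ parts of $\Pp_s$ inside the relevant region, and the single merged part $B_{s+1}$, need updating. For these, I use the function $f$ of Lemma~\ref{lem:tp-warp} (the $t=s+1$ case), which is computable in time $\Oh_{d,k}(1)$ from $k$, the parts, and the relevant region $\Rel_s^{2^k(|\tup x|+1)}$ that was precomputed; applying $f$ to each element of the old $\tau_s(\cdot)$ and taking the union over the (at most two) parts of $\Pp_s$ merging into $B_{s+1}$ yields $\tau_{s+1}(\cdot)$ on the affected parts. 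Each step thus costs $\Oh_{d,k}(1)$, for $\Oh_{d,k}(n)$ total.

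At time $n$ there is a single part, and $\tau_n$ of it is exactly $\Types^k_{\tup w,\Pp_n}(G)$, i.e.\ $\tp^k(G)$ up to the trivial part/distance decorations; stripping those via Lemma~\ref{lem:ltp_type_as_local_type} and applying Proposition~\ref{prop:types_mc} decides $G\models\phi$. The bookkeeping step I expect to be the most delicate is ensuring that at each time $s$ we actually have access to the data needed to invoke Lemma~\ref{lem:tp-warp}: one must check that the parts $\tup v(y)$ land inside $\Rel_s^{2^k|\tup x|}$ (which holds precisely for the affected parts, the only ones whose types change), and that the precomputed $\Rel_s^{p}$ with $p=2^k(|\tup x|+1)$ contains the vicinity information the lemma's computability clause requires — this is why we precompute with the slightly larger radius $p$. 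Everything else is a routine induction on $s$ with the invariant ``$\tau_s(P)=\Types^k_{\{x\mapsto P\},s}(G)$ for all $P\in\Pp_s$'', maintained using Lemma~\ref{lem:relevant} for untouched parts and Lemma~\ref{lem:tp-warp} for the $\Oh_{d,k}(1)$ touched ones. Summing the costs — $\Oh_d(n)$ for reindexing, $\Oh_{d,k}(n)$ for the relevant regions, $\Oh_{d,k}(1)$ per contraction step, and $\Oh_{d,\phi}(1)$ at the end — gives the claimed $\Oh_{d,\phi}(n)$ bound.
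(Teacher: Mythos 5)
Your proposal is correct and follows essentially the same route as the paper's proof: maintain, for each part, the set of realized local $k$-types of singletons, update only the $\Oh_{d,k}(1)$ parts in the precomputed relevant regions using Lemma~\ref{lem:relevant} and Lemma~\ref{lem:tp-warp}, and finish with Lemma~\ref{lem:ltp_type_as_local_type} and Proposition~\ref{prop:types_mc}. The only (immaterial) differences are that the paper sets $k$ to be the quantifier rank minus one, and your explicit handling of the slightly larger radius $p=2^k(|\tup x|+1)$ for the relevant regions is if anything a bit more careful than the paper's write-up.
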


\begin{proof}
Let $q$ be the quantifier rank of $\phi$ and set $k\coloneqq q-1$ and $r\coloneqq 2^k$. We will show how to compute the set $\Types^{k}_x(G)$ in desired time, and since $\tp^q(G) = \Types^{q-1}_x(G)$, the result will follow by  Proposition~\ref{prop:types_mc}.

As a preprocessing step, the algorithm computes in time $\Oh_{d,k}(n)$ the trigraphs $\Rel_s^{r}$ for all $s\in [n-1]$; this can be done by Lemma~\ref{lem:compute-affected}.
For the rest of the proof, let us for any time $s \in [n]$ denote by $T_s$ the set of all sets of realized types at time $s$, i.e. $T_s \coloneqq \setof{\Types_{w,s}^k(G)}{w \in \Pp_s}$.

The algorithm first computes $T_1$ by computing $\Types_{w,1}^k(G)$ for each $w \in \Pp_1$; since each such part $w$ contains exactly one vertex, this can be done in time $\Oh_{k}(1)$ for any $w$, and so this takes time $\Oh_{k}(n)$ in total. From this point on the algorithm will proceed through times $2$ to $n$ and for every time $s$ it will compute $T_{s}$ from $T_{s-1}$.
By Lemma~\ref{lem:relevant}, any part $w$ of $\Pp_{s-1}$ which is not in $\Rel_{s-1}^{2^k}$ is the same in $\Pp_s$ as in $\Pp_{s-1}$ and we have that $\Types_{w,s-1}^k(G) = \Types_{w,s}^k(G)$, which means that the computation only needs to be performed on parts from $\Rel_{s-1}^{2^k}$.
We distinguish the following two possibilities:
\begin{itemize} 
\item If $v,w$ are the two parts of $\Pp_{s-1}$ which get contracted into a part $u \in \Pp_s$, then the algorithm applies the function from Lemma~\ref{lem:tp-warp} to all members of $\Types_{v,s-1}^k(G)$ and $\Types_{w,s-1}^k(G)$ and collects the results into $\Types_{u,s}^k(G)$.
\item If $w$ is any other part in $\Rel_{s-1}^{r}$, then the algorithm applies the function from Lemma~\ref{lem:tp-warp} to all members of $\Types_{w,s-1}^k$ and collects the results into $\Types_{w,s}^k(G)$.
\end{itemize}
In each of the above cases the computation can be done in time $\Oh_{d,k}(1)$, since each application of the function from Lemma~\ref{lem:tp-warp} can be done in time $\Oh_{d,k,1}(1)$ and by Lemma~\ref{lem:ltp_number_of_types} we have that $|\Types_{w,{s-1}}^k(G)| \leq \Oh_{d,k}(1)$. Moreover, since $|\Rel_{s-1}^{r}|\leq \Oh_{d,k}(1)$, the computation of $T_s$ from $T_{s-1}$ can be done in time $\Oh_{d,k}(1)$. There are $n-1$ steps to obtain $T_n$ and so the whole computation takes time $\Oh_{d,k}(n)$. Now $T_n$ contains only $\Types_{w,n}^k(G)$ where $w$ is the only part of~$\Pp_n$. By Lemma~\ref{lem:ltp_type_as_local_type}, from each local $k$-type in $\Types_{w,n}^k(G)$ one can compute the corresponding $k$-type from $\Types^k_{x}(G)$   in time $\Oh_{k}(1)$.
This finishes the proof.
\end{proof}

\paragraph{Stability under first-order interpretations.}

Another result of \cite{tww1}, which can be reproven with the machinery of local types, is that graph classes of bounded twin-width are closed under first-order interpretations.
For a graph $G$ and a first-order formula with two free variables $\phi(x, y)$ we define $\phi(G)$ to be the graph with vertex set $V(G)$ and edge set consisting of all the pairs $uv$ for which $G \models \phi(u, v) \land \phi(v, u)$.

\begin{theorem}
Let $G$ be a graph of twin-width at most $d$ and $\phi(x, y)$ be a first-order formula with two free variables.
Then the twin-width of the graph $\phi(G)$ is $\Oh_{d, \phi}(1)$.
\end{theorem}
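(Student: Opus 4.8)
The plan is to keep the given contraction sequence $\Pp_1,\dots,\Pp_n$ of $G$ (of width $d$) as a skeleton, but to refine each partition so that parts become homogeneous with respect to local types. Let $k$ be large enough that the truth of $\phi$ at any pair is determined by the $\tp^k$ of that pair (e.g.\ $k$ equal to the quantifier rank of $\phi$). Reusing $\Pp_1,\dots,\Pp_n$ verbatim fails, because a pair of parts pure in $G/\Pp_s$ may be impure in $\phi(G)/\Pp_s$: whether $uv\in E(\phi(G))$ is governed by $\tp^k$ of $(u,v)$, which need not be constant over a pair of parts. So for each $s\in[n]$ I would let $\Qq_s$ be the partition of $V(G)=V(\phi(G))$ into the level sets of $u\mapsto\ltp_s^k(u)$, viewing $u$ as a singleton tuple. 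Since $\ltp_s^0(u)$ records $\iwarp us$, the partition $\Qq_s$ refines $\Pp_s$, and by Lemma~\ref{lem:ltp_number_of_types} each part of $\Pp_s$ splits into $\Oh_{d,k}(1)$ parts of $\Qq_s$. By Lemma~\ref{lem:ltp_consistency}, equal $\ltp_s^k$ implies equal $\ltp_{s+1}^k$, so $\Qq_s$ refines $\Qq_{s+1}$; moreover $\Qq_1$ is the all-singletons partition (as $\iwarp u1=\{u\}$), and at time $n$ the lone part of $\Pp_n$ splits into only $\Oh_{d,k}(1)$ parts of $\Qq_n$. Hence I get a contraction sequence of $\phi(G)$: start at $\Qq_1$, interpolate with single merges through $\Qq_2,\dots,\Qq_n$ (possible because each $\Qq_s$ refines $\Qq_{s+1}$), and finish by contracting the $\Oh_{d,k}(1)$ parts of $\Qq_n$ to one part; partitions in this last phase trivially have width $\Oh_{d,k}(1)$.

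The crux is to bound, in $\phi(G)$, the width of each partition $\Rr$ that appears while interpolating between $\Qq_s$ and $\Qq_{s+1}$ (this includes $\Qq_s$ itself). Each such $\Rr$ refines $\Qq_{s+1}$, hence also $\Pp_{s+1}$, since a merge towards $\Qq_{s+1}$ only combines $\Qq_s$-parts lying inside one $\Qq_{s+1}$-part, which lies inside one $\Pp_{s+1}$-part. Take distinct $P,P'\in\Rr$, with $\Qq_{s+1}$-parts $\bar P\supseteq P$, $\bar P'\supseteq P'$ and $\Pp_{s+1}$-parts $A\supseteq\bar P$, $A'\supseteq\bar P'$. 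I claim: if $\dist_{s+1}(A,A')>2^k$, then $P,P'$ is pure in $\phi(G)$. Indeed, all vertices of $\bar P$, hence of $P$, share one value of $\ltp_{s+1}^k$, and likewise for $P'$; since $\dist_{s+1}(A,A')>2^k$, Lemma~\ref{lem:ltp_compositionality} applied at time $s+1$ gives $\ltp_{s+1}^k(ab)=\ltp_{s+1}^k(a'b')$ for all $a,a'\in P$ and $b,b'\in P'$. Warping from time $s+1$ to $n$ via Lemma~\ref{lem:tp-warp} and then applying Lemma~\ref{lem:ltp_type_as_local_type}, the type $\tp^k(ab)$ is constant over $(a,b)\in P\times P'$; consequently the truth values of $\phi(a,b)$ and of $\phi(b,a)$ are constant there, so adjacency in $\phi(G)$ is constant on $P\times P'$, i.e.\ the pair is complete or anti-complete.

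It follows that a part $P\in\Rr$, lying in the $\Pp_{s+1}$-part $A$, can be impure in $\phi(G)$ only towards parts $P'$ whose enclosing $\Pp_{s+1}$-part $A'$ satisfies $\dist_{s+1}(A,A')\le 2^k$. Since $\Imp_{s+1}$ has maximum degree $d$, there are $\Oh_{d,k}(1)$ such parts $A'$; each contains $\Oh_{d,k}(1)$ parts of $\Qq_{s+1}$ (Lemma~\ref{lem:ltp_number_of_types}); and each part of $\Qq_{s+1}$ is a union of $\Oh_{d,k}(1)$ parts of $\Qq_s$ — because a $\Pp_{s+1}$-part is either a single $\Pp_s$-part or the union of two, and Lemma~\ref{lem:ltp_number_of_types} again bounds the split — hence contains $\Oh_{d,k}(1)$ parts of $\Rr$. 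Therefore $P$ is impure in $\phi(G)$ towards at most $\Oh_{d,k}(1)$ other parts of $\Rr$. Together with the trivial bound for the final phase, the constructed contraction sequence of $\phi(G)$ has width $\Oh_{d,k}(1)=\Oh_{d,\phi}(1)$, as required.

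The main obstacle is not any single computation — compositionality is handed to us by Lemma~\ref{lem:ltp_compositionality} — but organizing the construction correctly. Three points need care: refining $\Pp_s$ by local $k$-types; checking via Lemma~\ref{lem:ltp_consistency} that the resulting $\Qq_s$ really form a monotone (hence interpolable) sequence ending in few parts; and, most subtly, measuring impurity distances in $\Imp_{s+1}$ rather than $\Imp_s$ while interpolating between $\Qq_s$ and $\Qq_{s+1}$. The last point is essential: a $\Qq_{s+1}$-part contained in the freshly merged part $B_{s+1}$ is $\ltp_{s+1}^k$-homogeneous but in general not $\ltp_s^k$-homogeneous, so the time-$s$ viewpoint would break exactly there. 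One should also double-check the uniform $\Oh_{d,k}(1)$ bound on the number of $\Qq_s$-parts inside a single $\Pp_s$-part, where Lemma~\ref{lem:ltp_number_of_types} does its work; and, for bookkeeping, Lemma~\ref{lem:relevant} lets one see that only $\Oh_{d,k}(1)$ parts of $\Pp_s$ change per step, so only $\Oh_{d,k}(1)$ merges are needed between consecutive $\Qq$'s and the whole sequence has the correct length $n-1$.
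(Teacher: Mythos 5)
Your proof is correct and takes essentially the same route as the paper: refine each $\Pp_s$ into the level sets of $u\mapsto\ltp_s^k(u)$, use Lemma~\ref{lem:ltp_compositionality} together with Lemma~\ref{lem:ltp_consistency} (in your write-up routed through Lemmas~\ref{lem:tp-warp} and~\ref{lem:ltp_type_as_local_type}) to conclude that parts whose enclosing parts of the original sequence are at impurity distance greater than $2^k$ form pure pairs in $\phi(G)$, and control the splitting via Lemma~\ref{lem:ltp_number_of_types}. The only difference is one of detail: you bound the width of the interpolating partitions explicitly by measuring distances in $\Imp_{s+1}$, whereas the paper delegates this step to the argument of \cite[Lemma 8]{tww1} after noting that only $\Oh_{d,q}(1)$ parts change per step.
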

\begin{proof}
Let $q$ be the quantifier rank of $\phi$.
Let $\Pp_1,\ldots, \Pp_n$ be a contraction sequence of $G$ of width at most $d$.
For every partition $\Pp_t$ with $1 \le t \le n$ we define its refinement $\Pp'_t$ by splitting every part of $\Pp_t$ with respect to local $q$-types.
Formally:
\[
\Pp'_t = \bigcup_{A \in \Pp_t} \setof{\setof{v \in A}{\ltp_{\Pp_t}^q(v) = S}}{S \in \Types^q_{A, \Pp_t}}.
\]
(Here, in the notation $\ltp_{\Pp_t}^q(v)$ we formally treat $v$ as a singleton tuple.)
Also, let $\Pp'_{n+1}$ be the coarsest partition of $V(G)$, the one with one part.

First, let us fix a part $A' \in \Pp'_t$ that originates from a part $A \in \Pp_t$ for some $1 \le t \le n$.
Our goal is to bound the number of parts in $\Pp'_t$ which are impure towards $A'$.
Consider any $B' \in \Pp'_t$ which is impure towards $A'$ and originates from a part $B \in \Pp_t$.
If we had $\dist_{\Pp_t}(A, B) > 2^q$, then by Lemma~\ref{lem:ltp_compositionality} for all pairs  $(a,b) \in A'\times B'$ the local type $\ltp_{\Pp_t}^q(ab)$ would be the same. Hence, by Lemma~\ref{lem:ltp_consistency}, the type $\tp^q(ab)$ would also be the same for all such pairs $(a,b)$, implying that $A'$ and $B'$ are a pure pair in $\phi(G)$.
So we have $\dist_{\Pp_t}(A, B) < 2^q$. Note that there are only $\Oh_{d, q}(1)$ such parts $B$ and,
by Lemma~\ref{lem:ltp_number_of_types}, each  of them is split into $\Oh_{d, q}(1)$ parts in $\Pp'_t$. So $A'$ is impure towards $\Oh_{d, q}(1)$ other parts of $\Pp'_t$.

Second, again by Lemma~\ref{lem:ltp_consistency}, for every $1 \le t \le s \le n + 1$ we have that $\Pp'_t$ is a refinement of $\Pp'_s$.
Observe also that for any $1 \le t \le n$, we can obtain $\Pp'_{t+1}$ by merging $\Oh_{d, q}(1)$ parts of~$\Pp'_t$.
Indeed, by Lemma~\ref{lem:relevant}, local $q$-types might change only for vertices in $\Rel_{t}^{2^q}$ and, by Lemma~\ref{lem:ltp_number_of_types}, there are at most $\Oh_{d, q}(1)$ different local $q$-types among them.
Similarly as in \cite[Lemma 8]{tww1}, we argue that we can extend the sequence $\Pp'_1, \ldots, \Pp'_{n+1}$ to a contraction sequence of $\phi(G)$ by contracting in any way parts of $\Pp'_t$ to obtain $\Pp'_{t+1}$. Clearly, the width of this contraction sequence is $\Oh_{d, q}(1)$.
\end{proof}

\newcommand{\Prx}{\mathbb{P}}
\newcommand{\Wrp}{\mathbb{W}}

\section{Query answering}\label{sec:queries}

In this section we prove Theorem~\ref{thm:intro-main-qa}.
For the remainder of this section let us fix a graph $G$ and a contraction sequence $\Pp_1,\ldots,\Pp_{n}$ of $G$ of width $d$, where $n=|V(G)|$. In all algorithmic statements that follow, we assume that $G$ and $\Pp$ are given on input.

Throughout this section our data structures work with the standard word RAM model.

\subsection{Proximity oracle}

For vertices $u,v\in V(G)$ and $r\in \N$, we define
$$\prx_r(u,v)=\min\{t~|~\dist_t(\iwarp{u}{t},\iwarp{v}{t})\leq r\}.$$
In other words, $\prx_r(u,v)$ is the first time $t$ such that the parts of $\Pp_t$ containing $u$ and $v$ are at distance at most $r$ in the impurity graph $\Imp_t$. Note that whenever $u\neq v$, we have $1<\prx_r(u,v)\leq n$. The main goal of this section is to construct an auxiliary data structure for answering queries about the values of $\prx_r(\cdot,\cdot)$. This is described in the lemma below.

\begin{lemma}\label{lem:proximity-ds}
 For a given $r\in \N$, one can in time $\Oh_{d,r}(n)$ compute a data structure that can answer the following queries in time $\Oh_{d,r}(\log \log n)$: given $u,v\in V(G)$, output $\prx_r(u,v)$.
\end{lemma}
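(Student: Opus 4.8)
The plan is to reduce the computation of $\prx_r(u,v)$ to a lowest-common-ancestor-type query on the contraction tree $T$ (the tree of pairs $(P,s)$ from Lemma~\ref{lem:reindexing}), augmented with a small amount of precomputed distance information. First I would observe the following monotonicity: once $\dist_t(\iwarp ut,\iwarp vt)\le r$, the same holds at all later times $t'\ge t$, since contracting parts can only merge vertices of $\Imp_{t'}$ and hence only decrease distances. So $\prx_r(u,v)$ is the \emph{threshold} time of a monotone predicate, which makes it amenable to a "meet in the tree" approach. Concretely, think of the path in $T$ from the leaf $u$ up to the root, and likewise for $v$; these paths first meet at the node $(\iwarp u{m},m)$ where $m$ is the time the two vertices get merged into one part. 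For $t<m$ the parts $\iwarp ut$ and $\iwarp vt$ are distinct, and we want the first $t$ at which they come within distance $r$ in $\Imp_t$.

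The key structural point I would exploit is that $\Rel^r_s$ and, more usefully, the $r$-vicinities have bounded size $\Oh_{d,r}(1)$, and by (the method of) Lemma~\ref{lem:compute-affected} we can in time $\Oh_{d,r}(n)$ scan the contraction sequence while maintaining $\Imp_s$ together with all $r$-vicinities. During this scan, for each part $P$ appearing at some time and each part $Q$ in its $r$-vicinity at the moment $Q$ enters that vicinity, I can record the time stamp; this gives, for every ancestor node $(P,s)$ of a leaf, the list of parts within distance $r$ and the first time each entered — a list of length $\Oh_{d,r}(1)$. Then $\prx_r(u,v)$ can be read off as follows: walk up from $u$ and from $v$ simultaneously; $\prx_r(u,v)$ is the minimum over the ancestors $(P,s)$ of $u$ of the first time that $P$ (or rather the part of $\Pp_{s'}$ it is contained in) is within distance $r$ of the part containing $v$, and symmetrically. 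To make this a single query rather than an $\Oh(\log n)$ walk, I would store the relevant information in a way indexed by the pair "(ancestor of $u$, ancestor of $v$)" — but since both ancestor chains are nested, it suffices to binary-search for the merge time $m$ and then, within the $\Oh_{d,r}(1)$-sized neighbourhoods, locate the first crossing; using Chan's or any predecessor structure this is $\Oh(\log\log n)$, and the merge time $m = \prx_\infty(u,v)$ itself is just the depth of the LCA of $u$ and $v$ in $T$, computable in $\Oh(1)$ after $\Oh(n)$ preprocessing.

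More precisely, here is the shape I would aim for. Preprocess $T$ for $\Oh(1)$-time LCA queries. Perform the $\Oh_{d,r}(n)$ scan maintaining $\Imp_s$ and all $r$-vicinities; whenever a contraction at time $s$ creates $B_s$, for every part $C$ now within distance $r$ of $B_s$ that was \emph{not} within distance $r$ of either of the two merged parts before, emit a record $(B_s, C, s)$ and symmetrically $(C, B_s, s)$. Group these records by the first coordinate; note each node of $T$ appears in $\Oh_{d,r}(1)$ records. To answer a query $(u,v)$: compute $m$ = depth of $\mathrm{lca}(u,v)$; then among all ancestor nodes $(P,s)$ of $u$ with $s<m$, we want the least $s$ such that the unique part $C$ of $\Pp_s$ containing $v$ satisfies $\dist_s(P,C)\le r$. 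Because the chain of ancestors of $v$ is itself a root-to-leaf path, "the part of $\Pp_s$ containing $v$" is determined by $s$; so each record $(P,C,s)$ on $u$'s chain is relevant to this query iff $C$ is an ancestor of $v$ in $T$ (checkable in $\Oh(1)$ via LCA), and among relevant records we take the minimum $s$. Doing the same with the roles of $u$ and $v$ swapped and taking the overall minimum gives $\prx_r(u,v)$; correctness follows from monotonicity plus the fact that the first time two parts reach distance $\le r$ is a time at which one of them was just created by a contraction. The remaining issue is locating, among the $\Oh_{d,r}(1)$ records attached to the ancestors of $u$, those whose $C$-component lies on $v$'s chain, fast — this is where I would plug in a predecessor/orthogonal-range structure on the Euler tour intervals of $T$ to get $\Oh(\log\log n)$ per query, matching the statement (and $\Oh(1/\varepsilon)$ with the $\Oh(n^{1+\varepsilon})$ alternative).

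\textbf{Main obstacle.} The conceptual content — monotonicity and the bounded-vicinity scan — is easy; the real work is the bookkeeping that turns "walk up two nested ancestor chains and find the first crossing" into a single sublogarithmic query without an $\Oh(\log n)$-factor, i.e. correctly encoding the records as points so that the query becomes an orthogonal range / predecessor query over the tree's Euler intervals. I expect that to be the delicate part of the write-up, together with verifying that no relevant "first crossing" is missed when a part's vicinity changes due to a contraction happening several parts away.
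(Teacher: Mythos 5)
There is a genuine gap, and it sits exactly at the step you flagged as "the remaining issue" but then waved away. Your correctness argument rests on the claim that "the first time two parts reach distance $\le r$ is a time at which one of them was just created by a contraction", and your records $(B_s,C,s)$ are emitted only for pairs in which one member is the newly created part $B_s$. That claim is false: the first time $t$ with $\dist_t(\iwarp{u}{t},\iwarp{v}{t})\le r$ may be caused by a contraction strictly between the two parts, i.e.\ the new short path passes through $B_t$ while neither $\iwarp{u}{t}$ nor $\iwarp{v}{t}$ equals $B_t$; all one can conclude (and this is the paper's key observation) is that both parts lie within distance $r$ of $B_t$. In that scenario your scan emits no record for the pair at time $t$, and no record is emitted later either, because at every subsequent merge on either chain the other part is already within distance $r$ of the pre-merge part, so your emission condition ("was not within distance $r$ before") never fires. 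Hence the query procedure can miss the correct value of $\prx_r(u,v)$ entirely. There are also secondary problems: the per-node lists you claim have length $\Oh_{d,r}(1)$ can in fact have length $\Omega(n)$ (a long-lived part can see $\Omega(n)$ distinct parts enter its $r$-vicinity over time), and ancestor chains in the contraction tree can have length $\Omega(n)$, so "walk up from $u$ and from $v$" is not a sublogarithmic primitive; the promised reduction to a predecessor/orthogonal-range query over Euler intervals is exactly the part that was never specified.

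The paper's proof avoids all of this by working with pairs of parts rather than with one chain at a time. After the convex reindexing (so every part is an interval of $[n]$), it scans the contraction sequence once and, at each time $t$, examines \emph{all} pairs $U,U'$ of parts in the radius-$r$ vicinity of $B_t$ (together with the two parts merged into $B_t$) and tests whether their distance dropped from $>r$ at time $t-1$ to $\le r$ at time $t$; for each such pair it records the rectangle $U\times U'$ with label $t$. This yields $\Oh_{d,r}(1)$ rectangles per step, hence $\Oh_{d,r}(n)$ pairwise disjoint rectangles partitioning $[n]\times[n]$, with the label of the rectangle containing $(u,v)$ equal to $\prx_r(u,v)$. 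A query is then a single point location in Chan's orthogonal-range structure, giving $\Oh(\log\log n)$ time with no LCA or ancestor-chain machinery. If you repair your scheme by recording threshold crossings for all pairs in the vicinity of $B_t$ (not only pairs involving $B_t$), you are essentially forced into this rectangle formulation, which is the intended proof.
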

By Lemma~\ref{lem:reindexing}, we may assume that the vertex set $V(G)$ is equal to $[n]$, and  $\Pp$ is a convex contraction sequence for the usual order on $[n]$. 
In particular, pairs of vertices can be identified with points in a plane,
and intuitively, every pair of sets $A,B\subset V(G)$ corresponds to a rectangle $A\times B\subset [n]\times [n]$. This correspondence will be important in the proof of Lemma~\ref{lem:proximity-ds}, whose 
 key technical component is the data structure for {\em{orthogonal range queries}} due to Chan~\cite{Chan11}, for manipulating rectangles in a plane.
(We remark that the applicability of this data structure in the context of twin-width has already been observed in~\cite{PilipczukSZ21}.) Let us recall the setting.

A {\em{rectangle}} is a set of pairs of integers of the form $\{(x,y)\colon a\leq x\leq a', b\leq y\leq b'\}$ for some integers $a,a',b,b'$. In all algorithmic statements that follow, every rectangle is represented by such a quadruple $(a,a',b,b')$. In the problem of orthogonal range queries, we are given a list of pairwise disjoint rectangles $\Rr=\{R_1,\ldots,R_m\}$, all contained in $[n]\times [n]$, and the task is to set up a data structure that can efficiently answer the following queries: given $(x,y)\in [n]\times [n]$, output the index of the rectangle in $\Rr$ that contains $(x,y)$, or output $\bot$ if there is no such rectangle. Chan proposed the following data structure for this problem.

\begin{theorem}[\cite{Chan11}]\label{thm:chan}
 Assuming $|\Rr|=\Oh(n)$, there is a data structure for the orthogonal range queries that takes $\Oh(n)$ space, can be initialized in time $\Oh(n)$, and can answer every query in time $\Oh(\log \log n)$.
\end{theorem}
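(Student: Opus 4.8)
The plan is to read this as an instance of \emph{orthogonal planar point location} and to solve it by a left-to-right vertical sweep combined with a persistent predecessor structure on the word RAM.

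First I would reduce the two-dimensional query to a one-dimensional one. Let $x_1<x_2<\dots<x_p$ be the sorted list of all left and right $x$-coordinates of the rectangles in $\Rr$, so $p\le 2|\Rr|=\Oh(n)$; since all coordinates lie in $[n]$, this list is produced in time $\Oh(n)$ by bucket/radix sort. These abscissae split $[n]$ into $\Oh(n)$ \emph{slabs}. Fix a slab $I_j$ and look at the rectangles of $\Rr$ whose $x$-range contains $I_j$: their $y$-ranges are pairwise disjoint, because the rectangles themselves are disjoint and all of them cross $I_j$. Hence inside $I_j\times[n]$ the correct answer is a step function of $y$, encoded by a sorted list $L_j$ of maximal $y$-intervals, each labelled by a rectangle index or by $\bot$. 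Passing from $L_j$ to $L_{j+1}$, every rectangle whose left edge is at $x_{j+1}$ is \emph{opened} (a $\bot$-interval is split into at most three pieces, the middle one receiving the new label) and every rectangle whose right edge is at $x_{j+1}$ is \emph{closed} (its interval is relabelled $\bot$ and merged with neighbouring $\bot$-intervals); each rectangle triggers exactly one opening and one closing, and each is an $\Oh(1)$-size edit of the list. So, although $\sum_j|L_j|$ may be as large as $\Theta(n^2)$, the total number of edits along the sweep is only $\Oh(n)$.

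Next I would make the $x$-coordinate of a query essentially free. Because $x\in[n]$, I precompute in time and space $\Oh(n)$ an array mapping each $x$ to the index $j$ of the slab containing it; a query $(x,y)$ then reduces in $\Oh(1)$ time to locating $y$ in $L_j$, i.e. to finding the label of the interval of $L_j$ containing $y$. Identifying the left endpoints of the intervals of $L_j$ with a set of integers from $[n]$, this is precisely a \emph{predecessor query in version $j$} of a set that passes through $\Oh(n)$ versions, each obtained from its predecessor by $\Oh(1)$ insertions, deletions and relabellings. Thus everything boils down to \emph{persistent predecessor search} on $[n]$ with $\Oh(n)$ updates, with target complexity $\Oh(n)$ space, $\Oh(n)$ preprocessing, $\Oh(\log\log n)$ query. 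As a warm-up, a fully persistent balanced binary search tree (node/path copying in the fashion of Driscoll, Sarnak, Sleator and Tarjan, applied to the $\Oh(n)$ edits) already uses $\Oh(n)$ space and answers a predecessor query in any version in $\Oh(\log n)$ time. To push the query time down to $\Oh(\log\log n)$, I would replace binary branching by the word-RAM machinery of van Emde Boas and fusion trees: keep the current set in a balanced large-fan-out tree whose internal nodes hold succinct ``navigation words'' that select the correct child in $\Oh(1)$ time, and recurse on the height in the van Emde Boas / exponential-search-tree pattern so that the height becomes $\Oh(\log\log n)$; since each such summary changes by only $\Oh(1)$ machine words per edit, the whole evolving structure persists within $\Oh(n)$ space, and it builds in $\Oh(n)$ time (all keys live in $[n]$, removing the need for randomized hashing). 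Assembling the pieces, a query $(x,y)$ looks up its slab in $\Oh(1)$, performs one persistent predecessor query in $\Oh(\log\log n)$, and reports the stored label; preprocessing and space are $\Oh(n)$.

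\textbf{Main obstacle.} Steps one and two are routine sweep-line bookkeeping; the technical heart is the persistent predecessor structure, and in particular shaving the query time from the easy $\Oh(\log n)$ of a plain persistent balanced tree down to $\Oh(\log\log n)$ \emph{without} inflating space or preprocessing. This means keeping van Emde Boas / fusion-style succinct navigation information correct and compact under $\Oh(1)$-size edits and then persisting it, which is exactly the contribution of Chan~\cite{Chan11}; for the present application any of the known variants (for instance $\Oh(n\log\log n)$ deterministic or $\Oh(n)$ expected preprocessing) would be entirely sufficient.
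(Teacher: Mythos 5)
This statement is not proved in the paper at all: Theorem~\ref{thm:chan} is imported verbatim as a black box from Chan~\cite{Chan11}, and the authors only use its interface (linear space, linear initialization, $\Oh(\log\log n)$ queries) inside Lemma~\ref{lem:proximity-ds}. So there is no internal proof to compare yours against; the only question is whether your sketch would constitute an independent proof, and it does not. Your reduction — vertical slabs at the $\Oh(n)$ rectangle abscissae, the observation that rectangles spanning a slab have pairwise disjoint $y$-ranges so each slab carries a step function of $y$, the $\Oh(1)$ edits per rectangle giving $\Oh(n)$ total updates, and the $\Oh(1)$-time slab lookup via an array indexed by $[n]$ — is the correct and standard way to turn the problem into partially persistent predecessor search, and it is indeed how Chan frames it. But the entire difficulty of the theorem lives in the step you label the ``main obstacle'' and then concede to the citation: a predecessor structure over $[n]$ that supports $\Oh(n)$ updates, is persistent across all versions, and still achieves $\Oh(\log\log n)$ query with $\Oh(n)$ total space and $\Oh(n)$ preprocessing. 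The one sentence you offer for it (``succinct navigation words'' changing by $\Oh(1)$ machine words per edit, persisted in the van Emde Boas / fusion pattern) is precisely what is not routine: plain van Emde Boas needs $\Theta(U)$ space or randomized hashing, fusion/exponential-search-tree nodes do not change by only $\Oh(1)$ words under arbitrary insertions and deletions, and making any of these structures persistent without a $\log$-factor blowup in space or query time is the actual content of~\cite{Chan11}. A proof that defers its core to the very result being proved is a reduction, not a proof, so as a blind reconstruction it has a genuine gap.

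One smaller inaccuracy: your closing remark that a variant with $\Oh(n\log\log n)$ deterministic (or $\Oh(n)$ expected) preprocessing ``would be entirely sufficient'' is not quite right for this paper. Lemma~\ref{lem:proximity-ds} and Theorem~\ref{thm:intro-main-qa} explicitly claim worst-case $\Oh_{d,\varphi}(n)$ construction time, and that guarantee is inherited directly from the $\Oh(n)$ deterministic initialization stated in Theorem~\ref{thm:chan}; substituting a slower or randomized variant would weaken the stated bounds.
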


We remark that there is also a simple data structure for orthogonal range queries that for any fixed $\varepsilon>0$, achieves query time $\Oh(1/\varepsilon)$ at the expense of space complexity and initialization time $\Oh(n^{1+\varepsilon})$. See the appendix of~\cite{PilipczukSZ21} for details. As we mentioned in Section~\ref{sec:intro}, replacing the usage of the data structure of Chan with this simple data structure results in an analogous tradeoff in Theorem~\ref{thm:intro-main-qa}.

We reduce the statement of Lemma~\ref{lem:proximity-ds} to the result of Chan using the following lemma.

\begin{lemma}\label{lem:break-rectangles}
 One can in time $\Oh_{d,r}(n)$ compute a list $\Qq$ of pairs of the form $(R,t)$, where $R\subseteq [n]\times [n]$ is a rectangle and $t\in [n]$, such that the following holds:
 \begin{itemize} 
  \item the rectangles in pairs from $\Qq$ form a partition of $[n]\times [n]$, and
  \item for each $(u,v)\in [n]\times [n]$, if $(R,t)\in \Qq$ is the unique pair satisfying $(u,v)\in R$, then we have $\prx_r(u,v)=t$.
 \end{itemize}
\end{lemma}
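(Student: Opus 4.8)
The plan is to show that the map $(u,v)\mapsto\prx_r(u,v)$ is constant on each piece of a partition of $[n]\times[n]$ into $\Oh_{d,r}(n)$ rectangles, and that this partition can be listed in the allotted time. By Lemma~\ref{lem:reindexing} we may assume $V(G)=[n]$, that $\Pp$ is convex, and that every part is identified by the two endpoints of the interval it forms; these endpoints are maintained in one $\Oh_d(n)$ scan. The first ingredient is a monotonicity fact: for fixed $u,v$ put $\delta(t)\coloneqq\dist_t(\iwarp ut,\iwarp vt)$. Since $\Tri_{s+1}$ is obtained from $\Tri_s$ by contracting a single pair of parts, every impurity incident to one of the two contracted parts in $\Tri_s$ is inherited by the merged part in $\Tri_{s+1}$, so every walk in $\Imp_s$ projects to a walk of no greater length in $\Imp_{s+1}$; hence $\delta$ is non-increasing. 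Therefore $\prx_r(u,v)$ is the unique time $t$ with $\delta(t)\le r<\delta(t-1)$ when $u\neq v$, and equals $1$ when $u=v$ (as $\delta(1)=0$). Consequently, if $C,D$ are parts of $\Pp_s$, then all pairs in $C\times D$ share the same function $\delta$ from time $s$ on, so $\prx_r$ is constant on the rectangle $C\times D$, and its value there is $s+1$ if and only if $\dist_s(C,D)>r$ and $\dist_{s+1}(\warp Cs{s+1},\warp Ds{s+1})\le r$.

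Next I would prove the localization claim: \emph{if $\prx_r(u,v)=s+1$ then both $\iwarp us$ and $\iwarp vs$ lie in $V(\Rel^r_s)$.} Let $A,A'$ be the two parts of $\Pp_s$ merged into $B_{s+1}$. If $\iwarp us\in\{A,A'\}$ it lies in $V(\Rel^r_s)$ by definition, so assume $\iwarp us=C\notin\{A,A'\}$, hence $C\in\Pp_{s+1}$; I must show $\dist_{s+1}(C,B_{s+1})\le r$. Take a shortest path $\pi$ in $\Imp_{s+1}$ from $C$ to $\iwarp v{s+1}$; it has length $\le r$. If $\pi$ avoided $B_{s+1}$, then its endpoint $\iwarp v{s+1}$ would not be $B_{s+1}$, so $\iwarp v{s+1}=\iwarp vs$, and — since impurities not involving $A,A',B_{s+1}$ are identical in $\Tri_s$ and $\Tri_{s+1}$ — $\pi$ would also be a path in $\Imp_s$ witnessing $\dist_s(\iwarp us,\iwarp vs)\le r$, contradicting $\prx_r(u,v)=s+1>s$. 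So $\pi$ passes through $B_{s+1}$, giving $\dist_{s+1}(C,B_{s+1})\le r$; the argument for $v$ is symmetric. Combining this with the first paragraph, for each $s$ the set $\{(u,v)\colon\prx_r(u,v)=s+1\}$ equals the union of the rectangles $C\times D$ over those pairs $C,D\in V(\Rel^r_s)$ with $\dist_s(C,D)>r$ and $\dist_{s+1}(\warp Cs{s+1},\warp Ds{s+1})\le r$; as $|\Rel^r_s|=\Oh_{d,r}(1)$ and distinct parts are disjoint intervals, this is a union of $\Oh_{d,r}(1)$ pairwise disjoint rectangles, all of which I label $s+1$. Adding the $n$ singleton rectangles $\{(u,u)\}$ labelled $1$, and using that $\prx_r$ takes a well-defined value in $[n]$ on every pair (again by monotonicity), these rectangles form a partition of $[n]\times[n]$ refining the level sets of $\prx_r$, which is exactly the list $\Qq$ sought.

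It remains to assemble $\Qq$ in time $\Oh_{d,r}(n)$. Using Lemma~\ref{lem:compute-affected} with parameter $2r$, compute the trigraphs $\Rel^{2r}_s$ for all $s\in[n-1]$ in time $\Oh_{d,r}(n)$. For a fixed $s$, the set $V(\Rel^r_s)\subseteq V(\Rel^{2r}_s)$ is obtained by a depth-$r$ breadth-first search from $B_{s+1}$ in the impurity graph read off from $\Rel^{2r}_s$, together with $A,A'$; and for any two parts $C,D\in V(\Rel^r_s)$ one computes $\dist_s(C,D)$ and $\dist_{s+1}(\warp Cs{s+1},\warp Ds{s+1})$, each truncated at $r$, by depth-$r$ breadth-first searches inside $\Rel^{2r}_s$ (in $\Tri_s$, respectively after contracting $A,A'$). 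These truncated searches are correct whenever the true distance is at most $r$, because any such shortest path stays, by the triangle inequality together with the monotonicity of distances under contraction, within distance $2r$ of $B_{s+1}$, hence inside $V(\Rel^{2r}_s)$. Since $|\Rel^{2r}_s|=\Oh_{d,r}(1)$ and $\Imp$ has maximum degree $d$, all this costs $\Oh_{d,r}(1)$ per $s$; emitting the matching rectangles (via the stored interval endpoints) and finally the $n$ diagonal singletons yields $\Qq$ in total time $\Oh_{d,r}(n)$. The main obstacle is the localization claim of the second paragraph — that a pair first brought within distance $r$ at time $s+1$ has both its parts inside $\Rel^r_s$ — since it is what keeps the number of rectangles linear and requires carefully tracking how impurities and shortest paths in the impurity graph evolve across a single contraction.
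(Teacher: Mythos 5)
Your proof is correct and follows essentially the same route as the paper's: you identify, for each contraction time, the level set of $\prx_r$ as a union of $\Oh_{d,r}(1)$ rectangles $C\times D$ over parts near the newly contracted part (using convexity so that products of parts are rectangles), which is exactly the paper's construction of the sets $\Rr_t$. The only differences are presentational: you spell out the monotonicity and localization facts that the paper asserts in one line, and you run the local BFS computations inside precomputed trigraphs $\Rel^{2r}_s$ from Lemma~\ref{lem:compute-affected} rather than maintaining the impurity graph $\Imp_t$ during a single scan, both of which are sound.
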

\begin{proof}
 We proceed through the contraction sequence, by considering times $t=2,3,\ldots,n$, maintaining the current impurity graph $\Imp_t$; this can be easily updated in time $\Oh_{d,r}(1)$ per each given time $1<t\le n$. Also, we maintain a list $\Qq$ with the following invariant: at the beginning of processing time $t$, $\Qq$ consists of pairs of the form $(R,s)$ for $s<t$ such that rectangles in $\Qq$ cover all pairs $(u,v)\in [n]\times [n]$ with $\prx_r(u,v)<t$. Note that at the beginning we can initialize $\Qq$ as $\{(\{(u,u)\},1)\colon u\in [n]\}$ and thus the invariant is maintained. Also, once we finish processing time $n$, the invariant tells us that the obtained list $\Qq$ can be output by the algorithm.
 
 We now implement a step of the process, say at a time $t>1$. Our goal is to find a set $\Rr_t$ of $\Oh_{d,r}(1)$ rectangles that cover all pairs $(u,v)\in [n]\times [n]$ with $\prx_r(u,v)=t$ and no other pair; then we can append $\{(R,t)\colon R\in \Rr_t\}$ to $\Qq$.
 
 Suppose $\Pp_t$ is obtained from $\Pp_{t-1}$ by contracting parts $A,A'\in \Pp_{t-1}$ into $B_t=A\cup A'\in \Pp_t$. Consider any $(u,v)\in [n]\times [n]$ such that $\prx_r(u,v)=t$. Observe that since $t$ is the first time when $\dist_t(\iwarp{u}{t},\iwarp{v}{t})\leq r$, it must be the case that
 $$\dist_t(\iwarp{u}{t},B_t)\leq r\qquad \textrm{and}\qquad \dist_t(\iwarp{v}{t},B_t)\leq r.$$
 Furthermore, for every $u'\in \iwarp{u}{t-1}$ and $v'\in \iwarp{v}{t-1}$ we also have $\prx_r(u',v')=t$. From these two observations we infer that $\Rr_t$ can be constructed as follows:
 \begin{itemize}
  \item Let $\Ff=\{U\in \Pp_t~|~\dist_t(B_t,U)\leq r\}\setminus \{B_t\}\cup \{A,A'\}$. Note that $|\Ff|\leq \Oh_{d,r}(1)$ and $\Ff$ can be constructed in time $\Oh_{d,r}(1)$.
  \item For every pair $\{U,U'\}\in \binom{\Ff}{2}$, verify whether $\dist_{t-1}(U,U')>r$ and $\dist_{t}(U,U')\leq r$ (in the second condition we replace $U$, respectively $U'$, with $B_t$ in case it belongs to $\{A,A'\}$). If this is the case, then add $U\times U'$ to $\Rr_t$. Note that since $\Pp$ is convex, $U\times U'$ is indeed a rectangle.
 \end{itemize}
 Observe that the distance conditions used in the second point above can be checked in time $\Oh_{d,r}(1)$ by running BFS trimmed at depth $r$ from $U$ in graphs $\Imp_{t-1}$ and $\Imp_t$. So every step can be executed in time $\Oh_{d,r}(1)$, giving running time $\Oh_{d,r}(n)$ in total.
\end{proof}

Lemma~\ref{lem:proximity-ds} follows from Lemma~\ref{lem:break-rectangles} as follows. Let $\Qq$ be the list provided by Lemma~\ref{lem:break-rectangles}; note that $|\Qq|\leq \Oh_{d,r}(n)$, because this is an upper bound on the running time of the algorithm computing $\Qq$. Let $\Rr$ be the list of rectangles appearing in the pairs from $\Qq$. Set up a data structure of Theorem~\ref{thm:chan} for $\Rr$ and, additionally, for each $R\in \Rr$ remember the unique $t\in [n]$ such that $(R,t)\in \Qq$. Then upon query $(u,v)\in [n]\times [n]$, it suffices to use the data structure of Theorem~\ref{thm:chan} to find the unique $R\in \Rr$ containing $(u,v)$ and return the associated integer $t$.


\subsection{The tree of $r$-close $\tup x$-tuples}\label{sec:warping}

In this section we are going to construct an auxiliary data structure for handling local types.
Fix a number $k\in \N$; this is the quantifier rank of the types we would like to tackle. Denote $r:=2^k$.
Also fix a finite set $\tup x$ of variables, an $n$-vertex graph $G$,
together with a contraction sequence $\Pp_1,\ldots,\Pp_n$.

For $s\in [n]$ and a tuple $\tup u\in \Pp_s^{\tup x}$, we call $\tup u$ {\em{$r$-close}} at the time $s$ if one cannot partition $\tup u$ into two nonempty tuples $\tup u',\tup u''$ such that $\dist_s(\tup u',\tup u'')>r$. Equivalently, if one considers an auxiliary graph on vertex set $\tup u$ where two parts are connected iff they are at distance at most $r$ in $\Imp_s$, then $\tup u$ is $r$-close iff this auxiliary graph is connected. Note that if $\tup u\in \Pp_s^{\tup x}$ is $r$-close at the time $s$, then for every $t$ with $s\leq t\leq n$, the tuple $\warp{\tup u}{s}{t}$ is also $r$-close at the time $t$.
%


\medskip
For $s\in [n]$ with $s>1$, by $B_s$ denote the part of $\Pp_s$ that is the union of two parts in $\Pp_{s-1}$.
Let $T_{r,\tup x}$ 
be the set consisting of all pairs of the form $(\tup u,s)$ such that $s\in [n]$, $\tup u\in \Pp_s^{\tup x}$ is $r$-close at the time~$s$, and at least one of the following conditions is satisfied:
\begin{itemize} 
 \item $s=1$; or
 \item $s>1$ and $\dist_s(B_s,\tup u)\leq r$; or
 \item $s<n$ and $\dist_{s+1}(B_{s+1},\warp{\tup u}{s}{s+1})\leq r$.
\end{itemize}

Note that as $\tup u$ is assumed to be $r$-close, if the second condition holds then $\tup u\subseteq \Vic_s^{r|\tup x|}(B_s)$, and if the third condition holds then $\warp{\tup u}{s}{s+1}\subseteq \Vic_{s+1}^{r|\tup x|}(B_{s+1})$. Since the trigraphs $ \Vic_s^{r|\tup x|}(B_s)$ and $\Vic_{s+1}^{r|\tup x|}(B_{s+1})$ are of size $\Oh_{d,k,\tup x}(1)$, it follows that $T_{r,\tup x}$ contains $\Oh_{d,k,\tup x}(n)$ elements in total: $n$ elements for $s=1$ and $\Oh_{d,k,\tup x}(1)$ elements for each $1<s\leq n$. 

We  consider an ancestor relation $\preceq$ on $T_{r,\tup x}$ defined as follows:
$$(\tup v,t)\preceq (\tup u,s)\qquad\textrm{if and only if}\qquad s\leq t\textrm{ and }\warp{\tup u}{s}{t}=\tup v.$$
It is easy to see $T_{r,\tup x}$ together with $\preceq$ defines a rooted tree whose tree order is $\preceq$.
The root is $(\tup r,n)$, where $\tup r$ is the unique tuple of $\Pp_n^{\tup x}$, the one that maps all variables of $\tup x$ to the unique part of $\Pp_n$. From now on we identify the set $T_{r,\tup x}$ with the tree it induces. Therefore, we call the elements of $T_{r,\tup x}$ {\em{nodes}} and the child-parent pairs in $T_{r,\tup x}$ the {\em{edges}} of $T_{r,\tup x}$.

\begin{definition}\label{def:close-tree}
    We call $T_{r,\tup x}$ the \emph{tree of $r$-close $\tup x$-tuples}
    associated with $G$ and the contraction sequence $\Pp_1,\ldots,\Pp_n$.
\end{definition}

Recall that $r=2^k$, and $k\in\N$ is fixed.
For every node $(\tup u,s)\in T_{r,\tup x}$, let $\Types^k_{\tup u,s}$ be the set of all possible $k$-local types of tuples $\tup w\in V(G)^{\tup x}$ satisfying $\tup u=\iwarp{\tup w}{s}$. By Lemma~\ref{lem:ltp_number_of_types}, there is a constant $M=\Oh_{d,k,\tup x}(1)$ such that $|\Types^k_{\tup u,s}|\leq M$ for every node $(\tup u,s)$, and $\Types^k_{\tup u,s}$ can be computed in time $\Oh_{d,k,\tup x}(1)$ given access to $\Tri_s$ and $\tup u$.

Consider nodes $(\tup u,s),(\tup v,t)\in T_{r,\tup x}$ such that $(\tup v,t)$ is the parent of $(\tup u,s)$. Let $e=((\tup u,s),(\tup v,t))$ be the corresponding edge of $T_{r,\tup x}$. By Lemma~\ref{lem:tp-warp}, there exists a function $f_e\colon \Types^k_{\tup u,s}\to \Types^k_{\tup v,t}$ such that for every tuple $\tup w\in V(G)^{\tup x}$ with $\tup u=\iwarp{\tup w}{s}$, we have
\begin{equation}\label{eq:bobr}
\ltp_t^k(\tup w)=f_e(\ltp_s^k(\tup w)). 
\end{equation}
We now verify that all the objects introduced above can be computed efficiently. 

\newcommand{\Close}{\mathsf{Close}}

\begin{lemma}\label{lem:construct-tree}
 One can in time $\Oh_{d,k,\tup x}(n)$ compute the nodes and the edges of $T_{r,\tup x}$ (where $r=2^k$) as well as, for every edge $e$ of $T_{r,\tup x}$, the function $f_e$.
\end{lemma}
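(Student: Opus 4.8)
The plan is to compute $T_{r,\tup x}$ by a single scan through the contraction sequence, leveraging Lemma~\ref{lem:compute-affected} to get the relevant regions efficiently, and Lemma~\ref{lem:proximity-ds} (the proximity oracle) wherever we need to test distances across large time gaps. First I would invoke Lemma~\ref{lem:reindexing} so that the contraction sequence is convex and parts are identified by endpoints, and I would set up the proximity data structure of Lemma~\ref{lem:proximity-ds} for radius $r$, as well as compute all relevant regions $\Rel_s^{r(|\tup x|+1)}$ via Lemma~\ref{lem:compute-affected}; all of this costs $\Oh_{d,k,\tup x}(n)$.

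Next I would enumerate the nodes. The $n$ nodes with $s=1$ are the singleton tuples $(\tup u,1)$ with all coordinates equal to the same vertex (since only $r$-close tuples are included, and at time $1$ distinct singletons are at distance $+\infty$ unless all equal) — actually one must be slightly careful: an $r$-close tuple at time $1$ need not be constant, but its parts must be pairwise within distance $r$ in $\Imp_1$, which is edgeless, so indeed all coordinates coincide; there are exactly $n$ such nodes. For each time $s$ with $1<s\le n$, the second and third bullet conditions force $\tup u$ (respectively $\warp{\tup u}{s}{s+1}$) to live inside the bounded-size vicinity $\Vic_s^{r|\tup x|}(B_s)$ or $\Vic_{s+1}^{r|\tup x|}(B_{s+1})$; these vicinities are of size $\Oh_{d,k,\tup x}(1)$ and can be read off from the relevant regions, so we can enumerate all candidate tuples $\tup u$, test $r$-closeness by a trimmed BFS in $\Imp_s$, and test the distance-to-$B_s$ conditions, all in $\Oh_{d,k,\tup x}(1)$ time per step. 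This yields the node set in total time $\Oh_{d,k,\tup x}(n)$.

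Then I would construct the edges. By definition the parent of a node $(\tup u,s)$ with $s<n$ is the unique node $(\tup v,t)\in T_{r,\tup x}$ with $t$ minimal subject to $t>s$ and $\warp{\tup u}{s}{t}=\tup v$ — equivalently, $t$ is the first time strictly after $s$ at which either $\warp{\tup u}{s}{t}$ meets $B_t$ within distance $r$, or $t=n$; note that $\warp{\tup u}{s}{s+1}$ changes only if $\tup u$ touches the contracted pair, and the tuple must re-enter the ``active zone'' around some future $B_t$ exactly when one of its coordinates gets merged or comes within distance $r$ of a merge. The key point is that we can locate this parent time using the proximity oracle: for each coordinate $y$ of $\tup u$ and each coordinate $y'$, $\prx_r(w_y,w_{y'})$ tells us when the relevant parts draw close, and more directly, since the number of children of any node is $\Oh_{d,k,\tup x}(1)$ (children of $(\tup v,t)$ are among tuples inside $\Vic_t^{r|\tup x|}(B_t)$ one step earlier), we can instead build edges top-down: process times $t=n,n-1,\ldots,1$, and when $B_t$ is formed, for each node $(\tup v,t)$ whose tuple meets $B_t$ within distance $r$, compute its children at time $t-1$ by splitting $B_t$ back into its two constituent parts and enumerating the $\Oh_{d,k,\tup x}(1)$ resulting $r$-close tuples, linking each to its predecessor node in $T_{r,\tup x}$ (found by walking back to the most recent time that tuple was a node — which, because consecutive such times differ only via a contraction touching the tuple's vicinity, is again locatable in amortized $\Oh_{d,k,\tup x}(1)$ via the proximity oracle). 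This gives all edges in $\Oh_{d,k,\tup x}(n)$ total.

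Finally, for each edge $e=((\tup u,s),(\tup v,t))$ I would compute $f_e$. If $t=s+1$ this is immediate from the effective part of Lemma~\ref{lem:tp-warp}: we have the relevant region $\Rel_s^{r(|\tup x|+1)}$ available and all coordinates of $\tup v$ lie in $\Rel_s^{r|\tup x|}$ (this is precisely the situation guaranteed by membership of both endpoints in $T_{r,\tup x}$), so $f_e$ — as a full input–output table over the $\Oh_{d,k,\tup x}(1)$-sized type sets — is computable in $\Oh_{d,k,\tup x}(1)$ time. For a general edge where the time gap $t-s$ may be large, we do not recompute from scratch; instead we compose: the path from $(\tup u,s)$ to $(\tup v,t)$ in $T_{r,\tup x}$ may be long, but between consecutive tree-nodes the tuple is $r$-far from every contracted part at every intermediate time, so Lemma~\ref{lem:relevant} gives that the local $k$-type is unchanged there, hence $f_e$ is simply the composition of the one-step functions at the times the tuple is affected — of which there are, summed over the whole tree, only $\Oh_{d,k,\tup x}(n)$, each computed in $\Oh_{d,k,\tup x}(1)$. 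The main obstacle, and the step needing the most care, is the bookkeeping in the edge construction: correctly identifying, for a node at time $s$, its parent time $t$ and verifying that the intermediate evolution is covered by Lemma~\ref{lem:relevant} so that no type information is lost and so that the amortized cost per node stays $\Oh_{d,k,\tup x}(1)$; the proximity oracle of Lemma~\ref{lem:proximity-ds} is exactly the tool that makes these lookups affordable.
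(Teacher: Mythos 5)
Your node enumeration and your description of the edge labels (Lemma~\ref{lem:tp-warp} for one-step edges, identity/trivial composition via Lemma~\ref{lem:relevant} across stretches where the tuple is far from every contraction) are in line with the paper. The genuine gap is in the edge construction. First, your ``equivalent'' characterization of the parent is wrong: if $(\tup u,s)$ is a node and the tuple stays at distance $>r$ from every $B_{s'}$ until the first time $t^*>s+1$ with $\dist_{t^*}(B_{t^*},\tup u)\le r$, then $(\tup u,t^*-1)$ is already a node of $T_{r,\tup x}$ (via the third condition in its definition, $\dist_{t^*}(B_{t^*},\warp{\tup u}{t^*-1}{t^*})\le r$), so the parent of $(\tup u,s)$ sits at time $t^*-1$, not at the meeting time $t^*$; building edges according to your characterization skips these nodes and produces a different tree. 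Second, the step you yourself flag as the delicate one --- ``walking back to the most recent time that tuple was a node, in amortized $\Oh_{d,k,\tup x}(1)$ via the proximity oracle'' --- is not supported: $\prx_r(u,v)$ answers ``first time the parts containing two given \emph{vertices} come within distance $r$ of each other,'' which is not the query you need (``last time before $t$ at which a given tuple of parts was within distance $r$ of the freshly contracted part''); moreover each oracle query costs $\Theta(\log\log n)$, so even $\Theta(n)$ legitimate queries already break the claimed $\Oh_{d,k,\tup x}(n)$ bound, and literally walking backwards in time can cost $\Theta(n)$ per long edge and $\Theta(n^2)$ overall, since the total length of the long edges need not be linear.

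The difficulty disappears if you build the tree in a single \emph{forward} scan, which is what the paper does: maintain $\Imp_t$ and the set of $r$-close tuples incrementally (each update touches only the radius-$r(|\tup x|+1)$ ball around the contracted part), detect the nodes at each time locally, and for every currently $r$-close tuple remember the most recently created node carrying that tuple. When a new node with the same tuple appears you immediately add the long edge to the remembered node and label it with the identity function (justified by Lemma~\ref{lem:relevant}), while the one-step edges incident to the contraction get their functions from the effective part of Lemma~\ref{lem:tp-warp}. This needs neither the proximity oracle nor any backward search, and it yields both the correct parent pointers and the $\Oh_{d,k,\tup x}(n)$ total running time.
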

\begin{proof}
 Recall that we denote $r=2^k$. For every time $t\in [n]$, let $\Close_t$ denote the set of all $r$-close tuples a time $t$. Observe that we can in time $\Oh_{d,k,\tup x}(n)$ scan through the contraction sequence $\Pp_1,\ldots,\Pp_n$ while maintaining, at every time $t$, both the impurity graph $\Imp_t$ and the set $\Close_t$. Indeed, computing $\Imp_{t+1}$ and $\Close_{t+1}$ from $\Imp_{t}$ and $\Close_t$ only requires inspecting the ball of radius $r(|\tup x|+1)$ around the part resulting from the contraction, which can be done in time $\Oh_{d,k,\tup x}(1)$ per time step. The nodes of $T_{r,\tup x}$ corresponding to time $t$ can be easily computed in time $\Oh_{d,k,\tup x}(1)$ from $\Close_t$ and the knowledge of the contraction at time $t$ right from the definition. So we already see that the nodes of $T_{r,\tup x}$ can be computed in total time $\Oh_{d,k,\tup x}(n)$.
 
 It remains to argue how to augment the scan explained above so that along the way, we also compute the edges of $T_{r,\tup x}$ and the corresponding functions $f_e$. It is easy to see that there are two types of edges in $T_{r,\tup x}$:
 \begin{itemize}
  \item Edges of the form $((\tup u,s),(\tup v,s+1))$, where $\tup v=\warp{\tup u}{s}{s+1}$, such that $\dist(B_{s+1},\tup v)\leq r$.
  \item Edges of the form $((\tup u,s),(\tup u,t))$ for $s<t$ such that $\dist(B_{s'},\tup u)>r$ for for all $s<s'\leq t$. (In particular, $\tup u\in \Pp_{s'}^{\tup x}$ for all such $s'$.)
 \end{itemize}
Edges of the second type can be constructed easily during the scan when moving from time $s$ to time $s+1$. The associated functions $f_e$ can be computed using Lemma~\ref{lem:tp-warp}. Edges of the first type can be constructed by remembering during the scan, for each $\tup u\in \Close_t$, the unique already computed node $(\tup u,s)\in T_{r,\tup x}$ with the largest $s$. Then, when we decide that $(\tup u,s)\in T_{r,\tup x}$, we simply add the edge $e=((\tup u,s),(\tup u,t))$. By Lemma~\ref{lem:relevant}, in this case we can set $f_e$ to be the identity function.
\end{proof}

We can finally state and prove the main result of this section.

\begin{lemma}\label{lem:warp-ds}
 One can in time $\Oh_{d,k,\tup x}(n)$ construct a data structure that can answer the following queries in time $\Oh_{d,k,\tup x}(1)$: given $\tup w\in V(G)^{\tup x}$, two nodes $(\tup v,t)\preceq (\tup u,s)$ of $T_{r,\tup x}$ such that $\tup u=\iwarp{\tup w}{s}$ and $\tup v=\iwarp{\tup w}{t}$, and the type $\ltp_s^k(\tup w)$, output the type $\ltp_t^k(\tup w)$. 
\end{lemma}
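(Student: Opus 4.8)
The plan is to reduce each query to a single \emph{level-ancestor} query in an auxiliary forest obtained by unfolding the tree $T_{r,\tup x}$ over all possible local $k$-types.

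First I would apply Lemma~\ref{lem:construct-tree} to compute, in time $\Oh_{d,k,\tup x}(n)$, the tree $T_{r,\tup x}$ together with the sets $\Types^k_{\tup u,s}$ for all its nodes $(\tup u,s)$ --- each of size at most $M=\Oh_{d,k,\tup x}(1)$ by Lemma~\ref{lem:ltp_number_of_types} --- and, for every edge $e=((\tup u,s),(\tup v,t))$ of $T_{r,\tup x}$, the transition function $f_e\colon\Types^k_{\tup u,s}\to\Types^k_{\tup v,t}$ from \eqref{eq:bobr}. A single traversal of $T_{r,\tup x}$ also lets me record $\mathrm{depth}(z)$ for each node $z$, with the root having depth $0$. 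Next I build the forest $F$ whose vertex set is $\{(z,\tau)\,:\,z\text{ a node of }T_{r,\tup x},\ \tau\in\Types^k_z\}$ --- of size $\Oh_{d,k,\tup x}(n)$ --- in which the parent of $(z,\tau)$ is $(z',f_e(\tau))$, for $z'$ the parent of $z$ in $T_{r,\tup x}$ and $e=(z,z')$; the vertices of the form $(\text{root},\tau)$ become the roots of $F$ (one may glue them under a virtual root). All of $F$ is computable in time $\Oh_{d,k,\tup x}(n)$ from the data above. The point is that the projection $(z,\tau)\mapsto z$ is a depth-preserving morphism $F\to T_{r,\tup x}$ that sends parent to parent, so the $F$-depth of $(z,\tau)$ equals $\mathrm{depth}(z)$. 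Finally I equip $F$ with a data structure for level-ancestor queries; it is classical that after $\Oh(|F|)=\Oh_{d,k,\tup x}(n)$-time preprocessing such queries --- ``given a vertex and a number $\ell$, return its ancestor at depth $\ell$'' --- can be answered in $\Oh(1)$ time.

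To answer a query with input $\tup w$, nodes $(\tup v,t)\preceq(\tup u,s)$ as in the statement, and $\tau:=\ltp^k_s(\tup w)\in\Types^k_{\tup u,s}$, I would locate the vertex $((\tup u,s),\tau)$ of $F$ in time $\Oh_{d,k,\tup x}(1)$ (by scanning the array $\Types^k_{\tup u,s}$, of length at most $M$, stored at the node $(\tup u,s)$), set $\ell:=\mathrm{depth}((\tup v,t))$, query the level-ancestor structure for the ancestor of $((\tup u,s),\tau)$ at depth $\ell$, and output the second coordinate $\tau'$ of the returned vertex $((\tup v,t),\tau')$. Correctness is a short induction along the $T_{r,\tup x}$-path from $(\tup u,s)$ up to $(\tup v,t)$: since $(\tup v,t)\preceq(\tup u,s)$, that path hits $(\tup v,t)$ precisely at $F$-depth $\ell$, so the level-ancestor query does return a vertex with first coordinate $(\tup v,t)$; and since $\iwarp{\tup w}{s'}$ equals the first coordinate of every node on that path at its time $s'$, iterating \eqref{eq:bobr} shows the second coordinate equals $\ltp^k_{s'}(\tup w)$ along the whole path, hence $\tau'=\ltp^k_t(\tup w)$. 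The main obstacle is conceptual rather than technical: the key realization is that ``warping a local type forward to an ancestor node'' is, after the unfolding, exactly a level-ancestor query --- once one sees this, the level-ancestor structure is a black box and everything else is linear-time bookkeeping that plainly respects the claimed bounds.
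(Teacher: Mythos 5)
Your proof is correct, but it replaces the paper's key data-structural tool with a different one. The paper, after building $T_{r,\tup x}$ and the edge functions $f_e$ via Lemma~\ref{lem:construct-tree} (exactly as you do), reindexes all type sets into a common index set $I=[M]$ and invokes the path-product data structure for semigroup-labelled trees (Theorem~\ref{thm:semigroup}, from~\cite{PilipczukSSTV21}) applied to the semigroup of functions $I\to I$: a query first retrieves the \emph{entire} composed function $h$ along the path from $(\tup u,s)$ to $(\tup v,t)$ and then evaluates it at the given type. You instead exploit that only a single point evaluation of that composition is ever needed: you unfold $T_{r,\tup x}$ into the forest $F$ on pairs $(z,\tau)$ with parent $(z',f_e(\tau))$, observe that the orbit of $\ltp_s^k(\tup w)$ under the edge functions is exactly the ancestor path of $((\tup u,s),\ltp_s^k(\tup w))$ in $F$, and answer with one classical level-ancestor query; correctness follows from iterating~\eqref{eq:bobr} together with the fact that $\warp{\tup u}{s}{s'}=\iwarp{\tup w}{s'}$ along the path, just as in the paper. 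Both routes meet the stated $\Oh_{d,k,\tup x}(n)$ preprocessing and $\Oh_{d,k,\tup x}(1)$ query bounds. What the paper's approach buys is the full composed function (the same machinery is reused in the enumeration section, where Corollary~\ref{cor:composer-enum} needs more than point evaluation) at the cost of constants polynomial in $|I|^{|I|}$; what yours buys is a more elementary black box (level ancestors) and space only $M\cdot|T_{r,\tup x}|$, since each tree node carries $M$ unfolded copies rather than semigroup tables. The only bookkeeping point to watch is the depth shift if you glue the roots $((\tup r,n),\tau)$ under a virtual root, which you can dispense with by running level-ancestor queries on the forest directly or by offsetting depths by one; this does not affect correctness.
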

\begin{proof}
 Using Lemma~\ref{lem:construct-tree} construct the tree $T_{r,\tup x}$ and functions $f_e$ for the edges of $T_{r,\tup x}$. By Lemma~\ref{lem:ltp_number_of_types}, there is a constant $M=\Oh_{d,k,\tup x}(1)$ such that $|\Types^k_{\tup u,s}|\leq M$ for every node $(\tup u,s)$.
  Let $I\coloneqq [M]$ be an indexing set of size $M$. Since for every  node $(\tup u,s)$ we have $|\Types^k_{\tup u,s}|\leq M$, we can set an arbitrary injection $\iota_{\tup u,s}\colon \Types^k_{\tup u,s}\to I$. For an edge $e=((\tup u,s),(\tup v,t))$, we set
 $$g_e\coloneqq \iota_{\tup u,s}^{-1}\, ;\, f_e\, ;\, \iota_{\tup v,t}.$$
 Thus, $g_e$ is a function from $I$ to $I$ that, intuitively, is just $f_e$ reindexed using the index set $I$. Clearly, functions $\iota_{\tup u,s}$ and $g_e$ defined above can be computed in total time $\Oh_{d,k,\tup x}(n)$.
 
 We will use the following result proved in~\cite{PilipczukSSTV21}.
 
 \begin{theorem}[Theorem~5.1 of~\cite{PilipczukSSTV21}]\label{thm:semigroup}
  Let $S$ be a semigroup and $T$ be a rooted tree with edges labelled with elements of $S$. Then one can in time $|S|^{\Oh(1)}\cdot |T|$ construct a data structure that can answer the following queries in time $|S|^{\Oh(1)}$: given nodes $u,v\in T$ such that $v$ is an ancestor of $u$, output the (top-down) product of elements of $S$ associated with the edges on the path from $v$ to $u$. The data structure uses $|S|^{\Oh(1)}\cdot |T|$ space.
 \end{theorem}

 Let $S$ be the semigroup of functions from $I$ to $I$ with the product defined as $f\cdot g=g;f$. Thus, the functions $g_e$ form a labelling of edges of $T_{r,\tup x}$ with elements of $S$. Apply Theorem~\ref{thm:semigroup} to this $S$-labelled tree, and let ${\mathbb{S}}$ be the obtained data structure. Now, to answer a query about nodes $(\tup u,s)$, $(\tup v,t)$, and type $\alpha=\ltp_s^k(\tup w)$ as in the lemma statement, it suffices to apply the following procedure:
 \begin{itemize} 
  \item Compute $\tilde{\alpha}\coloneqq \iota_{\tup u,s}(\alpha)$.
  \item Query ${\mathbb{S}}$ to compute the compositions of functions $g_e$ along the path from $(\tup u,s)$ to $(\tup v,t)$ in $T_{r,\tup x}$. Call the resulting function $h$.
  \item Compute $\tilde{\beta}\coloneqq h(\tilde{\alpha})$.
  \item Output $\beta\coloneqq \iota_{\tup v,t}^{-1}(\tilde \beta)$.
 \end{itemize}
 The correctness of the procedure follows from a repeated use of~\eqref{eq:bobr}, and it is clear that the running time is $\Oh_{d,k,\tup x}(1)$.
\end{proof}

\subsection{Data structure}
%

With all the tools prepared, we can prove Theorem~\ref{thm:intro-main-qa}.

%


 Let $k$ be the quantifier rank of $\varphi$. We set up two auxiliary data structures:
\begin{itemize} 
 \item The data structure of Lemma~\ref{lem:proximity-ds} for radius parameter $r=2^k$. Call this data structure $\Prx$.
 \item For every $\tup z\subseteq \tup x$, the data structure of Lemma~\ref{lem:warp-ds} for parameter $k$ and the set of variables~$\tup z$. Call this data structure $\Wrp_{\tup z}$.
\end{itemize}
Moreover, using Lemma~\ref{lem:compute-affected}, we compute for each time $s\in [n-1]$ the trigraph $\Rel_s^{p}$, where $p\coloneqq r(|\tup x|+1)$. These objects constitute our data structure, so by Lemmas~\ref{lem:proximity-ds},~\ref{lem:warp-ds}, and~\ref{lem:compute-affected}, the construction time is $\Oh_{d,\varphi}(n)$ as promised. It remains to show how to implement queries.

Suppose we are given a tuple $\tup w\in V(G)^{\tup x}$ and we would like to decide whether $G\models \varphi(\tup w)$. By Lemma~\ref{lem:ltp_type_as_local_type}, to answer this it suffices to compute $\ltp^k_n(\tup w)$. In the following, for $\tup z\subseteq \tup x$, by $\tup w_{\tup z}$ we denote the restriction of $\tup w$ to the variables of $\tup z$.

For each time $s\in [n]$, let $H_s$ be the graph on vertex set $\tup x$ such that $y,y'\in \tup x$ are adjacent in $H_s$ if and only if $\dist_s(\iwarp{\tup w}{s}(y),\iwarp{\tup w}{s}(y'))\leq r$. The following are immediate:
\begin{itemize} 
 \item For all $1\leq s\leq t\leq n$, $H_t$ is a supergraph of $H_s$. That is, if $y,y'\in \tup x$ are adjacent in $H_s$, then they are also adjacent in $H_t$.
 \item If $\tup z\subseteq \tup x$ is such that $H_s[\tup z]$ is connected for some $s\in [n]$, then $\iwarp{\tup w_{\tup z}}{s}$ is $r$-close at the time~$s$.
\end{itemize}
Using the data structure $\Prx$, we may compute $\prx_{r}(y,y')$ for all $\{y,y'\}\in \binom{\tup x}{2}$ in total time $\Oh_{d,\varphi}(\log \log n)$. Let $S\subseteq [n]$ be the set of all those numbers, and include $1$ and $n$ in $S$ in addition. Thus $|S|\leq 2+\binom{|\tup x|}{2}\leq \Oh_{\varphi}(1)$. We imagine $S$ as ordered by the standard order $\leq$, hence we may talk about consecutive elements of $S$.

Note that the knowledge of the numbers $\prx_{r}(y,y')$ for $\{y,y'\}\in \binom{\tup x}{2}$ allows us to compute the graphs $H_{s}$ for all $s\in S$. Further, observe that if $t\in [S]$ is such that $s\leq t<s'$ for some $s,s'\in S$ that are consecutive in $S$, then $H_t=H_s$.

Let $L$ be the set of all pairs of the form $(\tup z,s)$ where $s\in S$, $\tup z$ is a connected component of $H_s$, and either $s=1$ or $\tup z$ is not connected in $H_{s-1}$. Clearly, $L$ has size $\Oh_{d,\varphi}(1)$ and can be computed in time $\Oh_{d,\varphi}(1)$. Observe the following.

\begin{claim}\label{cl:pinned}
 For each $(\tup z,s)\in L$, we have $(\iwarp{\tup w_{\tup z}}{s},s)\in T_{r,\tup z}$. If moreover $s>1$, then for every $\tup y\subseteq \tup z$ that is a connected component of $H_{s-1}$, we have $(\iwarp{\tup w_{\tup y}}{s-1},s-1)\in T_{r,\tup y}$.
\end{claim}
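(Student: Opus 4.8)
The plan is to verify, straight from the definition of the tree of $r$-close tuples, the two membership statements. In both cases the $r$-closeness requirement comes for free: $\tup z$ is a connected component of $H_s$ and $\tup y$ a connected component of $H_{s-1}$, so by the second of the two bullets stated just before the claim, $\iwarp{\tup w_{\tup z}}{s}$ is $r$-close at time $s$ and $\iwarp{\tup w_{\tup y}}{s-1}$ is $r$-close at time $s-1$. Hence it remains, in each case, either to observe that the relevant time equals $1$, or to exhibit a part occurring in the relevant tuple that lies at distance at most $r$ in the appropriate impurity graph from the contracted part.

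The technical core is a locality observation about a single contraction step. Fix $s>1$ and let $\Pp_s$ arise from $\Pp_{s-1}$ by contracting $A,A'$ into $B_s=A\cup A'$. Since impurity of a pair of disjoint vertex sets depends only on those two sets (and on $G$), two parts of $\Pp_s$ that are both different from $B_s$ are impure in $G/\Pp_s$ precisely when they are impure in $G/\Pp_{s-1}$; consequently every walk in $\Imp_s$ avoiding the vertex $B_s$ is also a walk in $\Imp_{s-1}$ on the same parts. I will use this to prove the following sub-claim: \emph{if $\{y,y'\}$ is an edge of $H_s$ but not of $H_{s-1}$, then $\dist_s(B_s,\iwarp{\tup w}{s}(y))\le r$.} Indeed, if $\tup w(y)\in B_s$ then $\iwarp{\tup w}{s}(y)=B_s$ and the distance is $0$; if $\tup w(y')\in B_s$ then $\dist_s(B_s,\iwarp{\tup w}{s}(y))=\dist_s(\iwarp{\tup w}{s}(y'),\iwarp{\tup w}{s}(y))\le r$ since $\{y,y'\}$ is an edge of $H_s$; and otherwise $\iwarp{\tup w}{s}(y)=\iwarp{\tup w}{s-1}(y)$ and $\iwarp{\tup w}{s}(y')=\iwarp{\tup w}{s-1}(y')$, so a shortest path between these two parts in $\Imp_s$ has length at most $r$, and if it avoided $B_s$ it would, by the locality observation, be a path of length at most $r$ in $\Imp_{s-1}$, contradicting $\{y,y'\}\notin H_{s-1}$; hence it passes through $B_s$, giving $\dist_s(B_s,\iwarp{\tup w}{s}(y))\le r$.

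With the sub-claim in hand both assertions are quick. For the first one, if $s=1$ we are done by the first defining condition of $T_{r,\tup z}$. If $s>1$, then $(\tup z,s)\in L$ forces $\tup z$ to be disconnected in $H_{s-1}$, whereas $H_s[\tup z]$ is connected and $H_{s-1}[\tup z]\subseteq H_s[\tup z]$ by the first bullet; so there is an edge $\{y,y'\}$ of $H_s$ with $y,y'\in\tup z$ that is not an edge of $H_{s-1}$, and the sub-claim yields $\dist_s(B_s,\iwarp{\tup w_{\tup z}}{s})\le r$, the second defining condition of $T_{r,\tup z}$. For the second assertion, fix a connected component $\tup y\subseteq\tup z$ of $H_{s-1}$. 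Since $H_{s-1}\subseteq H_s$, every $H_{s-1}$-component meeting $\tup z$ is connected in $H_s$ and so is contained in $\tup z$; thus $\tup z=\tup y_1\sqcup\cdots\sqcup\tup y_p$ is a disjoint union of $H_{s-1}$-components one of which is $\tup y$, and $p\ge 2$ because $(\tup z,s)\in L$ with $s>1$ makes $\tup z$ disconnected in $H_{s-1}$. As $H_s[\tup z]$ is connected, contracting each $\tup y_i$ to a point gives a connected graph on $p\ge2$ vertices, so some edge $\{y,y'\}$ of $H_s$ joins $\tup y$ to a different $\tup y_j$; such an edge cannot lie in $H_{s-1}$, so the sub-claim gives $\dist_s(B_s,\iwarp{\tup w}{s}(y))\le r$ with $y\in\tup y$, i.e.\ $\dist_s(B_s,\iwarp{\tup w_{\tup y}}{s})\le r$. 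Since $\iwarp{\tup w_{\tup y}}{s}=\warp{\iwarp{\tup w_{\tup y}}{s-1}}{s-1}{s}$ and $s-1<n$, this is exactly the third defining condition for $(\iwarp{\tup w_{\tup y}}{s-1},s-1)\in T_{r,\tup y}$.

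The step I expect to be the main obstacle is the locality observation and the sub-claim derived from it: the impurity graphs at times $s-1$ and $s$ live on different vertex sets, so one has to be careful about the quotient map identifying $A,A'$ with $B_s$, and the ``a shortest path must pass through $B_s$'' argument only applies once both endpoint parts are known to be untouched by the contraction — which is exactly why the case split on whether $\tup w(y)$ or $\tup w(y')$ falls into $B_s$ is needed. Everything else is routine bookkeeping with connected components of $H_{s-1},H_s$ and with the definitions of $L$ and $T_{r,\cdot}$.
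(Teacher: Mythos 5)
Your proof is correct and follows essentially the same route as the paper: the paper's argument also reduces both memberships to the second and third defining conditions of $T_{r,\cdot}$ via the fact that $\tup z$ becoming connected at time $s$ forces every $H_{s-1}$-component of $\tup z$ to lie within distance $r$ of $B_s$ in $\Imp_s$. The only difference is that the paper asserts this distance fact as immediate, whereas you supply the (correct) justification via the observation that paths in $\Imp_s$ avoiding $B_s$ already exist in $\Imp_{s-1}$.
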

\begin{proof}
 If $s=1$, then $\tup z$ being a connected component of $H_1$ means that $\tup z$ is a constant tuple. Hence $\iwarp{\tup w_{\tup z}}{1}$ is $r$-close at the time $1$, implying that $(\iwarp{\tup w_{\tup z}}{1},1)\in T_{r,\tup z}$.
 
 Assume then that $s>1$.
 As $\tup z$ is not connected in $H_{s-1}$ and is connected in $H_s$, it follows that for every connected component $\tup y$ of $H_{s-1}$ that is contained in $\tup z$, we have $\dist_s(\iwarp{\tup w_{\tup y}}{s},B_s)\leq r$, and in particular $\dist_s(\iwarp{\tup w_{\tup z}}{s},B_s)\leq r$. The latter statement implies that $(\iwarp{\tup w_{\tup z}}{s},s)\in T_{r,\tup z}$ due to fulfilling the second condition in the definition of $T_{r,\tup z}$. Further, since $\tup y$ is a connected component of $H_{s-1}$, $\iwarp{\tup w_{\tup y}}{s-1}$ is $r$-close at the time $s-1$. So $(\iwarp{\tup w_{\tup y}}{s-1},s-1)\in T_{r,\tup y}$ due to fulfilling the third condition in the definition of $T_{r,\tup y}$.
\cqed\end{proof}

We now compute the types $\ltp_s^k(\tup w_{\tup z})$ for all $(\tup z,s)\in L.$ We do this in any order on $L$ with non-decreasing $s$, hence when processing $(\tup z,s)$ we may assume that the corresponding types have already been computed for all $(\tup y,t)\in L$ with $t<s$.

Assume first that $s=1$. Then $(\tup z,1)\in L$ means that $\tup z$ is a connected component of $H_1$, which in turn means that $\tup w_{\tup z}$ is a constant tuple.
In this case $\ltp_1^k(\tup w_{\tup z})$ can be computed trivially.

Assume then that $s>1$. Since $(\tup z,s)\in L$, we have that $\tup z$ is a connected component of $H_s$, but in $H_{s-1}$, $\tup z$ breaks into two or more smaller connected components. 

Consider any such component $\tup y$; that is, $\tup y$ is a connected component of $H_{s-1}$ that is contained in $\tup z$. By Claim~\ref{cl:pinned}, we have $(\iwarp{\tup w_{\tup y}}{s-1},s-1)\in T_{r,\tup y}$. Let then $t\leq s-1$ be the smallest time such that $\tup y$ is a connected component of $H_{t}$; clearly we have $t\in S$ and $(\tup y,t)\in L$. By Claim~\ref{cl:pinned} again, $(\iwarp{\tup w_{\tup y}}{t},t)\in T_{r,\tup y}$. Since the type $\ltp_{t}^k(\tup w_{\tup y})$ has been already computed before, we may use one query to $\Wrp_{\tup y}$ to compute the type $\ltp_{s-1}^k(\tup w_{\tup y})$.

Having performed the procedure described above for every connected component $\tup y$ of $H_{s-1}$ that is contained in $\tup z$, we may repeatedly use Lemma~\ref{lem:tp-merge} to compute the type $\ltp_{s-1}^k(\tup w_{\tup z})$. Note that for different components $\tup y,\tup y'$ as above, we have $\dist_{s-1}(\iwarp{\tup w_{\tup y}}{s-1},\iwarp{\tup w_{\tup y'}}{s-1})>r$ due to $\tup y$ and $\tup y'$ being non-adjacent in $H_{s-1}$. Furthermore, all trigraphs required in the applications of Lemma~\ref{lem:tp-merge} can be easily deduced from the trigraph $\Rel_{s-1}^p$ and the description of the contraction performed at the time $s-1$; these are stored in our data structure.

Finally, it remains to apply Lemma~\ref{lem:tp-warp} to compute the type $\ltp_s^k(\tup w_{\tup z})$ from $\ltp_{s-1}^k(\tup w_{\tup z})$. Again, the trigraphs needed in this application can be easily deduced from $\Rel_{s-1}^p$ and the contraction performed at the time $s-1$. 
This finishes the computation of types $\ltp_s^k(\tup w_{\tup z})$ for all $(\tup z,s)\in L$; note that the running time is $\Oh_{d,\varphi}(1)$. 

Finally, let $t$ be the smallest time such that $\tup x$ is connected in $H_{t}$. Such $t$ exists since $\tup x$ is connected in $H_n$. Clearly, $t\in S$. By definition we have $(\tup x,t)\in L$, so the type $\ltp_{t}^k(\tup w)$ has been computed. By Claim~\ref{cl:pinned}, $(\iwarp{\tup w}{t},t)\in T_{r,\tup x}$. So we can now use the data structure $\Wrp_{\tup x}$ one last time to compute $\ltp_n^k(\tup w)$. This finishes the proof of Theorem~\ref{thm:intro-main-qa}.

\section{Query enumeration}\label{sec:enumeration}

In this section we prove Theorem~\ref{thm:intro-main-qe}, which we recall below for convenience.
\newenvironment{red}
  {\color{red}%
   \ignorespaces}
 {}

\Enumeration*

First, we need to define our notion of enumerators and prepare some tools for them.

\paragraph{Enumerators.}
Let $x_1,\ldots,x_n$ be a sequence of elements.
An \emph{enumerator} of the sequence $x_1,\ldots,x_n$
is a data structure that implements a single method,
such that at the $i$th invocation of the method,
it outputs the element $x_j$ of the sequence,
where $j={i\mod n}$,
and reports an `end of sequence' message if $j=0$.
We say that the enumerator has \emph{delay} $t$ if 
each invocation takes at most $t$ computation steps,
including the steps needed to output the element $x_j$ 
(assuming each element has a fixed representation).
An enumerator for a set $X$ is an enumerator for 
any sequence $x_1,\ldots,x_n$ with $\set{x_1,\ldots,x_n}=X$
and $n=|X|$.
Enumerators 
 for Cartesian products and disjoint unions of sets
 can be obtained in an obvious way:

\begin{lemma}\label{lem:enum-prod}
    Suppose we are given an enumerator for a set $X$ 
    with delay $t$ and an enumerator for a set $Y$ with delay $t'$, where $t,t'\ge 1$.
    Then we can construct in time $\Oof(1)$ an  enumerator 
    with delay $t+t'+\Oof(1)$
    for the set $X\times Y$
    and
    -- if $X$ and $Y$ are disjoint --
    for the set $X\uplus Y$. 
\end{lemma}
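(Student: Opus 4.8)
The plan is to implement both enumerators as simple state machines that repeatedly drive the given sub-enumerators while keeping only constant-size auxiliary state; thus the construction itself takes time $\Oof(1)$, and the whole content of the lemma is to arrange that each invocation of the combined method calls each sub-enumerator at most once, plus $\Oof(1)$ bookkeeping, which yields delay $t+t'+\Oof(1)$.

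For the disjoint union $X\uplus Y$ I would first run the $X$-enumerator to exhaustion and then the $Y$-enumerator. The state is a phase flag, initially ``$X$-phase''. In $X$-phase each invocation calls the $X$-enumerator once and outputs its result; if that call also reports ``end of sequence'', we switch to ``$Y$-phase'' but do not yet report end, since elements of $Y$ still remain. In $Y$-phase each invocation calls the $Y$-enumerator once and outputs its result; if that call reports ``end of sequence'', we additionally report ``end of sequence'' for the combined enumerator and reset the phase to ``$X$-phase'', so that the next invocation restarts the cycle. Since each sub-enumerator automatically restarts on the invocation following the one on which it reported ``end of sequence'', the behaviour is consistent across cycles, and the delay is $\max(t,t')+\Oof(1)\le t+t'+\Oof(1)$. (If $X=\emptyset$ or $Y=\emptyset$ the construction degenerates to the other enumerator with $\Oof(1)$ overhead, detected on the first invocation.)

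For the Cartesian product $X\times Y$ I would enumerate the pairs lexicographically: fix an element $x\in X$, run once through all of $Y$ emitting $(x,y)$ for each $y$, then advance $x$, and repeat. The state stores the current $x\in X$, together with flags recording whether the machine has started and whether the current $x$ is the last element of $X$. The first invocation fetches $x$ by one call to the $X$-enumerator and $y$ by one call to the $Y$-enumerator, and outputs $(x,y)$. A subsequent invocation calls the $Y$-enumerator once: if it returns an element $y$ without signalling end, we output $(x,y)$; if it returns its last element $y$ together with ``end of sequence'', we output $(x,y)$ and, in the same invocation, advance $x$ by one call to the $X$-enumerator, recording whether this new $x$ is the last element of $X$. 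No second call to the $Y$-enumerator is needed to restart it: the next invocation of the $Y$-enumerator, which occurs on our next invocation, already yields the first element of $Y$ again. Hence every invocation makes at most one call to each sub-enumerator, so the delay is $t+t'+\Oof(1)$. When the $Y$-enumerator signals end while the stored $x$ is already the last element of $X$, we output the final pair, report ``end of sequence'' for the combined enumerator, and reset the state, so that the next invocation restarts the cycle from the first $x$. Correctness is immediate, since over one full cycle the machine emits exactly the pairs $(x,y)\in X\times Y$, each once. (Again $X=\emptyset$ or $Y=\emptyset$ is handled in $\Oof(1)$ by detecting an empty sub-enumerator on the first invocation and reporting end of sequence.)

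The step I expect to require the most care is the delay accounting in the product construction. A naive implementation in which one invocation first discovers that $Y$ is exhausted, then fetches the next $x$, and then issues a fresh call to restart $Y$ would spend two calls to the $Y$-enumerator on that invocation, giving only delay $t+2t'+\Oof(1)$. Keeping the delay at $t+t'+\Oof(1)$ relies on two features of the enumerator interface: that the ``end of sequence'' signal accompanies the emission of the last element (so exhaustion of $Y$ is learned on the very invocation that outputs $Y$'s last element), and that enumerators restart automatically, so the $x$-advance can be folded into that same invocation without an extra call to the $Y$-enumerator.
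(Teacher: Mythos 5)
Your construction is correct and is exactly the ``obvious'' implementation the paper has in mind (it omits the proof entirely, introducing the lemma with the remark that such enumerators ``can be obtained in an obvious way''). Your careful handling of the end-of-sequence convention --- that the signal accompanies the last element and the sub-enumerators restart automatically, so no extra restarting call to the $Y$-enumerator is needed --- is precisely what makes the stated delay $t+t'+\Oof(1)$ go through, so there is nothing to add.
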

We will also construct enumerators 
for disjoint unions of families of sets,  as follows.

\begin{lemma}\label{lem:enum-sum}
    Suppose $X_1,\ldots,X_n$ are pairwise disjoint, nonempty sets,
    such that $X_i$ has an enumerator $\cal E_i$ with delay $t$.
    Suppose furthermore we have an enumerator 
    for the sequence $\cal E_1,\ldots,\cal E_n$
    with delay $t'$.
    Then one can construct, in time $\Oof(1)$
    an enumerator for the set $\bigcup_{1\le i\le n} X_i$
    with delay $t+t'+\Oof(1)$.
\end{lemma}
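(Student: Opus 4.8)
The plan is to construct an enumerator that walks through the blocks $X_1,\dots,X_n$ one after another, in the order in which the given meta-enumerator (call it $\cal E$) hands back the inner enumerators $\cal E_1,\dots,\cal E_n$. Concretely, I would have the data structure maintain a handle $C$ to the \emph{current} inner enumerator together with two bits of bookkeeping: one recording whether the last invocation of $C$ reported \`{}end of sequence' (so the current block $X_i$ has just been finished), and one recording whether the last invocation of $\cal E$ reported \`{}end of sequence' (so $C$ is currently the last inner enumerator $\cal E_n$). Initialization just sets $C$ to ``undefined'' and the first bit to true, so that the first invocation behaves like a block boundary; this is clearly $\Oof(1)$ time.

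The single method of the new enumerator then works as follows. If the ``current block finished'' bit is false, invoke $C$ once to get the next element of the current block, output it, and update the bit according to whether $C$ reported \`{}end of sequence'; this costs $t+\Oof(1)$. If the bit is true (which also covers the very first invocation), we are at a block boundary: invoke $\cal E$ once to obtain the next inner enumerator, make it the new $C$, and invoke that $C$ once to obtain the \emph{first} element of the new block; output that element and update both bits; this costs $t'+t+\Oof(1)$. The crucial point making the delay bound come out to $t+t'+\Oof(1)$ rather than something larger is that an enumerator reports \`{}end of sequence' \emph{together with} the last element of its sequence, so detecting that a block has been exhausted costs nothing beyond the call that produced its last element; moreover, nonemptiness of every $X_i$ guarantees that the ``first element of the new block'' call at a boundary always succeeds, so every invocation of the new enumerator does produce an output (it additionally raises the \`{}end of sequence' flag precisely when a full pass over $\bigcup_i X_i$ has just been completed, which is detected from the two bits).

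For correctness I would argue by a straightforward induction on the invocation index that the new enumerator emits, cyclically, the concatenation of the sequences produced by $\cal E_1,\dots,\cal E_n$, which is exactly an enumeration of $\bigcup_{1\le i\le n}X_i$: disjointness of the $X_i$ is what ensures that this concatenation has length $\sum_i|X_i|=|\bigcup_i X_i|$ and lists each element once per period, and the fact that $\cal E$ and all $\cal E_i$ are themselves cyclic enumerators is what makes the composed process cycle correctly. The delay bound then follows since every invocation falls into one of the two cases above. I expect the only fiddly part to be the bookkeeping around the cyclic wrap-around --- ensuring that the last element of $X_n$ carries the enumerator's \`{}end of sequence' flag and that the following invocation restarts cleanly at $X_1$ (recall $\cal E$ cycles, so the call to it after $\cal E_n$ yields $\cal E_1$ again) --- but this is handled by the two recorded bits and is otherwise routine.
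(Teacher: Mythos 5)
Your construction is correct, and since the paper treats Lemma~\ref{lem:enum-sum} as routine and gives no explicit proof, your round-robin-through-blocks construction (maintain the current inner enumerator, advance via the outer enumerator at block boundaries, with the delay $t+t'+\Oof(1)$ incurred only at boundaries and the end-of-sequence flag raised when both the inner and outer enumerators have wrapped) is exactly the intended argument. Your reading of the enumerator convention --- that the `end of sequence' message accompanies the last element --- is the consistent interpretation of the paper's definition and is what makes the bookkeeping cost only $\Oof(1)$, so there is nothing to fix.
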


Finally, we will use the following lemma, proved in \cite[Lemma 7.15]{PilipczukSSTV21}.
\begin{lemma}
    Fix a finite set $Q$ of size $q$ and a set of functions $\cal F\subset Q^Q$. There is a constant $c$ computable from $q$ and an algorithm that, given a rooted tree $T$, in which each edge $vw$ ($v$ child of $w$) is labelled by a function $f_{vw}\from Q\to Q$, computes in time $c\cdot |T|$ a collection $(\cal E_w)_{w\in V(T)}$ of enumerators, where each 
    $\cal E_w$ is an enumerator with delay $c$ that enumerates all descendants $v$ of $w$ such that the composition of the functions labeling the edges of the path from $v$ to $w$, belongs to $\Ff$.
\end{lemma}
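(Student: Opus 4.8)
The plan is to reduce the statement to a refined bookkeeping task that can be handled by a single bottom-up pass over $T$. For a node $w$ and a function $h\in Q^Q$, let $D_{w,h}$ be the set of descendants $v$ of $w$ (including $w$ itself, with the empty path contributing the identity $\mathrm{id}$) such that the composition of the edge labels along the path from $v$ to $w$ equals $h$. For a fixed $w$ the sets $\{D_{w,h}\}_{h\in Q^Q}$ partition the subtree rooted at $w$, and the desired enumerator $\cal E_w$ is obtained by feeding the constantly many nonempty enumerators among $\{\cal E_{w,h}\colon h\in\Ff\}$ — where $\cal E_{w,h}$ enumerates $D_{w,h}$ — into Lemma~\ref{lem:enum-sum}; since $|\Ff|\le|Q^Q|$ is a constant, this costs only $\Oof(1)$ extra delay. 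So it suffices to construct all the $\cal E_{w,h}$, for $w\in V(T)$ and $h\in Q^Q$, in time linear in $|T|$ and with delay bounded by a constant, where both constants depend only on $q$.

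The recursive structure I would use is the following. If $w$ has children $c_1,\ldots,c_k$ with $f_i\coloneqq f_{c_iw}$, then every proper descendant $v$ of $w$ lies in the subtree of a unique $c_i$, and $\pi(v,w)=f_i\cdot\pi(v,c_i)$, where $\pi(\cdot,\cdot)$ denotes the path composition and $\cdot$ is the corresponding composition operation on $Q^Q$ (the precise left/right convention is immaterial). Hence
\[
D_{w,h}\;=\;X_{w,h}\ \uplus\ \biguplus_{i=1}^{k}\ \biguplus_{\substack{h'\in Q^Q\\ f_i\cdot h'=h}} D_{c_i,h'},
\]
where $X_{w,h}=\{w\}$ if $h=\mathrm{id}$ and $X_{w,h}=\emptyset$ otherwise, and all the unions are disjoint. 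A post-order traversal of $T$ that at each node $w$ "pours" every nonempty $D_{c_i,h'}$ into the unique $D_{w,h}$ with $h=f_i\cdot h'$ and adds $w$ to $D_{w,\mathrm{id}}$ computes the partitions $\{D_{w,h}\}_h$ for all $w$, using $\Oof(k\cdot|Q^Q|)$ work at $w$ and $\Oof(|Q^Q|\cdot|T|)$ in total.

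The \emph{main obstacle} is controlling the delay: implementing $\cal E_{w,h}$ by literally recursing through the decomposition above and nesting applications of Lemma~\ref{lem:enum-sum} makes the delay grow additively with the tree depth, giving delay $\Theta(\operatorname{depth}(T))$, which is not allowed. The remedy is to represent every $D_{w,h}$ by an \emph{append-only shared singly linked list} together with the integer $|D_{w,h}|$; "pouring" one set into another is then implemented as list concatenation, i.e.\ redirecting a single tail pointer and summing the two lengths, in $\Oof(1)$ time. This operation never relocates the head of a list and never deletes a cell, so once the post-order traversal finishes a node $w$ we record, for each $h$, the pair (head pointer of $D_{w,h}$'s current list, current value of $|D_{w,h}|$), and define $\cal E_{w,h}$ to walk exactly that many cells starting from the recorded head. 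Any later concatenations performed while processing ancestors of $w$ only append cells behind this prefix, or splice the list into the interior of a longer one, and in neither case do they alter the first $|D_{w,h}|$ cells reachable from the recorded head; hence $\cal E_{w,h}$ correctly enumerates $D_{w,h}$ with delay $\Oof(1)$, regardless of later updates. Assembling $\cal E_w$ from $\{\cal E_{w,h}\colon h\in\Ff,\ D_{w,h}\neq\emptyset\}$ by one application of Lemma~\ref{lem:enum-sum} (with the trivial constant-length sequence enumerator on top) then gives delay $\Oof(1)$. The overall running time is $\Oof(|Q^Q|\cdot|T|)$, and all constants depend only on $q$ (through $|Q^Q|$ and $|\Ff|$), which yields the constant $c$ in the statement. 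One could alternatively invoke Theorem~\ref{thm:semigroup} for the path-product bookkeeping, but the linked-list implementation is self-contained and already linear.
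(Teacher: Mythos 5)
Your argument is correct, but note that the paper itself does not prove this lemma at all: it is imported verbatim as \cite[Lemma~7.15]{PilipczukSSTV21}, so there is no in-paper proof to compare against, and what you have written is a self-contained replacement for that external citation. Your route is the natural one: bucket the descendants of each node $w$ into sets $D_{w,h}$ according to the composed function $h\in Q^Q$ along the path to $w$, observe the bottom-up recurrence $D_{w,h}=X_{w,h}\uplus\biguplus_i\biguplus_{h':f_i\cdot h'=h}D_{c_i,h'}$, and then assemble $\cal E_w$ from the constantly many buckets with $h\in\Ff$ via Lemma~\ref{lem:enum-sum}. Crucially, you correctly identified the one real pitfall --- that naively nesting union-enumerators along the recursion gives delay proportional to the depth of $T$ --- and your fix is sound: representing each bucket as a shared singly linked list with a recorded (head, length) pair, and implementing ``pouring'' as tail-pointer concatenation, preserves every recorded prefix, since a cell's next pointer is only ever written when that cell is the current tail of its list, so the first $|D_{w,h}|$ cells reachable from the recorded head are immutable thereafter and the enumerator, which stops after the stored number of steps, never follows the one pointer that may later change. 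Each child bucket is poured into exactly one parent bucket (as $h'\mapsto f_i\cdot h'$ is a function), so the per-node work is $\Oof_q(1)$ per child and the total construction time is $\Oof_q(|T|)$, with all constants computable from $q$; the inclusion of $w$ itself via $D_{w,\mathrm{id}}$ matches the convention under which the paper later uses the lemma (descendants include the node itself, empty path composing to the identity). Minor housekeeping you may wish to make explicit: emptiness of buckets is read off the stored lengths before invoking Lemma~\ref{lem:enum-sum} (which requires nonempty sets), and the cyclic/``end of sequence'' behaviour of the paper's enumerator definition is obtained by resetting the walk to the recorded head after the recorded number of outputs; neither affects correctness.
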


This immediately yields the following.
\begin{corollary}\label{cor:composer-enum}
    Fix a number $q$. There is a constant $c$ computable from $q$ and an algorithm that, given a rooted tree $T$, 
    in which each node $v$ is labeled by a set $X_v$ with $|X_v|\le q$ and a set $Y_v\subset X_v$, and each     
    edge $vw$ ($v$ child of $w$) is labelled by a function $f_{wv}\from X_v\to X_w$, computes in time $c\cdot |T|$ a collection $(\cal E_w^\tau)_{w\in V(T),\tau\in X_w}$ of enumerators, where each
    $\cal E_w^\tau$ is an enumerator with delay $c$ that enumerates all descendants $v$ of $w$ such that 
    there is some $\sigma\in Y_v$ that is mapped to $\tau$ by the composition $f\from X_v\to X_w$ of the functions labeling the edges of the path from $v$ to $w$.
\end{corollary}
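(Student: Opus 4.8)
The plan is to derive Corollary~\ref{cor:composer-enum} from the preceding lemma by a reduction that encodes the bookkeeping $(X_v, Y_v, f_{wv})$ into a single abstract state set $Q$ and a single family $\mathcal F \subset Q^Q$, so that the black-box algorithm of the lemma can be invoked once. The main point is that in the lemma the alphabet $Q$ and the function family $\mathcal F$ are \emph{global}, whereas here each node carries its own local set $X_v$ of size at most $q$; so first I would identify, without loss of generality, every $X_v$ with a subset of a fixed set $[q] = \{1,\ldots,q\}$ via an arbitrary injection $\iota_v \colon X_v \hookrightarrow [q]$, chosen and stored during an initial $\Oof(|T|)$ scan of $T$. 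Set $Q \coloneqq [q]$. For an edge $vw$ with $v$ a child of $w$, relabel $f_{wv}$ to the function $g_{wv} \coloneqq \iota_v^{-1}\, ;\, f_{wv}\, ;\, \iota_w \colon Q \rightharpoonup Q$; extend it to a total function $Q \to Q$ arbitrarily on $Q \setminus \iota_v(X_v)$ (say, the identity there), which is harmless since those states will never be reached from a ``real'' starting state. This is exactly the reindexing trick already used in the proof of Lemma~\ref{lem:warp-ds}.

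Next I would handle the per-target requirement: for a fixed root-side node $w$ and a fixed target $\tau \in X_w$, we want to enumerate descendants $v$ such that \emph{some} $\sigma \in Y_v$ is mapped to $\tau$ by the composed function along the $v$-to-$w$ path. The difficulty is that the lemma enumerates descendants $v$ for which the composed function lies in a \emph{fixed} set $\mathcal F \subset Q^Q$, but here the acceptance condition depends both on the node $v$ (through $Y_v$) and on the chosen target $\tau$ (which varies). To fold $\tau$ into the state, I would run the lemma on the alphabet $Q' \coloneqq Q \times Q$ (still of size $q^2 = \Oof_q(1)$), where the first coordinate carries the ``actual'' propagated state and the second coordinate is a frozen copy of the target, and lift each edge function to $(a,b) \mapsto (g_{wv}(a), b)$. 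Then for each $\tau$, declare $\mathcal F_\tau \coloneqq \{\, h \in (Q')^{Q'} : h \text{ is of the required form} \,\}$ --- but this still does not capture the dependence on $Y_v$, so instead I would argue as follows: apply the lemma once per pair $(\tau, \sigma) \in [q] \times [q]$ with the fixed singleton family $\mathcal F_{\tau,\sigma} \coloneqq \{\, h \in Q^Q : h(\sigma) = \tau \,\}$, obtaining for every node $w$ an enumerator $\mathcal E_w^{\tau,\sigma}$ of all descendants $v$ with $g_{\text{path}}(\sigma) = \tau$. There are only $q^2$ such pairs, so this is $\Oof_q(1)$ invocations, each costing $\Oof_q(|T|)$.

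Then I would assemble $\mathcal E_w^\tau$ from the $\mathcal E_w^{\tau,\sigma}$. Note that $v$ should be enumerated by $\mathcal E_w^\tau$ iff there exists $\sigma \in \iota_v(Y_v)$ with $g_{\text{path from }v\text{ to }w}(\sigma) = \iota_w(\tau)$, i.e. iff $v$ is enumerated by $\mathcal E_w^{\iota_w(\tau),\sigma}$ for some $\sigma \in \iota_v(Y_v)$. A single $v$ may be enumerated by several of these inner enumerators (for different $\sigma$), so a naive disjoint-union combination would produce duplicates. To fix this, I would filter: when $\mathcal E_w^{\iota_w(\tau), \sigma}$ reports a descendant $v$, accept it only if $\sigma$ is the \emph{minimum} element of $\iota_v(Y_v)$ for which $v$ is enumerated by some $\mathcal E_w^{\iota_w(\tau),\sigma'}$ --- but checking that for a given $v$ requires probing the membership predicate ``does $\mathcal E_w^{\tau,\sigma'}$ enumerate $v$?'', which the enumerator interface does not directly support. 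The cleaner route, and the one I would actually take, is to avoid duplicates at the source: redefine, for each node $v$, a single distinguished witness $\sigma^\star_v(\tau) \in \iota_v(Y_v)$ per target, and run the lemma with the node-dependent acceptance ``$g_{\text{path}}(\sigma^\star_v(\tau)) = \iota_w(\tau)$'' --- but node-dependent acceptance is again outside the lemma.

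The resolution, which I expect to be the genuinely delicate step, is to push $Y_v$ into the state at the leaf end. Augment $Q$ once more to $Q'' \coloneqq [q] \sqcup (\{\star\} \times [q])$ and, at each node $v$, think of the ``starting'' states available at $v$ as the set $\{\,(\star, i) : i \in \iota_v(Y_v)\,\}$; lift edge functions by $g_{wv}(\star, i) = (\star, g_{wv}(i))$ wait --- more simply: since the lemma already lets each enumerator $\mathcal E_w$ range over \emph{all} descendants with the path-composition in $\mathcal F$, and since the only per-node data we need is $Y_v$, I would apply the lemma with $Q'' = [q]$, $\mathcal F = $ the set of all functions, but restrict attention as follows: build an auxiliary tree $T'$ equal to $T$ except that each node $v$ is tagged; then observe that the statement we want, ``$\exists \sigma \in Y_v$ with composition sending $\sigma \mapsto \tau$'', is literally ``$v$ is a descendant of $w$ enumerated by $\bigcup_{\sigma} \mathcal E_w^{\tau,\sigma}$'', and the only issue is multiplicity. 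I would therefore finish by a standard trick: run all $q^2$ enumerators $\mathcal E_w^{\iota_w(\tau),\sigma}$ in lockstep interleaving (Lemma~\ref{lem:enum-sum} over the $q$-element index set of $\sigma$'s), and dedupe by remembering, across consecutive outputs in a full enumeration cycle, the last emitted descendant --- since each enumerator outputs its descendants in a \emph{fixed order} (say the DFS order of $T$), one can merge the $q^2$ sorted streams and suppress repeats with $\Oof_q(1)$ delay, exactly like a $q^2$-way merge of sorted lists. Gathering everything, the total construction time is $\Oof_q(1) \cdot \Oof_q(|T|) = c \cdot |T|$ for a suitable $c = c(q)$, and each $\mathcal E_w^\tau$ has delay $\Oof_q(1) = c$, as required. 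The one subtlety to verify carefully is that the DFS-order guarantee on the individual $\mathcal E_w^{\tau,\sigma}$ from the cited lemma is strong enough to support the merge-and-dedupe; if the lemma's enumerators are only promised to enumerate \emph{some} ordering, I would instead record for each node a single canonical witness index during the initial scan and expose exactly one inner enumerator per node, which sidesteps duplicates entirely.
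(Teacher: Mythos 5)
Your reduction has a genuine gap, and it is not the one you flagged at the end. For a fixed target $\tau$, the streams $\mathcal E_w^{\iota_w(\tau),\sigma}$ you obtain from the lemma enumerate \emph{all} descendants $v$ whose path composition maps $\sigma$ to $\iota_w(\tau)$, with no reference to $Y_v$. So their union is in general a strict superset of the set you must enumerate: it contains spurious nodes $v$ whose only witnesses $\sigma$ lie outside $\iota_v(Y_v)$. Your merge-and-dedupe step only removes duplicates; it never removes these spurious nodes. The obvious fix --- filter stream $\sigma$ by the test $\sigma\in\iota_v(Y_v)$ when it emits $v$ --- destroys the constant-delay guarantee, because a stream may emit arbitrarily long runs of rejected nodes before its next accepted one, and a multiway merge needs the current head of every stream available in constant time. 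On top of this, the deduplication itself rests on all streams emitting descendants in one common fixed order (e.g.\ DFS order); the cited lemma promises no ordering at all, and your fallback of ``one canonical witness per node'' is not well defined, since whether a given $\sigma\in Y_v$ is a valid witness depends on the ancestor $w$ and on $\tau$, so a single per-node choice cannot serve all enumerators $\mathcal E_w^\tau$ simultaneously.

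The reduction that makes the corollary immediate (the paper gives no further proof) avoids per-$\sigma$ streams altogether by pushing both $Y_v$ and the set of still-reachable values into the state. Identify each $X_v$ with a subset of $[q]$, take $Q:=2^{[q]}\cup\{\mathrm{init}\}$, and label the edge between $v$ and its parent $w$ by the function sending $\mathrm{init}\mapsto f_{wv}(Y_v)$ and $S\mapsto f_{wv}(S\cap X_v)$ (images). Then the composition along the path from $v$ to $w$ sends $\mathrm{init}$ exactly to the image of $Y_v$ under the composed original function, so for each $\tau\in[q]$ a single invocation of the lemma with $\mathcal F_\tau:=\{h\in Q^Q \mid \tau\in h(\mathrm{init})\}$ yields precisely the required enumerators $\mathcal E_w^\tau$ --- no duplicates, no filtering, no ordering assumptions --- at the cost of $q$ invocations over an alphabet of size $2^q+1$, which is exactly the kind of constant-factor overhead the statement allows. (If descendants are meant to include $w$ itself, handle $v=w$ separately by checking $\tau\in Y_w$ directly and prepending $w$ to the enumerator.)
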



\paragraph{Proof of Theorem~\ref{thm:intro-main-qe}.}
We now proceed to the proof of Theorem~\ref{thm:intro-main-qe}.

Fix a number $k$, a set of variables $\tup x$,
 an $n$-vertex graph $G$, together with its contraction sequence $\Pp_1,\ldots,\Pp_n$.
 Denote $r:=2^k$. 

 For every $s\in [n]$ and $\tup x$-tuple $\tup u\in \Pp_s^{\tup x}$, and local type $\tau\in \Types_{\tup u,s}^k$, 
 denote 
\[S_{\tup u,s}^\tau\coloneqq \setof{\tup w\in V(G)^{\tup x}}{\iwarp{\tup w}{s}=\tup u
\text{ and } \ltp^k_s(\tup w)=\tau}.\]

Recall that the root of $T_{r,\tup x}$ is the pair $(\tup r,n)$,
where $\tup r$ is the constant $\tup x$-tuple with all components equal to the unique 
part of $\Pp_n$. Then $S_{\tup r,n}^\tau$ is the set of all $\tup x$-tuples $\tup w\in V(G)^{\tup x}$ 
with $\ltp^k_n(\tup w)=\tau$. 
From Lemma~\ref{lem:ltp_type_as_local_type} and Lemma~\ref{prop:types_mc}
we get:
\begin{lemma}\label{lem:enum wrap up}
    Fix a formula $\phi(\tup x)$ of quantifier-rank $k$.  
    Then there is a  set $\Gamma\subset \Types_{\tup r,n}^n$
    such that 
    $\phi(G):=\setof{\tup w\in V(G)^{\tup x}}{G\models \phi(\tup w)}$ is the disjoint union of 
the family of sets 
$\setof{S_{\tup r,n}^\tau}{\tau\in \Gamma}.$
\end{lemma}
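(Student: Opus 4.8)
The statement is essentially a repackaging of two facts already in hand: first, that the ordinary quantifier-rank-$k$ type $\tp^k(\tup w)$ determines whether $G\models\phi(\tup w)$ for every $\phi$ of quantifier rank at most $k$ (Proposition~\ref{prop:types_mc}), and second, that at the last time $n$ the local $k$-type $\ltp^k_n(\tup w)$ determines $\tp^k(\tup w)$ (Lemma~\ref{lem:ltp_type_as_local_type}). So I would argue as follows. Fix $\phi(\tup x)$ of quantifier rank $k$. Define a map $h$ on local $k$-types at time $n$: given $\tau\in\Types^k_{\tup r,n}$, first apply the (computable) procedure of Lemma~\ref{lem:ltp_type_as_local_type} to extract the corresponding ordinary $k$-type, then evaluate $\phi$ on that $k$-type via Proposition~\ref{prop:types_mc}. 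Crucially, both steps depend only on $\tau$ and not on the particular tuple realizing it; since every $\tup w\in V(G)^{\tup x}$ satisfies $\iwarp{\tup w}{n}=\tup r$ (there is a single part in $\Pp_n$), $\ltp^k_n(\tup w)$ ranges over $\Types^k_{\tup r,n}$, and $G\models\phi(\tup w)$ holds if and only if $h(\ltp^k_n(\tup w))$ returns ``true''.

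Now set $\Gamma\coloneqq\setof{\tau\in\Types^k_{\tup r,n}}{h(\tau)=\text{true}}$. By the previous paragraph, $\tup w\in\phi(G)$ if and only if $\ltp^k_n(\tup w)\in\Gamma$, i.e. if and only if $\tup w\in S^\tau_{\tup r,n}$ for some $\tau\in\Gamma$. Hence $\phi(G)=\bigcup_{\tau\in\Gamma}S^\tau_{\tup r,n}$. Disjointness is immediate: if $\tup w\in S^\tau_{\tup r,n}\cap S^{\tau'}_{\tup r,n}$ then $\tau=\ltp^k_n(\tup w)=\tau'$, so distinct types give disjoint sets. This is exactly the assertion of the lemma, modulo the harmless remark that in the lemma statement $\Gamma$ is written as a subset of $\Types_{\tup r,n}^n$ rather than $\Types^k_{\tup r,n}$ — one should just note that the relevant quantifier rank is $k$ (or, if one insists on the superscript $n$ as written, observe that $n\ge k$ may be assumed, or reconcile the indexing conventions accordingly).

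**Main obstacle.** There is no real mathematical obstacle here; the content was already done in the cited lemmas, and the only thing to be careful about is bookkeeping: making sure that the two-step decoding $\tau\mapsto\tp^k\mapsto\{\text{true},\text{false}\}$ is genuinely a function of the abstract local type (not of a realizing tuple), and that the edge case where some $\tau\in\Types^k_{\tup r,n}$ is not realized in $G$ causes no trouble — such $\tau$ simply has $S^\tau_{\tup r,n}=\emptyset$ and may or may not lie in $\Gamma$ without affecting the union or its disjointness. The slightly awkward point to handle cleanly in the write-up is the superscript mismatch ($\Types_{\tup r,n}^n$ in the statement versus the natural $\Types^k_{\tup r,n}$), which I would resolve with one sentence clarifying that the quantifier rank at play is $k$ and that this is what the superscript is meant to record.
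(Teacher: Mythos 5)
Your proposal is correct and follows exactly the route the paper takes: the paper derives this lemma directly from Lemma~\ref{lem:ltp_type_as_local_type} (the local $k$-type at time $n$ determines $\tp^k$) together with Proposition~\ref{prop:types_mc} ($\tp^k$ determines satisfaction of $\phi$), which is precisely your two-step decoding $\tau\mapsto\tp^k\mapsto\{\text{true},\text{false}\}$ with $\Gamma$ the preimage of ``true'', and disjointness following since each tuple has a unique local type. Your observation that the superscript in $\Types_{\tup r,n}^{n}$ should be the quantifier rank $k$ is a correct reading of a notational slip in the statement.
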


Therefore, an enumerator for $\phi(G)$ can be obtained by concatenating 
enumerators for the sets $S_{\tup r,n}^\tau$, for $\tau\in \Gamma$.
Note that here we are concatenating only $\Oof_{k,d,\tup x}(1)$
enumerators, by Lemma~\ref{lem:ltp_number_of_types},
so, by applying  Lemma~\ref{lem:enum-prod} repeatedly,
the resulting enumerator can be obtained in time 
$\Oof_{k,d,\tup x}(1)$ and has delay $\Oof_{k,d,\tup x}(1)$.
So to prove Theorem~\ref{thm:intro-main-qe},
it suffices to prove that we can efficiently compute an enumerator 
for each of the sets $S_{\tup r,n}^\tau$.

Recall that $T_{r,\tup x}$ is the tree of $r$-close $\tup x$-tuples (see Def. \ref{def:close-tree}), and can be computed in time $\Oof_{d,k,\tup x}(n)$, by Lemma~\ref{lem:construct-tree}.
In the following proposition,
we will show how to compute enumerators for all of the sets $S_{\tup u,s}^\tau$,
for $(\tup u,s)\in T_{r,\tup x}$.
All the enumerators jointly will be computed in time $\Oof_{d,k,\tup x}(n)$.

\begin{proposition}\label{prop:enum}
    Fix a nonempty set $\tup x$ of variables and $k\in\N$.
    Assume $G$ is a graph on $n$ vertices provided on input through a contraction sequence 
    $\Pp_1,\ldots, \Pp_n$ of width $d$. 
    Then one can in time $\Oof_{d,k,\tup x}(n)$ construct a data structure that
    associates, to every node $(\tup u,s)$ of $T_{r,\tup x}$
    and every local type $\tau\in \Types_{\tup u, s}^k$,    
    an enumerator for all tuples in $S_{\tup u,s}^\tau$
    with delay $\Oof_{d,k,\tup x}(1)$.
\end{proposition}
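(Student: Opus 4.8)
The plan is to prove the proposition by induction on $|\tup x|$. Throughout, recall that $r$-closeness, once attained, persists (as observed right after the definition of $r$-close tuples), so for every $\tup w\in V(G)^{\tup x}$ there is a well-defined earliest time $t^\star(\tup w)$ at which $\iwarp{\tup w}{t^\star}$ is $r$-close. Call $\tup w$ \emph{rooted at the node $(\tup v,t)\in T_{r,\tup x}$ with type $\sigma$} if $\iwarp{\tup w}{t}=\tup v$, $\ltp^k_t(\tup w)=\sigma$, and $t=t^\star(\tup w)$; write $R^\sigma_{\tup v,t}$ for the set of such $\tup w$. Two observations drive the argument. First, $(\iwarp{\tup w}{t^\star},t^\star)$ is genuinely a node of $T_{r,\tup x}$: if $t^\star>1$ then, since the contraction at time $t^\star$ is the only change from time $t^\star-1$ and it makes $\iwarp{\tup w}{}$ become $r$-close, some part of $\iwarp{\tup w}{t^\star}$ lies within distance $r$ of $B_{t^\star}$, so the second defining condition of $T_{r,\tup x}$ holds (if $t^\star=1$ the first condition holds). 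Second, combining this with \eqref{eq:bobr}, for every node $(\tup u,s)$ and every $\tau$ we get
\[
S^\tau_{\tup u,s}\ =\ \biguplus_{(\tup v,t)}\ \biguplus_{\sigma\,:\,F_{(\tup v,t)\to(\tup u,s)}(\sigma)=\tau}\ R^\sigma_{\tup v,t},
\]
where the outer union ranges over all nodes $(\tup v,t)$ in the subtree of $T_{r,\tup x}$ rooted at $(\tup u,s)$ (including $(\tup u,s)$ itself), and $F_{(\tup v,t)\to(\tup u,s)}\colon\Types^k_{\tup v,t}\to\Types^k_{\tup u,s}$ is the composition of the functions $f_e$ along the corresponding tree path; the union is disjoint because each tuple has a unique root and a unique type there. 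So it suffices to (i) build, for every node, enumerators of the sets $R^\sigma_{\tup v,t}$, and (ii) glue these along the tree according to the displayed identity.

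To build the $R$-enumerators (step (i)): a time-$1$ node of $T_{r,\tup x}$ corresponds to a constant tuple, so each $R^\sigma_{\tup v,1}$ is empty or a singleton, handled in $\Oof_{k}(1)$ time per node, hence $\Oof_{k}(n)$ in total. For a node $(\tup v,t)$ with $t>1$, a tuple rooted there has $\iwarp{\tup w}{t-1}=\tup v'$ for one of at most $2^{|\tup x|}$ preimages $\tup v'$ of $\tup v$ under the contraction at time $t$, and $\tup v'$ must fail to be $r$-close; its $r$-components $\tup y_1,\dots,\tup y_m$ (with $m\ge2$) partition $\tup x$ into proper nonempty subsets. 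Each restriction $(\iwarp{\tup w_{\tup y_i}}{t-1},t-1)$ is a node of the smaller tree $T_{r,\tup y_i}$: it is $r$-close at time $t-1$, and since all these components merge inside the $r$-close tuple $\tup v$ at time $t$, each of them must gain a new incidence through $B_t$ and hence gets within distance $r$ of $B_t$, so the third defining condition of $T_{r,\tup y_i}$ is met. Moreover distinct $\tup y_i$ lie at distance $>r=2^k$ at time $t-1$. Consequently, using the induction hypothesis to obtain the enumerators of the sets $S^{\sigma_i}_{\iwarp{\tup w_{\tup y_i}}{t-1},t-1}$ in the trees $T_{r,\tup y_i}$, the product enumerators of Lemma~\ref{lem:enum-prod}, the iterated merging function of Lemma~\ref{lem:tp-merge}, and the one-step warping function of Lemma~\ref{lem:tp-warp}, we can express $R^\sigma_{\tup v,t}$ as an explicit disjoint union — over the choices of $\tup v'$ and of type tuples $(\sigma_1,\dots,\sigma_m)$ whose merge-then-warp equals $\sigma$ — of products of already available enumerators, and build an enumerator of $R^\sigma_{\tup v,t}$ with delay $\Oof_{d,k,\tup x}(1)$ in $\Oof_{d,k,\tup x}(1)$ time. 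All trigraphs needed to evaluate the functions of Lemmas~\ref{lem:tp-merge} and~\ref{lem:tp-warp} are contained in $\Rel^{p}_{t-1}$ for $p\coloneqq 2^k(|\tup x|+1)$ together with the local description of the contraction at time $t$ (because every part involved lies within distance $2^k|\tup x|$ of $B_t$), and these are precomputed for all times in $\Oof_{d,k,\tup x}(n)$ by Lemma~\ref{lem:compute-affected}; alongside we record which sets $R^\sigma_{\tup v,t}$ are nonempty, and for the smaller trees which $S^\sigma$ are nonempty, by the same bottom-up pass (mirroring the computation in Theorem~\ref{thm:mc}), so that all the disjoint unions above are over nonempty sets. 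Since $T_{r,\tup x}$ has $\Oof_{d,k,\tup x}(1)$ nodes at each time $t>1$, step (i) runs in total time $\Oof_{d,k,\tup x}(n)$, and we store, for each node and type, a pointer to the corresponding $R$-enumerator.

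To assemble the final enumerators (step (ii)), I would form the \emph{type forest} $\tilde T$ whose nodes are the pairs $((\tup v,t),\sigma)$ with $(\tup v,t)\in T_{r,\tup x}$ and $\sigma\in\Types^k_{\tup v,t}$, where $((\tup u,s),\tau)$ is the parent of $((\tup v,t),\sigma)$ iff $(\tup u,s)$ is the parent of $(\tup v,t)$ in $T_{r,\tup x}$ and $f_e(\sigma)=\tau$. Since each $f_e$ is a function, every non-root node of $\tilde T$ has a unique parent, so $\tilde T$ (after adding one artificial common root) is a tree with $\Oof_{d,k,\tup x}(n)$ nodes, and $((\tup v,t),\sigma)$ is a descendant of $((\tup u,s),\tau)$ in $\tilde T$ exactly when $(\tup v,t)$ lies in the subtree of $(\tup u,s)$ in $T_{r,\tup x}$ and $F_{(\tup v,t)\to(\tup u,s)}(\sigma)=\tau$. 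Mark a node $((\tup v,t),\sigma)$ of $\tilde T$ iff $R^\sigma_{\tup v,t}\ne\emptyset$, and apply Corollary~\ref{cor:composer-enum} with $q=1$ (trivial node labels, a node being selected iff it is marked); this yields, in time $\Oof_{d,k,\tup x}(n)$, for every node $((\tup u,s),\tau)$ an enumerator $\cal F_{(\tup u,s),\tau}$ of its marked descendants in $\tilde T$, with delay $\Oof_{d,k,\tup x}(1)$. By the displayed identity, $S^\tau_{\tup u,s}$ is precisely the disjoint union of the sets $R^\sigma_{\tup v,t}$ over these marked descendants $((\tup v,t),\sigma)$; feeding $\cal F_{(\tup u,s),\tau}$ — composed with the pointer lookup sending each marked descendant $((\tup v,t),\sigma)$ to the pre-built enumerator of $R^\sigma_{\tup v,t}$ — into Lemma~\ref{lem:enum-sum} produces, in $\Oof(1)$ additional time per pair, the desired enumerator of $S^\tau_{\tup u,s}$ with delay $\Oof_{d,k,\tup x}(1)$. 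Over all $\Oof_{d,k,\tup x}(n)$ pairs $((\tup u,s),\tau)$ this is $\Oof_{d,k,\tup x}(n)$ time, completing the induction.

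The hardest part will be making the root decomposition of $S^\tau_{\tup u,s}$ fully rigorous: one must verify that each tuple first becomes $r$-close at a time whose associated part configuration is a bona fide node of $T_{r,\tup x}$, and — the point that makes the induction go through — that a tuple rooted at a node with $t>1$ decomposes one step earlier into $r$-components each of which is itself a node of the \emph{strictly smaller} trees $T_{r,\tup y_i}$, so that the induction hypothesis genuinely supplies their enumerators. The remaining difficulty is the routine but careful bookkeeping, exactly as in the proof of Theorem~\ref{thm:intro-main-qa} and of Lemma~\ref{lem:compute-affected}, that every trigraph required to evaluate the merging and warping functions is indeed available within the precomputed relevant regions.
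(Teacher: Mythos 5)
Your proposal is correct and follows essentially the same route as the paper's proof: induction on $|\tup x|$, the registration (``root'') decomposition of $S^\tau_{\tup u,s}$ into the sets $R^\sigma_{\tup v,t}$ over descendants of $(\tup u,s)$ with compatible types, constant-time enumerators for the $R$-sets obtained from the $r$-component decomposition at time $s-1$ via the inductive hypothesis together with merging (Lemma~\ref{lem:tp-merge}), warping (Lemma~\ref{lem:tp-warp}) and product enumerators, and finally a tree-based enumerator of the nonempty relevant descendants glued through Lemma~\ref{lem:enum-sum}. The only, purely cosmetic, deviation is the last step, where you expand types into an auxiliary ``type forest'' and invoke Corollary~\ref{cor:composer-enum} in its trivial $q=1$ (marked-descendant) form, whereas the paper applies the corollary directly to $T_{r,\tup x}$ with $\Types^k_{\tup u,s}$ as node labels and the warping functions as edge labels --- the two formulations are equivalent.
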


As noted above, Theorem~\ref{thm:intro-main-qe} follows from 
Proposition~\ref{prop:enum}, using Lemma~\ref{lem:enum wrap up}.
The rest of Section~\ref{sec:enumeration} is devoted to proving 
Proposition~\ref{prop:enum}.

We  prove Proposition~\ref{prop:enum} by induction on $|\tup x|$.
So suppose the statement holds for all strict subsets of $\tup x$.
Recall that we may construct the tree $T_{r,\tup x}$,  in time $\Oof_{d,k,\tup x}(n)$, using Lemma~\ref{lem:construct-tree}.



Let $v,u$ be two nodes of $T_{r,\tup x}$ 
 with $v=(\tup v,t)$ and $u=(\tup u,s)$ and
$u\preceq v$.
By Lemma~\ref{lem:warp-ds}, there is a function 
$f_{vu}\from \Types_{\tup v,t}^k\to \Types_{\tup u,s}^k$
such that for every $\tup w\in V(G)^{\tup x}$, 
with $\tup u=\iwarp{\tup w}{t}$ we have
$f_{vu}(\ltp^k_t(\tup w))=\ltp^k_s(\tup w).$

\medskip
For a tuple $\tup w\in V(G)^{\tup x}$,
let $s\in [n]$
be the first time such that $\iwarp{\tup w}{s}$ is $r$-close at time $s$, where $r=2^k$.
We then say that $\tup w$ \emph{registers} 
at $(\tup u,s)$, where $\tup u=\iwarp{\tup w}{s}$.
By Claim~\ref{cl:pinned}, in this case, the pair $(\tup u,s)$ is a node of $T_{r,\tup x}$.

For each node $(\tup u,s)$ of $T_{r,\tup x}$
and type $\tau\in \Types_{\tup u,s}^k$,
denote:
\[R_{\tup u,s}^\tau=\setof{\tup w\in V(G)^{\tup x}}{\tup w \text{ registers at $(\tup u,s)$
and $\ltp^k_s(\tup w)=\tau$}}.\]

Fix a node $(\tup u,s)\in T_{r,\tup x}$ and a type 
    $\tau\in \Types_{\tup u,s}^k$.
    Clearly, every tuple  $\tup w\in S_{\tup u,s}^\tau$ 
    registers at exactly one descendant $v=(\tup v,t)$ of $u=(\tup u,s)$
    (possibly, $v=u$),
    and moreover, 
    ${f_{vu}(\ltp^k_t(\tup w))=\tau}$.
    This proves the following.

\begin{lemma}\label{l0}
    For every node $u\in T_{r,\tup x}$ and type 
    $\tau\in \Types_u^k$,
    the set $S_u^\tau$ 
    is the disjoint union of 
    all the sets $R_v^\sigma$, for 
    $v\in T_{r,\tup x}$ with $v\succcurlyeq u$ and $\sigma\in\Types_v^k$ such that
         $f_{vu}(\sigma)=\tau$.
\end{lemma}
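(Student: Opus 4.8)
The statement is a bookkeeping identity, formalizing the observation made just before it, so the plan is to unwind the three relevant definitions — the tree order $\preceq$ on $T_{r,\tup x}$, the notion of a tuple \emph{registering} at a node, and the warp functions $f_{vu}$ — and check the two inclusions and the pairwise disjointness. The structural fact doing all the work is that every $\tup w\in V(G)^{\tup x}$ registers at \emph{exactly one} node of $T_{r,\tup x}$: the registering time is the least $t$ with $\iwarp{\tup w}{t}$ being $r$-close, the registering node is $(\iwarp{\tup w}{t},t)$, and this node lies in $T_{r,\tup x}$ by Claim~\ref{cl:pinned}. Hence the sets $R_v^\sigma$, ranging over all nodes $v$ of $T_{r,\tup x}$ and all $\sigma\in\Types_v^k$, partition $V(G)^{\tup x}$: each $\tup w$ lies in $R_{(\tup v,t)}^{\ltp_t^k(\tup w)}$ for its unique registering node $(\tup v,t)$ and in no other. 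In particular the subfamily indexed by the pairs $(v,\sigma)$ with $v\succcurlyeq u$ and $f_{vu}(\sigma)=\tau$ is pairwise disjoint, which is the disjointness part of the lemma.

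For the inclusion $\supseteq$ I would take such a pair, $v=(\tup v,t)\succcurlyeq u=(\tup u,s)$ with $f_{vu}(\sigma)=\tau$, and an arbitrary $\tup w\in R_v^\sigma$. Registering at $v$ gives $\iwarp{\tup w}{t}=\tup v$; and $u\preceq v$ means, by the definition of $\preceq$, that $t\le s$ and $\warp{\tup v}{t}{s}=\tup u$, whence $\iwarp{\tup w}{s}=\warp{\iwarp{\tup w}{t}}{t}{s}=\tup u$. Since moreover $\ltp_t^k(\tup w)=\sigma$, the defining property of $f_{vu}$ (recalled just above the statement, via Lemma~\ref{lem:warp-ds}) gives $\ltp_s^k(\tup w)=f_{vu}(\ltp_t^k(\tup w))=f_{vu}(\sigma)=\tau$. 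Therefore $\tup w\in S_u^\tau$.

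For the inclusion $\subseteq$ I would take $\tup w\in S_u^\tau$, so $\iwarp{\tup w}{s}=\tup u$ and $\ltp_s^k(\tup w)=\tau$, and let $(\tup v,t)$ be the unique node at which $\tup w$ registers, i.e.\ $t$ is least with $\iwarp{\tup w}{t}$ being $r$-close and $\tup v=\iwarp{\tup w}{t}$. Because $(\tup u,s)$ is a node of $T_{r,\tup x}$, the tuple $\iwarp{\tup w}{s}=\tup u$ is $r$-close at time $s$, so $t\le s$; and $\warp{\tup v}{t}{s}=\warp{\iwarp{\tup w}{t}}{t}{s}=\iwarp{\tup w}{s}=\tup u$, so by the definition of $\preceq$ we get $u\preceq v$, that is $v\succcurlyeq u$. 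Putting $\sigma:=\ltp_t^k(\tup w)\in\Types_v^k$, we have $\tup w\in R_v^\sigma$ and $f_{vu}(\sigma)=f_{vu}(\ltp_t^k(\tup w))=\ltp_s^k(\tup w)=\tau$, so $(v,\sigma)$ is one of the admissible index pairs and $\tup w$ lies in the right-hand side.

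The argument is elementary and I expect no real obstacle; the only place requiring a moment's care is the step in the $\subseteq$ direction showing that the registering node of a member of $S_u^\tau$ is genuinely a descendant of $u$. This reduces to two facts already recorded in the paper: $r$-closeness of a tuple of parts is preserved when warping to later times (so that the least $r$-close time of $\tup w$ cannot exceed $s$), and $\warp{\iwarp{\tup w}{t}}{t}{s}=\iwarp{\tup w}{s}$ for $t\le s$; together these make $u\preceq v$ hold literally by the definition of the tree order.
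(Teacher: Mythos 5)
Your proof is correct and follows essentially the same route as the paper's: the paper's own argument is just the terse observation that every $\tup w\in S_{\tup u,s}^\tau$ registers at exactly one descendant $v=(\tup v,t)$ of $(\tup u,s)$ (a node of $T_{r,\tup x}$ by Claim~\ref{cl:pinned}) with $f_{vu}(\ltp^k_t(\tup w))=\tau$, which is precisely the uniqueness-of-registration plus warp-function bookkeeping you spell out in both inclusions and the disjointness check.
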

We shall prove the following two lemmas.
\begin{lemma}\label{l1}
    For every given node $u=(\tup u,s)\in T_{r,\tup x}$ and type 
    $\tau\in \Types_{\tup u,s}^k$,
    an enumerator for the set $R_{\tup u,s}^\tau$ 
    with delay $\Oof_{d,k,\tup x}(1)$ can be constructed in time $\Oof_{d,k,\tup x}(1)$.
\end{lemma}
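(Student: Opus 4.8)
The plan is to analyze what it means for a tuple $\tup w$ to register at a node $u = (\tup u, s)$ of $T_{r,\tup x}$. By definition, $s$ is the first time at which $\iwarp{\tup w}{s}$ is $r$-close, so for $s > 1$ the graph $H_{s-1}$ (on vertex set $\tup x$, where $y,y'$ are adjacent if their parts are at distance $\le r$ in $\Imp_{s-1}$) is disconnected, but $H_s$ is connected. Let $\tup z_1, \ldots, \tup z_m$ be the connected components of the graph on $\tup x$ obtained from $H_s$ by deleting the edges created precisely at time $s$ — equivalently, the components of $H_{s-1}$ viewed inside $\tup x$. (If $s = 1$, then $\tup w$ must be a constant tuple and $R_{\tup u,s}^\tau$ is either empty or a singleton, which is trivial to enumerate; so assume $s>1$.) Each $\tup w_{\tup z_i}$ is then a tuple over the strictly smaller variable set $\tup z_i$, and $\iwarp{\tup w_{\tup z_i}}{s-1}$ is $r$-close at time $s-1$, so $\tup w_{\tup z_i}$ registers at some node of $T_{r,\tup z_i}$ — in fact, by Claim~\ref{cl:pinned}, at a node sitting above $(\iwarp{\tup w_{\tup z_i}}{s-1}, s-1)$ in $T_{r,\tup z_i}$.

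The key structural observation is that, conversely, to build a tuple $\tup w$ registering at $(\tup u,s)$ with $\iwarp{\tup w}{s} = \tup u$ and $\ltp^k_s(\tup w) = \tau$, it suffices to independently choose, for each $i$, a tuple $\tup w_{\tup z_i} \in V(G)^{\tup z_i}$ with $\iwarp{\tup w_{\tup z_i}}{s-1}$ equal to a prescribed $(r$-close$)$ tuple $\tup u_i \in \Pp_{s-1}^{\tup z_i}$ whose $(s{-}1)$-image is determined by $\tup u$, and with $\ltp^{k}_{s-1}(\tup w_{\tup z_i})$ equal to some prescribed type $\tau_i$. Indeed, since distinct components $\tup z_i, \tup z_j$ are non-adjacent in $H_{s-1}$, we have $\dist_{s-1}(\tup u_i, \tup u_j) > r = 2^k$, so by repeated application of Lemma~\ref{lem:tp-merge} the type $\ltp^k_{s-1}(\tup w)$ is a fixed function of the tuple $(\tau_1,\ldots,\tau_m)$; and then Lemma~\ref{lem:tp-warp} (applied for the single step $s-1 \to s$, using the stored relevant region) determines $\ltp^k_s(\tup w)$ from $\ltp^k_{s-1}(\tup w)$. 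So the set $R_{\tup u,s}^\tau$ is the disjoint union, over the finitely many (at most $\Oof_{d,k,\tup x}(1)$) tuples $(\tau_1,\ldots,\tau_m)$ of types that compose to $\tau$ under this fixed merge-then-warp function, of the Cartesian products $\prod_{i=1}^m S_{\tup u_i, s-1}^{\tau_i}$, where each $S_{\tup u_i,s-1}^{\tau_i}$ ranges over the smaller variable set $\tup z_i \subsetneq \tup x$.

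The final step is to invoke the induction hypothesis of Proposition~\ref{prop:enum} for each proper subset $\tup z_i$: it gives us, already precomputed, an enumerator for each $S_{\tup u_i, s-1}^{\tau_i}$ with delay $\Oof_{d,k,\tup x}(1)$ (we only need the nodes of $T_{r,\tup z_i}$ at level $s-1$, and $(\tup u_i,s-1)$ is such a node by Claim~\ref{cl:pinned}). Then Lemma~\ref{lem:enum-prod} builds an enumerator for each product $\prod_i S_{\tup u_i,s-1}^{\tau_i}$ in time $\Oof_{d,k,\tup x}(1)$ with delay $\Oof_{d,k,\tup x}(1)$, and Lemma~\ref{lem:enum-prod} (disjoint union case, applied $\Oof_{d,k,\tup x}(1)$ times) combines these over all valid type-tuples $(\tau_1,\ldots,\tau_m)$, yielding the desired enumerator for $R_{\tup u,s}^\tau$ in constant time and with constant delay. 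The one subtlety to be careful about — and the main obstacle — is bookkeeping: one must verify that all the trigraph data needed to evaluate the merge function of Lemma~\ref{lem:tp-merge} and the warp function of Lemma~\ref{lem:tp-warp} at the step $s-1 \to s$ is available in $\Oof_{d,k,\tup x}(1)$ time, which follows because it is all contained in the stored relevant region $\Rel_{s-1}^{p}$ with $p = 2^k(|\tup x|+1)$ together with the description of the contraction at time $s-1$; and one must check that the empty edge cases ($s=1$, or no valid type-tuple, or a component consisting of a single variable) are handled correctly.
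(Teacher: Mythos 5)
Your overall strategy is the same as the paper's: induct on $|\tup x|$ via Proposition~\ref{prop:enum}, dispose of $s=1$ as a trivial empty/singleton case, and for $s>1$ split a registering tuple at time $s-1$ into connected components over strictly smaller variable sets, combine the inductively available component enumerators by Cartesian products and disjoint unions (Lemma~\ref{lem:enum-prod}), and use Lemma~\ref{lem:tp-merge} followed by Lemma~\ref{lem:tp-warp} to see that only $\Oh_{d,k,\tup x}(1)$ families of component types produce the target type $\tau$.

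There is, however, a genuine gap in the key decomposition step. You treat the partition $\tup z_1,\ldots,\tup z_m$ of $\tup x$ and the time-$(s-1)$ part tuples $\tup u_i$ as ``prescribed'' and ``determined by $\tup u$''. They are not: when the contracted part $B_s$ occurs in $\tup u$, a tuple $\tup w$ with $\iwarp{\tup w}{s}=\tup u$ may have $\iwarp{\tup w}{s-1}$ equal to any of up to $2^{|\tup x|}$ preimage tuples $\tup v\in\Pp_{s-1}^{\tup x}$ (each occurrence of $B_s$ can sit in either of the two parts $A,A'$ contracted at time $s$), and different preimages can yield different graphs $H_{\tup v}$, hence different component structures and different $\tup u_i$. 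For instance, with $\tup x=\{x_1,x_2,x_3\}$, $\tup u=(B_s,C,D)$ and $B_s=A\cup A'$, it can happen that $\dist_{s-1}(A,C)\le r$ while all other pairwise distances among $A,A',C,D$ exceed $r$; then the preimage $(A,C,D)$ has components $\{x_1,x_2\},\{x_3\}$, whereas $(A',C,D)$ has three singleton components, and tuples of both kinds lie in $R^\tau_{\tup u,s}$. Consequently your identity expressing $R_{\tup u,s}^\tau$ as a disjoint union of products $\prod_i S^{\tau_i}_{\tup u_i,s-1}$ indexed by type tuples alone, for a single fixed component structure, is false in general, and the enumerator built from it would miss tuples. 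The repair is exactly the paper's device: enumerate the $\Oh_{\tup x}(1)$ preimage tuples $\mathcal{V}$ of $\tup u$, keep only the disconnected ones $\mathcal{V}'$ (which also guarantees that an assembled tuple really registers at time $s$ and not earlier), and take the disjoint union over $\tup v\in\mathcal{V}'$ and over the component-type families $\bar\tau$ that the merge-then-warp function (which depends on $\tup v$) maps to $\tau$. With that additional outer union, the rest of your argument coincides with the paper's proof.
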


\begin{lemma}\label{l2}
    One can construct in time $\Oof_{d,k,\tup x}(n)$
    a collection of enumerators $\cal E_u^\tau$, 
one per each node $u=(\tup u,s)\in T_{r,\tup x}$
    and type $\tau\in \Types_{\tup u,s}^k$,
    where $\cal E_u^\tau$ has delay $\Oof_{d,k,\tup x}(1)$ and  enumerates all descendants 
    $v=(\tup v,t)$ of $u$ in $T_{r,\tup x}$
    such that there is some $\sigma\in \Types_{\tup v,t}^k$
    with 
     $f_{vu}(\sigma)=\tau$ and 
    $R_{\tup v,t}^\sigma\neq\emptyset$.
\end{lemma}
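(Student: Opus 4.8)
The plan is to reduce Lemma~\ref{l2} to Corollary~\ref{cor:composer-enum} by choosing the right labelling of the tree $T_{r,\tup x}$. Recall that Corollary~\ref{cor:composer-enum} takes a rooted tree whose nodes $v$ are labelled with sets $X_v$ (of bounded size $q$) together with distinguished subsets $Y_v\subseteq X_v$, and whose edges are labelled with functions between the corresponding sets, and produces, for every node $w$ and every $\tau\in X_w$, an enumerator of the descendants $v$ of $w$ for which some $\sigma\in Y_v$ is sent to $\tau$ by the composition of edge-functions along the path from $v$ to $w$. We apply this with $X_{(\tup u,s)}\coloneqq \Types_{\tup u,s}^k$, which has size at most $M=\Oof_{d,k,\tup x}(1)$ by Lemma~\ref{lem:ltp_number_of_types}, so we may take $q=M$. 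As edge label on the edge from a child $v=(\tup v,t)$ to its parent $u=(\tup u,s)$ we use precisely the function $f_{vu}\colon \Types_{\tup v,t}^k\to\Types_{\tup u,s}^k$ provided by Lemma~\ref{lem:warp-ds}; note $\preceq$ is the tree order of $T_{r,\tup x}$, so the composition of these functions along a root-ward path from $v$ to $u$ is exactly $f_{vu}$, as needed. It remains only to supply the distinguished subsets $Y_v$, and the natural choice is
\[
Y_{(\tup v,t)}\coloneqq \setof{\sigma\in \Types_{\tup v,t}^k}{R_{\tup v,t}^\sigma\neq\emptyset}.
\]
With these choices the enumerator $\cal E_w^\tau$ output by Corollary~\ref{cor:composer-enum} enumerates exactly the descendants $v=(\tup v,t)$ of $w=(\tup u,s)$ for which there is $\sigma\in Y_v$, i.e.\ with $R_{\tup v,t}^\sigma\neq\emptyset$, such that $f_{vu}(\sigma)=\tau$ — which is precisely the required behaviour of $\cal E_u^\tau$.

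The one thing that needs genuine work is computing the sets $Y_v$, i.e.\ deciding, for every node $(\tup v,t)\in T_{r,\tup x}$ and every $\sigma\in\Types_{\tup v,t}^k$, whether $R_{\tup v,t}^\sigma$ is nonempty, and doing so in total time $\Oof_{d,k,\tup x}(n)$. By definition a tuple $\tup w$ registers at $(\tup v,t)$ when $t$ is the \emph{first} time at which $\iwarp{\tup w}{t}$ is $r$-close, so $\tup w$ is obtained by merging tuples from strictly smaller variable sets that are pairwise at distance $>r$ at time $t-1$ and become $r$-close only at time $t$; the mechanics here are exactly those already used in the proof of Theorem~\ref{thm:intro-main-qa} (the graphs $H_s$ and the set $L$). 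Concretely, to compute the nonemptiness flags I would process the nodes of $T_{r,\tup x}$ in order of non-decreasing time, and for each node $(\tup v,t)$: if $t=1$ there is a single constant tuple, so $R_{\tup v,1}^\sigma\neq\emptyset$ exactly for the unique $\sigma=\ltp_1^k$ of that tuple; if $t>1$, then $\tup v$ is $r$-close at time $t$ but $\iwarp{\cdot}{t-1}$-images of its pieces split into connected components $\tup y_1,\ldots,\tup y_m$ of the "closeness graph" at time $t-1$, each of which is a strictly smaller variable set, and a set of types $\sigma$ is realized by some registering $\tup w$ iff it arises, via the composition of the functions of Lemma~\ref{lem:tp-merge} (merging across the distance-$>r$ gaps) followed by the warp function of Lemma~\ref{lem:tp-warp}, from a tuple of \emph{realized} types $(\sigma_1,\ldots,\sigma_m)$ at the nodes where the respective $\tup y_i$ register. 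Since each $\tup y_i$ is a proper subset of $\tup x$, by induction on $|\tup x|$ — this is the outer induction already set up for Proposition~\ref{prop:enum} — the relevant realized-type information at those smaller nodes is available; alternatively one simply needs the nonemptiness flags for proper subsets, which is a strictly weaker statement and is handled by the same inductive bookkeeping.

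The running time is then straightforward to account: $T_{r,\tup x}$ has $\Oof_{d,k,\tup x}(n)$ nodes and is computed in time $\Oof_{d,k,\tup x}(n)$ by Lemma~\ref{lem:construct-tree}; for every node we spend $\Oof_{d,k,\tup x}(1)$ time to compute its $X_v$ (Lemma~\ref{lem:ltp_number_of_types}), its edge functions $f_{vu}$ (Lemma~\ref{lem:warp-ds}, or directly via Lemma~\ref{lem:tp-warp}), and — using the merge/warp computations above, each of which touches only a bounded-size relevant region and bounded-size type sets — its flag set $Y_v$; and finally the call to Corollary~\ref{cor:composer-enum} with $q=M$ runs in time $c\cdot|T_{r,\tup x}|=\Oof_{d,k,\tup x}(n)$ and yields all the enumerators $\cal E_u^\tau$, each with delay $c=\Oof_{d,k,\tup x}(1)$. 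The main obstacle, as indicated, is organizing the computation of the nonemptiness flags $Y_v$ cleanly through the induction on $|\tup x|$ so that every node is handled in constant time; once that is in place, Lemma~\ref{l2} is just an instantiation of Corollary~\ref{cor:composer-enum}.
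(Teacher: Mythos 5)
Your reduction to Corollary~\ref{cor:composer-enum} with $X_u=\Types^k_{\tup u,s}$, edge labels given by the warp functions, and $Y_u=\{\tau\in\Types^k_{\tup u,s} : R^\tau_{\tup u,s}\neq\emptyset\}$ is exactly the paper's proof, so the proposal is correct and follows essentially the same route. The only divergence is the computation of the flags $Y_u$, which you treat as the main obstacle and solve by a bottom-up recursion over times using Lemmas~\ref{lem:tp-merge} and~\ref{lem:tp-warp} and the induction on $|\tup x|$; the paper shortcuts this entirely by constructing, for each node and each type, the enumerator of $R^\tau_{\tup u,s}$ provided by Lemma~\ref{l1} (time $\Oof_{d,k,\tup x}(1)$ per node--type pair) and simply testing it for emptiness, so your extra machinery, while workable, is redundant.
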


Observe that combining Lemma~\ref{l0}, Lemma~\ref{l1}, Lemma~\ref{l2} and Lemma~\ref{lem:enum-sum} yields the required collection of enumerators 
for each of the sets $S_{\tup u,s}^k$, 
thus proving Proposition~\ref{prop:enum} and Theorem~\ref{thm:intro-main-qe}.
Thus, we are left with proving Lemmas~\ref{l1} and~\ref{l2}, which we do in order.

\begin{proof}[Proof of Lemma~\ref{l1}]
 Let $u=(\tup u,s)\in T_{r,\tup x}$ be a node.
We consider two cases: either $s=1$ or $s>1$.

\subparagraph{Leaves.}
Consider the case that $s=1$.
As $\tup u\in \cal P_1^{\tup x}$ is $r$-close at the time $1$,
there is some $v\in V(G)$ such that 
all the components $\tup u$ are equal to the part $\set{v}$.
Then a tuple $\tup w\in V(G)^{\tup x}$ satisfies $\iwarp{\tup w}{1}=\tup u$ if and only if $\tup w$ is the constant $\tup x$-tuple $\vec v$ consisting of the vertex $v$.
Therefore,
\[R_{\tup u,s}^\tau=
\begin{cases}\set{\vec v}& \text{if }\ltp^k_1(\vec v)=\tau,\\
  \emptyset &\text{if }\ltp^k_1(\vec v)\neq \tau.
\end{cases}
\]
In either case, it is trivial to construct an enumerator for the (empty or singleton) set $R_{\tup u,s}^\tau$
and we can distinguish which case occurs by computing $\ltp^k_1(\vec v)$ in time $\Oof_{d,k,\tup x}(1)$.
The delay of the enumerator is bounded by $\Oof_{|\tup x|}(1)$,
as this is the size of the representation of the tuple $\vec v$.

\subparagraph{Inner nodes.}
Suppose now that $s>1$, that is, $(\tup u,s)$ is an inner node.
Denote 
\[\cal V\coloneqq \setof{\tup v\in \cal P_{s-1}^{\tup x}} {\iwarp{\tup v}{s-1\to s}=\tup u}.\]

\begin{claim}
    The set $\cal V$ has size $\Oof_{k,\tup x,d}(1)$ and can be computed in this time, given $(\tup u,s)\in T_{r,\tup x}$.
\end{claim}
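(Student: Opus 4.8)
The plan is to exploit the fact that passing from $\Pp_{s-1}$ to $\Pp_s$ changes exactly one thing: the two parts $A,A'\in\Pp_{s-1}$ contracted at time $s$ are replaced by the single part $B_s=A\cup A'\in\Pp_s$, while every other part of $\Pp_{s-1}$ occurs verbatim in $\Pp_s$. The identifiers of $A$, $A'$ and $B_s$ are part of the representation of the contraction sequence at time $s$, and are in particular available once $T_{r,\tup x}$ has been built via Lemma~\ref{lem:construct-tree}; so the first step is simply to read them off in constant time.

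Next I would unfold the definition of the warp map coordinatewise. For $\tup v\in\Pp_{s-1}^{\tup x}$ and $y\in\tup x$ we have $\iwarp{\tup v}{s-1\to s}(y)=\tup v(y)$ when $\tup v(y)\notin\{A,A'\}$, and $\iwarp{\tup v}{s-1\to s}(y)=B_s$ when $\tup v(y)\in\{A,A'\}$. Hence $\tup v\in\cal V$ if and only if, for every $y\in\tup x$, we have $\tup v(y)=\tup u(y)$ whenever $\tup u(y)\neq B_s$, and $\tup v(y)\in\{A,A'\}$ whenever $\tup u(y)=B_s$. Equivalently, $\cal V=\prod_{y\in\tup x}C_y$, where $C_y=\{\tup u(y)\}$ if $\tup u(y)\neq B_s$ and $C_y=\{A,A'\}$ if $\tup u(y)=B_s$.

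From this product description both parts of the claim are immediate: each $C_y$ has at most two elements, so $|\cal V|\le 2^{|\tup x|}=\Oof_{\tup x}(1)$ (in particular $\Oof_{k,\tup x,d}(1)$); and $\cal V$ is computed by testing, for each $y\in\tup x$, whether $\tup u(y)=B_s$ (a constant-time comparison of identifiers) to obtain $C_y$, and then forming the Cartesian product, all in time $\Oof_{\tup x}(1)$. There is no genuine obstacle here; the only point to state carefully is that the coordinates of $\tup u$ not equal to $B_s$ are forced, so branching occurs solely over the $\{A,A'\}$ choice on the at most $|\tup x|$ coordinates that equal $B_s$.
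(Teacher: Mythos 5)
Your proof is correct and follows essentially the same route as the paper: the paper also observes that the tuples of $\cal V$ are exactly those obtained from $\tup u$ by replacing each occurrence of $B_s$ (at most $|\tup x|$ of them) by one of the two parts of $\Pp_{s-1}$ contained in $B_s$, giving at most $2^{|\tup x|}$ possibilities computable in time $\Oof_{k,\tup x,d}(1)$. Your coordinatewise product formulation $\cal V=\prod_{y\in\tup x}C_y$ is just a slightly more explicit rendering of the same argument.
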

\begin{proof}
    The tuples $\tup v$ in $\cal V$ are precisely those tuples that can be obtained from the tuple $\tup u$ by replacing each occurrence of $B_s$ in $\tup u$ (which might occur zero or more times) by one of the two parts in $\Pp_{s-1}$ that are contained in $B_s$.
    Since $B_s$ may occur in $\tup u$ at most $|\tup x|$ many times, 
    we have at most $2^{|\tup x|}$ many possibilities for $\tup v$,
    and all of them can be computed in the required time.
\cqed\end{proof}

Fix $\tup v\in \cal V$, and consider the graph $H_{\tup v}$ with vertices $\tup x$ 
where any two distinct $y,y'\in \tup x$ are adjacent whenever $\dist_s(\tup v(y),\tup v(y'))\le r$.
Let $C_{\tup v}$ denote the set of connected components of $H_{\tup v}$,
where each connected component is viewed as a set $\tup y\subset \tup x$ of vertices of $H_{\tup v}$. Then each of the sets $C_{\tup v}$, can be computed in time $\Oof_{k,\tup x,d}(1)$, given $(\tup u,s)\in T_{r,\tup x}$
and $\tup v\in \cal V$.

Call a tuple 
$\tup v\in \cal V$ \emph{disconnected} if $|C_{\tup v}|>1$, that is, $H_{\tup v}$ has more than one connected component.
Note that if $\tup v$ is disconnected and $\tup y\in C_{\tup v}$, then $|\tup y|<|\tup x|$.
Let $\cal V'\subset \cal V$ be the set of disconnected tuples.

\medskip
Fix an disconnected tuple $\tup v\in \cal V'$ and $\tup y\in C_{\tup v}$.
Denote by $\tup v_{\tup y}$ the restriction of $\tup v$ to $\tup y$.
Note that the pair $(\tup v_{\tup y},s-1)$ is a node of $T_{r,\tup y}$.
Indeed, by assumption,  $\dist_{s}(B_s,\tup u)\le r$ holds, so 
     $\dist_{s}(B_s,\iwarp{\tup v}{s\to s+1})\le r$,
    and in particular $\dist_{s}(B_s,\iwarp{\tup v_{\tup y}}{s\to s+1})\le r$.
Moreover, $\tup v_{\tup y}$ is $r$-close, since $\tup y$ is a connected component of 
$H_{\tup v}$.
As $|\tup y|<|\tup x|$, by inductive assumption, we have already computed enumerators for each of the sets 
$S_{\tup v_{\tup y},s-1}^\sigma$, for all adequate local types $\sigma$.

The following claim is obtained by repeatedly applying Lemma~\ref{lem:tp-merge}.
\begin{claim}\label{cl:step}
    For every $\tup w\in V(G)^{\tup x}$ with $\iwarp{\tup w}s=\tup u$,
the local type    $\ltp^k_s(\tup w)$ can be computed in time 
$\Oof_{k,d,\tup x}(1)$ from the following data:
    \begin{itemize}
        \item the tuple $\tup v:=\iwarp{\tup w}{s-1}\in \cal V$,
        \item the family of local types $\bar \tau:=(\ltp^k_{s-1}(\tup v_{\tup y}): \tup y\in C_{\tup  v})$.
    \end{itemize}
\end{claim}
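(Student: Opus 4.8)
The plan is to observe that the tuple $\tup w$ decomposes, according to the connected components of $H_{\tup v}$, into pieces that are pairwise far apart in $\Imp_{s-1}$, so that the local type of the whole tuple at time $s-1$ is determined by the local types of the pieces via repeated applications of Lemma~\ref{lem:tp-merge}; then one last application of Lemma~\ref{lem:tp-warp} carries the type from time $s-1$ to time $s$. Concretely, let $\tup v=\iwarp{\tup w}{s-1}$; this is an element of $\cal V$ because $\iwarp{\tup v}{s-1\to s}=\iwarp{\tup w}{s}=\tup u$. Enumerate the connected components of $H_{\tup v}$ as $\tup y_1,\ldots,\tup y_m$ (so $m=|C_{\tup v}|$), and for each $i$ let $\tup w_i$ be the restriction of $\tup w$ to $\tup y_i$, so that $\tup w=\tup w_1\cdots\tup w_m$ and $\iwarp{\tup w_i}{s-1}=\tup v_{\tup y_i}$.

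First I would record the key distance fact: for $i\neq j$ and any $y\in\tup y_i$, $y'\in\tup y_j$, we have $\dist_{s-1}(\tup v(y),\tup v(y'))>r$, since otherwise $y$ and $y'$ would be adjacent in $H_{\tup v}$ and hence in the same component. Consequently, for every prefix union $\tup y_1\cup\cdots\cup\tup y_i$ and the next component $\tup y_{i+1}$ we have $\dist_{s-1}\!\big(\iwarp{\tup w_1\cdots\tup w_i}{s-1},\,\iwarp{\tup w_{i+1}}{s-1}\big)>r=2^k$. Therefore Lemma~\ref{lem:tp-merge} applies: starting from the local types $\ltp^k_{s-1}(\tup w_i)=\ltp^k_{s-1}(\tup v_{\tup y_i})$, which are exactly the entries of the given family $\bar\tau$, we compute $\ltp^k_{s-1}(\tup w_1\cdots\tup w_i)$ inductively on $i$ by applying the merge function $f$ of Lemma~\ref{lem:tp-merge} at each step. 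Since $m\le|\tup x|$ and each merge step runs in time $\Oof_{d,k,\tup x}(1)$ (the required vicinities $\Vic_{s-1}^{2^k}(\cdot)$ being bounded and extractable from the data at hand), after at most $|\tup x|$ steps we obtain $\ltp^k_{s-1}(\tup w)$. Finally I would apply the function $f$ from Lemma~\ref{lem:tp-warp} for the single step from $s-1$ to $s$ (the relevant region $\Rel_{s-1}^{2^k(|\tup x|+1)}$ is available, and all parts of $\tup u=\iwarp{\tup w}{s}$ lie in it because $\dist_s(B_s,\tup u)\le r$, so the hypothesis of that lemma's effective part is met) to obtain $\ltp^k_s(\tup w)=f(\ltp^k_{s-1}(\tup w))$, again in time $\Oof_{d,k,\tup x}(1)$.

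The main thing to be careful about is bookkeeping of the trigraph data needed to invoke the effective parts of Lemmas~\ref{lem:tp-merge} and~\ref{lem:tp-warp}: one must check that each intermediate tuple $\tup w_1\cdots\tup w_i$, being $r$-close within each component and contained in $\Vic_{s}^{r|\tup x|}(B_s)$, stays inside a bounded-size piece of the trigraph that can be read off from $\Rel_{s-1}^{2^k(|\tup x|+1)}$ together with the description of the contraction at time $s$ — both stored in (or reconstructible within the data structure of) the surrounding proof. This is a routine but slightly fiddly verification; once it is in place, the claim follows immediately from the two composition lemmas and the fact that the number of merge steps is at most $|\tup x|$.
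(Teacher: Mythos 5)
Your proof is correct and takes essentially the same route as the paper: the paper justifies this claim in one line by ``repeatedly applying Lemma~\ref{lem:tp-merge}'', which is exactly your merge-across-components-of-$H_{\tup v}$ step, mirroring the analogous computation in Section~\ref{sec:queries}. Your write-up additionally spells out the final application of Lemma~\ref{lem:tp-warp} to pass from time $s-1$ to time $s$ and the distance/vicinity bookkeeping (using $r$-closeness of $\tup u$ and $\dist_s(B_s,\tup u)\le r$ to place everything inside the relevant region), details the paper leaves implicit.
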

    More precisely, there is a function $\Gamma$ 
    such that for each pair 
    $(\tup v,\bar \tau)$, where $\tup v\in \cal V$ and  $\bar \tau =(\tau_{\tup y}:\tup y\in C_\tup v)$
    is a family with  $\tau_{\tup y}\in\Types_{\tup v_{\tup y},s-1}^k$, we have that
     \[\Gamma(\tup v,\bar \tau)=\ltp^k_s(\tup w)\]
    holds for every $\tup w\in V(G)^{\tup x}$ such that  $\iwarp{\tup w}{s-1}=\tup v$
    and $\ltp^k_{s-1}(\tup v_{\tup y})=\tau_{\tup y}$ for all $\tup y\in C_\tup v$.

For $\tup v\in \cal V$ and 
$\bar \tau =(\tau_{\tup y}\colon \tup y\in C_\tup v)$
 a family with  $\tau_{\tup y}\in\Types_{\tup v_{\tup y},s-1}^k$,
 define the set 
 \[S_{\tup v,s-1}^{\bar \tau}\coloneqq
 \setof{\tup w\in V(G)^{\tup x}}
 {\ltp^k_{s-1}(\tup w_{\tup y})=\tau_{\tup y} \text{ for all }\tup y\in C_{\tup v}}
 .\]
 Recall that the tuples $\tup y\in C_{\tup v}$ form a partition of $\tup x$.
 Hence, we have the following.
 \begin{claim}\label{cl:prod}
    Fix an disconnected tuple $\tup v\in \cal V'$ and $\bar \tau =(\tau_{\tup y}:\tup y\in C_\tup v)$
 a family with  $\tau_{\tup y}\in\Types_{\tup v_{\tup y},s-1}^k$.
    One can compute in time $\Oof_{k,d,\tup x}(1)$ an enumerator for the set $S_{\tup v,s-1}^{\bar \tau}$,
    with delay $\Oof_{k,d,\tup x}(1)$.
 \end{claim}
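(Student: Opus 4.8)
The plan is to exhibit $S_{\tup v,s-1}^{\bar\tau}$ as (a trivial reindexing of) a Cartesian product of sets for which enumerators have already been constructed by the inductive hypothesis, and then to assemble the product enumerator via Lemma~\ref{lem:enum-prod}.

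First I would recall that $C_{\tup v}$, the set of connected components of $H_{\tup v}$, is a partition of $\tup x$. Hence taking restrictions gives a bijection between $V(G)^{\tup x}$ and $\prod_{\tup y\in C_{\tup v}}V(G)^{\tup y}$, sending $\tup w$ to the family $(\tup w_{\tup y})_{\tup y\in C_{\tup v}}$; both this bijection and its inverse can be evaluated in time $\Oof_{\tup x}(1)$. Next I would observe that the condition defining $S_{\tup v,s-1}^{\bar\tau}$ is, for each component $\tup y$, a condition on $\tup w_{\tup y}$ alone, namely $\ltp^k_{s-1}(\tup w_{\tup y})=\tau_{\tup y}$ --- which by Lemma~\ref{lem:ltp_basic} also forces $\iwarp{\tup w_{\tup y}}{s-1}=\tup v_{\tup y}$. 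Therefore, under the above bijection, $S_{\tup v,s-1}^{\bar\tau}$ corresponds exactly to $\prod_{\tup y\in C_{\tup v}}S_{\tup v_{\tup y},s-1}^{\tau_{\tup y}}$.

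Then I would invoke the inductive hypothesis. Since $\tup v$ is disconnected, every $\tup y\in C_{\tup v}$ satisfies $|\tup y|<|\tup x|$; moreover $(\tup v_{\tup y},s-1)$ is a node of $T_{r,\tup y}$ (as noted just before the claim) and $\tau_{\tup y}\in\Types_{\tup v_{\tup y},s-1}^k$, so by Proposition~\ref{prop:enum} applied to the variable set $\tup y$ we have already constructed an enumerator for $S_{\tup v_{\tup y},s-1}^{\tau_{\tup y}}$ with delay $\Oof_{d,k,\tup x}(1)$. I would then apply Lemma~\ref{lem:enum-prod} at most $|C_{\tup v}|-1\le|\tup x|-1$ times to combine these factor-enumerators into an enumerator for the Cartesian product, in total time $\Oof_{\tup x}(1)$ and with delay $\Oof_{d,k,\tup x}(1)$, and finally compose it with the reindexing bijection above to obtain the desired enumerator for $S_{\tup v,s-1}^{\bar\tau}$.

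I do not expect any real obstacle: the argument is pure bookkeeping, and the only points needing care are (i) verifying that each factor $S_{\tup v_{\tup y},s-1}^{\tau_{\tup y}}$ genuinely falls under the inductive hypothesis, which is exactly where disconnectedness of $\tup v$ is used to guarantee $|\tup y|<|\tup x|$, and (ii) noting that translating between $\tup x$-tuples and families of $\tup y$-tuples costs only $\Oof_{\tup x}(1)$ per element, so it does not spoil the delay bound.
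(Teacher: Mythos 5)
Your proposal is correct and follows essentially the same route as the paper: the enumerator for $S_{\tup v,s-1}^{\bar\tau}$ is obtained as the Cartesian product (via repeated application of Lemma~\ref{lem:enum-prod}) of the enumerators for the component sets $S_{\tup v_{\tup y},s-1}^{\tau_{\tup y}}$, which exist by the inductive assumption since disconnectedness of $\tup v$ guarantees $|\tup y|<|\tup x|$ and $(\tup v_{\tup y},s-1)$ is a node of $T_{r,\tup y}$. The explicit reindexing bijection and the observation that the local type forces $\iwarp{\tup w_{\tup y}}{s-1}=\tup v_{\tup y}$ are left implicit in the paper's one-line proof, so your extra bookkeeping only makes the same argument more careful.
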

 \begin{proof}
     The enumerator is the Cartesian product (see Lemma~\ref{lem:enum-prod}) of the enumerators for the sets $S_{\tup v_{\tup y},s-1}^{\tau_{\tup y}}$, which have been computed by inductive assumption. Each of those enumerators  has delay $\Oof_{k,d,\tup x}(1)$, and their total number is $\Oof_{k,d,\tup x}(1)$, by Lemma~\ref{lem:ltp_number_of_types}.
 \cqed\end{proof}

\begin{claim}
    The set $R_{\tup u,s}^\tau$
    is equal to the disjoint union of the family of sets 
    \[\setof{S_{\tup v,s-1}^{\bar \tau}}{(\tup v,\bar \tau)\in \Gamma^{-1}(\tau), \tup v\in \cal V'}.\]
\end{claim}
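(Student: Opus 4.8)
The plan is to establish this equality by proving the two inclusions and then the pairwise disjointness of the listed family; the whole argument is a matter of unwinding the definition of ``registers at $(\tup u,s)$'' and of the function $\Gamma$ from Claim~\ref{cl:step}, the only nontrivial ingredient being the monotonicity of $r$-closeness recalled in Section~\ref{sec:warping}. First I would record the key translation, valid for every $\tup w\in V(G)^{\tup x}$: the tuple $\tup w$ registers at $(\tup u,s)$ if and only if $\iwarp{\tup w}{s-1}\in\cal V'$. For the forward direction, registering at $(\tup u,s)$ gives $\iwarp{\tup w}{s}=\tup u$, so $\iwarp{\tup w}{s-1}$ warps to $\tup u$ at time $s$ and hence lies in $\cal V$; moreover, as $s>1$ is the \emph{first} time at which the warp of $\tup w$ becomes $r$-close, $\iwarp{\tup w}{s-1}$ is not $r$-close at time $s-1$, which is exactly the statement that $H_{\iwarp{\tup w}{s-1}}$ has more than one component, i.e.\ $\iwarp{\tup w}{s-1}\in\cal V'$. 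For the converse, from $\tup v:=\iwarp{\tup w}{s-1}\in\cal V'\subseteq\cal V$ we get $\iwarp{\tup w}{s}=\warp{\tup v}{s-1}{s}=\tup u$, which is $r$-close at time $s$ since $(\tup u,s)\in T_{r,\tup x}$; and ``$\tup v$ not $r$-close at time $s-1$'' together with monotonicity of $r$-closeness gives that $\iwarp{\tup w}{s'}$ is not $r$-close for any $s'\le s-1$, so $s$ is indeed the first $r$-close time and $\tup w$ registers at $(\tup u,s)$.

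Using this translation, the inclusion $\subseteq$ will follow by taking $\tup w\in R_{\tup u,s}^\tau$, putting $\tup v:=\iwarp{\tup w}{s-1}$ (so $\tup v\in\cal V'$) and $\bar\tau:=(\ltp^k_{s-1}(\tup w_{\tup y}):\tup y\in C_{\tup v})$, observing that $\iwarp{\tup w_{\tup y}}{s-1}=\tup v_{\tup y}$ makes $\bar\tau$ an admissible argument of $\Gamma$, and invoking Claim~\ref{cl:step} to get $\Gamma(\tup v,\bar\tau)=\ltp^k_s(\tup w)=\tau$; then $(\tup v,\bar\tau)\in\Gamma^{-1}(\tau)$ with $\tup v\in\cal V'$, and $\tup w\in S_{\tup v,s-1}^{\bar\tau}$ by the definition of the latter set. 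For the inclusion $\supseteq$ I would take $(\tup v,\bar\tau)\in\Gamma^{-1}(\tau)$ with $\tup v\in\cal V'$ and any $\tup w\in S_{\tup v,s-1}^{\bar\tau}$; for each $\tup y\in C_{\tup v}$ the equality $\ltp^k_{s-1}(\tup w_{\tup y})=\tau_{\tup y}\in\Types_{\tup v_{\tup y},s-1}^k$ forces $\iwarp{\tup w_{\tup y}}{s-1}=\tup v_{\tup y}$, since a local $k$-type records the tuple of parts it refers to (cf.\ the second item of Lemma~\ref{lem:ltp_basic} and Definition~\ref{def:Types}), and since $C_{\tup v}$ partitions $\tup x$ this gives $\iwarp{\tup w}{s-1}=\tup v$; then by the translation $\tup w$ registers at $(\tup u,s)$, in particular $\iwarp{\tup w}{s}=\tup u$, and by Claim~\ref{cl:step} again $\ltp^k_s(\tup w)=\Gamma(\tup v,\bar\tau)=\tau$, so $\tup w\in R_{\tup u,s}^\tau$.

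Finally, for disjointness: if a tuple $\tup w$ lay in $S_{\tup v,s-1}^{\bar\tau}\cap S_{\tup v',s-1}^{\bar\tau'}$, then the argument above forces $\iwarp{\tup w}{s-1}$ to equal both $\tup v$ and $\tup v'$, so $\tup v=\tup v'$; consequently $C_{\tup v}=C_{\tup v'}$ and $\tau_{\tup y}=\ltp^k_{s-1}(\tup w_{\tup y})=\tau'_{\tup y}$ for every $\tup y\in C_{\tup v}$, whence $\bar\tau=\bar\tau'$. I expect the only mildly delicate points to be the ``first $r$-close time'' equivalence in the translation step, which rests on the monotonicity of $r$-closeness, and the observation that an abstract local type determines the tuple of parts over which it is defined; everything else is a direct unwinding of the definitions and of Claim~\ref{cl:step}.
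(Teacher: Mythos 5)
Your proof is correct and follows essentially the same route as the paper: both inclusions are obtained by unwinding the definition of registering and applying Claim~\ref{cl:step} (the function $\Gamma$), and disjointness follows because the component types determine $\iwarp{\tup w}{s-1}$ and hence $(\tup v,\bar\tau)$. The only difference is presentational: you make explicit the monotonicity of $r$-closeness and the fact that a local type determines its tuple of parts, points the paper uses implicitly.
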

\begin{proof}
    We first show the right-to-left inclusion.
    
    Let 
     $\tup w\in S^{\bar \tau}_{\tup v,s-1}$.
    Then registers at $(\tup u,s)$ as $\iwarp{\tup w}{s-1}$ 
    is not $r$-close at time $s-1$ (because $\tup v\in \cal V'$), and $\iwarp{\tup w}{s}=\tup u$.
    Moreover, $\ltp_k^s(\tup w)=\tau$ since $(\tup v,\bar \tau)\in\Gamma^{-1}(\tau)$. This proves that $\tup w\in R_{\tup u,s}^\tau$.

    Conversely, let $\tup w\in R_{\tup u,s}$.
    Define $\tup v$ as $\iwarp{\tup w}{s-1}$.
    Since $\tup w$ registers at $(\tup u,s)$,
    it follows that $H_{\tup v}$ is disconnected.
    Hence, $\tup v\in \cal V'$. For each connected component $\tup y\subset \tup x$ of $H_{\tup v}$, let $\tau_\tup y=\ltp_k^s(\iwarp{\tup w_\tup y}{s-1})$, and let 
     $\bar \tau=(\tau_{\tup y}\colon \tup y\in C_\tup v)$.
    By construction, $\tup w\in S_{\tup v,s-1}^{\bar \tau}$.
    This proves the left-to-right inclusion.
Moreover, it is easy to see that the union is disjoint.
\cqed\end{proof}

Therefore, an enumerator for the set $R_{\tup u,s}^\tau$ above can be obtained  by concatenating the enumerators for the sets $S_{\tup v,s-1}^{\bar \tau}$, for $(\tup v,\bar \tau)\in \Gamma^{-1}(\tau)$ with $\tup v\in \cal V'$, and those can be computed in time $\Oof_{d,k,\tup x}(1)$ by Claim~\ref{cl:prod}.
Note that we are taking a disjoint union of at most  $\Oof_{d,k,\tup x}(1)$ sets, by Lemma~\ref{lem:ltp_number_of_types},
so the concatenation can be computed by repeatedly applying Lemma~\ref{lem:enum-prod}.
\end{proof}

\begin{proof}[Proof of Lemma~\ref{l2}]
Label each node $u=(\tup u,s)$ of $T_{r,\tup x}$ by the 
set $X_u:=\Types^k_{\tup u,s}$
and its subset $Y_u\subset X_u$
of all $\tau\in \Types^k_{\tup u,s}$
such that $R_{\tup u,s}^{\tau}$ is nonempty.
This can be computed in time $\Oof_{k,d,\tup x}(n)$,
by testing emptiness of the enumerators produced in Lemma~\ref{l1}.
Moreover, $|X_u|$ is bounded by $\Oof_{k,d,\tup x}$, by Lemma~\ref{lem:ltp_number_of_types}.

Label each child-parent edge $e=((\tup v,t),(\tup u,s))$ of $T_{r,\tup x}$ by the function $f_e\colon \Types^k_{\tup v,t}\to \Types^k_{\tup u,s}$ such that for every tuple $\tup w\in V(G)^{\tup x}$ with $\tup v=\iwarp{\tup w}{s}$, we have
\begin{equation}
\ltp_t^k(\tup w)=f_e(\ltp_s^k(\tup w)), 
\end{equation}
where we denote $\ltp^k_t(\cdot)\coloneqq \ltp^k_{\Pp_t}(\cdot)$ for brevity. Such a function exists by Lemma~\ref{lem:tp-warp}.

Note that if $v:=(\tup v,t)$ is a descendant of $u:=(\tup u,s)$,
then the composition of the functions $f_e$ along the edges $e$ of the path from $v$ to $w$
is a function $f_{vu}\from \Types^k_{\tup v,t}\to \Types^k_{\tup u,s}$
such that for every tuple $\tup w\in V(G)^{\tup x}$ with $\tup v=\iwarp{\tup w}{s}$, we have
\begin{equation}
\ltp_t^k(\tup w)=f(\ltp_s^k(\tup w)).
\end{equation}

We are now in the setting of Corollary~\ref{cor:composer-enum}.
Hence we can compute in time $\Oof_{k,d,\tup x}(n)$
a collection of enumerators $\cal E_u^\tau$,
where for each node $u=(\tup u,s)$ of $T_{r,\tup x}$
and type $\tau\in \Types^k_{\tup u,s}$, 
the enumerator $\cal E_w^\tau$ enumerates 
all descendants $v=(\tup v,t)$ of $u$ in $T_{r,\tup x}$
such that $f_{vu}^{-1}(\tau)\cap Y_v$ is nonempty.
Unravelling the definitions, this means 
that there is some $\sigma\in \Types^k_{\tup v,t}$
such that $f_{vu}(\sigma)=\tau$ and 
 $R_{\tup v,t}^\sigma$ is nonempty.
\end{proof}

\newcommand{\hsim}{\approx}
\newcommand{\colors}{\mathrm{Colors}}
\newcommand{\tups}{\mathrm{Tuples}}

\section{VC density}\label{sec:numtypes}

In this section we prove Theorem~\ref{thm:intro-main-nei}, which we restate below for convenience.

\VCdensity*

%
%

\subsection{Additional preliminaries}

We first need a few additional definitions and observations.

\paragraph*{Bipartite contraction sequences.}
Let $G$ be a graph on $n$ vertices and $A$ be a subset of its vertices.
We say that a contraction sequence $\seq \Pp n$ of $G$ is {\em{$A$-bipartite}} if at every time $t \in [n-1]$ for every part $B \in \Pp_t$ we have $B \cap A = \emptyset$ or $B \subseteq A$. In other words, every part of $\Pp_t$ consists only of vertices in $A$ or in $V(G) \setminus A$.
By \cite[Lemma 7]{tww1} we have that if the twin-width of $G$ is at most $d$, then for any $A \subseteq V(G)$ there exists an $A$-bipartite contraction sequence of width at most $2d$.

\paragraph*{\texorpdfstring{$r$}{r}-distance coloring.}
Let $r \in \N$ be an integer, $G$ be a graph on $n$ vertices, $A$ be a nonempty subset of its vertices, and $\seq \Pp n$ be an $A$-bipartite contraction sequence of $G$ of width at most $d$.
We define the {\em{$r$-distance coloring}} of $V(G) \setminus A$ as follows.
Initially we start with all the vertices in $V(G) \setminus A$ not colored.
Then, we analyze our contraction sequence step by step from $\Pp_1$ till $\Pp_n$.
If at some time $t\in [n-1]$, the partition $\Pp_{t + 1}$ is obtained by contracting two parts $C, C' \in \Pp_t$ such that $C, C' \subseteq A$ into $B=C\cup C' \in \Pp_{t+1}$, then we look at all the parts $Q \subseteq V(G) \setminus A$ such that the distance between $B$ and $Q$ in $\Imp_{t+1}$ is at most $r$.
For every such part $Q$ we color all the vertices in $Q$ that are not colored yet with a fresh color (different for every $Q$ as above).
Note that in this way, in one step we use $\Oh_{d,r}(1)$ fresh colors. 
There are exactly $|A| - 1$  times $t\in [n-1]$ where two parts contained in $A$ are merged, so in total we use $\Oh_{d,r}(|A|)$ different colors.
Finally, if there are some vertices in $V(G) \setminus A$ which we did not color in this procedure, we color them with one additional 
fresh color.

Note that an $r$-distance coloring is uniquely determined by the choice  of a specific $A$-bipartite contraction sequence, which will always be clear from the context.
By $\colors_r$ we denote the partition of $V(G) \setminus A$ into color classes in the $r$-distance coloring. We will often work with tuples of colors classes. If $\tup y$ is a finite set of variables and $\tup p\in \colors_r^{\tup y}$ is a $\tup y$-tuple of colors, then we define 
    $$\tups_{\tup p} \coloneqq \set{\,\tup b \in V(G)^{\tup y}~|~\tup b(y) \in \tup p(y)\textrm{ for all }y\in \tup y\,}.$$

\newcommand{\meet}{\mathsf{meet}}

\paragraph*{Meeting time.}
Let $G$ be a graph on $n$ vertices and $\seq \Pp n$ be a contraction sequence of~$G$.
For any non-empty $X \subseteq V(G)$ we define the {\em{meeting time}} of $X$, denoted $\meet(X)$, to be the earliest time $t\in [n]$ such that all vertices of $X$ are in the same part of $\Pp_t$. Since in $\Pp_n$ all vertices are in the same part, this is well-defined. 


\subsection{A technical lemma}

We now formulate a lemma that will be the main ingredient in the proof of Theorem \ref{thm:intro-main-nei}.

\begin{lemma}\label{lem:bounded_number_of_ltp}
    Let $\tup x, \tup y$ be finite sets of variables, $k \in \N$ be a fixed integer, $G$ be a graph on $n$ vertices, $A$ be a nonempty subset of its vertices, and $\seq \Pp n$ be an $A$-bipartite contraction sequence of $G$ of width at most $d$.
    Let $r \coloneqq 2^k \cdot ({|\tup x|} + {|\tup y|} - 1)$ and fix any tuple of color classes $\tup p \in \colors_r^{\tup y}$. Suppose $t$ is a time such that $t\geq \meet(\tup p(y))$ for all $y\in \tup y$.
    Define an equivalence relation $\sim$ on $\tups_{\tup p}$ as follows:
    \[
    \tup b \sim \tup b' 
    \qquad\textrm{if and only if}\qquad \ltp_t^k(\tup a\tup b) = \ltp_t^k(\tup a\tup b')\textrm{ for all }\tup a \in A^{\tup x}.
    \]
    Then the number of equivalence classes of $\sim$ is $\Oh_{d,k,{\tup x}, {\tup y}}(1)$.
\end{lemma}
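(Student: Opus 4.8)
The plan is as follows. First I would reduce to the minimal admissible choice of~$t$. By Lemma~\ref{lem:ltp_consistency}, applied to the tuples $\tup a\tup b$ and iterated along the contraction sequence, $\ltp^k_{t_0}(\tup a\tup b)=\ltp^k_{t_0}(\tup a\tup b')$ implies $\ltp^k_{t}(\tup a\tup b)=\ltp^k_{t}(\tup a\tup b')$ for every $t\ge t_0$; hence $\sim$ only gets coarser as $t$ grows, so it suffices to prove the bound for $t_0:=\max_{y\in\tup y}\meet(\tup p(y))$. For this value of $t_0$, every $\tup b\in\tups_{\tup p}$ satisfies $\iwarp{\tup b}{t_0}=\tup R$, where $\tup R=(R_y)_{y\in\tup y}$ and $R_y\in\Pp_{t_0}$ is the (common) part containing the colour class $\tup p(y)$.

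Next I would peel the type $\ltp^k_{t_0}(\tup a\tup b)$ apart using compositionality. Partition the multiset of parts $\iwarp{\tup a}{t_0}$ together with $\tup R$ into \emph{clusters}: the connected components of the graph on these parts in which two parts are adjacent when their $\Imp_{t_0}$-distance is at most $2^k$. By repeated application of Lemma~\ref{lem:tp-merge} (whose merge function, for $2^k$-far tuples, depends only on the parts involved and the complete/anti-complete pattern between them, as one sees from the proof of Lemma~\ref{lem:ltp_compositionality}), $\ltp^k_{t_0}(\tup a\tup b)$ is a function of: (i) the local $k$-types of the restrictions of $\tup a\tup b$ to the individual clusters, and (ii) the complete/anti-complete pattern between parts lying in distinct clusters, which, as $\tup R$ is fixed, is determined by $\tup a$ alone. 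A cluster containing no variable of $\tup x$ contributes $\ltp^k_{t_0}(\tup b_{\tup z})$ for some $\tup z\subseteq\tup y$ at the fixed parts $\tup R_{\tup z}$, which ranges over the bounded set $\Types^k_{\tup R_{\tup z},t_0}$ (Lemma~\ref{lem:ltp_number_of_types}); a cluster containing no variable of $\tup y$ contributes data depending only on $\tup a$. So, after this reduction, $\ltp^k_{t_0}(\tup a\tup b)$ is determined by $\tup a$, by a bounded amount of information about $\tup b$, and by the contributions of the \emph{mixed} clusters.

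The heart of the argument is the treatment of a mixed cluster $K$, with $\tup x$-part $\tup x_K$ and nonempty $\tup y$-part $\tup y_K$: here every part of $\iwarp{\tup a_{\tup x_K}}{t_0}$ lies within $\Imp_{t_0}$-distance $2^k(|K|-1)\le r$ of the fixed parts $\tup R_{\tup y_K}$. I would show that $\ltp^k_{t_0}(\tup a_{\tup x_K}\tup b_{\tup y_K})$ is determined by $\tup a_{\tup x_K}$ and a bounded amount of information about $\tup b_{\tup y_K}$, and this is exactly where the $r$-distance colouring and $A$-bipartiteness enter. The colour $\tup p(y)$ was created at the first transition at which a freshly merged $A$-part came within distance $r$ of the part then containing $\tup p(y)$; before that transition, by the colouring rule, the part of $\tup p(y)$ is at distance more than $r$ from every freshly merged $A$-part. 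Combined with $A$-bipartiteness (every part of every $\Pp_t$ is inside $A$ or disjoint from $A$) and the bounded degree of the impurity graphs, this means the way $\tup b_{\tup y_K}$ interacts with $A$ — and hence with any $\tup a_{\tup x_K}\in A^{\tup x_K}$ and with the vertices quantified in the definition of the local $k$-type — is already frozen at the colouring time and captured by a bounded-size invariant: a local $k$-type of $\tup b_{\tup y_K}$ taken at that time and extended by a bounded tuple describing a bounded-radius vicinity of $\tup R_{\tup y_K}$. Transporting this forward via Lemma~\ref{lem:ltp_consistency} and recombining with $\tup a_{\tup x_K}$ via Lemma~\ref{lem:ltp_compositionality}/Lemma~\ref{lem:tp-merge} yields the claim for mixed clusters. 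Since the total amount of data about $\tup b$ used across all clusters ranges over a set of size $\Oh_{d,k,\tup x,\tup y}(1)$ (using Lemma~\ref{lem:ltp_number_of_types} throughout), and two tuples $\tup b,\tup b'\in\tups_{\tup p}$ with equal such data are $\sim$-equivalent, the number of classes of $\sim$ is $\Oh_{d,k,\tup x,\tup y}(1)$.

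I expect the mixed-cluster case to be the main obstacle: one must pin down precisely which bounded invariant of $\tup b_{\tup y_K}$ to extract at the colouring time, and then verify — invoking $A$-bipartiteness, the width bound, and the scale $r=2^k(|\tup x|+|\tup y|-1)$ of the colouring — that this invariant determines $\ltp^k_{t_0}(\tup a_{\tup x_K}\tup b_{\tup y_K})$ uniformly over all $\tup a_{\tup x_K}\in A^{\tup x_K}$. The remaining steps are routine manipulations of local types via the lemmas of Section~\ref{sec:types}.
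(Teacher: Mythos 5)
Your opening reductions are fine and match the paper: passing to $t_0=\max_{y\in\tup y}\meet(\tup p(y))$ via Lemma~\ref{lem:ltp_consistency}, and splitting off the parts of $\iwarp{\tup a}{t_0}$ that are far (distance $>2^k$, in clusters) from the parts $\tup R$ containing the colour classes via Lemma~\ref{lem:ltp_compositionality}/Lemma~\ref{lem:tp-merge} — this corresponds to the paper's handling of tuples $\tup u\notin\Ff$ together with its inner induction on $|\tup x|$. But the heart of the lemma is exactly what you label the ``mixed cluster'' case, and there your argument is an unproven assertion, as you yourself concede. The claim that the interaction of $\tup b_{\tup y_K}$ with $A$ is ``frozen at the colouring time and captured by a bounded-size invariant: a local $k$-type of $\tup b_{\tup y_K}$ taken at that time and extended by a bounded tuple describing a bounded-radius vicinity'' is precisely the statement that needs proof, and it is not true in any straightforward sense: inside a mixed cluster the $\tup x$-parts are \emph{close} to $\tup R_{\tup y_K}$, so compositionality gives nothing, and a local $k$-type of $\tup b_{\tup y_K}$ recorded at the colouring time does not obviously control $\ltp^k_{t_0}(\tup a_{\tup x_K}\tup b_{\tup y_K})$ uniformly in $\tup a$, because at that earlier time the vertices of $\tup a$ may sit in scattered singleton $A$-parts with no useful distance structure, and the quantified vertices in the definition of the local type range over vicinities mixing $A$-parts and non-$A$-parts.

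The paper closes this gap with machinery your sketch does not contain. The relevant consequence of the $r$-distance colouring and $A$-bipartiteness is pinned down in Lemma~\ref{lemma:technical_lemma}: at the time $t=\meet(A')$ when two $A$-parts merge into a part $B$, every colour class $\tup p(z')$ with $\meet(\tup p(z'))>t$ lies at distance $>r$ from $B$ in $\Imp_t$, so the tuple splits and $\ltp^k_t(\tup a\tup b)$ depends only on $\ltp^k_t(\tup a_{\tup x\setminus\{m\}}\tup b)$ and $\ltp^k_t(\tup a\tup b_{\tup z})$ with $\tup z\subsetneq\tup y$. The proof of the bounded-parts case (Claim~\ref{cl:special_case}) then orders the $\tup x$-parts by their meeting times $t_1\le\cdots\le t_{|\tup x|}$ and applies this simplification step at each $t_i$, peeling off one $\tup x$-variable at a time; the ``bounded invariant'' of $\tup b$ that emerges is the composition function $f_{\tup b}$ together with the classes of $\tup b$ under the relations $\sim_j$ associated with the strictly smaller sets $\tup y_j\subsetneq\tup y$, and the boundedness of the latter is supplied by an \emph{outer induction on $|\tup y|$} — an induction your proposal never sets up. Without the peeling along meeting times, the precise invariant, and the induction on $|\tup y|$, the mixed-cluster case (and hence the lemma) is not established.
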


Note that in the statement above, the equivalence relation $\sim$ depends on the choice of $\tup p$ and~$t$. We will use the notation $\sim$ only when $\tup p$ and $t$ are clear from the context. Further, observe that the assumption $t\geq \meet(\tup p(y))$ for all $y\in \tup y$ means that for every $y\in \tup y$ there is a unique part $\tup v(y)\in \Pp_t$ such that $\tup p(y)\subseteq \tup v(y)$. Therefore, the two local types considered in the statement both belong to the same type set $\Types^k_{\tup u\tup v,t}$, where $\tup u=\iwarp{\tup a}{t}$.

Before we proceed to the proof of Lemma~\ref{lem:bounded_number_of_ltp}, let us verify that it implies Theorem~\ref{thm:intro-main-nei}.

\begin{proof}[Proof of Theorem \ref{thm:intro-main-nei} assuming Lemma~\ref{lem:bounded_number_of_ltp}]
First, we will show
\begin{equation}\label{eq:wydra}
    |\,\{\,\{\,\tup a\in A^{\tup x}~|~G\models \varphi(\tup a,\tup b)\,\}~\colon~\tup b\in (V(G) \setminus A)^{\tup y}\,\}\,| = \Oh_{d, \varphi}\left(|A|^{|\tup y|}\right).
\end{equation}
Let us take any $A$-bipartite contraction sequence for $G$ of width at most $2d$.
Let $k$ be the quantifier rank of $\phi$ and set $r \coloneqq 2^k \cdot (|\tup x| + |\tup y| - 1)$.
Consider any $\tup p \in \colors_r^{\tup y}$ and let $\tup b, \tup b' \in \tups_{\tup p}$ be two tuples of vertices such that $\tup b \sim \tup b'$, where $\sim$ is the  equivalence relation defined in Lemma~\ref{lem:bounded_number_of_ltp} for $\tup p$ and $t=n$.
By the definition of $\sim$ and Lemma~\ref{lem:ltp_type_as_local_type},
\[
 \tp^k(\tup a\tup b) = \tp^k(\tup a\tup b')\qquad \textrm{for every }\tup a \in A^{\tup x}.\]
Therefore,
\[
\set{\tup a \in A^{\tup x}~|~G \models \phi(\tup a,\tup b)} = \set{\tup a \in A^{\tup x}~|~G \models \phi(\tup a,\tup b')}.
\]
By Lemma~\ref{lem:bounded_number_of_ltp}, $\sim$ has at most $\Oh_{d,k,\tup x,\tup y}(1)$ equivalence classes. We conclude that
$$|\,\{\,\{\,\tup a\in A^{\tup x}~|~G\models \varphi(\tup a,\tup b)\,\}~\colon~\tup b\in \tups_{\tup p}\,\}\,| = \Oh_{d, \varphi}\left(1\right)\qquad \textrm{for every }\tup p\in \colors_r^{\tup y}.$$
Now~\eqref{eq:wydra} follows from the facts that $|\colors_r^{\tup y}| = \Oh_{d, \phi}(|A|^{|\tup y|})$ and $\{\tups_{\tup p}\colon \tup p\in \colors_r^{\tup y}\}$ is a partition of $(V(G) \setminus A)^{\tup y}.$

We now proceed with the general case where we consider all tuples $\tup b \in V(G)^{\tup y}$, that is, the variables in $\tup y$ can be also mapped to vertices in $A$.
Consider any partition of $\tup y$ into $\tup z, \tup z'$.
By~\eqref{eq:wydra}, we already know that
\[
    |\,\{\,\{\,(\tup a\in A^{\tup x}, \tup c' \in A^{\tup z'})~|~G\models \varphi(\tup a,\tup c\tup c')\,\}~\colon~\tup c\in (V(G) \setminus A)^{\tup z}\,\}\,| = \Oh_{d, \varphi}\left(|A|^{|\tup z|}\right).
\]
In particular, for every fixed tuple $\tup c' \in A^{\tup z'}$ we have
\[
    |\,\{\,\{\,\tup a\in A^{\tup x}~|~G\models \varphi(\tup a,\tup c\tup c')\,\}~\colon~\tup c\in (V(G) \setminus A)^{\tup z}\,\}\,| = \Oh_{d, \varphi}\left(|A|^{|\tup z|}\right).
\]
There are $|A|^{|\tup z'|}$ possible tuples $\tup c'\in A^{\tup z'}$, hence we get
\[
    |\,\{\,\{\,\tup a\in A^{\tup x}~|~G\models \varphi(\tup a,\tup c\tup c')\,\}~\colon~\tup c\in (V(G) \setminus A)^{\tup z}, \tup c' \in A^{\tup z'}\,\}\,| = \Oh_{d, \varphi}\left(|A|^{|\tup z| + |\tup z'|}\right) = \Oh_{d, \varphi}\left(|A|^{|\tup y|}\right).
\]
As the number of possible partitions of $\tup y$ into $\tup z$ and $\tup z'$ is $2^{|\tup y|}=\Oh_\varphi(1)$, the lemma follows.
\end{proof}

\subsection{A single simplification step}
In preparation for the proof of Lemma~\ref{lem:bounded_number_of_ltp}, we show the following statement, which eventually will allow us to apply induction on the total number of involved variables. We use the convention that if $\tup a$ is an $\tup x$-tuple and $\tup z\subseteq \tup x$, then $\tup a_{\tup z}$ denotes the restriction of $\tup a$ to the variables~of~$\tup z$. 

\begin{lemma}
\label{lemma:technical_lemma}
Let $\tup x, \tup y$ be finite sets of variables, $m\in \tup x$ be a variable in $\tup x$, $k \in \N$ be a fixed integer, $G$ be a graph on $n$ vertices, $A'\subseteq A$ be two subsets of vertices of $G$ with $|A'|\geq 2$, and $\seq \Pp n$ be an $A$-bipartite contraction sequence of $G$ of width at most $d$.
Let $r \coloneqq 2^k \cdot ({|\tup x|} + {|\tup y|} - 1)$ and consider any tuple of color classes $\tup p \in \colors_r^{\tup y}$.  Let $t\coloneqq \meet(A')$
and $\tup z,\tup z'$ be the partition of $\tup y$ such that $\meet(\tup p(z))\leq t$ for all $z\in \tup z$ and $\meet(\tup p(z'))>t$ for all $z'\in \tup z'$.
Then for any tuples $\tup b \in \tups_{\tup p}$ and $\tup a\in A^{\tup x}$ such that $\tup a(m)\in A'$, the local type $\ltp_t^k(\tup a\tup b)$ depends only on $\ltp_{t}^k(\tup a_{\tup x\setminus \{m\}}\tup b)$ and $\ltp_{t}^k(\tup a\tup b_{\tup z})$.
\end{lemma}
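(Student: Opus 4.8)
The plan is to prove the lemma by a compositional argument that separates the variable $m$ from the rest, using the crucial fact that $t = \meet(A')$ and that $\tup p$ is a tuple of color classes in an $r$-distance coloring with $r = 2^k\cdot(|\tup x|+|\tup y|-1)$. First I would observe that at time $t$ the set $A'$ has just been assembled into a single part; more precisely $t$ is the first time all of $A'$ lies in one part, so the contraction producing $\Pp_t$ merged two parts $C,C'\subseteq A'$ into a part $B_t$. Since the contraction sequence is $A$-bipartite, this whole event happens ``inside $A$''. Now the point of the $r$-distance coloring is that any part $Q\subseteq V(G)\setminus A$ which is within impurity-distance $r$ of $B_t$ in $\Imp_t$ must already have been colored before time $t$; consequently, if a color class $\tup p(z')$ has meeting time $> t$, then the part of $\Pp_t$ containing $\tup b(z')$ is at impurity-distance $> r$ from $B_t$, and in particular from the part containing $\tup a(m)$ (which is $B_t$ or a subpart of it). This is the mechanism that will let us ``forget'' the variables $\tup z'$ when resolving $m$.

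**The compositional step.**
Fix $\tup a\in A^{\tup x}$ with $\tup a(m)\in A'$ and $\tup b\in \tups_{\tup p}$. Write $\tup x = (\tup x\setminus\{m\})\cup\{m\}$ and split $\tup b$ as $\tup b_{\tup z}\tup b_{\tup z'}$. I want to show $\ltp_t^k(\tup a\tup b)$ is determined by $\ltp_t^k(\tup a_{\tup x\setminus\{m\}}\tup b)$ and $\ltp_t^k(\tup a\tup b_{\tup z})$. The key distance estimate is: $\dist_t(\iwarp{\tup a(m)}{t},\ \iwarp{\tup b_{\tup z'}}{t}) > 2^k$. Indeed, $\iwarp{\tup a(m)}{t}\subseteq B_t$ since $\tup a(m)\in A'$ and all of $A'$ is in one part at time $t$; and for each $z'\in\tup z'$, $\meet(\tup p(z')) > t$ means $\tup b(z')$ lies in a part $Q$ of $\Pp_t$ with $Q\subseteq \tup p(z')\subseteq V(G)\setminus A$, which by the coloring argument above has $\dist_t(B_t,Q) > r \geq 2^k$. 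So one can apply Lemma~\ref{lem:ltp_compositionality}: if $\tup a(m) \sim \tup a'(m)$ (same part, indeed we can take the same vertex) and $\ltp_t^k(\tup a_{\tup x\setminus\{m\}}\tup b_{\tup z}\tup b_{\tup z'}) = \ltp_t^k(\tup a'_{\tup x\setminus\{m\}}\tup b_{\tup z}\tup b_{\tup z'})$ — wait, this is not yet quite the form of the statement; I need to be more careful about how the two stated pieces of data combine.

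**Refining the decomposition.**
The cleaner way: I will argue that $\ltp_t^k(\tup a\tup b)$ is obtained by a compositionality function applied to $\ltp_t^k(\tup a_{\tup x\setminus\{m\}}\tup b_{\tup z}\tup b_{\tup z'})$ together with $\ltp_t^k(\tup a\tup b_{\tup z})$, where the ``gluing'' happens along the variable $m$ against the variables $\tup z'$. Concretely, set $\tup u = \iwarp{\tup a_{\tup x\setminus\{m\}}\tup b_{\tup z}}{t}$, $\tup v_1 = \iwarp{\tup a(m)}{t}$, $\tup v_2 = \iwarp{\tup b_{\tup z'}}{t}$. We have $\dist_t(\tup v_1,\tup v_2) > 2^k$ as established. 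I would then invoke a version of Lemma~\ref{lem:tp-merge}/Lemma~\ref{lem:ltp_compositionality} in the following bookkeeping: the local type of the full tuple $\tup a\tup b = (\tup a_{\tup x\setminus\{m\}}\tup b_{\tup z})\cdot \tup a(m)\cdot \tup b_{\tup z'}$ is determined by the local type of $(\tup a_{\tup x\setminus\{m\}}\tup b_{\tup z})\cdot \tup a(m)$ — which is a subtuple of $\tup a\tup b_{\tup z}$, hence recoverable from $\ltp_t^k(\tup a\tup b_{\tup z})$ by Lemma~\ref{lem:ltp_basic}-style projection — and the local type of $(\tup a_{\tup x\setminus\{m\}}\tup b_{\tup z})\cdot \tup b_{\tup z'}$ — which is exactly $\ltp_t^k(\tup a_{\tup x\setminus\{m\}}\tup b)$ — provided that $\tup a(m)$ and $\tup b_{\tup z'}$ are far apart, which is our distance bound. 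The one subtlety is that Lemma~\ref{lem:ltp_compositionality} as stated glues two tuples whose \emph{union} of parts is the full tuple and requires the two blocks to be mutually $>2^k$-far; here $\tup a(m)$ and $\tup b_{\tup z'}$ are each allowed to be close to the common block $\tup u$ but must be far from each other, so I should either restate/reprove the appropriate variant (a straightforward induction identical to the proof of Lemma~\ref{lem:ltp_compositionality}, with the shared block playing the role of ``already-placed'' pebbles that both sides see) or phrase the argument directly via the Ehrenfeucht--Fra\"iss\'e game: Spoiler's moves near $\tup a(m)$ can be answered using the $\ltp_t^k(\tup a\tup b_{\tup z})$-strategy and moves near $\tup b_{\tup z'}$ using the $\ltp_t^k(\tup a_{\tup x\setminus\{m\}}\tup b)$-strategy, the two never interfering because of the distance gap.

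**Main obstacle.**
The technical heart — and the step I expect to take the most care — is establishing the distance bound $\dist_t(B_t, Q) > r$ for parts $Q$ with $\meet(\tup p(z')) > t$, i.e. justifying that the $r$-distance coloring does what we want. One must unwind the definition of the coloring: at the step where $C,C'\subseteq A'$ are merged into $B_t$ (which is precisely time $t$, since $t=\meet(A')$ and the merge at time $t$ creates the part containing all of $A'$), every uncolored part $Q\subseteq V(G)\setminus A$ within $\Imp_t$-distance $r$ of $B_t$ gets colored. So any part $Q\subseteq V(G)\setminus A$ still uncolored after time $t$ — equivalently, whose color class has meeting time $> t$ in the sense relevant here — has $\dist_t(B_t,Q) > r$. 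I need to make sure the indexing matches: the coloring uses $\Imp_{t+1}$ and part $B\in\Pp_{t+1}$, and I should check that $B_t$ in my notation (the merged part, living in $\Pp_t$) corresponds correctly, adjusting the time index by one if necessary — this is a routine but error-prone off-by-one that the final write-up must pin down. Everything else (the projection of local types to subtuples, the repeated compositionality, counting of variables so that $2^k\cdot(|\tup x|+|\tup y|-1) \geq 2^k$ with room to spare) is bookkeeping along the lines already developed in Section~\ref{sec:types}.
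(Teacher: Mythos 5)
Your proof of the distance bound is, in substance, the paper's: the paper likewise observes that at time $t=\meet(A')$ two parts contained in $A$ merge into a part $B\in\Pp_t$ with $A'\subseteq B$, so $\iwarp{\tup a(m)}{t}=B$, and shows $\dist_t(B,\iwarp{u}{t})>r$ for every $u\in\tup p(z')$, $z'\in\tup z'$, arguing by contradiction: if the distance were at most $r$, then $u$ would be colored at a coloring event at time at most $t$, with a fresh color tied to a single part of the partition at that event, which would force all of $\tup p(z')$ into one part of $\Pp_t$, contradicting $\meet(\tup p(z'))>t$ (the paper phrases this via a witness $u'\in\tup p(z')$ lying in a different part of $\Pp_t$). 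Your off-by-one worry about the indexing of the coloring is benign. Where you genuinely diverge is the compositional step. The paper does \emph{not} glue over a shared block: it takes $S$, the $2^k$-closeness component of $\tup a(m)$ inside the image of $\tup a\tup b$ at time $t$, notes that every element of $S$ is within $2^k(|\tup x|+|\tup y|-1)=r$ of $B$ (a chain of at most $|\tup x|+|\tup y|-1$ hops of length $\le 2^k$), hence $\tup b(z')\notin S$ for all $z'\in\tup z'$, and then splits the variables into those mapped into $S$ (contained in $\tup x\tup z$ and containing $m$) and the rest (avoiding $m$); the two blocks are at mutual distance $>2^k$, so Lemma~\ref{lem:ltp_compositionality} applies verbatim, and the two block types are recovered from the two given types by subtuple projection. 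This is exactly why $r$ carries the factor $|\tup x|+|\tup y|-1$. Your route, gluing $\tup a(m)$ and $\tup b_{\tup z'}$ over the common subtuple $\tup a_{\tup x\setminus\{m\}}\tup b_{\tup z}$, only needs $\dist_t(\iwarp{\tup a(m)}{t},\iwarp{\tup b_{\tup z'}}{t})>2^k$, which is a cleaner separation requirement, but at the price of a ``gluing over a shared block'' lemma that is not in the paper and is not an instance of Lemma~\ref{lem:ltp_compositionality} or Lemma~\ref{lem:tp-merge}, since the variables of the shared block may be close to both sides.

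You correctly flag that this auxiliary lemma must be restated and reproved, but your game sketch (``moves near $\tup a(m)$ answered by the $\ltp^k_t(\tup a\tup b_{\tup z})$-strategy, moves near $\tup b_{\tup z'}$ by the other, never interfering'') omits the one case where the real work sits: a Spoiler move into a part within $2^{k-1}$ of the shared block but farther than $2^{k-1}$ from both $\tup a(m)$ and $\tup b_{\tup z'}$. There you must answer using one of the two hypotheses and then argue that the answer is automatically consistent with the other side; this is true but needs, e.g., projecting the chosen response to the shared block plus the new vertex and then applying the induction hypothesis twice at rank $k-1$ (first over the shared block with the new vertex against $\tup b_{\tup z'}$, then over the enlarged shared block with $\tup a(m)$ against $\tup b_{\tup z'}$). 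So the induction is genuinely more involved than that of Lemma~\ref{lem:ltp_compositionality}, not ``identical'' to it; once it is carried out, your argument goes through. If you want to avoid proving the new lemma altogether, adopt the paper's component construction, which is precisely the device that reduces the statement to the existing compositionality lemma.
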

\begin{proof}
As $|A'| \ge 2$, we have~$t>1$.
Hence, the vertices of $A'$ are in exactly two different parts of $\Pp_{t - 1}$ which get contracted into a single part $B \in \Pp_t$. In particular, $A'\subseteq B\subseteq A$.

Consider any $z' \in \tup z'$ and any $u\in \tup p(z')$. We observe that the distance between $B$ and $\iwarp{u}{t}$ in $\Imp_t$ must be larger than~$r$. Indeed, since $z'\in \tup z'$, there exists $u'\in \tup p(z')$ such that $\iwarp{u}{t}\neq \iwarp{u'}{t}$. Now if it was the case that $\dist_t(B,\iwarp{u}{t})\leq r$, then $u$ and $u'$ would be for sure colored with different colors in the $r$-distance coloring, contrary to $u,u'\in \tup p(z')$.

Now, consider any $\tup b \in \tups_{\tup p}$ and $\tup a\in A^{\tup x}$ such that $\tup a(m)\in A'$.
Let $U$ be the image of the tuple $\tup a\tup b$ (that is, the set of all vertices participating in this tuple).
Let $S$ be the unique inclusion-wise minimal subset of $U$ satisfying the following:
\begin{itemize}
    \item $\tup a(m) \in S$; and
    \item for every $v\in S$ and $u\in U$, if $\dist_t(\iwarp vt,\iwarp ut)\leq 2^k$, then $u \in S$.
\end{itemize}
Consider any $z'\in \tup z'$. Since $\tup b(z')\in \tup p(z')$, we have $\dist_t(B, \iwarp {\tup b(z')}t)>r$. Recalling that $\iwarp{\tup a(m)}t=B$ and $r=2^k(|\tup x|+|\tup y|-1)$, it follows from the definition of $S$ that $\tup b(z') \not \in S$.
Moreover, for every $v \in S$ and $u \in U \setminus S$, we have $\dist_t(\iwarp vt,\iwarp ut)>2^k$. Let $\tup q$, $\tup q'$ be the partition of the variable set $\tup x\tup y$ such that $\tup a\tup b(q)\in S$ for all $q\in \tup q$ and $\tup a\tup b(q')\in U\setminus S$ for all $q'\in \tup q'$. By the reasoning above, $m\in \tup q$ and $\tup z'\subseteq \tup q'$. 
By Lemma \ref{lem:ltp_compositionality}, we have that $\ltp^k_{t}(\tup a\tup b)$ depends only on $\ltp^k_{t}(\tup q)$ and $\ltp^k_{t}(\tup q')$.
As the local type of a subtuple depends only on the local type of the original tuple, and $\tup q\subseteq \tup x\tup z$ and $\tup q'\subseteq \tup x\tup y\setminus \{m\}$,
this concludes our proof.
\end{proof}

\subsection{Proof of the technical lemma}

We are ready to give a proof of  Lemma~\ref{lem:bounded_number_of_ltp}.

\begin{proof}[Proof of~Lemma~\ref{lem:bounded_number_of_ltp}]
 We proceed by induction on $|\tup y|$. The base case $|\tup y|=0$ is trivial, as then $|\tups_{\tup p}|=1$, so we assume $|\tup y|\geq 1$. By Lemma~\ref{lem:ltp_consistency}, we may assume that $t=\max_{y\in \tup y} \meet(\tup p(y))$. Note that for every $y\in \tup y$ there is a unique part $\tup v(y)\in \Pp_t$ such that $\tup p(y)\subseteq \tup v(y)$; thus $\tup v\in \Pp_t^{\tup y}$.
 
 We first consider the following special case. For a tuple $\tup u\in \Pp_t^{\tup x}$ such that $\tup u(x)\subseteq A$ for all $x\in \tup x$, define an equivalence relation $\hsim_{\tup u}$ on $\tups_{\tup p}$ as follows:
    \[
    \tup b \hsim_{\tup u} \tup b'\qquad\textrm{if and only if}\qquad  \ltp_t^k(\tup a\tup b) = \ltp_t^k(\tup a\tup b')\textrm{ for all }\tup a\in A^{\tup x}\textrm{ such that }\iwarp{\tup a}{t}=\tup u.
    \]
 Note that for each $\tup u$ as above, $\hsim_{\tup u}$ is a coarsening of $\sim$, as we require the equality of types of $\tup a\tup b$ and $\tup a\tup b'$ only for a certain subset of tuples $\tup a\in A^{\tup x}$. Our first goal is to prove the following claim.
 
 \begin{claim}\label{cl:special_case}
  For every fixed $\tup u\in \Pp_t^{\tup x}$ such that $\tup u(x)\subseteq A$ for all $x\in \tup x$, the equivalence relation $\hsim_{\tup u}$ has $\Oh_{d,k,\tup x,\tup y}(1)$ equivalence classes.
 \end{claim}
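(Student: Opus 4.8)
The plan is to prove Claim~\ref{cl:special_case} by induction on $|\tup x|$, keeping $\tup y$, $\tup p$ and the time $t=\max_{y\in\tup y}\meet(\tup p(y))$ fixed as in the proof of Lemma~\ref{lem:bounded_number_of_ltp}; I also fix some $y^\star\in\tup y$ with $\meet(\tup p(y^\star))=t$, and recall that $\tup v\in\Pp_t^{\tup y}$ denotes the tuple of parts with $\tup p(y)\subseteq\tup v(y)$. If every part $\tup u(x)$, $x\in\tup x$, is a singleton -- in particular if $\tup x=\emptyset$ -- then there is a unique $\tup a\in A^{\tup x}$ with $\iwarp{\tup a}{t}=\tup u$, so the $\hsim_{\tup u}$-class of $\tup b$ is determined by the single type $\ltp^k_t(\tup a\tup b)\in\Types^k_{\tup u\tup v,t}$, and Lemma~\ref{lem:ltp_number_of_types} bounds the number of classes by $\Oh_{d,k,\tup x,\tup y}(1)$. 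This is the base case.

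For the inductive step I pick $m\in\tup x$ with $|\tup u(m)|\ge 2$ maximizing $\meet(\tup u(m))$, and set $A'\coloneqq\tup u(m)$ and $t_1\coloneqq\meet(A')$. The crucial observation is that $t_1<t$: the part $B_{t_1}$ newly created at time $t_1$ equals $A'\subseteq A$, whereas the part $B_{t}$ created at time $t$ is obtained by merging parts whose union includes $\tup p(y^\star)\subseteq V(G)\setminus A$; since the contraction sequence is $A$-bipartite, $B_{t_1}$ and $B_{t}$ cannot coincide, and as $t_1\le t$ this forces the strict inequality. Writing $\tup z\coloneqq\{z\in\tup y:\meet(\tup p(z))\le t_1\}$, it follows that $y^\star\notin\tup z$, hence $\tup z\subsetneq\tup y$ -- this strict shrinking of the $\tup y$-side is exactly what $A$-bipartiteness buys us. A second consequence of the maximal choice of $m$ is that every part $\tup u(x)$ ($x\in\tup x$) is already present, with the same vertex set, in $\Pp_{t_1}$; thus $\tup u\in\Pp_{t_1}^{\tup x}$ and, for $\tup a\in A^{\tup x}$, $\iwarp{\tup a}{t}=\tup u$ iff $\iwarp{\tup a}{t_1}=\tup u$. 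Now Lemma~\ref{lemma:technical_lemma}, applied with this $A'$ and $m$, shows that for every $\tup a\in A^{\tup x}$ with $\tup a(m)\in A'$ and every $\tup b\in\tups_{\tup p}$ the type $\ltp^k_{t_1}(\tup a\tup b)$ is a function of $\ltp^k_{t_1}(\tup a_{\tup x\setminus\{m\}}\tup b)$ and $\ltp^k_{t_1}(\tup a\tup b_{\tup z})$; warping forward from $t_1$ to $t$ via Lemma~\ref{lem:tp-warp}, the same holds with $\ltp^k_{t}(\tup a\tup b)$ on the left, for all $\tup a$ with $\iwarp{\tup a}{t}=\tup u$.

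It remains to control the two residual pieces. The piece $\ltp^k_{t_1}(\tup a\tup b_{\tup z})$ involves only $|\tup z|<|\tup y|$ variables from the $\tup y$-side and $t_1\ge\meet(\tup p(z))$ for all $z\in\tup z$, so the inductive hypothesis of Lemma~\ref{lem:bounded_number_of_ltp} (with $\tup y$ replaced by $\tup z$, at time $t_1$; the $r$-distance coloring already fixed has radius no smaller than the one required there, and a routine monotonicity check lets it be used in its place) bounds the number of equivalence classes it induces on $\tups_{\tup p}$ by $\Oh_{d,k,\tup x,\tup y}(1)$. The piece $\ltp^k_{t_1}(\tup a_{\tup x\setminus\{m\}}\tup b)$ involves the strictly smaller $\tup x$-side $\tup x\setminus\{m\}$ with $\tup u_{\tup x\setminus\{m\}}\in\Pp_{t_1}^{\tup x\setminus\{m\}}$, so it matches the situation of Claim~\ref{cl:special_case} for a smaller variable set and is meant to be handled by the inductive hypothesis. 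This last point is the main obstacle: the residual type lives at the earlier time $t_1$, where color classes such as $\tup p(y^\star)$ are not yet assembled, so one must either phrase the claim (and its induction) in a time-flexible way -- allowing any time at which the parts of the tuple are present and which is at least the meeting times of the color classes still involved -- or argue directly that warping $\ltp^k_{t_1}(\tup a_{\tup x\setminus\{m\}}\tup b)$ forward to time $t$ loses no information relevant to $\ltp^k_{t}(\tup a\tup b)$, using that the vertex $\tup a(m)\in A'$ removed by the peeling step lies far (in $\Imp_{t_1}$) from every not-yet-assembled part, as guaranteed by the distance bounds inside Lemma~\ref{lemma:technical_lemma}. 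Combining the bounds on the two pieces, and using that $\hsim_{\tup u}$ is coarser than the common refinement of the two equivalence relations they induce, completes the inductive step and hence the proof of the claim.
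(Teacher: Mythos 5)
There is a genuine gap, and it sits exactly where you flag it: the residual piece $\ltp^k_{t_1}(\tup a_{\tup x\setminus\{m\}}\tup b)$ carries the \emph{full} tuple $\tup b$ at the earlier time $t_1$, and neither of your two proposed repairs is available. Your induction on $|\tup x|$ is formulated (as it must be, to match the definition of $\hsim_{\tup u}$) at a time $t$ with $t\geq \meet(\tup p(y))$ for all $y\in\tup y$, and this hypothesis fails at $t_1$, since $t_1<\meet(\tup p(y^\star))$. The ``time-flexible'' restatement you would need is not just unproven but false as stated: a local type records $\iwarp{\cdot}{t_1}$, and a color class that has not yet assembled by time $t_1$ may be spread over unboundedly many parts of $\Pp_{t_1}$, so the equivalence on $\tups_{\tup p}$ induced by $\tup b\mapsto\ltp^k_{t_1}(\tup a_{\tup x\setminus\{m\}}\tup b)$ can have unboundedly many classes, and no bound of the form $\Oh_{d,k,\tup x,\tup y}(1)$ can hold for it. The alternative fix (``warping $\ltp^k_{t_1}(\tup a_{\tup x\setminus\{m\}}\tup b)$ forward to time $t$ loses no information relevant to $\ltp^k_t(\tup a\tup b)$'') is precisely the missing content: the warp function of Lemma~\ref{lem:tp-warp} is far from injective, the input needed in Lemma~\ref{lemma:technical_lemma} is the type at time $t_1$, not at $t$, and no argument is given (nor is one apparent) that the lost information is irrelevant.

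The paper closes this hole by a different organization of the same peeling idea, with no induction on $|\tup x|$ inside the claim. One orders all variables of $\tup x$ by the meeting times $t_1\leq\dots\leq t_{|\tup x|}$ of the parts $\tup u(x_i)$ and iterates Lemma~\ref{lemma:technical_lemma} together with Lemma~\ref{lem:ltp_consistency} down through these times: at each step the \emph{retained} piece is $\ltp^k_{t_j}(\tup a_{\tup x_j}\tup b_{\tup y_j})$, whose $\tup y$-side $\tup y_j\subsetneq\tup y$ consists only of color classes already assembled by time $t_j$ (so the outer induction of Lemma~\ref{lem:bounded_number_of_ltp} on $|\tup y|$ applies to the relations $\sim_j$ they induce), while the piece carrying the full $\tup b$ is pushed down to the next meeting time and, after the last step, all the way to $\ltp^k_1(\tup b)$. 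The dependence on this last, fully informative piece is then not bounded by counting its values but absorbed into the choice of a function $f_{\tup b}$ from the bounded type sets into $\Types^k_{\tup u\tup v,t}$; by Lemma~\ref{lem:ltp_number_of_types} there are only $\Oh_{d,k,\tup x,\tup y}(1)$ such functions, and intersecting ``$f_{\tup b}=f_{\tup b'}$'' with the relations $\sim_j$ gives a refinement of $\hsim_{\tup u}$ with boundedly many classes. If you want to salvage your single-step version, you would have to reproduce this device (keep only $\tup b_{\tup z}$ at time $t_1$ and recurse on the remaining $\tup x$-variables at \emph{their} meeting times, absorbing the terminal dependence on $\tup b$ into a bounded family of functions); as written, the inductive step does not go through.
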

 \begin{proof}
Let us enumerate the variables of $\tup x$ as
$\seq x {|\tup x|}$ so that
$$\meet(\tup u(x_1))\leq \meet(\tup u(x_2))\leq \dots \leq \meet(\tup u(x_{|\tup x|})).$$
For brevity, for $i\in [|\tup x|]$ denote $\tup x_i\coloneqq \{x_1,\ldots,x_i\}$, $t_i\coloneqq \meet(\tup u(x_i))$, and $s\coloneqq t_{|\tup x|}$. Note that $t_1\leq t_2\leq \ldots\leq t_{|\tup x|}=s\leq t$. We may assume that $s>1$, for otherwise $\tup u$ is a tuple of singletons and there is only one tuple $\tup a$ with $\iwarp{\tup a}{t}=\tup u$, implying that $\hsim_{\tup u}$ has at most $|\Types^k_{\tup u\tup v,t}|$ equivalence classes, which is of $\Oh_{d,k,\tup x,\tup y}(1)$ by Lemma~\ref{lem:ltp_number_of_types}.

Observe that since $t=\meet(\tup p(y))$ for some $y\in \tup y$, the part of $\Pp_t$ that was obtained from contracting two parts of $\Pp_{t-1}$ must be contained in $V(G)\setminus A$. Similarly, since $s=\meet(\tup u(x_{|\tup x|}))$, the part of $\Pp_s$ that was obtained from contracting two parts of $\Pp_{s-1}$ must be contained in $A$. In particular it must be the case that  $s\neq t$. We conclude that $s<t$.

For every $i\in [|\tup x|]$, let $\tup y_i\subseteq \tup y$ be the set of those variables $y\in \tup y$ for which we have $\meet(\tup p(y))\leq t_i$. Since $t_{|\tup x|}=s<t$ and $t=\max_{y\in \tup y} \meet(\tup p(y))$, we have $\tup y_1 \subseteq \ldots \subseteq \tup y_{|\tup x|}\subsetneq \tup y$.

For a moment fix a tuple $\tup b\in \tups_{\tup p}$ and
consider any $\tup a\in A^{\tup x}$ such that $\iwarp{\tup a}{t}=\tup u$. By Lemma~\ref{lem:ltp_consistency} we have that $\ltp^k_{t}(\tup a \tup b)$ depends only on  
$\ltp^k_{t_{|\tup x|}}(\tup a \tup b)$. Next, since $t_{|\tup x|}=\meet(\tup u(x_{|\tup x|}))$,
by Lemma \ref{lemma:technical_lemma} we infer that $\ltp^k_{t_{|\tup x|}}(\tup a \tup b)$ depends only on the types $\ltp^k_{t_{|\tup x|}}(\tup a_{\tup x_{|\tup x|-1}}\tup b)$ and $\ltp^k_{t_{|\tup x|}}(\tup a\tup b_{\tup y_{|\tup x|}})$. Further, by Lemma~\ref{lem:ltp_consistency}, $\ltp^k_{t_{|\tup x|}}(\tup a_{\tup x_{|\tup x|-1}}\tup b)$ depends only on $\ltp^k_{t_{|\tup x|-1}}(\tup a_{\tup x_{|\tup x|-1}}\tup b)$.
By iterating this argument, we obtain that $\ltp^k_{t}(\tup a\tup b)$ depends only on the following types:
$$\ltp^k_{t_j}(\tup a_{\tup x_j}\tup b_{\tup y_j})\textrm{ for every }j \in [{|\tup x|}],\quad\textrm{ and } \ltp^k_{1}(\tup b).$$
In other words, for every $\tup b\in \tups_{\tup p}$ there is a function 
$$f_{\tup b}\colon \Types^k_{\tup u_{\tup x_1}\tup v_{\tup y_1},t_1}\times \ldots \times \Types^k_{\tup u_{\tup x_{|\tup x|}}\tup v_{\tup y_{|\tup x|}},t_{|\tup x|}}\to \Types^k_{\tup u\tup v,t}$$ such that for every $\tup a\in A^{\tup x}$ with $\tup u=\iwarp{\tup a}{t}$, we have 
\begin{equation}\label{eq:nutria}
f_{\tup b}(\alpha_1,\ldots,\alpha_{|\tup x|})=\ltp^k_t(\tup a\tup b),\qquad \textrm{where }\alpha_j=\ltp^k_{t_j}(\tup a_{\tup x_j}\tup b_{\tup y_j}).\end{equation} 
Observe that by Lemma~\ref{lem:ltp_number_of_types}, the number of different functions $f_{\tup b}$ as above is bounded by some constant $q\in \Oh_{d,k,\tup x,\tup y}(1)$.

For every $j\in [|\tup x|]$, let $\sim_j$ be the equivalence relation on $\tups_{\tup p}$ defined as follows:
\[
    \tup b \sim_j \tup b' 
    \qquad\textrm{if and only if}\qquad \ltp_{t_j}^k(\tup a_{\tup x_j}\tup b_{\tup y_j}) = \ltp_{t_j}^k(\tup a_{\tup x_j}\tup b'_{\tup y_j})\textrm{ for all }\tup a \in A^{\tup x}.
    \]
Since $|\tup y_j|<|\tup y|$ for all $j\in [|\tup x|]$, we may apply the induction assumption to infer that each equivalence relation $\sim_j$ has at most $c$ equivalence classes, where $c\in \Oh_{d,k,\tup x,\tup y}(1)$ is the bound obtained in the previous induction step.
Finally, define equivalence relation $\hsim^\star$ on $\tups_{\tup p}$ as follows:
\[
\tup b\hsim^\star \tup b'\qquad \textrm{if and only if}\qquad f_{\tup b}=f_{\tup b'}\ \textrm{ and }\ \tup b\sim_j\tup b' \textrm{ for all }j\in [|\tup x|].
\]
Clearly, $\hsim^\star$ has at most $q\cdot c^{|\tup x|}$ equivalence classes. It remains to observe that by~\eqref{eq:nutria}, $\hsim^\star$ refines~$\hsim$, so also $\hsim$ has at most $q\cdot c^{|\tup x|}$ equivalence classes.
\cqed\end{proof}

With Claim~\ref{cl:special_case} established, we proceed to the induction step in the general case. The proof is by a second induction, this time on $|\tup x|$. For $|\tup x| = 0$ the statement clearly follows from Lemma~\ref{lem:ltp_number_of_types}, so assume~$|\tup x| \ge 1$.

Let $\Ff$ be the set of all tuples $\tup u\in \Pp_t^{\tup x}$ such that $\tup u(x) \subseteq A$ and $\dist_t(\tup u(x),\tup v)\leq 2^k|\tup x|$ for all $x\in \tup x$. Clearly, we have $|\Ff|\leq \Oh_{d,k,\tup x,\tup y}(1)$. For each $\tup u\in \Ff$ we may consider the equivalence relation $\hsim_{\tup u}$, and by Claim~\ref{cl:special_case}, $\hsim_{\tup u}$ has $\Oh_{d,k,\tup x,\tup y}(1)$ equivalence classes.

Further, for every $\tup x'\subsetneq \tup x$, let $\sim_{\tup x'}$ be the equivalence relation defined in the same way as~$\sim$, but with respect to $\tup x'$-tuples. By the induction assumption, the number of equivalence classes of $\sim_{\tup x'}$ is $\Oh_{d,k,\tup x,\tup y}(1)$.

We now define the equivalence relation $\sim^\star$ on $\tups_{\tup p}$ as follows:
\[\tup b\sim^\star \tup b'\qquad\textrm{if and only if}\qquad \tup b\hsim_{\tup u}\tup b'\textrm{ for all }\tup u\in \Ff \textrm{ and }\tup b\sim_{\tup x'} \tup b'\textrm{ for all }\tup x'\subsetneq \tup x.\]
Clearly, $\sim^\star$ again has $\Oh_{d,k,\tup x,\tup y}(1)$ equivalence classes. So it remains to show that $\sim^\star$ refines $\sim$. That is, we need to show that for all $\tup b,\tup b'\in \tups_{\tup p}$, if $\tup b\sim^\star \tup b'$ then $\tup b\sim \tup b'$.

Let us then fix any $\tup b,\tup b'\in \tups_{\tup p}$ such that $\tup b\sim^\star \tup b'$. Consider any $\tup a\in A^{\tup x}$; our goal is to show that 
\begin{equation}\label{eq:morswin}\ltp^k_t(\tup a\tup b)=\ltp^k_t(\tup a\tup b').
\end{equation}
Let $\tup u \coloneqq \iwarp{\tup a}{t}$. If we have $\tup u\in \Ff$, then~\eqref{eq:morswin} follows from $\tup b\hsim_{\tup u}\tup b'$. Therefore, we may henceforth assume that $\tup u\notin \Ff$, or in other words, there exists $m\in \tup x$ such that $\dist(\tup u(m),\tup v)>2^k|\tup x|$. From this it follows that we may partition $\tup x$ into two subsets $\tup x'$ and $\tup x''$ so that $m\in \tup x''$ and for all $x\in \tup x''$ and $z\in \tup x'\tup y$, we have $\dist(\tup u(x),\tup u\tup v(z))>2^k$.
By Lemma~\ref{lem:ltp_compositionality} we infer that $\ltp^k_t(\tup a\tup b)$ depends only on $\ltp^k_t(\tup a_{\tup x''})$ and $\ltp^k_t(\tup a_{\tup x'}\tup b)$, and the same for $\tup b'$. But $\ltp^k_t(\tup a_{\tup x'}\tup b)=\ltp^k_t(\tup a_{\tup x'}\tup b')$, because $\tup x'\subsetneq \tup x$ and $\tup b\sim_{\tup x'} \tup b'$. So~\eqref{eq:morswin} holds and we are done.
\end{proof}

\bibliographystyle{abbrv}
\bibliography{ref}

\newpage
\appendix
\section{Effective computation of functions between local types}
\label{app:effective}
\newcommand{\trim}{\textsf{trim}}
\newcommand{\join}{\textsf{join}}
\newcommand{\promote}{\textsf{promote}}

Let $G$ be a graph, $\Pp = (\Pp_1,\ldots, \Pp_n)$ a contraction sequence of $G$ and $\tup x$ and $\tup y$ disjoint finite sets of variables.

For any $k\in \N$ and $\tup u \in \Pp_s^{\tup x}$, $\tup v \in \Pp_s^{\tup y}$ with $\dist_s(\tup u, \tup v) > 2^k$ we will define:
\begin{itemize}
\item Operation $\trim_{\tup u, s}^k: \Types_{\tup u, s}^k \to \Types_{\tup{u}, s}^{k-1}$ such that for any $\tup a \in V^{\tup x}$ with $\tup u = \iwarp{\tup a}{s}$ it holds that $\ltp_s^{k-1}(\tup{a}) = \trim_{\tup u, s}^k(\ltp_s^k(\tup a))$.
\item Operation $\join_{\tup{u},\tup{v}, s}^k: \Types_{\tup u, s}^k \times \Types_{\tup v, s}^k \to \Types_{\tup{uv}, s}^k$  such that for any $\tup a \in V^{\tup x}$, $\tup b \in V^{\tup y}$ with $\tup u = \iwarp{
\tup a}{s}$, $\tup v = \iwarp{\tup b}{s}$ it holds that $\ltp_s^k(\tup{ab}) = \join_{\tup{u},\tup{v}, s}^k(\ltp_s^k(\tup a), \ltp_s^k(\tup b))$.
\item Operation $\promote_{\tup u, s}^k: \Types_{\tup u, s}^k \to \Types_{\tup{u}, s+1}^{k}$ such  that for any $\tup a \in V^{\tup x}$ with $\tup u = \iwarp{\tup a}{s}$ it holds that $ \ltp_{s+1}^{k}(\tup{a}) = \promote(\ltp_s^k(\tup a))$.
\end{itemize}

After defining these operations, we prove their correctness and argue (in Section~\ref{app:tables}) that the functions they define can be computed in constant time for any given $k$, $d$ and $\tup x$, where $d$ is the width of the contraction sequence $\Pp = (\Pp_1,\ldots, \Pp_n)$. We remark that we are not merely interested in showing that these operations can be evaluated in constant time on any given input, but in showing that the whole input-output `table' of each operation can be constructed in constant time.

\subsubsection*{Operation $\trim$}
If $S$ is an atomic type over $\tup xz$ and $\tup uv \in \Pp_s^{\tup xz}$, then $\trim_{\tup uv, s}^0((S,\tup uv)):=(S', \tup u)$, where $S'$ is obtained from $S$ by removing all formulas containing variable $z$.
Let $s \in [n]$ be a time with $s\ge1$ and $\tup u \in \Pp_s^{\tup x}$. For $k\ge 1$ we define
$$ \trim_{\tup u, s}^k(\alpha) :=\setof{\trim_{\tup u, s}^{k-1}(\beta)}{\beta \in \alpha, \beta \in \Types_{\tup uv, s}^{k-1} \text{ and } \dist_s(\tup u,v)\le 2^{k-2}}.$$

\begin{lemma}\label{lem:trim}
Let $s \in [n]$ be a time, $\tup u \in \Pp_s^{\tup x}$ and $k\ge 1$. For any  $\tup a \in V^{\tup x}$ with $\tup u = \iwarp{\tup a}{s}$ it holds that $\ltp_s^{k-1}(\tup{a}) = \trim_{\tup u, s}^k(\ltp_s^k(\tup a))$.
\end{lemma}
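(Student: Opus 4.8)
The plan is to prove the identity by induction on $k\ge1$. Since the inductive step will need the statement for the longer tuple $\tup ab$ — hence for a variable set with one more element than $\tup x$ — I would first note that the claim should be read as quantified over all finite nonempty variable sets simultaneously: for every such set $\tup x$, every time $s$, every $\tup u\in\Pp_s^{\tup x}$, every $k\ge1$, and every $\tup a\in V^{\tup x}$ with $\tup u=\iwarp{\tup a}{s}$, one has $\ltp_s^{k-1}(\tup a)=\trim_{\tup u,s}^k(\ltp_s^k(\tup a))$. The single arithmetic fact underlying everything is that the radius $2^{k-2}$ appearing in the definition of $\trim_{\tup u,s}^k$ is exactly the radius $2^{(k-1)-1}$ in the definition of $\ltp_s^{k-1}$, so the ranges of quantification over parts line up.

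For the base case $k=1$, I would unfold $\ltp_s^1(\tup a)=\setof{\ltp_s^0(\tup ab)}{b\in w,\ \dist_s(\tup u,w)\le1}$ and observe that, since distances in $\Imp_s$ are nonnegative integers, the guard $\dist_s(\tup u,v)\le 2^{-1}$ in the definition of $\trim^1$ selects exactly the parts $v$ occurring in $\tup u$ (those with $\dist_s(\tup u,v)=0$). For such a part $v$ and any $b\in v$, the operation $\trim_{\tup uv,s}^0$ deletes from the level-$0$ data all formulas mentioning the fresh variable and drops the $v$-component; since the atomic type of $\tup ab$ restricted to $\tup x$ is the atomic type of $\tup a$ and $\iwarp{\tup ab}{s}$ restricted to $\tup x$ is $\tup u$, this yields $\ltp_s^0(\tup a)$. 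Hence $\trim_{\tup u,s}^1(\ltp_s^1(\tup a))$ is the singleton $\set{\ltp_s^0(\tup a)}$, and reading $\trim^1$ as returning the unique element of that set (as its codomain $\Types_{\tup u,s}^0$ dictates) gives exactly $\ltp_s^0(\tup a)$. I would flag here the mild type-level sloppiness — a level-$0$ type is a pair, while the displayed formula literally produces a set — as the only awkward point of the base case.

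For the inductive step $k\ge2$, assuming the statement at $k-1$ for all variable sets, I would unfold the definition: $\trim_{\tup u,s}^k(\ltp_s^k(\tup a))$ is the set of all $\trim_{\tup uv,s}^{k-1}(\beta)$ where $\beta\in\ltp_s^k(\tup a)$, $\beta\in\Types_{\tup uv,s}^{k-1}$, and $\dist_s(\tup u,v)\le2^{k-2}$. A realized type $\beta=\ltp_s^{k-1}(\tup ab)\in\ltp_s^k(\tup a)$ carries, by recursion down to its level-$0$ data, the value $\iwarp{\tup ab}{s}=\tup uv$, so the part $v$ is determined by $\beta$ to be $\iwarp{b}{s}$; consequently the $\beta$ surviving the guard are precisely the $\ltp_s^{k-1}(\tup ab)$ with $b$ in a part at distance at most $2^{k-2}$ from $\tup u$ — and these indeed belong to $\ltp_s^k(\tup a)$ since $2^{k-2}\le2^{k-1}$. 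For each such $b$ I would apply the inductive hypothesis to $\tup ab$ (whose tuple of parts is $\tup uv$), obtaining $\trim_{\tup uv,s}^{k-1}(\ltp_s^{k-1}(\tup ab))=\ltp_s^{k-2}(\tup ab)$. Thus $\trim_{\tup u,s}^k(\ltp_s^k(\tup a))=\setof{\ltp_s^{k-2}(\tup ab)}{b\in v,\ \dist_s(\tup u,v)\le2^{k-2}}$, which is exactly $\ltp_s^{k-1}(\tup a)$ by Definition~\ref{def:ltp}; the two inclusions are just the two directions of this computation, matched via the witnessing vertices $b$.

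I do not expect a real obstacle: this is a bookkeeping induction mirroring the definitions of $\ltp$ and of $\trim$. The two places needing a little care are (i) the $k=1$ edge case with the type-level mismatch noted above, and (ii) checking that the radius guard $2^{k-2}$ in $\trim^k$ captures exactly the parts appearing in $\ltp_s^{k-1}$, which is the coincidence $2^{k-2}=2^{(k-1)-1}$. Note also that the hypothesis $\dist_s(\tup u,\tup v)>2^k$ used by the other operations such as $\join$ is irrelevant here, since $\trim$ concerns a single tuple; I would simply not mention it.
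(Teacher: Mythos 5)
Your proof is correct and takes essentially the same route as the paper's: induction on $k$, with the base case $k=1$ handled by observing that the level-$0$ trimming just deletes the formulas involving the fresh variable, and the step for $k\ge 2$ applying the inductive hypothesis to the extended tuple $\tup a b$ together with the coincidence $2^{k-2}=2^{(k-1)-1}$; your equality-chain formulation is just the paper's two inclusions combined, matched via the witnessing vertices $b$. The type-level quirk at $k=1$ (a set versus a single level-$0$ type) that you flag is present in, and glossed over by, the paper's own proof as well, and your reading of it is the intended one.
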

\begin{proof}
By induction on $k$. For $k=1$ we have that any $\ltp^1_{\tup u, s}(\tup a)$ is a set of pairs $(S,\tup uv)$, where $S$ is an atomic type over $\tup xz$ and $S$ restricted to variables from $x$ is the atomic type of $\tup a$ in $G$. For each such pair the operation $\trim_{\tup uv, s}^0$ returns the pair  $(S', \tup u)$, where $S'$ is obtained from $S$ by removing all formulas involving variable $z$, which means that $S'$ is the atomic type of $\tup a$ in $G$, as desired.

For $k>1$, we will show that $\ltp_s^{k-1}(\tup{a}) \subset \trim_{\tup u, s}^k(\ltp_s^k(\tup a))$ and then $\ltp_s^{k-1}(\tup{a}) \supseteq \trim_{\tup u, s}^k(\ltp_s^k(\tup a))$.
Let $\alpha \in \ltp_{\tup u, s}^{k-1}(\tup a)$. Then $\alpha = \ltp^{k-2}_{\tup uv, s}(\tup ab)$ for some $b \in v$, where $v$ is a part of $\Pp_s$ with $\dist_s(\tup u,v) \le 2^{k-2}$. Since $\dist_s(\tup u,v) \le 2^{k-2}$, we have that  $\ltp^{k-1}_{\tup uv, s}(\tup ab) \in \ltp_{\tup u, s}^{k}(\tup a)$, and by applying the induction hypothesis we know that $\alpha = \ltp^{k-2}_{\tup uv, s}(\tup ab) =\trim_{\tup uv, s}^{k-1}(\tup ab)$, and therefore $\alpha \in \trim_{\tup u, s}^{k}(\ltp_{\tup u, s}^{k}(\tup a))$

To show that $\ltp_s^{k-1}(\tup{a}) \supseteq \trim_{\tup u, s}^k(\ltp_s^k(\tup a))$, let $\alpha \in  \trim_{\tup u, s}^k(\ltp_s^k(\tup a))$. Then $\alpha = \trim_{\tup uv, s}^{k-1}(\tup ab)$ for some $b \in v$, where $\dist_s(\tup u, v) \le 2^{k-2}$. By induction hypothesis $\trim_{\tup uv, s}^{k-1}(\tup ab) = \ltp_{\tup uv,s}^{k-2}(\tup a)$. Since $\dist_s(\tup u, v) \le 2^{k-2}$ and $b \in v$, by the definition of local types it follows that $ \ltp_{\tup uv,s}^{k-2}(\tup ab)$ is in $\ltp_{\tup u, s}^{k-1}(\tup a)$, which finishes the proof.
\end{proof}


\subsubsection*{Operation $\join$}
For two $0$-types $\alpha \in \Types_{\tup u, s}^0$ and $\beta \in \Types_{\tup v,s}^0$ such that $\dist_s(\tup u, \tup v) > 2^0$ we define 
\linebreak[4] $\join_{\tup{u},\tup{v}, s}^0((S_1, \tup u),(S_2, \tup v)):=(S, \tup{uv})$, 
where 
\begin{equation*} \label{eq1}
\begin{split}
S  := & S_1 \cup S_2 \cup \setof{x \not= y}{x \in \tup x, y \in \tup y}  \\ 
 & \cup   \setof{E(x,y)}{x \in \tup x, y \in \tup y \text{ and $u(x)$ is  adjacent to $v(u)$ in $G_s$}}   \\
 & \cup  \setof{\lnot E(x,y)}{x \in \tup x, y \in \tup y \text{ and $u(x)$ is not adjacent to $v(u)$ in $G_s$}}.
\end{split}
\end{equation*}

Let $z$ be a variable not in $\tup x$ or $\tup y$. For $k>0$ and two $0$-types $\alpha \in \Types_{\tup u, s}^0$ and $\beta \in \Types_{\tup v,s}^0$ such that $\dist_s(\tup u, \tup v) > 2^k$ we define 
\begin{equation*}
\begin{split}
\join_{\tup{u},\tup{v}, s}^k(\alpha, \beta):= &\setof{\join_{\tup{u}w,\tup{v}, s}^{k-1}(\gamma, \trim^k_{\tup v, s}(\beta))}{\gamma \in \Types_{\tup uw,s}^{k-1}, \tup uw\in \Pp_s^{\tup x\cup z} \text{ and } \dist_s(\tup u, w)\le 2^{k-1}}  \\
& \cup  \setof{\join_{\tup{u},\tup{v}w, s}^{k-1}(\trim^k_{\tup u, s}(\alpha), \gamma)}{\gamma \in \Types_{\tup vw,s}^{k-1}, \tup vw\in \Pp_s^{\tup y\cup z} \text{ and } \dist_s(\tup u, w)\le 2^{k-1}}
\end{split}
\end{equation*}

\begin{lemma}\label{lem:compose}
Let $\tup x$ and $\tup y$ be disjoint finite set of variables and let $s \in [n]$ be a time.
For any $k\in \N$ and $\tup u \in \Pp_s^{\tup x}$, $\tup v \in \Pp_s^{\tup y}$ with $\dist_s(\tup u, \tup v) > 2^k$ and tuples
$\tup a \in V^{\tup x}$, $\tup b \in V^{\tup y}$ with $\tup u = \iwarp{
\tup a}{s}$, $\tup v = \iwarp{\tup b}{s}$ it holds that $\ltp_s^k(\tup{ab}) = \join_{\tup{u},\tup{v}, s}^k(\ltp_s^k(\tup a), \ltp_s^k(\tup b))$.
\end{lemma}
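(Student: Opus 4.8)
The plan is to prove Lemma~\ref{lem:compose} by induction on $k$, mirroring exactly the structure of the definition of $\join_{\tup u,\tup v,s}^k$ and reusing the correctness of $\trim$ established in Lemma~\ref{lem:trim}. The key observation is that the recursive definition of $\join$ is essentially a syntactic transcription of the inductive proof of Lemma~\ref{lem:ltp_compositionality}, so the work consists of checking that each clause of the definition of $\join_{\tup u,\tup v,s}^k(\alpha,\beta)$ produces exactly the members of $\ltp_s^k(\tup{ab})$.

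First I would handle the base case $k=0$. Here $\ltp_s^0(\tup{ab})$ consists of the atomic type of $\tup{ab}$ together with the tuple $\iwarp{\tup{ab}}{s}=\tup{uv}$. Since $\dist_s(\tup u,\tup v)>2^0=1$, every part occurring in $\tup u$ is pure towards every part occurring in $\tup v$, so whether $E(x,y)$ holds for $x\in\tup x$, $y\in\tup y$ is determined by whether $\tup u(x)$ is adjacent to $\tup v(y)$ in $G_s$ — this is exactly the reasoning in the $k=0$ case of Lemma~\ref{lem:ltp_compositionality}. The formulas internal to $\tup x$ come from $S_1=$ the atomic type recorded in $\ltp_s^0(\tup a)$, those internal to $\tup y$ from $S_2$, and the cross formulas are added by the explicit clauses in the definition of $\join^0$; and all $x\neq y$ cross-(in)equalities hold since the parts are distinct. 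Thus $\join_{\tup u,\tup v,s}^0(\ltp_s^0(\tup a),\ltp_s^0(\tup b))$ equals $\ltp_s^0(\tup{ab})$.

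For the inductive step $k>0$, I would argue both inclusions. For $\supseteq$: every element of $\join_{\tup u,\tup v,s}^k(\alpha,\beta)$ has one of two forms. Consider the first form, $\join_{\tup uw,\tup v,s}^{k-1}(\gamma,\trim_{\tup v,s}^k(\beta))$ with $\gamma\in\Types_{\tup uw,s}^{k-1}$ a member of $\alpha=\ltp_s^k(\tup a)$, hence $\gamma=\ltp_s^{k-1}(\tup ac)$ for some $c\in w$ with $\dist_s(\tup u,w)\le 2^{k-1}$. By Lemma~\ref{lem:trim}, $\trim_{\tup v,s}^k(\beta)=\ltp_s^{k-1}(\tup b)$. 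Since $\dist_s(\tup u,\tup v)>2^k$ and $\dist_s(\tup u,w)\le 2^{k-1}$, the triangle inequality in $\Imp_s$ gives $\dist_s(\tup uw,\tup v)>2^{k-1}$, so the induction hypothesis applies to $\tup ac$ and $\tup b$, yielding $\join_{\tup uw,\tup v,s}^{k-1}(\ltp_s^{k-1}(\tup ac),\ltp_s^{k-1}(\tup b))=\ltp_s^{k-1}(\tup{ab}c)$; since $\dist_s(\tup{uv},w)\le\dist_s(\tup u,w)\le 2^{k-1}$, this type is a member of $\ltp_s^k(\tup{ab})$. The second form is symmetric (here one quantifies a vertex $c$ in a part $w$ close to $\tup v$, and one needs $\trim_{\tup u,s}^k(\alpha)=\ltp_s^{k-1}(\tup a)$). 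For $\subseteq$: take any member $\ltp_s^{k-1}(\tup{ab}c)$ of $\ltp_s^k(\tup{ab})$, so $c\in w$ with $\dist_s(\tup{uv},w)\le 2^{k-1}$; as in the proof of Lemma~\ref{lem:ltp_compositionality}, since $\dist_s(\tup u,\tup v)>2^k$ exactly one of $\dist_s(\tup u,w)\le 2^{k-1}$, $\dist_s(\tup v,w)\le 2^{k-1}$ holds, say the former; then $\ltp_s^{k-1}(\tup ac)\in\alpha$ and the induction hypothesis (applicable since $\dist_s(\tup uw,\tup v)>2^{k-1}$) shows $\ltp_s^{k-1}(\tup{ab}c)=\join_{\tup uw,\tup v,s}^{k-1}(\ltp_s^{k-1}(\tup ac),\ltp_s^{k-1}(\tup b))$, which is in the first clause of $\join_{\tup u,\tup v,s}^k(\alpha,\beta)$ using $\trim_{\tup v,s}^k(\beta)=\ltp_s^{k-1}(\tup b)$. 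The other case is symmetric.

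The main obstacle — really the only delicate point — is bookkeeping the distance inequalities consistently: one must make sure that when recursing on $\join^{k-1}$ the hypothesis $\dist_s(\tup uw,\tup v)>2^{k-1}$ genuinely holds, which rests on the triangle inequality $\dist_s(\tup uw,\tup v)\ge \dist_s(\tup u,\tup v)-\dist_s(\tup u,w)>2^k-2^{k-1}=2^{k-1}$ in the impurity graph, and that the parts $w$ appearing in the definition are exactly those within distance $2^{k-1}$ of $\tup u$ (resp.\ $\tup v$), matching the parts over which one quantifies in the definition of $\ltp_s^k(\tup a)$ (resp.\ $\ltp_s^k(\tup b)$). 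A secondary point is that the ``$\dist_s(\tup u,w)\le 2^{k-1}$'' condition in the second clause should be read as the relevant proximity condition for $w$ to $\tup v$ rather than to $\tup u$; I would either note this as a harmless typo or phrase the clauses symmetrically. Once these are pinned down, every step is a direct unfolding of definitions together with Lemma~\ref{lem:trim} and the induction hypothesis, with no genuine new content beyond Lemma~\ref{lem:ltp_compositionality}.
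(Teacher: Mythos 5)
Your proposal is correct and follows essentially the same route as the paper's own proof: induction on $k$, purity of cross pairs at $k=0$, and for $k>0$ both inclusions via the two clauses of the definition of $\join$, Lemma~\ref{lem:trim}, and the induction hypothesis. The only additions are cosmetic improvements over the paper's write-up — you spell out the triangle-inequality check $\dist_s(\tup uw,\tup v)>2^{k}-2^{k-1}=2^{k-1}$ needed to invoke the induction hypothesis (which the paper leaves implicit here, having done it in Lemma~\ref{lem:ltp_compositionality}), and you correctly flag that the condition in the second clause of the definition of $\join^k$ should read $\dist_s(\tup v,w)\le 2^{k-1}$.
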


\begin{proof}
By induction on $k$. For $k=0$, the local $0$-type of $\tup{ab}$  is $(S,\tup{uv})$, where $S$ is the atomic type of $\tup{ab}$ in $G$. Note that we have $\tup{ab} \in V(G)^{\tup{xy}}$. The adjacency between $\tup{ab}(x)$ and $\tup{ab}(x')$ for any $x,x' \in \tup x$ is determined by the atomic type of $\tup a$ in $G$, and analogously, the adjacency between $\tup{ab}(y)$ and $\tup{ab}(y')$ for any $y,y' \in \tup y$ is determined by the atomic type of $\tup b$ in $G$.
Since $\dist_s(\tup u, \tup v) > 2^0 = 1$, it holds that for any $x \in \tup x$ and $y \in \tup y$ the pair $\tup u(x)$, $\tup v(y)$ is pure, and so the adjacency between $\tup{ab}(x)$ and $\tup{ab}(y)$ for any $x \in \tup x$ and $y \in \tup y$ is the same as the adjacency between $\tup u(x)$, $\tup v(y)$ in $G_s$. This is exactly what the definition of $\join_{\tup{u},\tup{v}, s}^0$ describes.

Let $k > 0$. We will prove that $\ltp_s^k(\tup{ab}) \subset \join_{\tup{u},\tup{v}, s}^k(\ltp_s^k(\tup a), \ltp_s^k(\tup b))$ and \linebreak[4] $\ltp_s^k(\tup{ab}) \supseteq \join_{\tup{u},\tup{v}, s}^k(\ltp_s^k(\tup a), \ltp_s^k(\tup b))$. 
Let $\alpha \in \ltp_s^k(\tup{ab})$. 
Then $\alpha = \ltp_s^{k-1}(\tup{ab}c)$ for some $c \in w$, where $\dist(\tup{uv},w) < 2^{k-1}$ and $\tup uv \in \Pp_s^{\tup x \cup z}$. 
Since  $\dist_s(\tup u, \tup v) > 2^k$, exactly one of $\dist(\tup{u},w) < 2^{k-1}$ and $\dist(\tup{v},w) < 2^{k-1}$ has to hold. 
Assume it is the former, the latter case will be analogous. Since $\dist(\tup{u},w) < 2^{k-1}$, it holds that $\ltp_s^{k-1}(\tup{a}c) \in \ltp_s^{k}(\tup{a})$. 
By Lemma~\ref{lem:trim} we have that $\ltp_s^{k-1}(\tup b) = \trim_{\tup v,s}^k(\tup b)$, and by induction hypothesis we have that  $$\ltp_{s}^{k-1}(\tup ac\tup b) = \join_{\tup{u}w,\tup{v}, s}^{k-1}(\ltp_s^{k-1}(\tup ac),\trim_{\tup v,s}^k(\tup b)).$$
By the definition of $\join_{\tup{u},\tup{v},s}^{k}$, this belongs to  $\join_{\tup{u},\tup{v}, s}^k(\ltp_s^k(\tup a), \ltp_s^k(\tup b))$, and since $\ltp_{s}^{k-1}(\tup ac\tup b) = \ltp_{s}^{k-1}(\tup a\tup bc)$, the claim follows.

To prove that $\ltp_s^k(\tup{ab}) \supseteq \join_{\tup{u},\tup{v}, s}^k(\ltp_s^k(\tup a), \ltp_s^k(\tup b))$, let $\alpha \in \join_{\tup{u},\tup{v}, s}^k(\ltp_s^k(\tup a), \ltp_s^k(\tup b))$. 
Then $\alpha = \join_{\tup{u}w,\tup{v}, s}^{k-1}(\ltp_s^{k-1}(\tup ac),\trim_{\tup v,s}^k(\tup b))$ for some $c \in w$, where $\dist_s(\tup u,w)\le 2^{k-1}$ or \linebreak[4] $\alpha = \join_{\tup{u},\tup{v}w, s}^{k-1}((\trim_{\tup v,s}^k(\tup a), \ltp_s^{k-1}(\tup bc))$ for some $c \in w$, where $\dist_s(\tup v,w)\le 2^{k-1}$. We focus on the former case; the latter is analogous. By Lemma~\ref{lem:trim} it holds that $\trim_{\tup v,s}^k(\tup b) = \ltp_{s}^{k-1}(\tup b)$ and from the induction hypothesis it follows that  $\alpha = \join_{\tup{u}w,\tup{v}, s}^{k-1}(\ltp_s^{k-1}(\tup ac),\ltp_{s}^{k-1}(\tup b)) = \ltp_{s}^{k-1}(\tup ac\tup b)$, which is the same as $\ltp_{s}^{k-1}(\tup a\tup bc)$ and which belongs to $\ltp_s^k(\tup{ab})$, as desired.
\end{proof}
 
%
%

\subsubsection*{Operation $\promote$}
Let $s \in [n-1]$ be a time and $\tup u \in \Pp_s^{\tup x}$.
For any $(S,\tup u) \in \Types_{\tup u, s}^0$, then we define
$$ \promote_{\tup u, s}^0 := (S, \warp{\tup u}{s}{s+1}) $$

For $k>0$ and $\alpha \in \Types_{\tup u,s}^k$ we define
\begin{equation*} \label{eq1}
\begin{split}
\promote_{\tup u, s}^k(\alpha)  := &  \setof{\promote_{\tup uw, s}^{k-1}(\beta)}{\beta \in \alpha} \cup  \\
 & \setof{\promote_{\tup uv, s}^{k-1}(\join^{k-1}(\alpha,\gamma))}{\gamma \in \Types_{v,s}^k,  v \in V(\Rel_s^{2^{k-1}}),  \\
 &  \dist_s(\tup u, v)>2^{k-1}, \dist_{s+1}(\tup u,v)\le2^{k-1}}.
\end{split}
\end{equation*}

\begin{lemma}
Let $s \in [n]$ be a time, $\tup u \in \Pp_s^{\tup x}$ and $k\ge 1$. For any  $\tup a \in V^{\tup x}$ with $\tup u = \iwarp{\tup a}{s}$ it holds  that $ \ltp_{s+1}^{k}(\tup{a}) = \promote(\ltp_s^k(\tup a))$.
\end{lemma}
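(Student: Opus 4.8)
The plan is to prove the identity $\ltp_{s+1}^k(\tup a) = \promote_{\tup u,s}^k(\ltp_s^k(\tup a))$ by induction on $k$, stated uniformly over all finite variable sets $\tup x$, so that at rank $k$ I may invoke the rank-$(k-1)$ statement for the enlarged variable set $\tup x\cup\{z\}$ and for part-tuples extending $\tup u$. Although the lemma is stated for $k\ge 1$, it is cleanest to include the base case $k=0$: there $\ltp_s^0(\tup a)$ consists of the atomic type $S$ of $\tup a$ in $G$ together with $\tup u=\iwarp{\tup a}{s}$, the atomic type is unaffected by passing to $\Pp_{s+1}$, and $\iwarp{\tup a}{s+1}=\warp{\tup u}{s}{s+1}$, which is exactly what $\promote_{\tup u,s}^0$ outputs. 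The inductive step $k>0$ I would argue by two inclusions, mirroring the proofs of Lemma~\ref{lem:trim} and Lemma~\ref{lem:compose}, and using two standing facts about a single contraction step: first, distances in the impurity graph are non-increasing under a contraction (a path in $\Imp_s$ maps to a walk of at most the same length in $\Imp_{s+1}$, the contracted pair either surviving as edges incident to $B_{s+1}$ or collapsing to a point); second, conversely, if the distance from $\tup u$ to a part $v\in\Pp_s$ strictly drops when passing to $\Pp_{s+1}$, then $v\in V(\Rel_s^{2^{k-1}})$, because any short path witnessing the new distance must pass through $B_{s+1}$ (otherwise it would already be a short path in $\Imp_s$).

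For $(\supseteq)$: a first-family member of $\promote_{\tup u,s}^k(\ltp_s^k(\tup a))$ has the form $\promote_{\tup uw,s}^{k-1}(\beta)$ with $\beta=\ltp_s^{k-1}(\tup ab)\in\ltp_s^k(\tup a)$, so $b\in w$ with $\dist_s(\tup u,w)\le 2^{k-1}$; by the induction hypothesis it equals $\ltp_{s+1}^{k-1}(\tup ab)$, and by distance monotonicity $\dist_{s+1}(\iwarp{\tup a}{s+1},\warp{w}{s}{s+1})\le 2^{k-1}$, so $\ltp_{s+1}^{k-1}(\tup ab)\in\ltp_{s+1}^k(\tup a)$. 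A second-family member has the form $\promote_{\tup uv,s}^{k-1}(\join^{k-1}(\trim_{\tup u,s}^k\alpha,\trim_{v,s}^k\gamma))$ with $\gamma=\ltp_s^k(b)$ for some $b\in v$, where $v\in V(\Rel_s^{2^{k-1}})$, $\dist_s(\tup u,v)>2^{k-1}$ and $\dist_{s+1}(\tup u,v)\le 2^{k-1}$; by Lemma~\ref{lem:trim} the trimmed arguments are $\ltp_s^{k-1}(\tup a)$ and $\ltp_s^{k-1}(b)$, by Lemma~\ref{lem:compose} (applicable since $\dist_s(\tup u,v)>2^{k-1}$) their $\join$ is $\ltp_s^{k-1}(\tup ab)$, and the induction hypothesis turns $\promote_{\tup uv,s}^{k-1}$ of it into $\ltp_{s+1}^{k-1}(\tup ab)$, which lies in $\ltp_{s+1}^k(\tup a)$ thanks to $\dist_{s+1}(\tup u,v)\le 2^{k-1}$.

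For $(\subseteq)$: pick $\ltp_{s+1}^{k-1}(\tup ab)\in\ltp_{s+1}^k(\tup a)$ and set $v:=\iwarp{b}{s}$, so that $\dist_{s+1}(\iwarp{\tup a}{s+1},\warp{v}{s}{s+1})\le 2^{k-1}$. If $\dist_s(\tup u,v)\le 2^{k-1}$, then $\ltp_s^{k-1}(\tup ab)$ is one of the $\beta$'s in $\ltp_s^k(\tup a)$, and the induction hypothesis produces $\promote_{\tup uv,s}^{k-1}(\ltp_s^{k-1}(\tup ab))=\ltp_{s+1}^{k-1}(\tup ab)$ as a first-family member. If instead $\dist_s(\tup u,v)>2^{k-1}$, then the distance dropped across the step, so by the second standing fact $v\in V(\Rel_s^{2^{k-1}})$; taking $\gamma:=\ltp_s^k(b)\in\Types_{v,s}^k(G)$ and combining Lemma~\ref{lem:trim}, Lemma~\ref{lem:compose} and the induction hypothesis exactly as in the $(\supseteq)$ argument reconstructs $\ltp_{s+1}^{k-1}(\tup ab)$ as the corresponding second-family member for this $v$ and $\gamma$. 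This closes the induction.

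The step I expect to be the main obstacle is this last case of $(\subseteq)$: the $(k-1)$-type $\ltp_{s+1}^{k-1}(\tup ab)$ that now appears in $\ltp_{s+1}^k(\tup a)$ is genuinely invisible at time $s$ — since $b$ is too far from $\tup a$ in $\Imp_s$ for $\ltp_s^{k-1}(\tup ab)$ to belong to $\ltp_s^k(\tup a)$ — so it must be rebuilt out of $\ltp_s^k(\tup a)$ together with the rank-$k$ type of $b$ alone; this is exactly where compositionality ($\join$) is indispensable and where one must verify that $\Rel_s^{2^{k-1}}$ is broad enough to contain every such $v$. One further point should be addressed explicitly: in the second family $\gamma$ must be read as ranging over the \emph{realized} types $\Types_{v,s}^k(G)$ (equivalently, one feeds $\promote$ only the realized-type sets maintained along the contraction sequence, as in the proof of Theorem~\ref{thm:mc}), for otherwise an unrealized $\gamma$ could contribute a spurious element; with that reading — and reading $\join^{k-1}(\alpha,\gamma)$ as $\join^{k-1}(\trim_{\tup u,s}^k\alpha,\trim_{v,s}^k\gamma)$, as the typing forces — the equality is exact.
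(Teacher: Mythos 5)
Your proof is correct and takes essentially the same route as the paper's: induction on $k$ (with the $k=0$ base case), both inclusions via the case split on whether $\dist_s(\tup u,v)\le 2^{k-1}$, invoking Lemma~\ref{lem:trim}, Lemma~\ref{lem:compose} and the inductive hypothesis in exactly the places the paper does. The points you spell out explicitly --- monotonicity of $\Imp$-distances under a contraction, the fact that a strict distance drop forces $v\in V(\Rel_s^{2^{k-1}})$, and reading the second family of $\promote$ over realized types with the trims inserted --- are precisely what the paper's argument uses implicitly, so they complete rather than alter it.
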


\begin{proof}
By induction on $k$. The case $k=0$ follows immediately from the definition of local types and operation $\promote_{\tup uv, s}^0$.

Let $k>0$. We will show that $ \ltp_{s+1}^{k}(\tup{a}) \subset \promote_{\tup uv, s}^k(\ltp_s^k(\tup a))$ and 
$ \ltp_{s+1}^{k}(\tup{a}) \supseteq \promote_{\tup uv, s}^k(\ltp_s^k(\tup a))$. Let $\alpha \in \ltp_{s+1}^{k}(\tup{a})$. Then $\alpha = \ltp_{s+1}^{k-1}(\tup ab)$ for some $b \in w$ such that $\dist_{s+1}(\tup u,w) \le 2^{k-1}$. 
We distinguish two possibilities:
\begin{enumerate}
\item $\dist_s(\tup u, w) \le 2^{k-1}$. In this case it holds $\ltp_s^{k-1}(\tup ab) \in \ltp_s^k(\tup a)$. By induction hypothesis we have $\ltp_{s+1}^{k-1}(\tup ab) = \promote_{\tup uw, s}^{k-1}(\ltp_s^{k-1}(\tup ab))$, and so in this case $\alpha$ is in  $\promote(\ltp_s^k(\tup a))$.
\item $\dist_s(\tup u, w) > 2^{k-1}$. In this case we have $\ltp_{s}^{k-1}(\tup ab) = \join_{\tup u, w,s}^{k-1}(\ltp_s^{k-1}(\tup a),\ltp_s^{k-1}(b))$ by Lemma~\ref{lem:compose}, and by induction hypothesis we get that $$\ltp_{s+1}^{k-1}(\tup ab) = \promote_{\tup uv, s}^k(\join_{\tup u, w,s}^{k-1}(\ltp_s^{k-1}(\tup a),\ltp_s^{k-1}(b))).$$ From the definition of $\promote_{\tup uv, s}^k$ it then follows that  $\ltp_{s+1}^{k-1}(\tup ab)$ is in $\promote_{\tup u, s}^k(\alpha)$
\end{enumerate}

To show that $ \ltp_{s+1}^{k}(\tup{a}) \supseteq \promote_{\tup uv, s}^k(\ltp_s^k(\tup a))$, let $\alpha \in \promote_{\tup uv, s}^k(\ltp_s^k(\tup a))$. 
Then $\alpha$ is either $\promote_{\tup uw, s}^{k-1}(\beta)$ for some $\beta \in \alpha$, or of the form 
$\promote_{\tup uv, s}^{k-1}(\join^{k-1}(\trim^{k-1}(\alpha),\trim^{k-1}(\gamma)))$, where $\gamma$ is a local $k$-type of part $v \in \Pp_s$ with $\dist_s(\tup u, v)>2^{k-1}$ and $\dist_{s+1}(\tup u,v)\le2^{k-1}$.
In the first case $\beta$ is of the form $\ltp_s^{k-1}(\tup ab)$ for some $b \in w$, where $w$ is a part with $\dist_s(\tup u,w) \le 2^{k-1}$, and by induction hypothesis we have that have that $\promote_{\tup uw, s}^{k-1}(\beta) = \ltp_{s+1}^{k-1}(\tup ab)$, which means that $\promote_{\tup uw, s}^{k-1}(\beta) \in \ltp^{k}_{s+1}$, as desired. 
In the second case $\gamma$ is of the form $\ltp_{s}^{k}(b)$ for some $b \in v$ and since we have $\dist_s(\tup u, v)>2^{k-1}$, we can apply the $\join$ operation to $\ltp_{s}^{k-1}(\tup a)$ and  $\ltp_{s}^{k-1}(b)$ to obtain $\ltp_s^{k-1}(\tup{ab})$, and so we have $\join^{k-1}(\alpha,\gamma) = \ltp_s^{k-1}(\tup{ab})$. Finally, by induction hypothesis we have that $ \promote_{\tup uv, s}^{k-1}(\ltp_s^{k-1}(\tup{ab})) = \ltp_{s+1}^{k-1}(\tup{ab})$, and so $\promote_{\tup uv, s}^{k-1}(\join^{k-1}(\alpha,\gamma))$ is in $\ltp_{s+1}^{k}(\tup a)$, as desired.
\end{proof}

\subsection{Computing input-output tables for $\trim$, $\join$ and $\promote$}
\label{app:tables}
We briefly sketch how to compute the input-output tables for $\trim$, $\join$ and $\promote$ in constant time for relevant parameters. More precisely, we argue that:
\begin{itemize}
\item The input-output table for $\trim_{\tup u, s}^k: \Types_{\tup u, s}^k \to \Types_{\tup{u}, s}^{k-1}$  can be computed in time $\Oh_{d,k,\tup x}(1)$ from $\Vic_{s}^{2^k}(\tup u)$.
\item The input-output table for $\join_{\tup{u},\tup{v}, s}^k: \Types_{\tup u, s}^k \times \Types_{\tup v, s}^k \to \Types_{\tup{uv}, s}^k$ can be computed in time $\Oh_{d,k,\tup x,\tup y}(1)$ from vicinity  $\Vic_s^{2^k}(\tup u\tup v)$.
\item The input-output table for $\promote_{\tup u, s}^k: \Types_{\tup u, s}^k \to \Types_{\tup{u}, s+1}^{k}$  can be computed in time $\Oh_{d,k,\tup x}(1)$ from the sub-trigraph of $G_s$ induced by  $V(\Rel_{s}^{2^k}(\tup u)) \cup V(\Vic_s^{2^k}(\tup u))$. Note that in case when for every $y \in \tup x$ we have that $\tup v(y) \in \Rel_s^{2^k|\tup x|}$, the required sub-trigraph of $G_s$ is contained in $\Rel_s^{2^k(|\tup x|+1)}$, as needed in Lemma~\ref{lem:tp-warp}.
\end{itemize}

The arguments are similar in each case, and are outlined below. In what follows we suppress the indices when they are irrelevant.
\begin{itemize}
\item For each operation the set of inputs and outputs has size bounded by $\Oh_{d,k,\tup x}(1)$ (for $\trim$ and $\promote$) or by  $\Oh_{d,k,\tup x, \tup y}(1)$ (for $\join$) and also these sets can be computed in time  $\Oh_{d,k,\tup x}(1)$ or $\Oh_{d,k,\tup x,\tup y}(1)$ for $\join$. This follows from Lemma~\ref{lem:ltp_number_of_types}.
\item Since each input set has bounded size, we can go through each possible input, apply the relevant operation and record the output to form the input-output table. We only need to show that the evaluation can be done in time $\Oh_{d,k,\tup x}(1)$ or $\Oh_{d,k,\tup x,\tup y}(1)$ and can be done using the information in $\Vic_s^{2^k}(\tup u)$, $\Vic_s^{2^k}(\tup u \tup v)$ or $\Rel_{s}^{2^k}(\tup u)$, based on the operation used.
\item We first note that the size of  $\Vic_s^{2^k}(\tup u)$ and  $\Rel_{s}^{2^k}(\tup u)$ is bounded by $\Oh_{d,k,\tup x}(1)$  and the size of $\Vic_s^{2^k}(\tup u \tup v)$ is bounded by $\Oh_{d,k,\tup x,\tup y}(1)$. Also, in each part $w$ contained in  $\Vic_s^{2^k}(\tup u)$,  $\Rel_{s}^{2^k}(\tup u)$ or $\Vic_s^{2^k}(\tup u \tup v)$  the number of local  $(k-1)$ in types $\Types_{w,s}^{k-1}$ is bounded by $\Oh_{d,k,\tup x}(1)$.
\item To bound the runtime of evaluating operations $\trim$, $\join$ and $\promote$, we note that each input $k$-type $\alpha$ (or each pair of input $k$-types $\alpha, \beta$ for $\join$), each operation recurses either on members of $\alpha$ (or $\beta$), which are $(k-1)$-types, or on $(k-1)$-types from parts in $\Vic_s^{2^k}(\tup u)$, $\Vic_s^{2^k}(\tup u \tup v)$ or $\Rel_{s}^{2^k}(\tup u)$, and in each case there is a bounded number of them as already argued, and so the branching is bounded in terms of $\Oh_{d,k,\tup x}(1)$ or $\Oh_{d,k,\tup x,\tup y}(1)$. In each branch, an operation of with index $(k-1)$ is used, and by an inductive argument one can use a table computed in time  $\Oh_{d,k,\tup x}(1)$ or $\Oh_{d,k,\tup x,\tup y}(1)$ to obtain the result. The one exception to this is that in the definition of operation $\join^k$ we use operation $\trim^k$, but in this case we can construct the table for $\trim^k$ before constructing the table for $\join^k$. 
\item Finally, to see that the operation $\trim$, $\join$ and $\promote$ can be computed from the information contained in $\Vic_s^{2^k}(\tup u)$,  $\Vic_s^{2^k}(\tup u \tup v)$ and  $\Rel_{s}^{2^k(\tup u)}$, we note that since every recursive definition of our operations uses distance $2^{k-1}$, the distances in the recursive calls will never exceed $\sum_{i\le k-1} 2^{i} \le 2^{k}$.  The only nontrivial check is then that the sub-trigraph of $G_s$ induced by  $V(\Rel_{s}^{2^k}(\tup u)) \cup V(\Vic_s^{2^k}(\tup u))$ contains all necessary information for computing operation $\promote_{\tup u,s}^{2^k}$. The important part is that it contains $\Vic_s^{2^{k-1}}(\tup uv)$ for any part $v$ in $\Rel_{s}^{2^{k-1}}$, which is necessary for the operation $\join^{k-1}_{\tup uv,s}$ used in the definition of $\promote_{\tup u,s}^{2^k}$.
This finishes the proof outline.
\end{itemize}

\end{document}